\newtheorem{thm}{Theorem}
\newtheorem{defi}[thm]{Definition}
\newtheorem{lem}[thm]{Lemma}
\newtheorem{prop}[thm]{Proposition}
\newtheorem{cor}[thm]{Corollary}
\newtheorem{example}[thm]{Example} 
\long\def\extxt#1{\par\vskip-4pt\noindent$\triangleright$\ {#1}\hfill$\triangleleft$}
\def\Schroed{Schr{\"o}dinger}
\def\RadNy{Radon--Nikodym}
\def\Jamiolkowski{Jamio{\l}kowski}
\def\oEPR{\omega_{\mathrm{EPR}}}
\def\Cs{{\mathrm C}^*}
\def\bidu{^{**}}
\def\uplim{_\uparrow}
\def\dnlim{_\downarrow}
\def\umeas{{\mathcal U}}
\def\ucont{{\mathcal C}_{\mathrm u}}
\def\rin{_{\mathrm{in}}}
\def\Rin{^{\mathrm{in}}}
\def\rino{_{\mathrm{in},0}}
\def\raus{_{\mathrm{out}}}
\def\Raus{^{\mathrm{out}}}
\def\drin{_{\mathrm{mid}}}
\def\env{_{\mathrm{e}}}
\def\Xis#1{(\Xi#1,\sigma#1)}
\def\hXi{\widehat\Xi} 
\def\vNotimes{{\overset{{\rule{5pt}{1pt}}}\otimes}}
\def\CCR{{\mathrm{CCR}}}
\def\atomic{_{\mathfrak a}}
\def\semg{{\mathcal T}}
\def\C{\mathcal{C}}
\def\CCb{{\CC_{\mathrm b}}}
\def\Fourier{{\mathcal F}}
\def\cdt{{\cdot}}
\def\quand{\quad\mbox{and}\quad}
\def\AA{{\mathcal A}}
\def\BB{{\mathcal B}}
\def\CC{{\mathcal C}}
\def\DD{{\mathcal D}}
\def\EE{{\mathcal E}}
\def\HH{{\mathcal H}}
\def\KK{{\mathcal K}}
\def\FF{{\mathcal F}}
\def\MM{{\mathcal M}}
\def\RR{{\mathcal R}}
\def\TT{{\mathcal T}}
\def\Rl{{\mathbb R}}
\def\Cx{{\mathbb C}}
\def\Nl{{\mathbb N}}
\def\Rt{{\mathbb Q}}
\DeclareFontFamily{U}{mathx}{}
\DeclareFontShape{U}{mathx}{m}{n}{<-> mathx10}{}
\DeclareSymbolFont{mathx}{U}{mathx}{m}{n}
\DeclareMathAccent{\widecheck}{0}{mathx}{"71}
\def\tr{\operatorname{tr}}
\def\idty{{\leavevmode\mathrm 1\mkern -5.4mu\mathrm I}} 
\def\norm #1{\Vert #1\Vert}
\def\cbnorm#1{\norm{#1}_{\mathrm{cb}}}
\def\abs#1{\vert#1\vert}
\def\id{{\rm id}}
\let\veps\varepsilon
\def\im{\Im m}
\def\supp{\mathop{\rm supp}\nolimits   \,}
\def\braket#1#2{\langle #1,#2\rangle}
\def\brAket#1#2{\langle #1\vert#2\rangle}
\def\brAAket#1#2#3{\langle#1\vert#2\vert#3\rangle}
\def\ketbra #1#2{{\vert#1\rangle\!\langle#2\vert}}
\def\kettbra#1{\ketbra{#1}{#1}}
\begin{document}
\title{Quantum-Classical Hybrid Systems and their Quasifree Transformations}
\author{Lars Dammeier}
\email{lars.dammeier@itp.uni-hannover.de}
\author{Reinhard F. Werner}
\email{reinhard.werner@itp.uni-hannover.de}
\affiliation{Institut f\"ur Theoretische Physik, Leibniz Universit\"at Hannover, Germany}
\date{June, 2023}

\begin{abstract}
  We study continuous variable systems, in which quantum and classical degrees of freedom are combined and treated on the same footing.
  Thus all systems, including the inputs or outputs to a channel, may be quantum-classical hybrids. This allows a unified treatment of a large variety of quantum operations involving measurements or dependence on classical parameters. The basic variables are given by  canonical operators with scalar commutators. Some variables may commute with all others and hence generate a classical subsystem. We systematically study the class of ``quasifree'' operations, which are characterized equivalently either by an intertwining condition for phase-space translations or by the requirement that, in the Heisenberg picture, Weyl operators are mapped to multiples of Weyl operators. This includes the well-known Gaussian operations, evolutions with quadratic Hamiltonians, and ``linear Bosonic channels'', but allows for much more general kinds of noise. For example, all states are quasifree. We sketch the analysis of quasifree preparation, measurement, repeated observation, cloning, teleportation, dense coding, the setup for the classical limit, and some aspects of irreversible dynamics, together with the precise salient tradeoffs of uncertainty, error, and disturbance. Although the spaces of observables and states are infinite dimensional for every non-trivial system that we consider, we treat the technicalities related to this in a uniform and conclusive way, providing a calculus that is both easy to use and fully rigorous.

  The data defining a quasifree channel are, first, a linear map from the output phase space to the input phase space, which describes how the Weyl operators are connected. The second element is a scalar ``noise factor'', which is usually needed to make the channel completely positive. Channels with noise factor 1 are called noiseless. These are homomorphisms in the Heisenberg picture. For any quasifree channel, the admissible noise functions are in one-to-one correspondence to states on a certain hybrid system. Since many basic tasks (e.g., joint measurement, cloning, or teleportation) are encoded in the linear phase space map, this gives a compact characterization of the possible noises for channels implementing the task.
  We establish a general Stinespring-like decomposition of any quasifree channel into the expansion by an additional system followed by a noiseless operation. The additional system is itself a hybrid and in the state characterizing the noise. This allows a clear distinction between classical and quantum noise of the channel.

  Technically, our main contribution is the clarification of the functional analysis of the spaces of states observables and channels. This required the resolution of a mismatch in the standard approaches to classical and quantum systems, respectively, which would have bogged down the theory with many case distinctions. In the scheme that we propose all hybrid systems and quasifree operations are treated in a uniform manner. For example, the noise analysis of dense coding and teleportation become virtually identical. All quasifree channels can equivalently be considered in the Schr\"odinger picture or in a variety of Heisenberg pictures differing by the degree of smoothness demanded of the observables.
\end{abstract}

\maketitle

\newpage
\tableofcontents
\newpage

\section{Introduction}\label{sec:intro}
Canonical variables such as position and momentum can be defined in terms of their commutation relations. As von Neumann showed \cite{vNunique}, for finitely many degrees of freedom, this uniquely fixes the standard description of a quantum system in a Hilbert space with explicitly given operators. This approach turns out to be very useful also for studying operations on such systems. For example, a symplectic linear transformation of the underlying phase space gives just another set of operators with the same commutation relations. Hence by von Neumann's Theorem, there must be a unitary implementing it. Here we generalize this approach in two directions: On one hand, we also allow dissipative operations, which turn some pure states into mixed states, and which are given by completely positive maps. On the other, we include a classical part described by operators commuting with all others, allowing for general quantum-classical hybrid systems. This allows treating problems with a mixture of classical and quantum information, such as various measurement scenarios. The defining property of the channels which can be studied in this way is that they intertwine two actions of the phase space translations or, equivalently, take Weyl operators into Weyl operators. Such operations have been called quasifree \cite{fannes,demoen}. However, an analytic treatment in the generality needed for a practical calculus and including hybrid systems does not seem to exist. It will be provided in this paper.

While the main aim is to build an easy-to-use general calculus for quasifree hybrids, we had to go deeper into the functional analysis of such systems than we had anticipated. The reason is that, although the formal structures for states and observables  for purely quantum systems and for the purely classical systems (probability) are well established, this cannot be said for the hybrid combination. Indeed, the standard approaches for the two extreme cases are not easily merged. These problems are aggravated when discussing operations (channels) between systems: Should they be described in the \Schroed\  or the Heisenberg picture? On which spaces should these act? Our answer is a setting, in which both pictures always make sense, for any quasifree channel. So in the practical calculus of quasifree channels, the technicalities and the conceptual issues, whose resolution forms the main body of this paper, can be taken as resolved. When questions of analytic properties of observables and channels come into play, details can be taken from our paper, but for many questions, these details can be ignored, and the calculus can be used and applied straightforwardly and rigorously to a large variety of measuring and control scenarios. It strictly includes the world of ``Gaussian Quantum Information''. But we no longer need separate definitions of Gaussian states and Gaussian measurements: Gaussianness is a property of arbitrary channel types, and it is immediate from our definition that the composition of Gaussian channels is again Gaussian. However, the quasifree setting is richer, including, for example, arbitrary channels with no input, i.e., arbitrary states.

\vskip 12pt\noindent
{\it Our paper is organized as follows:}\\
We finish the introductory Sect.~\ref{sec:intro} with three subsections placing our problem in context. Logically they can be skipped, i.e., they contain no material needed to follow the formal development in later chapters.
First (Sect.~\ref{sec:statobs}), we have some general remarks on how to choose good spaces of states and observables, how this issue has been viewed traditionally, and why hybrids pose a special challenge. We then (Sect.~\ref{sec:previousCCR}) give a very brief overview of the rich literature on the canonical commutation relations, mainly for the historical background and pointers to useful summaries. We do not take knowledge of this literature as a prerequisite, however. The body of the paper is mostly self-contained, i.e., we give arguments for the main steps, even when they could also be covered by a citation.
The literature on quantum-classical hybrids, reviewed in Sect.~\ref{sec:previous}, is more disparate, even including some approaches that fail. We provide a list of pertinent research projects with brief discussions of some aspects. We also note how these projects relate to our paper. The three background sections can be summarized by saying that we did not find any works covering all three aspects.

After introducing the basic phase space variables, Sect.~\ref{sec:states} deals with the ``good states''. Using the terminology introduced later, these are the ones with continuous characteristic functions (Thm.~\ref{thm:Bochner}), and turn out to be the state space of a C*-algebra without unit, denoted $\Cs\Xis{}$. Observables are treated in Sect.~\ref{sec:funcobs}. This is the most technical part of the paper. Here we get different choices of spaces of observables. Each is represented as a class of functions from the classical parameter space to the bounded operators of the quantum subsystem with varying degrees of regularity, e.g., weak measurability, strong*-continuity, and norm continuity. The aim is to show later that these properties are preserved under arbitrary quasifree channels. A direct approach would be fraught with case distinctions concerning the separation and recombination of classical and quantum parts under the channel. Therefore we go another way (Sect.~\ref{sec:semicont}), namely giving a characterization of these properties in a general operator algebraic setting, which applies to classical, quantum, and hybrid systems alike. The preservation of properties is then almost trivial to show (Prop.~\ref{prop:Heisenalg}).

Quasifree channels are defined and characterized in Sect.~\ref{sec:channels}. In order to get a feeling for the large variety of operations covered by that definition, we recommend skipping ahead to Sect.~\ref{sec:BasicOps} for examples. Meanwhile, Sect.~\ref{sec:channels} focuses on general constructions. Basic ways to combine them are described in Sect.~\ref{sec:combine}. A recurring theme is a correspondence between the quasifree channel and a state that we call its noise state. The complete positivity condition is exactly equivalent to the positivity condition for this state, which is a state on an explicitly given hybrid system. This may have a classical part,  or may be entirely classical, even for channels between purely quantum systems. Thus the hybrid work needs to be done even if one wants to study only quantum channels. A key result is the factorization of any quasifree channel into preparing the noise state on some environment and then executing a noiseless quasifree operation, for which the noise state is pure and classical. Noiseless operations (Sect.~\ref{sec:noiseless}) are homomorphisms, so this is a variant of the Stinespring dilation.

Sect.~\ref{sec:BasicOps} sketches some of the possibilities of combining classical and quantum information in input and output. In particular, we parameterize covariant phase space instruments and their characteristic tradeoff between measurement accuracy and disturbance. Some aspects, like optimal cloning, the classical limit, and dynamical semigroup evolutions, are only sketched because they are treated in past or future articles of their own. In these cases, we merely indicate how these subjects fit into the framework.

\subsection{The basic problem: Good spaces for states and observables}\label{sec:statobs}
The basic statistical interpretation of quantum theory has two primitives, states and observables, which operationally stand for preparations and measurements. Mathematically they are represented in appropriate ordered Banach spaces, and the basic interpretation demands that there is a bilinear form allowing to evaluate the expectation value of any observable in any given state. In all the usual theories, however, this symmetric view is broken, and either the states or the observables are taken as primary and the dual objects as secondary, derived quantities. This section is about that chicken-and-egg situation. It is of special interest for hybrids since, in the end, we will settle for a third option.

The standard view of quantum mechanics has states described as density operators (positive trace class operators of trace $1$ on a Hilbert space $\HH$). The maximal set of observables for which expectation value evaluations can be defined is the set of {\it all} bounded linear functionals on the trace class $\TT(\HH)$, i.e., the Banach space dual, which in this case is equal to $\BB(\HH)$, the space of bounded operators on $\HH$, where the expectation values are expressed by $\tr(\rho A)$ for $\rho\in\TT(\HH)$ and $A\in\BB(\HH)$. For classical systems, two different choices are common. One can either take the algebra $\CC(X)$ of continuous complex-valued functions on a compact space $X$ as the observables. The dual then consists of all finite Borel measures on $X$. Or else, one can take the space $L^1(X,\mu)$ of probability densities with respect to some reference measure as the space of states, of which the dual is $L^\infty(X,\mu)$. These basic options are visualized in Fig.~\ref{fig:setting1}.

\begin{figure}[ht]
  \begin{center}
  \begin{tikzpicture}[scale=1]
    \node[left] at (0,1.6) {states};
    \node[right] at (1,1.6) {observables};
    \node[] at (0.5,2) {Classical};
    \node [left] at (0,1) {$\CC(X)^*$};
    \node [left] at (0,0) {$L^1(X,\mu)$};
    \node [right] at (1,1) {$L^\infty(X,\mu)$};
    \node [right] at (1,0) {$\CC(X)$};
    \node [rotate=90] at (-0.6,0.5) {$\subset$};
    \node [rotate=90] at (1.5,0.5) {$\subset$};
    \draw[red] (0,1) --(1,0);
    \draw[blue] (0,0) --(1,1);
    \draw[] (0,0) --(1,0);
    \node[left] at (5,1.6) {states};
    \node[right] at (6,1.6) {observables};
    \node[] at (6,2) {Quantum};
    \node [left] at (5,0) {$\TT(\HH)$};
    \node [right] at (6,1) {$\BB(\HH)$};
    \draw[blue] (5,0) --(6,1);
  \end{tikzpicture}
  \captionsetup{width=0.8\textwidth}
  \caption{ Dualities of spaces of states (left column of each diagram) and observables (right column). A line indicates a dual pairing, i.e., a way to compute the expectation of any observable in the right space with any state on the left. The W*-approach (blue line) starts from the states, and allows the full dual space as observables. The C*-approach (red line) makes the opposite choice. In the classical case the combination (black line) is also well defined. Traditional quantum mechanics has only the W*-approach.  }
  \label{fig:setting1}
\end{center}
\end{figure}
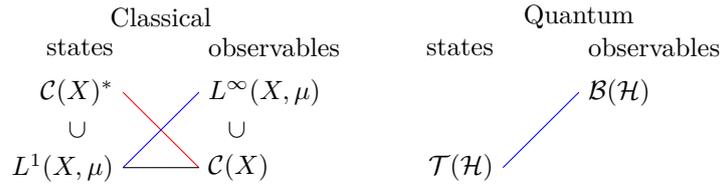

In the classical case, the structure of $\CC(X)$ is determined by the topology of $X$, whereas the structure of $L^1$ is based on the measure theory of $(X,\mu)$. In the non-classical setting of operator algebraic quantum theory \cite{bratteli1,haag_book,emch,landsman_foundations_2017,landsman_algebraic_2009} the distinction appears as the contrast between the C*-algebraic approach, which generalizes the topological side, and the W*-algebraic approach, which generalizes the measure-theoretic side. As indicated by the letters C and W, the distinction is one between the abstract versions of norm \underline{c}losed *-subalgebras of $\BB(\HH)$ versus \underline{w}eakly closed ones (also called von Neumann algebras). Then a famous characterization theorem \cite{SakaiDuality} links this distinction to the chicken-and-egg problem mentioned in the first paragraph: The W*-algebras are precisely those C*-algebras that are dual Banach spaces, i.e., coincide with {\it all} bounded linear functionals on some state space.\newpage

As the classical case shows, neither view is in any sense more ``correct'' than the other: Topology and measure theory just capture different aspects of the system description. Curiously, one of the founding papers of the algebraic quantum field theory (AQFT) school \cite{HaagKastler} tries to make the point that the C*-algebras should be taken as primary. However, the arguments presented there can be straightforwardly dualized to give the dual conclusion, namely, that the states should be taken as primary. Luckily, the AQFT school has largely ignored the advice of \cite{HaagKastler}, and did substantial work singling out subsets of ``physically relevant states'' and separating the local part of the problem from the global aspects by defining canonical W*-algebras for local regions, while analyzing the structure at infinity (superselection sectors) in terms of the representation theory of a quasilocal C*-algebra. So it can be said that the mature version of AQFT \cite{haag_book} takes a rather refined and complex view of chicken and egg.

In axiomatic quantum mechanics, many schools have focused on the finite-dimensional case, where the problem does not arise: In that case, the spaces are reflexive, i.e., equal to their second duals.
One traditional school allowing infinite dimension from the outset is the approach by Ludwig \cite{ludwig1,ludwig2}. According to him, a complete picture of quantum mechanics should be symmetric, even though his axiomatic reconstruction just returns the asymmetric standard view shown in Fig.~\ref{fig:setting1}. The problem of choosing an appropriate space of physically realizable observables ($\DD\subset\BB(\HH)$ in Ludwig's notation) is left as an open problem. It is not one to be solved once and for all. A solution will have to depend on further specifics of the system \cite{uniformities}. This is analogous to the classical case: All $L^1$-spaces are isomorphic (if $X$ has no ``atoms'', i.e., points of positive measure, \cite[Thm.~III.1.22]{takesaki1}), so the measure theory does not even see the dimension of $X$. Similarly, $\TT(\HH)$ depends only on $\dim\HH$, but any reasonable $\DD$ will contain more structure.

A minimal condition on a space of observables is that there are sufficiently many observables to distinguish the states. With only this minimal condition, it is guaranteed that the weak limits of observables span the full dual of the state space. That is, all choices are weak* dense in each other. One might use this to justify an asymmetric scenario including all these limits as idealized observable elements. However, what gets lost in this asymmetric picture is a description of the physical distinguishability of states. Therefore one loses the idealized states one would similarly find in the dual of the observable space. This is, in fact, the basis of the argument in \cite{HaagKastler}, only that it works both ways (see \cite{RFWdiss} for a formal parallel development of both aspects from the Ludwig point of view). This symmetric view can be seen very clearly in the classical case (Fig.~\ref{fig:setting1}): Starting from just the bottom duality of probability densities and continuous functions, we can approximate arbitrary measures weakly by densities, including point measures. Dually we can take weak* limits in $L^\infty$ to approximate arbitrary bounded measurable functions by continuous ones. Again we gain additional extremal elements, like indicator functions, which are typically not continuous. However, there is no natural evaluation of a point probability measure on an indicator function in $L^\infty$: The elements of $L^\infty$ are classes with respect to almost everywhere equality, and so the value at a point (typically of measure zero) is not defined. Thus there is no natural pairing at the top level of the diagram of dualities (see, however, \cite{Ionescu} for the existence of {\it some} non-constructive pairing).
In the non-commutative case, the story is similar: By going to the full dual spaces, one gains idealized elements in the form of pure states and projections, whose abundance is guaranteed by the Banach--Alaoglu and Krein--Milman Theorems. These extremal elements are often the building blocks of an analysis, as in the decomposition of arbitrary states as an integral over pure states (Choquet Theory, \cite[Ch.~4]{bratteli1}) and, dually, in the spectral resolution of normal observables.

The explanation for the existence of pure states does not account for the many pure states $\kettbra\psi\in\TT(\HH)$, i.e., standard quantum mechanics, which we characterized as an instance of the W*-view. Indeed for W*-algebras of Murray--von Neumann type II and III, there are no normal pure states (see also Lem.~\ref{lem:atomic}). However, the pure states can be related to another duality: The trace class $\TT(\HH)$ is the Banach space dual of $\KK(\HH)$, the space of compact operators. The unit ball of $\TT(\HH)$, therefore, has to have many extreme points. It would be impractical, however, to use the compact operators as the basic observable algebra: It does not contain a unit, so we cannot express the normalization condition for POVM observables.

That $\idty\notin\KK(\HH)$  means that the normalization functional is not continuous in the weak* topology, so the states are not a weak*-closed subset of the unit ball and not compact. This is a crucial observation for limits in this topology. For example, consider a sequence of states averaged over an expanding range of spatial translations. Any cluster point would have to be translation invariant; but no such density operator exists, because translations have continuous spectrum. On the other hand, by the weak* compactness of the unit ball, cluster points must exist. Indeed the weak* limit is the zero functional. So it does happen that the limit of states fails to be a state. In close analogy, we can consider classical states represented by probability measures on $\Rl^n$ and their shifts or averages to infinity. These are in the dual of $\CC_0(\Rl^n)$, the space of continuous functions vanishing at $\infty$, and weak limits of shifted sequences are zero.

So in both cases, the natural state space can be located in the dual of an algebra $\AA$ {\it without identity}, namely $\AA=\KK(\HH)$ in the quantum case, and $\AA=\CC_0(\Rl^n)$ in the classical case. But this does not mean going back to an asymmetric picture with observables taken as primary, i.e., another turn in the chicken-and-egg conundrum because $\AA$ is not itself the algebra of observables. The full set of observables will be some set of functionals on the states in $\AA^*$, so a subspace of $\AA\bidu$. All such observables can be approximated weakly by elements of $\AA$. In particular, there is an approximate unit, so $\idty$ is recovered. This scheme is outlined in Fig.~\ref{fig:setting2}. We propose to use it for hybrids, as well, and with the particular choice $\AA=\KK(\HH)\otimes\CC_0(X)$ as the C*-tensor product of quantum and classical parts.

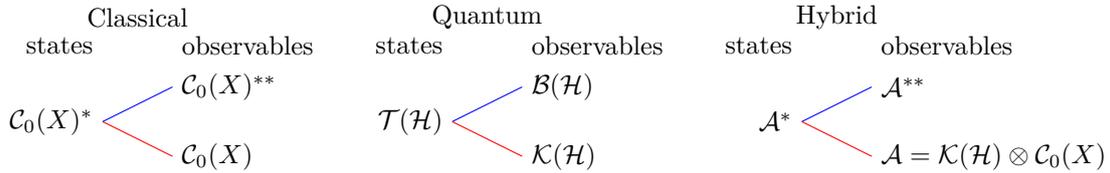
\begin{figure}[h]
  \begin{center}
    \begin{tikzpicture}[scale=0.92]
  \node[] at (0.5,2) {Classical};
      \node[left] at (0,1.6) {states};
      \node[right] at (1,1.6) {observables};
      \node[left] at (0,0.5) {$\CC_0(X)^*$};
      \node[right] at (1,1) {$\CC_0(X)^{**}$};
      \node[right] at (1,0) {$\CC_0(X)$};
      \draw[red] (0,0.5) --(1,0);
      \draw[blue] (0,0.5) --(1,1);
  \node[] at (5.5,2) {Quantum};
      \node[left] at (5,1.6) {states};
      \node[right] at (6,1.6) {observables};
      \node[left] at (5,0.5) {$\TT(\HH)$};
      \node[right] at (6,1) {$\BB(\HH)$};
      \node[right] at (6,0) {$\KK(\HH)$};
      \draw[blue] (5,0.5) --(6,1);
      \draw[red] (5,0.5) --(6,0);
  \node[] at (10.5,2) {Hybrid};
      \node[left] at (10,1.6) {states};
      \node[right] at (11,1.6) {observables};
      \node[left] at (10,0.5) {$\AA^*$};
      \node[right] at (11,1) {$\AA\bidu$};
      \node[right] at (11,0) {$\AA=\KK(\HH)\otimes\CC_0(X)$};
      \draw[blue] (10,0.5) --(11,1);
      \draw[red] (10,0.5) --(11,0);
    \end{tikzpicture}
  \captionsetup{width=0.8\textwidth}
  \caption{Extended dualities suitable for a joint generalization of the classical and the quantum case to hybrids.
  The states are here functionals on an underlying non-unital algebra $\AA$, namely $\C_0(X)$ resp.\ $\KK(\HH)$ in the classical resp.\ quantum case. The biduals $\CC_0(X)\bidu$ and, in general,  $\AA\bidu$ are relatively wild objects,
  whose elements are not easily characterized. Selecting more managable subspaces is the subject of Sect.~\ref{sec:funcobs}. }
  \label{fig:setting2}
\end{center}
\end{figure}

It turns out that this is equivalent to another way of identifying a good hybrid state space for systems of canonical variables. One takes, for each state, the expectation values of the Weyl operators (displacement operators). This is called the characteristic function of the state. Then the good states in $\AA^*$ are precisely (Thm.~\ref{thm:Bochner} and \ref{prop:twgroup}) those with a continuous characteristic function (usually called ``regular''). Intuitively, these states live essentially in bounded regions of phase space, so the expectation of Weyl operators can be expected to depend continuously on its argument. This is in contrast to the observation that distinct Weyl operators have maximal norm distance because somewhere far out, even periodic functions with very similar periods will differ maximally. This regularity condition has an equally simple analog for quasifree channels, which is equivalent to the channel mapping regular states to regular states. We include this in our definition of quasifree channels. On the side of observables, things are more complex, however. On the maximal choice of observable algebra, the bidual $\AA\bidu$ the Heisenberg picture of a channel is defined as the Banach space adjoint. However, the bidual is a monster, even in the classical case. We devote Sect.~\ref{sec:funcobs} to the task of finding smaller, more manageable subalgebras, which nevertheless support a Heisenberg picture for arbitrary quasifree channels. That there is some choice here is a strength of our theory rather than a weakness: It amounts to a variety of analytic conditions (weak measurability, strong*-continuity, norm continuity, and in some sense many more), which are all automatically preserved by the Heisenberg actions of quasifree channels.

Let us contrast this with an approach that naively takes inspiration from the finite case, where $\dim\HH<\infty$, and the classical variables lie in a compact set $X$, and the appropriate observable algebra is the C*-tensor product $\AA=\BB(\HH)\otimes\CC(X)$. When  $X\cong\Rl^s$ locally compact and $\dim\HH=\infty$,  we could just replace $\CC(X)$ by the bounded continuous functions $\CCb(X)$, and consider the C*-algebra $\widetilde\AA=\BB(\HH)\otimes\CCb(X)$. This algebra contains the Weyl operators as well, so we can associate with every state on $\widetilde\AA$ a characteristic function. In fact, we started our investigation of quasifree channels with this choice, running into more and more difficulties: Since the Weyl operators do not span a norm dense subalgebra of $\widetilde\AA$, the characteristic function will fail to characterize a state. Moreover, many states on $\widetilde\AA$ are singular in the following sense: The classical marginal will be a state on $\CCb(X)$, which is isomorphic to $\CC(\beta X)$, where the $\beta X$ denotes the Stone--\v Cech compactification of~$X$. But if such a state has any weight on the infinite points $\beta X\setminus X$, it will not be a probability measure on $X$. Similarly, on the quantum side, a state on $\BB(\HH)$ may fail to be given by a density operator. For a usable calculus of quasifree channels, this is bad news because it would be unclear how to even define such channels on all of $\widetilde\AA$, how to exclude that the output of any such operation is singular, and even if all that is worked out, whether the Heisenberg picture would map into the input counterpart of $\widetilde\AA$.

\subsection{Previous works on Canonical Commutation Relations}\label{sec:previousCCR}
The description of physical systems in terms of canonical commutation relations constitutes a core part of modern mathematical physics. The tradition begins with Weyl and von Neumann and extends to the community that founded the journal {\it Communications in Mathematical Physics} in 1965. It was felt then that this structure, and more generally a C*-algebraic view of physics, was a key element of both quantum field theory in the approach of Haag, Kastler \cite{Kastler1965,HaagKastler}, and Araki \cite{araki}, and of statistical mechanics in the school of Verbeure and others (see \cite{bratteli1,bratteli2} for a textbook expounding these ideas). We are obviously building on this tradition and can hardly give due credit, not even to the major contributors.

Notable expositions are \cite{bratteli2} for a Fock space based view with statistical mechanics in mind and \cite{Derezinski} for a field theoretical one. An encyclopedic work aiming at quantum optics is \cite{Honegger}. Irreversible operations, i.e., quasifree channels as developed in Sect.~\ref{sec:channels}, came up in \cite{demoen,evansLewis}.  These are, of course, also the focus of works on quantum information. In that community, the canonical systems are called systems with ``continuous variables'' as opposed to those composed of discrete quantum bits. The main interest has been in the Gaussian case because the vacuum at the empty port of a beam splitter, laser light, and cooled oscillators are all Gaussian. A collection of review articles is \cite{QICV}. Systematic expositions of the Gaussian structure are \cite[Ch.~12]{HolevoQSCI}  and \cite{MCF}.

Hybrids did show up occasionally in this literature, but usually not as the main focus. For example, some core results in \cite{Honegger} are formulated without assuming the commutation form to be non-degenerate (see also \cite{manuceau}). This assumes the observable algebra on the classical side to be CCR-like, i.e., the almost periodic functions, which is not a good choice by the considerations of the previous section, but by far the most frequent choice in the literature described here.

\subsection{Previous works on hybrid systems}\label{sec:previous}

In the following list, we have collected some of the appearances of quantum-classical hybrids in the literature. The motivations are rather different, and after each brief description, we point out how the respective research project differs from the present study.
\begin{labeledlist}{l}
  \item[\textit{Quantum Field Theory}] In quantum field theory, the algebra of canonical commutation relations provided a way to deal with the commutation relations of field operators in QFT without discussing tricky domain questions of these unbounded operators. In this way, quantum fields could be included in the newly forming C*-algebraic approach to quantum mechanics, which is crucially based on bounded operators. However, these works were not interested in going beyond the CCR-algebra, i.e., the C*-algebra generated by the Weyl operators. The technical difficulties coming from an infinite-dimensional test function space (phase space) and the resulting failure of von Neumann's uniqueness theorem (closely related to ``Haag's Theorem'') seemed more relevant.

  It was noted only later that even for field theory, the CCR-algebra has its drawbacks. Classically it corresponds \cite{QHA} to the almost periodic functions, which by definition depend very sensitively on infinite values of the fields. This is reflected by the structure of the pure states on the algebra of almost periodic functions, which form the so-called Bohr compactification of $\Rl$. Along with the points in $\Rl$ it contains further limit points. But these new points are in no sense ``at infinity'' and themselves dense, so that the finite and the infinite are highly intertwined. Of course, this is related to the observation that almost periodic classical variables are almost impossible to measure, so the topology of the Bohr compactification is unrelated to physical distinguishability. The same criticism applies to the use of the CCR-algebra. In particular, it is difficult to implement physical dynamics as a C*-dynamical system on this algebra \cite[p.\,345]{bratteli2}. A more regular approach using resolvents rather than exponentials of the fields has therefore been proposed \cite{buchholz1} (cf. Example~\ref{Ex:resAlg}).

  In contrast to the QFT literature, we consider only finite-dimensional phase spaces in this paper. In other aspects, our approach is more general, particularly by including irreversible operations. QFT focuses on reversible dynamics, so irreversible operations play no role (but see \cite{Longo}).

\item[\textit{Hybrids and canonical structure}] Both classical and quantum systems employ Hamiltonians to generate the dynamics, and it appears almost like a minor difference that one uses commutators while the other uses Poisson brackets. Consequently, there have been many attempts to fuse these structures into one common framework (see \cite{elze_quantum-classical_2013} for a review). This gains further plausibility from systems with quadratic Hamiltonians, which generate isomorphic Lie algebras in the quantum and classical worlds. On the whole, however, this basic idea has proved to be a failure \cite{peres_hybrid_2001,terno_inconsistency_2006}. Typical problems include dynamics, which do not preserve the positivity of states or allow non-local signaling due to some obscure non-linearity imported into the quantum system from the classical side. The core of these problems is actually a very familiar No-Go Theorem from quantum information theory: {\it There is no information gain without disturbance} \cite{Busch2009}. That is, whenever an interaction leads to the possibility of measuring a variable of the classical subsystem and thereby gaining information about the initial quantum state, some irreversibility must be involved. Thinking of dynamics in terms of Hamiltonians and canonical structure is, however, so tied up with reversible dynamics that any approach based on canonical structures is bound to fail, at the latest, when there is a non-trivial interaction.

  For our paper, this has the consequence that we do not even assume a symplectic structure on the classical system, i.e., the classical phase space is a real vector space without further structure.

\item[\textit{Dissipation}] The No-Go Theorem strongly suggests the use of dissipative time evolutions to express the measurement interaction \cite{diosi_hybrid_2014,barchielli_1996,olkiewicz_dynamical_1999}. The quasifree case \cite{barchielli_1996} benefits especially from the clarification of the complete-positivity conditions for channels (\cite{evansLewis}, our Sect.~\ref{sec:channels}).

    In \cite{Lars}, we show that, even without quasifreeness, this leads to a fusion of the classical theory of diffusion generators on one hand and Lindblad generators on the other, with a full understanding of the additional interaction terms that describe the information transfer from the quantum to the classical subsystem.

\item[\textit{Embedding the classical system into a quantum one}]
    In the quantum information community, many researchers think of the observables of a classical system as the diagonal matrices embedded into a larger full matrix algebra.
    Similarly, for a classical particle described by position variables in $\Rl^n$, one can get a quantum extension by including the generators of the spatial shifts, i.e., conjugate momenta,  in a crossed-product \cite{takesaki1} construction.
    This construction can be done at the von Neumann algebra level so that the enlarged quantum system has the full algebra of bounded operators over $L^2(\Rl^n,dx)$ as observables. This is the approach to hybrids chosen, for example, in \cite{barchielli_1996}. In this setting, the distribution of the classical variables in a normal state always has an $L^1$-density, which excludes the pure states of a hybrid.
    We will see later that the pure states of a modified hybrid also correspond to extremal quantum channels, so this approach excludes the optimal, e.g., minimal noise channels for some tasks.

    In our approach, pure states are included from the outset, and the von Neumann algebraic crossed-product embedding is characterized as a special case for which states are norm continuous under translations (Sect.~\ref{sec:transState}).

\item[\textit{The classical system is a large quantum system}] A common point of view is that the classical world is merely emergent as a limiting case of large quantum systems. In this spirit, a hybrid would always be a large quantum system with one subsystem close to a thermodynamic limit. There is no problem then writing down Hamiltonian interactions between the almost classical and the quantum part. However, this does not resolve the No-Go Theorem for hybrid dynamics. The classical variables in such a system will generally evolve into some combination involving their conjugates, or as \cite{sherry_interaction_1978} phrases this, the classical variables lose their ``integrity''. The required physical discussion at this point would be that effectively, and to good approximation, the classical integrity is preserved. But often, the models of quantized classical systems are so simple (e.g., one degree of freedom \cite{gisin_quantum_2000}) that a physical discussion of a thermodynamic limit is not really possible. It should be noted that the classical limit is very closely related to the mean-field limit, and the latter has indeed been proposed as a model for measurement processes involving large quantum systems \cite{Hepp}. In this case, in spite of an infinite range mean field interaction, the measurement result becomes definite only in the infinite time limit.

    The many-body aspect of the classical system will not come into play in our paper or even enter the formalism. Conceptually, this is because we consider that limit already being done, and we work with a much-reduced set of classical variables, a finite set of reals, such as a measurement record in a continuing observation process.

\item[\textit{Non-linearity}]
    In the mean-field limit, one gets dynamical equations for quantum states, which are of canonical Hamiltonian form and allow strong non-linearities forbidden in standard quantum theory. This is not paradoxical if one realizes that the ``quantum states'' here are not states of a quantum system at all but distribution parameters for a many-body system. Nevertheless, the resulting kind of non-linear Hamiltonian evolution \cite{Bona,Duffield} has been proposed as a testable generalization of quantum mechanics \cite{Weinberg,BonaLong}. In the simplest case, i.e., a qubit, it is classical mechanics with the surface of the Bloch sphere as a phase space manifold and the surface $2$-form as symplectic form. The standard quantum evolutions are driven by Hamiltonians which are linear in the state. These are just rotations of the Bloch sphere and result from non-interacting mean-field systems. The symplectic structure on the pure states has been variously noted, but it is quite misleading to conclude that using it somehow unifies classical and quantum theory. Indeed the classical Hamiltonian structure goes against a basic impossibility claim of standard quantum mechanics, namely that all mixtures giving the same density operator (e.g., unpolarized light) are indistinguishable. To summarize: As an approach to hybrids, the theories starting from the Hamiltonian structure of quantum states resolve the tension between classical and quantum theories by turning the quantum part into a classical system.

    In this paper, we stick to the ``minimal statistical interpretation'' for quantum, classical, and hybrid systems alike. The states and observables then operationally represent preparations and measurements. This interpretation implies linearity of all meaningful operations on states and observables, so linearity is not an accidental feature of the theory that can easily be dropped, and we will assume it throughout.

\item[\textit{Hybrids for gravity}]
   A recent discussion of hybrids for quantum fields coupled to gravity illustrates several of the options mentioned above. In \cite{oppenheim_post-quantum_2018,oppenheim_two_2022} we find an approach making the dissipative nature of the interaction implicit. In \cite{Bose,Vedral} it is argued that gravitationally induced entanglement would serve as proof of the non-classical nature of gravity. This is contradicted by \cite{hall_two_2018}, where the authors emphasize that this will depend on the notion of hybrids and that the non-linear variant, in particular, would allow for entanglement via a classical intermediary.

   Our motivation for hybrid structures is practical and comes from continuous observation and other measurement processes. Whether the resulting structures are also helpful for some fundamental theory is far beyond the scope of this paper. However, we hope that a sharper understanding of the mathematical structures will also be helpful in such projects.
\end{labeledlist}

\section{Hybrid states}\label{sec:states}
\subsection{Setup}\label{sec:setup}
In this section, we fix the basic structure of the systems we consider and the basic notations relating to phase spaces. For those who are familiar with phase space quantum mechanics, this amounts to applying a remarkably simple principle, namely just eliminating the assumption that the commutation form should be symplectic, i.e., non-degenerate.

We consider systems of $n$ quantum canonical degrees of freedom and $s$ classical ones. This means that we have a position variable $q\in\Rl^n$, and its momentum counterpart, which lies in the dual space~$\Rl^n$. This only means that a scalar product $q\cdt p$ is defined, and the phase space of the system, the set of pairs $(q,p)$ carries a natural {\bf symplectic form} $\sigma((q,p),(q',p'))=q\cdt p' - p\cdt q'$. More abstractly one needs only to demand that $\sigma$ is antisymmetric and non-degenerate, i.e., the only pair $(q,p)$ such that $\sigma((q,p),(q',p'))=0$ for all $(q',p')$ is $p=q=0$. Then with a suitable choice of ``canonical coordinates'' $\sigma$ will take the given form.

We now drop the assumption of non-degeneracy, i.e., we allow non-zero null vectors for $\sigma$. In a basis this means that the $2n$ variables $p,q$ can be augmented by $0\leq s<\infty$ unpaired classical variables $x\in\Rl^s$, which can be thought of as position variables without corresponding momenta. So, as an extended {\bf phase space} $\Xi=\Rl^{2n+s}$ we consider the set of triples $\xi=(q,p,x)$. The extended symplectic form will be defined
\begin{equation}\label{symp}
  \sigma((q,p,x),(q',p',x'))=q\cdt p' -p\cdt q' =\sum_{ij}\xi_i\sigma_{ij}\xi'_j,
\end{equation}
which is still antisymmetric and bilinear. When we want to emphasize the generalization, we call $\Xi$ a ``hybrid phase space''. But since this is the normal case in our paper, we will often drop the adjective.
Since we later consider arbitrary linear maps on phase spaces, we usually adopt a convenient basis free view, where the (hybrid) phase space is just a real vector space $\Xi$ with antisymmetric form $\sigma$, so the type of system is given by the pair $\Xis{}$. The classical part is always singled out as the space of null vectors:
\begin{equation}\label{Xi0}
  \Xi_0=\{\xi\in\Xi \mid \forall\eta:\, \sigma(\xi,\eta)=0\}.
\end{equation}
Thus we can split $\Xi=\Xi_1\oplus\Xi_0$, where $\Xi_1$ is a suitable subspace on which $\sigma$ is non-degenerate, i.e., a standard quantum system. The direct sum symbol here indicates a unique decomposition $\xi=\xi_1+\xi_0$ with $\xi_i\in\Xi_i$ for any vector $\xi$, and that the form $\sigma$ also has a block structure, as in the coordinatization \eqref{symp}. However, other than an orthogonal complement, the quantum part $\Xi_1$ is not uniquely defined, i.e., there are $\sigma$-preserving linear maps changing the decomposition. Some of our constructions depend on the decomposition $\Xi=\Xi_1\oplus\Xi_0$, but we usually do not show explicitly that this dependence is harmless. In fact, such proofs become trivial exercises once our full theory is established. The necessary isomorphisms will be noiseless quasifree in the terminology of Sect.~\ref{sec:channels}.

Going quantum means that the components of these tuples are turned into operators
\begin{equation}\label{fieldOp}
  R=(R_1,\ldots,R_{2n+s})=(Q_1,\ldots,Q_n,P_1,\ldots,P_n,X_{1},\ldots,X_{s})
\end{equation}
with the commutation relations
\begin{equation}\label{CCR}
  [R_j,R_k]=i\sigma_{jk}\idty.
\end{equation}
Again, we refer to the classical $X_j$ as ``operators'' out of convenience, although classical ``random variables'' might perhaps be more appropriate. These are the $R_j$ that commute with all others.

A basic symmetry of the theory are the {\bf phase space translations}, which add a constant, i.e.,  a multiple of the identity to each $R_j$. We denote this transformation by
\begin{equation}\label{alpha}
  \alpha_\xi(R_j)=R_j+\xi_j\idty,
\end{equation}
where $\xi_j$ are the components of $\xi$. Clearly, $\alpha_\xi$ preserves the commutation relations and will always be a homomorphism (preserve operator products).

There are many subtleties in the task of finding all operators satisfying \eqref{CCR}, related to domain questions of these unbounded operators \cite{Schmue}. The main regularity condition singling out the usual case is that the operators are essentially selfadjoint on their common domain so that they generate unitary groups. These should satisfy an integrated version of \eqref{CCR}, and rather than diving into the details, we will make that our starting point. In fact, it can be argued \cite[Sect.4.1]{introGroupsQM} that the integrated version \eqref{weylrel} is historically a bit older, and due to Weyl, who proposed it to Max Born, even before the latter published \eqref{CCR}. Hence, following a strong tradition, we pass  to the operators
\begin{equation}\label{weylop}
    W(\xi)=\exp(i \xi\cdt R).
\end{equation}
We refer to the $W(\xi)$ as {\bf Weyl operators}, even though with classical arguments, they may sometimes be more like functions. In \eqref{weylop} the expression  $\xi\cdt R:=\sum_j\xi_jR_j$ means a mixed vector/operator scalar product, by which the commutation relations become
\begin{align}
  W(\xi)W(\eta)&=e^{(-i/2) \xi\cdt\sigma\eta} W(\xi+\eta) \label{weylrel}\\
               &=e^{-i\xi\cdt\sigma\eta} W(\eta) W(\xi).  \label{weylcom}
\end{align}
We will refer to \eqref{weylrel} as the {\bf Weyl relation}, while \eqref{weylcom} are called the {\bf canonical commutation relations (CCR)} in Weyl form. The CCR-algebra over $\Xis{}$, denoted by $\CCR\Xis{}$, is the universal C*-algebra of these generators and relations. That is, every realization of the relations by unitary operators $W(\xi)$ on a Hilbert space $\HH$ is given by a representation of $\CCR\Xis{}$ which takes the abstract generators to the $W(\xi)$.

There are some notational choices here that we should comment on. We have not included $\sigma$ in \eqref{weylop}, simply because this would set the classical contribution to zero. In phase spaces with a proper symplectic form, this form is often used to identify the space with its dual (e.g., \cite{QHA}). Any constructions using this will not work in our context. This means that in a coordinate-free spirit, the variable $\xi$ in \eqref{weylop} does not lie in the phase space $\Xi$ but in its dual $\hXi$. We will keep the notation simple by nevertheless identifying both spaces with $\Rl^{2n+s}$ and using a dot for the standard scalar product.

This convention will suffice for almost all of the paper, that is, unless we explicitly distinguish some components as position-like and others as momentum-like. In those rare cases, mainly the instrument in Sect.~\ref{sec:instru}, we try to help readers keeping track by using corresponding letters: For the phase space $\Xi$, and therefore also for the arguments of $\alpha_\xi$ we already introduced in the ordering $(q,p,x)$ for the groups of $n+n+s$ variables. For the dual space $\hXi$, i.e., in the arguments of Weyl operators and characteristic functions it is then suggestive to use the ordering $(\hat p,\hat q,k)$. Here we take into account that position space and momentum space are dual vector spaces and $k$ is the wave-number variable dual to classical shifts, as customary in $e^{ik\cdt x}$. In order to keep all appearances of the symplectic matrix explicit, we do not change one of the signs for elements in $\hXi$.

\subsection{Standard Hilbert space representations and von Neumann algebras}\label{sec:stdrep}
In this section we investigate the kind of hybrid theory suggested by the {\it Hilbert space tensor product} of a classical and a quantum subsystem.

In the quantum case, there is no choice: von Neumann showed \cite{vNunique} that \Schroed's operators $P$ and $Q$ are the only solution of the commutation relations. As taught in every course on quantum mechanics, these live in the  Hilbert space $\HH_1=L^2(\Rl^n,dq)$, with $Q_i$ acting by multiplication with the $i^{\rm th}$ coordinate, and $P_i$ by differentiating with respect to it (and a factor $i$). Equivalently, the Weyl operators are given by
\begin{equation}\label{WeylSchroe}
  \bigl(W_1( a, b)\psi\bigr)(r)=e^{\frac{i  a\cdt   b}2+i  a\cdt r}\psi(r+  b), \quad\mbox{for } ( a, b)\in\Xi_1,\ \psi\in\HH_1=L^2(\Rl^n,dr).
\end{equation}
In contrast, there is no such uniqueness for the classical case, basically because there are uncountably many inequivalent irreducible representations of the classical observable algebra (labelled by the points of $\Xi_0$). This non-uniqueness forces the choice of a measure $\mu$ on the classical subspace $\Xi_0$, so that the classical algebra is represented as the multiplication operators in $L^2(\Xi_0,\mu)$. A hybrid system can thus be set up in the tensor product, as defined in the following Def.~\ref{def:standard}. This kind of representation is the analog of the \Schroed\  representation: An explicit choice of Weyl operators satisfying the relations \eqref{weylrel}, initially without the claim that {\it all} good representations look like that. Indeed, it will be the next step to establish that claim, and hence the hybrid analog of von Neumann's result (Thm.~\ref{thm:unique} below).

\begin{defi}\label{def:standard}
Let $\Xi=\Xi_1\oplus\Xi_0$ be a hybrid phase space with antisymmetric form $\sigma=\sigma_1\oplus0$. Then a {\bf standard representation} is a representation of the Weyl relations in the Hilbert space $\HH=\HH_1\otimes L^2(\Xi_0,\mu)$, where $\mu$ is some regular Borel measure on $\Xi_0$, and $\HH_1$ is the Hilbert space of the \Schroed\  representation $W_1:\Xi_1\to\BB(\HH_1)$ for $\Xis{_1}$. The Weyl operators are given by
\begin{equation}\label{standardrep}
  W(\xi_1\oplus\xi_0)=W_1(\xi_1)\otimes W_0(\xi_0) ,
\end{equation}
where $W_0(\xi_0)$ is the multiplication operator
\begin{equation}\label{W0}
  \bigl(W_0(\xi_0)\phi\bigr)(x)=e^{i\xi_0\cdt x} \phi(x)
\end{equation}
for $\phi\in L^2(\Xi_0,\mu)$ and $x\in\Xi_0$.
A {\it state} on the CCR-algebra is called {\bf standard} if it is given by a density operator on $\HH$ in a standard representation.
\end{defi}

We remark that the standard representation depends on $\mu$ only up to equivalence. That is, when two measures $\mu$ and $\mu'$ have the same null sets, the Hilbert spaces $L^2(\Xi_0,\mu)$ and $L^2(\Xi_0,\mu')$ are the same by a unitary transformation that acts by multiplication (with $\sqrt{d\mu/d\mu'}$) and, in particular, intertwines the multiplication operators \eqref{W0}. We can, therefore, always choose $\mu$ to be a probability measure, typically the classical marginal of a state under consideration.
Note that the translate of a standard state is again standard, but generally not in the same representation, unless $\mu$ is quasi-invariant (equivalent to its translates). This is the case just for the Lebesgue measure and will be discussed further in Sect.~\ref{sec:transState}.

The {\bf von Neumann algebra} generated by a standard representation is
\begin{equation}\label{vNmu}
  \MM_\mu=\BB(\HH_1)\vNotimes L^\infty(\Xi_0,\mu),
\end{equation}
where $\vNotimes$ denotes the tensor product of von Neumann algebras.
Indeed, since the $W_1(\xi_1)$ are irreducible on $\HH_1$, they generate $\BB(\HH_1)$ as a von Neumann algebra, and similarly the Weyl multiplication operators generate the maximal abelian algebra of all multiplication operators $M_f$ with $f\in L^\infty(\Xi_0,\mu)$, which is isomorphic to $L^\infty(\Xi_0,\mu)$. Putting this together, and  using the commutation theorem for tensor products \cite[Thm.~IV.5.9]{takesaki1} gives \eqref{vNmu}. Note that this algebra still depends on $\mu$ because in $L^\infty(\Xi_0,\mu)$ functions, which only agree $\mu$-almost everywhere, are identified. By identifying $A\otimes f$ with the function $x\mapsto f(x)A$ we can think of the elements of $\MM_\mu$ as measurable $\BB(\HH_1)$-valued functions on $\Xi_0$.
We will later strive to get rid of the $\mu$-dependence in the definition of observable algebras, as is motivated at the beginning of Sect.~\ref{sec:funcobs}.

Standard states are thus normal states on some $\MM_\mu$, hence elements of the predual
$\TT^1(\HH_1)\otimes L^1(\Xi_0,\mu)$, where $\TT^1(\HH_1)$ denotes the trace class. They can hence be decomposed as
\begin{equation}\label{disintegrate}
  \omega(A\otimes f)= \braket{\omega}{A\otimes f} = \int\mu(dx)c(x)\ f(x)\tr(\rho_x A),
\end{equation}
where $c\mu$ is the probability measure determining the classical marginal, i.e., the expectations of multiplication operators, and $x\mapsto\rho_x$ is a measurable family of density operators. The factor $c(x)$ is introduced to allow that $\tr\rho_x=1$ for all $x$. When we consider a particular state and its GNS-representation, we usually take $\mu$ directly as the classical marginal of that state, i.e., set $c(x)\equiv1$.
The required measurability conditions for the family of states $\rho_x$ are spelled out in \cite[Sect.~IV.7]{takesaki1}.

The definition of standard states brings in a dependence on $\mu$, so that it is not a priori clear that convex combinations of standard states are standard. However, the integral decomposition \eqref{disintegrate} makes clear that for a countable convex combination $\rho=\sum_j\lambda_j\rho_j$ we can take $\mu=\sum_j\lambda_jc_j\mu_j$, and, set $h_j$ to be the \RadNy\ derivative of $\lambda_jc_j\mu_j$ with respect to $\mu$. Note that $0\leq h_j(x)\leq1$, and $\sum_jh_j=1$. Then $\rho_0=1$ and
$\rho_x=\sum_jh_j\rho_{x,j}$. In particular, a normal state in a direct sum of standard representations can be rewritten as a state using just a single summand, i.e., is also standard in the sense of the above definition.

This argument also shows that the von Neumann algebra approach to hybrids can be made to work on larger and larger sets of states: If needed, one can consider any countable (and thereby any norm separable) family of states as absolutely continuous with respect to a common reference measure. However, the set of measures on $\Xi_0$ is not norm separable, so there is no single standard representation which can be used for {\it all} practical purposes. One could represent a single observable $F$ by a net of functions $F_\mu\in\BB(\HH)\otimes L^\infty(\Xi_0,\mu)$, each defined up to $\mu$-a.e. equality. Then indices are ordered by absolute continuity $\mu\ll\nu$, i.e., $\nu$ has fewer null sets than $\mu$, and in this case, $F_\nu$ is more sharply defined than $F_\mu$. There is no natural limit to such nets because we cannot include all the uncountably many point measures. However, the notion of universally measurable sets and functions (see Sect.~\ref{sec:funcobs}) does allow us to get rid of the Lebesgue completions.

Since standard states thus form a convex set, it makes sense to ask for the extreme points, i.e., the {\bf pure states}.
These are readily characterized:

\begin{lem}\label{lem:purestates}
A standard state $\omega$ on the CCR-algebra is extremal iff there is a point $x\in\Xi_0$ and a unit vector $\phi\in\HH_1$ such that in the decomposition \eqref{disintegrate} $\mu=\delta_x$ is a point measure and $\rho_x=\kettbra\phi$.
\end{lem}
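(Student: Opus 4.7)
The plan is to analyze the convex structure of the integral decomposition
$\omega(A\otimes f)=\int \mu(dx)\,c(x)\,f(x)\,\tr(\rho_x A)$
directly. Writing $\nu(dx)=c(x)\mu(dx)$ for the classical marginal (a probability measure on $\Xi_0$), the state $\omega$ is determined by the pair consisting of $\nu$ and the $\nu$-a.e.\ defined measurable family $x\mapsto\rho_x$ of density operators. By the preceding discussion, convex combinations of standard states are again standard, so extremality may be tested against decompositions that stay within the standard class.

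For the ``if'' direction, assume $\nu=\delta_x$ and $\rho_x=\kettbra\phi$, so that $\omega(A\otimes f)=f(x)\brAAket{\phi}{A}{\phi}$. Given any decomposition $\omega=\lambda\omega_1+(1-\lambda)\omega_2$ with $\omega_i$ standard and $\lambda\in(0,1)$, the classical marginals satisfy $\delta_x=\lambda\nu_1+(1-\lambda)\nu_2$. Since $\delta_x$ is extreme among probability measures, $\nu_1=\nu_2=\delta_x$, and each $\omega_i$ takes the product form $\omega_i(A\otimes f)=f(x)\tr(\rho^{(i)}A)$. The quantum part then gives $\kettbra\phi=\lambda\rho^{(1)}+(1-\lambda)\rho^{(2)}$, and extremality of the one-dimensional projection inside the set of density operators on $\HH_1$ forces $\rho^{(1)}=\rho^{(2)}=\kettbra\phi$, whence $\omega_1=\omega_2=\omega$.

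For the ``only if'' direction, I would proceed contrapositively in two stages. Suppose first that $\nu$ is not a Dirac measure; since its support then contains at least two points, there is a Borel set $E\subset\Xi_0$ with $\lambda:=\nu(E)\in(0,1)$. The probability measures $\nu_1=\nu(\cdot\cap E)/\lambda$ and $\nu_2=\nu(\cdot\setminus E)/(1-\lambda)$ have disjoint supports, so they are distinct, and both are absolutely continuous with respect to $\mu$. Setting $\omega_i(A\otimes f)=\int\nu_i(dx)\,f(x)\,\tr(\rho_x A)$ yields two distinct standard states with $\omega=\lambda\omega_1+(1-\lambda)\omega_2$, contradicting extremality. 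Hence $\nu=\delta_x$ and $\omega(A\otimes f)=f(x)\tr(\rho_x A)$ for a single density operator $\rho_x$. Second, if $\rho_x$ fails to be a rank-one projection, any non-trivial convex decomposition $\rho_x=\lambda\rho^{(1)}+(1-\lambda)\rho^{(2)}$ into distinct density operators lifts via $\omega_i(A\otimes f)=f(x)\tr(\rho^{(i)}A)$ to a non-trivial decomposition of $\omega$ into distinct standard states, again contradicting extremality. Thus $\rho_x=\kettbra\phi$ for some unit vector $\phi\in\HH_1$.

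The argument is essentially elementary; the only thing requiring a moment's care is that the pieces $\omega_i$ obtained from splitting the marginal are genuinely standard states. This holds because each $\nu_i$ is dominated by $\nu$ and hence by $\mu$, so the resulting functional is a normal state on $\MM_\mu$ and lives on the same standard representation Hilbert space as $\omega$ itself. The only real obstacle is bookkeeping: keeping the interplay between $\mu$, $\nu$, and the Radon--Nikodym factor $c$ consistent when transporting the argument between the $\mu$-weighted and $\nu$-weighted forms of the integral.
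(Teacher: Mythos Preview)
Your proof is correct and follows essentially the same route as the paper's: both decompose $\omega$ by multiplying with a central (classical) element to produce a non-trivial convex combination, forcing the classical marginal to be a point mass and then the quantum part to be pure. The paper's version is marginally slicker in that it uses an arbitrary $f\in L^\infty(\Xi_0)$ with $\varepsilon<f<1-\varepsilon$ and concludes that every such $f$ is $\mu$-a.e.\ constant, whence $\mu$ is a point measure directly---this avoids the small step of producing a Borel set $E$ with $\nu(E)\in(0,1)$---but your version with indicator functions is just as valid. You also prove the ``if'' direction explicitly, which the paper omits as obvious.
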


\begin{proof}
Suppose that $\omega$ is extremal. Then let $f\in L^\infty(\Xi_0)$ with $\veps<f<\idty-\veps$ for some $\veps>0$. $\omega$ is then decomposed into the sum of two positive functionals
\begin{equation}\label{notextreme}
\omega(X)=\omega(f)\,\frac{\omega (fX)}{\omega (f)}+\omega(1-f)\,\frac{\omega((1-f)X)}{\omega(1-f)}.
\end{equation}
This is a convex combination of states, so by extremality, the two states have to be proportional, i.e.,
$\omega(fX)=\lambda\omega(X)$ for all $X$. This forces $\lambda=\omega(f)$, by putting $X=\idty$, and hence we conclude that $f=\omega(f)\idty$ almost everywhere with respect to $\mu$. Hence, $\mu$ is a point measure at some point~$x$, say. The choice of $\rho_y$ for $y\neq x$ is irrelevant because the whole complement of $\{x\}$ has measure zero. The state $\rho_x$ is now given by a density operator, which clearly has to be extremal as well, so $\rho_x=\kettbra\phi$.
\end{proof}

Note that a state $\omega$ may have no extremal components, i.e., no extreme points $\omega'$ such that $\omega\geq\lambda\omega'$ with $\lambda>0$. Indeed this will be the case whenever the measure $\mu$ has no atoms (points of non-zero measure). It is therefore not a priori clear in which sense standard states can be decomposed into extreme points. This will be clarified in Sect.~\ref{sec:CstarStates}, where it will be seen that the standard states are the state space of a certain C*-algebra, so the convex combinations of extreme points are dense in a suitable weak* topology.

\subsection{Hybrid Uniqueness Theorem}
It is straightforward to check that in the standard representation, $\xi\mapsto W(\xi)$ is continuous with respect to the strong operator topology. It turns out that this characterizes standard representations. This is the main content of the following theorem, which is very close in its formulation and its proof to von Neumann's famous result \cite{vNunique}.

\begin{thm}[Hybrid Uniqueness Theorem]\label{thm:unique}
Every representation of the Weyl relations on a Hilbert space, for which the mapping $\xi\mapsto W(\xi)$ is continuous in the strong operator topology, is unitarily equivalent to a direct sum of standard representations.
\end{thm}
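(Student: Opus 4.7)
The plan is to diagonalize the abelian family of classical Weyl unitaries, then invoke von Neumann's classical uniqueness theorem fiber-wise, and finally repackage by multiplicity to obtain a direct sum of standard representations. First, I split $\Xi=\Xi_1\oplus\Xi_0$ as in Sect.~\ref{sec:setup}, so that $\sigma$ is symplectic on $\Xi_1$ and vanishes whenever one of its arguments lies in $\Xi_0$. The Weyl relation \eqref{weylrel} then yields $W(\xi_1)W(\xi_0)=W(\xi_0)W(\xi_1)=W(\xi_1+\xi_0)$ and $W(\xi_0)W(\xi_0')=W(\xi_0+\xi_0')$, so $\xi_0\mapsto W(\xi_0)$ is a strongly continuous unitary representation of the additive group $\Xi_0\cong\Rl^s$ that commutes with every $W(\xi_1)$. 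By the multidimensional Stone (SNAG) theorem there is a projection-valued measure $E$ on $\Xi_0$ (identified with its own dual, as in the paper's convention) with $W(\xi_0)=\int e^{i\xi_0\cdt x}\,E(dx)$, and by functional calculus each $W(\xi_1)$ commutes with every spectral projection $E(A)$.

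Next I pick a scalar basic measure $\mu$ for $E$ and disintegrate $\HH=\int^\oplus\HH_x\,d\mu(x)$ accordingly. Under this disintegration, $W(\xi_0)$ becomes pointwise multiplication by $x\mapsto e^{i\xi_0\cdt x}$, matching \eqref{W0}. Commutation with $E$ makes each $W(\xi_1)$ decomposable, $W(\xi_1)=\int^\oplus W_x(\xi_1)\,d\mu(x)$, and restricting \eqref{weylrel} to $\Xi_1$ shows that for $\mu$-a.e.\ $x$ the fiber representation $\xi_1\mapsto W_x(\xi_1)$ is a strongly continuous representation of the symplectic Weyl relations for $(\Xi_1,\sigma_1)$ on $\HH_x$. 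Applying von Neumann's uniqueness theorem \cite{vNunique} fiber-wise produces, for $\mu$-a.e.\ $x$, a multiplicity Hilbert space $\KK_x$ and a unitary $U_x:\HH_x\to\HH_1\otimes\KK_x$ intertwining $W_x(\xi_1)$ with $W_1(\xi_1)\otimes\idty_{\KK_x}$.

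Assembling the $U_x$ into a measurable field of unitaries (via a standard measurable cross-section argument in the direct-integral framework, cf.\ \cite{takesaki1}) yields a global unitary $U:\HH\to\HH_1\otimes\int^\oplus\KK_x\,d\mu(x)$ that intertwines the full Weyl representation with $W_1(\xi_1)\otimes\idty$ on the quantum side and with multiplication by $e^{i\xi_0\cdt(\cdot)}$ on the classical side. Finally, partitioning $\Xi_0$ by the measurable multiplicity function $x\mapsto\dim\KK_x\in\{1,2,\ldots,\aleph_0\}$ gives measures $\mu_n$ of constant multiplicity $n$; the corresponding summand $\HH_1\otimes L^2(\Xi_0,\mu_n)\otimes\Cx^n$ is an $n$-fold direct sum of standard representations in the sense of Def.~\ref{def:standard}, and taking the direct sum over $n$ yields the claim. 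I expect the main obstacle to be the measurable selection of the $U_x$: once the direct-integral machinery is in place, the argument is essentially a parametrized version of von Neumann's theorem with the classical component absorbed into the disintegration.
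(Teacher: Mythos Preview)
Your approach is correct but proceeds in the opposite order from the paper. You disintegrate the classical part first via SNAG and direct integrals, then apply von Neumann's theorem fiber-wise, leaving you with a measurable selection problem for the fiber unitaries $U_x$. The paper instead runs von Neumann's argument \emph{globally}: it forms the Gaussian-averaged projection $A=\int d\xi_1\,a(\xi_1)W(\xi_1\oplus0)$ on all of $\HH$, sets $\HH_0=A\HH$, and builds a single unitary $U:\HH\to\HH_1\otimes\HH_0$ via $UW(\xi_1\oplus0)\psi_0=W_1(\xi_1)\Omega\otimes\psi_0$. Because $A$ commutes with every $W(0\oplus\xi_0)$, the intertwining upgrades to the full Weyl system for free, and only at the very end is the residual abelian representation on $\HH_0$ decomposed into cyclic pieces. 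The paper's route thus sidesteps direct-integral machinery and the measurable-selection step you flag as the main obstacle; it is more elementary and self-contained. Your route is more modular and closer in spirit to Mackey-style decomposition theory, but you should be explicit about two points it hides: that strong continuity of $\xi_1\mapsto W_x(\xi_1)$ holds $\mu$-a.e.\ (this follows from measurability of the fiber maps plus the automatic-continuity theorem for measurable homomorphisms of locally compact groups), and the measurable choice of $U_x$ (which becomes easy once you note that von Neumann's projection $A$ is itself decomposable, so $U_x$ can be built canonically from $A_x$ rather than by abstract selection).
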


In the literature, it is traditional \cite{Segal} to use a weaker continuity condition, which does not demand joint continuity of $W$ in all $2n+s$ variables in $\xi\in\Xi$, but only along one-dimensional subspaces. This is the minimum required to get self-adjoint canonical operators and is usually called ``regularity'' \cite{bratteli2,Honegger}. This weaker version avoids some of the topological subtleties of infinite-dimensional $\Xi$. In the finite-dimensional case, there is no difference.

\begin{proof}
Consider a strongly continuous representation $W$ on a Hilbert space $\HH$. For the most part, we will only need to use the representation $\xi_1\mapsto W(\xi_1\oplus0)$ of the subgroup $\Xi_1$. Following von Neumann, and even his notation up to a factor $2\pi$, we introduce a Gaussian function $a:\Xi_1\to\Cx$ and the operator
\begin{equation}\label{vNA}
  A=\int d\xi_1\ a(\xi_1) W(\xi_1\oplus0).
\end{equation}
The integral exists as a strong integral because $W$ is continuous. Because $a$ is integrable, $A$ is clearly a bounded operator. With von Neumann's choice, it is even a projection, and in the \Schroed\  representation, it is just the one-dimensional projection $\kettbra\Omega$ onto the harmonic oscillator ground state vector $\Omega\in\HH_1$. Since algebraic relations between $A$ and anything in $\CCR\Xis{_1}$ are the same in any representation, it is hardly a surprise that we have
\begin{equation}\label{cNPAP}
  A W(\xi_1\oplus0)A=\brAAket\Omega{W_1(\xi_1)}\Omega\,A=:\chi(\xi_1) A,
\end{equation}
where $W_1$ is the \Schroed\  representation and $\chi(\xi_1)=\exp(-1/4\,\xi_1^2)$ is a Gaussian, the characteristic function of the oscillator ground state. But, of course, one can also show this (as von Neumann does) by explicit computation based on the Weyl relations.

It is a key part of von Neumann's argument that $A$ cannot vanish for any continuous representation of $\Xi_1$. Indeed, in such a representation also $W(\eta\oplus0)A W(\eta\oplus0)^*$ would vanish for all $\eta$, which is exactly of the form \eqref{vNA}, with a kernel function $a_\eta(\xi_1)=\exp(i\eta\cdt\sigma\xi_1)a(\xi_1)$. As a function of $\eta$, the integral of the modified \eqref{vNA} is thus the Fourier transform of an operator-valued $L^1$-function, hence vanishes only if $a$ does, which is false.

Consider now the subspace $\HH_0:=A\HH$ and the set $M$ of vectors of the form $W(\xi_1\oplus0)\psi_0$ for $\xi_1\in\Xi_1$ and $\psi_0=A\psi_0\in\HH_0$. We claim that its linear span is dense. If $\Psi\in\HH$ were orthogonal to $M$, we would have that $\braket{AW(\xi_1\oplus0)\Psi}{A\psi_0} =0$ for all $\psi_0$, so $A$ would vanish on the cyclic sub-representation space of $\Xi_1$ generated by $\Psi$, contradicting von Neumann's result $A\neq0$.

We define a function $U:M\to\HH_1\otimes\HH_0$ by
\begin{equation}\label{UvNu}
  UW(\xi_1\oplus0)\psi_0=W_1(\xi_1)\Omega\otimes\psi_0 .
\end{equation}
Scalar products between different vectors on $M$ are preserved, so, in particular, it sends a linear combination representing the null vector again to a linear combination with vanishing norm. That is, it extends to a linear operator on the algebraic linear span of $M$. Clearly, this extension is isometric as well, so extends by continuity to $\HH$. Hence \eqref{UvNu} defines an isometry $U:\HH\to\HH_1\otimes\HH_0$. It is also onto, because the vectors $W_1(\xi_1)\Omega$ span $\HH_1$. To summarize: $U$ is unitary. Now since $A$ commutes with $W(0\oplus\eta_0)$ by virtue of the hybrid commutation relations, we can replace $\psi_0$ in \eqref{UvNu} by $W(0\oplus\eta_0)\psi_0$, and upgrade that equation to a full intertwining relation on $M$:
\begin{align}\label{UvNu2}
  UW(\eta_1\oplus\eta_0)W(\xi_1\oplus0)\psi_0
      &=U W(\eta_1\oplus0)W(\xi_1\oplus0)\bigl( W(0\oplus\eta_0)\psi_0\bigr)\nonumber\\
      &= W_1(\eta_1)W_1(\xi_1)\Omega\otimes W(0\oplus\eta_0)\psi_0\nonumber\\
      &= W_1(\eta_1)\otimes W(0\oplus\xi_0) UW(\xi_1\oplus0)\psi_0.
\end{align}
Hence $W(\eta_1\oplus\eta_0)=U^*W_1(\eta_1)\otimes W(0\oplus\eta_0)U$.

To complete the proof, it suffices to observe that the strongly continuous representation $\xi_0\mapsto W(0\oplus\xi_0)$ of the group $\Xi_0\cong\Rl^s$ on $\HH_0$ can be decomposed into a direct sum of cyclic ones, and the cyclic representations are of the form given in Def.~\ref{def:standard}. To see that this decomposition works together correctly with von Neumann's construction for $\Xi_1$ was the main reason to include an abridged version of his argument.
\end{proof}

\subsection{Bochner's Theorem}
A state on the CCR-algebra is completely determined by its expectations on Weyl operators, hence by the function
\begin{equation}\label{chfunc}
  \chi(\xi)=\omega(W(\xi)),
\end{equation}
which we call the {\bf characteristic function} of $\omega$. Thus it is a natural question which functions exactly arise in this way. This demands unifying two well-known results: The purely classical case of this is known as Bochner's Theorem (sometimes: Bochner--Khintchine Theorem \cite{holevo_probabilistic_book}). Its quantum analog was apparently first formulated by Araki \cite{araki}, with further relevant work by \cite{Kastler1965,loupias,loupias2,parthasarathy_what,evansLewis}.
Its hybrid version (also in \cite{Honegger}) is the following.

\newpage
\begin{thm}[Hybrid Bochner Theorem] \label{thm:Bochner}
Let $\Xi$ be a vector space with antisymmetric form $\sigma$. Then a function $\chi:\Xi\to\Cx$  is the characteristic function of a {standard state} on $\CCR\Xis{}$ if and only if it is
\begin{itemize}
\item[(1)] continuous,
\item[(2)] normalized, $\chi(0)=1$, and
\item[(3)] $\sigma$-twisted positive definite, which means that, for any choice $\xi_1,\ldots,\xi_N$, the $N\times N$-matrix
  \begin{equation}\label{twistedPosDef}
       M_{k\ell}=\chi(-\xi_k+\xi_\ell)\,e^{\textstyle -\frac i2 \sigma(\xi_k,\xi_\ell)}
  \end{equation}
     is positive semi-definite.
\end{itemize}
\end{thm}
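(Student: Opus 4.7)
The plan is to handle the two implications separately: necessity by direct computation, sufficiency by a GNS-type construction whose output is identified via the Hybrid Uniqueness Theorem (Thm.~\ref{thm:unique}).

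For necessity, assume $\chi(\xi) = \omega(W(\xi))$ for a standard state $\omega$. Normalization is immediate from $W(0) = \idty$. Continuity follows because in any standard representation the map $\xi \mapsto W(\xi)$ is strongly continuous by direct inspection of \eqref{standardrep}--\eqref{W0}, and $\omega$ is a normal functional. For twisted positive definiteness, pick $c_1, \ldots, c_N \in \Cx$, form $A = \sum_\ell c_\ell W(\xi_\ell)$, and use $W(\xi_\ell) W(-\xi_k) = e^{-(i/2)\sigma(\xi_k, \xi_\ell)} W(\xi_\ell - \xi_k)$ from \eqref{weylrel} to compute
\begin{equation*}
0 \le \omega(AA^*) = \sum_{k,\ell} \bar c_k c_\ell\, e^{-(i/2)\sigma(\xi_k,\xi_\ell)}\, \chi(\xi_\ell-\xi_k) = \sum_{k,\ell} \bar c_k c_\ell\, M_{k\ell},
\end{equation*}
which is precisely the condition in \eqref{twistedPosDef}.

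For sufficiency, I carry out a GNS construction on the abstract *-algebra $\AA_0$ generated by unitary symbols $\{W(\xi)\}_{\xi \in \Xi}$ modulo the Weyl relations \eqref{weylrel}, with linear functional $\omega_0(W(\xi)) := \chi(\xi)$. Every element of $\AA_0$ reduces to a finite sum $A = \sum c_\ell W(\xi_\ell)$, and the computation from the necessity direction shows that hypothesis (3) is equivalent to $\omega_0(AA^*) \ge 0$ for all such $A$. Substituting $A^*$ in place of $A$ gives $\omega_0(A^*A) \ge 0$ as well, so the null set $\NN = \{A : \omega_0(A^*A) = 0\}$ is a left ideal by Cauchy--Schwarz, the quotient $\AA_0/\NN$ carries a genuine inner product, and its completion is a Hilbert space $\HH$. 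Left multiplication defines a unitary *-representation $\pi$ of the Weyl relations with cyclic vector $\Omega := [\idty]$ satisfying $\langle \Omega, \pi(W(\xi)) \Omega\rangle = \omega_0(W(\xi)) = \chi(\xi)$.

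The critical remaining step is to promote $\pi$ to a strongly continuous representation so that Thm.~\ref{thm:unique} applies. On the total set of vectors $\pi(W(\eta))\Omega$, using the identity $W(\eta)^* W(\xi) W(\eta) = e^{-i\sigma(\xi,\eta)} W(\xi)$ from \eqref{weylcom}, one computes
\begin{equation*}
  \bigl\|\pi(W(\xi))\pi(W(\eta))\Omega - \pi(W(\eta))\Omega\bigr\|^2 = 2 - 2\,\re\bigl(e^{-i\sigma(\xi,\eta)}\,\chi(\xi)\bigr),
\end{equation*}
which tends to $0$ as $\xi \to 0$ by hypotheses (1) and (2). Uniform boundedness $\|\pi(W(\xi))\| = 1$ together with the group law upgrade this pointwise continuity at the identity to strong continuity of $\pi(W(\cdot))$ on all of $\HH$, and finite-dimensionality of $\Xi$ turns this into joint continuity in $\xi$. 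Thm.~\ref{thm:unique} then decomposes $\pi$ as a direct sum of standard representations, making the vector state defined by $\Omega$ a standard state whose characteristic function is $\chi$. The main delicate point I anticipate is not any single hard estimate but tracking the sign conventions of the $\sigma$-twist as they propagate through \eqref{weylrel}, the definition of $M$, and the GNS inner product; once these are consistently fixed, no genuine technical obstacle remains.
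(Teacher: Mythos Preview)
Your proposal is correct and follows essentially the same route as the paper: verify that conditions (2) and (3) are exactly positivity of the functional on the Weyl *-algebra, run the GNS construction, use continuity of $\chi$ to show the GNS Weyl operators form a strongly continuous family, and then invoke the Hybrid Uniqueness Theorem to conclude standardness. The only cosmetic difference is that the paper checks continuity of general matrix elements $\braket{\pi_\omega(W(\eta_1))\Omega}{\pi_\omega(W(\xi))\pi_\omega(W(\eta_2))\Omega}$ and passes from weak to strong continuity via unitarity, whereas you compute the norm $\norm{\pi(W(\xi))\pi(W(\eta))\Omega-\pi(W(\eta))\Omega}^2$ directly; these are equivalent in effort and content.
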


\begin{proof}
Just conditions (2) and (3) are equivalent to $\omega$ being a state on the CCR-algebra. Indeed, the positive definiteness condition is precisely equivalent to $\omega(A^*A)\geq0$, where
$A=\sum_i c_i W(\xi_i)$, and the Weyl relations are used. By the GNS-construction, every positive linear functional comes from a Hilbert space representation, and by definition of the CCR-algebra as the universal C*-algebra of the Weyl relations, the state thus extends to the whole algebra.

Continuity of $\chi$ for a standard $\omega$ is obvious because a standard representation is strongly continuous. Conversely, suppose that $\chi$ is continuous, and let $\Omega\in\HH_\omega$ denote the cyclic vector of the GNS-representation of $\omega$. Then
\begin{align}\label{conti}
  \xi\to&\braket{\pi_\omega\bigl(W(\eta_1)\bigr)\Omega}{\pi_\omega\bigl(W(\xi)\bigr)\ \pi_\omega\bigl(W(\eta_2)\bigr)\Omega}
     = \braket{\Omega}{\pi_\omega\bigl( W(-\eta_1) W(\xi) W(\eta_2) \bigr)\ \Omega}\nonumber \\
     &=\exp\left(-\frac i2 \bigl(\sigma(\xi,\eta_2) +\sigma(-\eta_1,\xi+\eta_2) \bigr) \right) \braket{\Omega}{\pi_\omega\bigl( W(\xi-\eta_1+\eta_2) \bigr)\ \Omega} \nonumber\\
     &= \chi(\xi-\eta_1+\eta_2) \exp\left(-\frac i2 \bigl(\sigma(\xi,\eta_2) -\sigma(\eta_1,\xi) -  \sigma(\eta_1,\eta_2) \bigr) \right)
\end{align}
is continuous. Since the Weyl operators are bounded, this extends to the norm limits of linear combinations of  $\pi_\omega\bigl(W(\eta_2)\bigr)\Omega$ which is, by definition, all of $\HH_\omega$. Hence $\pi_\omega(W(\cdt))$ is weakly continuous, but for unitary operators, this is the same as strong continuity. Hence $\omega$ is normal in a strongly continuous representation.

By Thm.~\ref{thm:unique} this is a direct sum of standard representations, and by the argument preceding it, we conclude that $\omega$ itself is standard.
\end{proof}

To see the power of the continuity condition, it may be useful to point out some rather wild states of the CCR-algebra. Indeed, this algebra is just the hybrid version of the almost periodic functions, in the precise sense of Prop.~\ref{prop:qha}. Pure states on the almost periodic functions form the Bohr compactification of $\Xi_0$ \cite[Sect.~4.7]{FollandHarmonic}, among which the points of $\Xi_0$ (i.e., their point evaluations) are just a small part, and not even an open subset. This expresses the observation that almost periodic functions cannot distinguish a point from many others that are arbitrarily far away, so the finite and the infinite are intertwined more intimately than ``observables'' would ever distinguish. An algebra whose states are better behaved may be more adequate for physics. Even for quantum field theory, where the CCR-algebra has been used extensively (with infinite-dimensional $\Xi$), this need has been felt, and a recent proposal by Buchholz to consider not the exponentials of field operators but their resolvents \cite{buchholz1,buchholz2} (see Example~\ref{Ex:resAlg}), can be seen in this light. Certainly, this eliminates the extreme sensitivity to infinite values and, for example, allows the algebra to be invariant under typical quantum mechanical time evolutions \cite{buchholz1,buchholz2}.

This suggests finding a C*-algebra whose states are just the ``good'' ones described by Bochner's Theorem. This will be done in the next section.

\subsection{The standard states as a C*-state space}\label{sec:CstarStates}
The CCR-algebra is constructed so that its representations exactly correspond to the representations of the Weyl relations. In this correspondence, the topology of $\Xi$ plays no role at all. The way to set up a similar correspondence for just the {\it continuous} unitary representations is well known from the theory of locally compact groups: One goes to the convolution algebra over the group. In fact, the term ``group algebra'' of a group is usually reserved for the C*-envelope of the convolution algebra $L^1(G)$, and not for the topology-free analog of the CCR-algebra \cite[Ch.~13]{DixC}. Von Neumann's proof uses the same idea by introducing the operator $A$ as an integral. In this section, we follow this lead.

This will require a twisted version of the group algebra construction \cite{EdLewis}. An alternative construction would be via the group C*-algebra of a related non-abelian group, a central extension \cite[Ch.~VII]{varadarajan} of the additive group $\Xi$, called the Heisenberg group. The approach used below is a bit more direct in that it avoids the introduction of the central phase parameter, which is, in the end, integrated out anyhow. After we finished this work, we realized that the idea had already been followed through by Grundling \cite{Grundling,Grundling2}, with much the same motivation of getting a C*-description of continuous representations, and even extended to more general groups, also beyond locally compact ones. To keep this paper self-contained, we nevertheless include our version.

For $h\in L^1(\Xi,d\xi)$ and any given measurable representation $W$ of the Weyl relations we write the Bochner integral
\begin{equation}\label{Wred}
  W[h]=\int d\xi\ h(\xi)W(\xi).
\end{equation}

The bracket notation indicates that $h\mapsto W[h]$ is closely related to the representation $W$. We can then describe the algebra of such operators directly in terms of operations on $L^1(\Xi,d\xi)$. The multiplication rule and adjoints for such operators follows directly from the Weyl relations, namely $W[h]W[g]=W[h{\ast_\sigma}g]$ and $W[h]^*=W[h^*]$ with
\begin{align}\label{twistedconvol}
  \bigl(h{\ast_\sigma}g\bigr)(\xi)&=\int d\eta\ h(\xi-\eta)g(\eta)e^{-\frac i2\xi\cdt\sigma\eta}, \\
  h^*(\xi)&=\overline{h(-\xi)}.
\end{align}
These operations turn $L^1(\Xi)$ into a Banach *-algebra, which we call the {\bf $\sigma$-twisted convolution algebra} of $\Xi$. Any set of elements $h_\veps$ such that $h_\veps\geq0$, $\int d\xi\, h_\veps(\xi)=1$, and $h_\veps(\xi)=0$ for $\xi$ outside a ball of radius $\veps$ around the origin is an {\bf approximate unit}. As in the untwisted case, this follows from the strong continuity of translations on $L_1(\Xi)$.

The enveloping C*-algebra of the convolution algebra will be called the {\bf twisted group algebra} of $\Xis{}$ and denoted by $\Cs\Xis{}$. This is defined \cite[Ch.~2, \S7]{DixC} as the completion in the norm
\begin{equation}\label{cstnorm}
  \norm h=\sup_\pi\norm{\pi(h)}= \sup_\omega\omega\bigl(h^*\ast_\sigma h\bigr)^{1/2} ,
\end{equation}
where the supremum over $\pi$ runs over all *-representations of the algebra by Hilbert space operators, and $\omega$ runs over all positive linear functionals of norm $\leq1$. By slight abuse of notation, we denote the element in $\Cs\Xis{}$ associated with $h\in L^1(\Xi)$ by the completion process again by $h$. This is justified by the observation that the canonical embedding $L^1\Xis{}\hookrightarrow\Cs\Xis{}$ is injective.

\begin{prop}\label{prop:twgroup}
Let $\Xis{}$ be a hybrid phase space. Then every state $\omega$ on $\Cs\Xis{}$ is given by a unique standard state $\omega'$ on $\CCR\Xis{}$ and conversely, such that
\begin{equation}\label{wwred}
  \omega\bigl(h\bigr)=\int d\xi\ h(\xi)\,\omega'\bigl(W(\xi)\bigr).
\end{equation}
\end{prop}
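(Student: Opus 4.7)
The plan is to realize the correspondence through a bijection between nondegenerate *-representations of $\Cs\Xis{}$ and strongly continuous representations of the Weyl relations, and then combine it with the GNS construction and the Hybrid Bochner Theorem~\ref{thm:Bochner}. The key technical step is a twisted analogue of the standard group-algebra correspondence: any strongly continuous Weyl representation $W$ on $\HH$ defines a nondegenerate *-representation $\pi_W$ of the convolution *-algebra $L^1(\Xi)$ via the strong Bochner integral $\pi_W(h):=\int d\xi\,h(\xi)\,W(\xi)$, which by \eqref{cstnorm} extends to $\Cs\Xis{}$. Conversely, given a nondegenerate *-representation $\pi$ of $\Cs\Xis{}$, introduce the twisted translation $(\tau_\eta h)(\xi):=h(\xi-\eta)\,e^{-(i/2)\eta\cdt\sigma\xi}$, chosen so that $W(\eta)W[h]=W[\tau_\eta h]$ holds formally, and set $W_\pi(\eta)\pi(g)\Psi:=\pi(\tau_\eta g)\Psi$. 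An approximate-identity argument makes the domain dense, the convolution identities give unitarity together with the Weyl relations, and strong continuity of $W_\pi$ reduces to $L^1$-continuity of the translation $\eta\mapsto\tau_\eta g$.

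Given this correspondence, the forward direction is immediate. A state $\omega$ on $\Cs\Xis{}$ has a nondegenerate cyclic GNS representation $(\pi_\omega,\HH_\omega,\Omega)$, which by the above yields a strongly continuous Weyl representation $W_\omega$. The vector state $\omega'(W(\xi)):=\braket{\Omega}{W_\omega(\xi)\Omega}$ extends by the universal property of $\CCR\Xis{}$ to a state on the full CCR-algebra; its characteristic function $\chi$ is continuous by construction, so $\omega'$ is standard by Thm.~\ref{thm:Bochner}, and Bochner integrability of $\xi\mapsto h(\xi)W_\omega(\xi)$ gives \eqref{wwred} via $\omega(h)=\braket{\Omega}{\pi_\omega(h)\Omega}=\int d\xi\,h(\xi)\braket{\Omega}{W_\omega(\xi)\Omega}$. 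Conversely, given a standard $\omega'$ with continuous characteristic function $\chi$, set $\omega(h):=\int d\xi\,h(\xi)\chi(\xi)$ on $L^1(\Xi)$; it is a bounded linear functional of norm $\leq\|\chi\|_\infty=1$. A direct substitution yields
\begin{equation*}
  \omega(h^*\ast_\sigma h)=\int\! d\xi\,d\eta\ \overline{h(\xi)}\,h(\eta)\,\chi(\eta-\xi)\,e^{(i/2)\xi\cdt\sigma\eta},
\end{equation*}
whose integrand is the entrywise complex conjugate of the kernel $M_{\xi,\eta}=\chi(-\xi+\eta)\,e^{-(i/2)\sigma(\xi,\eta)}$ of condition (3) of Thm.~\ref{thm:Bochner}. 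Since complex conjugation preserves positive semi-definiteness, this integral is $\geq 0$, so $\omega$ is a positive functional of norm $\leq 1$ on the convolution *-algebra, and by \eqref{cstnorm} it extends uniquely to a state on $\Cs\Xis{}$. Uniqueness in both directions is clear, since $\omega$ determines $\chi$ as an $L^\infty$-function on $\Xi$ and continuity pins it down pointwise.

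The principal obstacle is the soundness of the prescription $W_\pi(\eta)\pi(g)\Psi:=\pi(\tau_\eta g)\Psi$: it must be well-defined on different representatives of the same vector and isometric. Both points follow from the identity $\pi(\tau_\eta g_1)^*\pi(\tau_\eta g_2)=\pi\bigl((\tau_\eta g_1)^*\ast_\sigma(\tau_\eta g_2)\bigr)=\pi(g_1^*\ast_\sigma g_2)=\pi(g_1)^*\pi(g_2)$, which holds because the twist phases in $\tau_\eta$ cancel inside the combination $g_1^*\ast_\sigma g_2$. Once this is in place, the remainder is a mechanical application of GNS together with Thm.~\ref{thm:Bochner}.
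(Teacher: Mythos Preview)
Your proposal is correct and follows essentially the same route as the paper: both recover Weyl operators on the GNS space via twisted left translation on $L^1(\Xi)$ (your $\tau_\eta$ is the paper's $\widetilde W(\eta)$ in \eqref{leftWeyl}), and both derive strong continuity of $\xi\mapsto W(\xi)$ from the norm continuity of translations on $L^1$. Your presentation is organized a bit more modularly---you set up the representation correspondence first and then specialize to GNS---while the paper works directly inside the GNS construction of the positive functional on $L^1$, but the mechanism is identical.

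One small slip to fix: in the converse positivity check, the kernel $\chi(\eta-\xi)\,e^{(i/2)\xi\cdt\sigma\eta}$ is \emph{not} the entrywise complex conjugate of $M_{\xi,\eta}$; conjugating would give $\overline{\chi(\eta-\xi)}\,e^{(i/2)\xi\cdt\sigma\eta}=\chi(\xi-\eta)\,e^{(i/2)\xi\cdt\sigma\eta}$, which has the wrong argument in $\chi$. What you have is rather the kernel for $(-\sigma)$-twisted positive definiteness. The conclusion still holds, because $\sigma$-twisted and $(-\sigma)$-twisted positive definiteness are equivalent: substitute $\xi_k\mapsto-\xi_k$ in condition~(3) of Thm.~\ref{thm:Bochner} and then take the complex conjugate of the resulting matrix, using hermiticity $\overline{\chi(\xi)}=\chi(-\xi)$ (cf.\ the remark preceding Def.~\ref{def:stateconvolve}).
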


This proposition gives us the third way of looking at Weyl elements $W(\xi)$. At first, they were defined as explicit operators in any standard representation. Secondly, they appeared as the abstract generators of a CCR-algebra. These two views are equivalent by virtue of Bochner's Theorem, which identifies standard states with linear functionals $\omega'$ on $\CCR\Xis{}$. The above proposition allows us to further introduce, for each $\xi$, the linear functional
$\omega\mapsto \omega'(W(\xi))$. This element of the bidual $\Cs\Xis{}\bidu$ is yet another version of the Weyl element, which we will also denote by $W(\xi)$.
It is clear that we cannot realize such an element in $\Cs\Xis{}$, since this algebra has no unit and hence contains no unitary elements. However, we can get close in the same sense the expression \eqref{Wred} can be close to $W(\xi)$, if $h$ is concentrated near $\xi$. The idea of the following proof is to do this limit in the GNS representation of $\omega$.

\begin{proof}
\def\Hgns{\widetilde\HH}%
It is a general feature of the enveloping C*-algebra construction \cite[Prop.\,2.7.5]{DixC} that the states $\omega$ on $\Cs\Xis{}$ are in bijective correspondence to the positive linear functionals $\widetilde\omega$ on the convolution algebra with norm $1$. Here the norm is taken as a linear functional on the Banach space $L^1(\Xi)$. That is, there is a function $\chi\in L^\infty(\Xi)$ with $\norm\chi_\infty=1$ such that
\begin{equation}\label{tilw}
 \widetilde\omega(h)=\int d\xi\ h(\xi)\chi(\xi)
\end{equation}
The main task of the proof is to show that the functions $\chi$ arising in this way are exactly the characteristic functions characterized by the Bochner Theorem, and in particular {\it continuous}. The uniqueness of the correspondence is clear from this equation since, on the one hand,  it gives an explicit formula for $\omega$ (resp.\ $\widetilde\omega$) in terms of $\omega'$, and, on the other, two states $\omega',\omega''$ satisfying it for the same $\widetilde\omega$  would have to be equal as elements of $L^\infty(\Xi)$, hence equal almost everywhere, and hence equal by continuity.

We begin by defining a version of the Weyl operators acting on $L^1(\Xi)$, namely
\begin{equation}\label{leftWeyl}
  (\widetilde W(\xi) h)(\eta):=e^{-\frac i2 \xi\cdt\sigma\eta}h(\eta-\xi).
\end{equation}
It is constructed so that
\begin{equation}\label{leftWeylConvolve}
  h{\ast_\sigma}g=\int d\xi\ h(\xi)\, \widetilde W(\xi) g.
\end{equation}
Intuitively, we can think of $\widetilde W(\xi)$ as the operator of convolution with $\delta_\xi$, the limit of probability densities concentrated near the point $\xi$. While this is not an element of the algebra, its operation is defined analogously to the approximate unit $\delta_0$.
It is easy to check that the operators $\widetilde W(\xi)$ satisfy the Weyl multiplication rules  \eqref{weylrel}. However, unitarity does not make sense since $L^1(\Xi)$ is not a Hilbert space. The crucial observation is that $\xi\mapsto \widetilde W(\xi)g$ is continuous in the norm of $L^1(\Xi)$. Indeed, it is a product of a translation and a multiplication operator, which are both strongly continuous on $L^1$.

Consider now a positive linear functional $\widetilde\omega$ on $L^1(\Xi)$. Its GNS representation space $\Hgns$ is the unique Hilbert space generated by vectors $v(h)$,  $h\in L^1(\Xi)$, with the scalar product $\brAket{v(h)}{v(k)}=\widetilde\omega(h^*\ast_\sigma k)$. On these, the representation $W:L^1(\Xi)\to\BB(\Hgns)$ acts by left multiplication in the convolution algebra, i.e.,  according to the formula
\begin{equation}\label{Wgns}
  W[h]v(g)=v\bigl(h\ast_\sigma g\bigr).
\end{equation}
According to \cite[I.9.14]{takesaki1} the GNS space has a cyclic vector $\Omega$ and a representation $W:L^1(\Xi)\to\BB(\Hgns)$ such that $v(h)=W[h]\Omega$, and $\braket\Omega{W[h]\Omega}=\widetilde\omega(h)$.  Indeed, one has $\Omega=\lim_{\veps\to0}v(h_\veps)$, where $h_\veps$ is a bounded approximate unit.

Our next aim is to show that $W$ arises exactly as in \eqref{Wred} from the integration of a representation $W$ of the Weyl relations (recall that the two functions will be typographically distinguished by their argument brackets). The obvious candidate for the Weyl operators $W(\cdot)$ are the operators $\widetilde W(\xi)$ from \eqref{leftWeyl}, represented on $\Hgns$ in GNS style. That is, in analogy to \eqref{Wgns} we set
\begin{equation}\label{Wtgns}
  W(\xi)v(g)=v\bigl(\widetilde W(\xi) g\bigr).
\end{equation}
Then it is elementary to check that $W(\xi)$ is unitary, and these operators satisfy the Weyl relations. Moreover, the $L^1$-norm continuity of $\xi\mapsto \widetilde W(\xi)g$ established earlier implies that
$\xi\mapsto W(\xi)$ is continuous in the strong operator topology. Finally, \eqref{leftWeylConvolve} implies
\begin{equation*}
  W[h]v(g)=v\bigl(h\ast_\sigma g\bigr)
          =\int d\xi\ h(\xi)v\bigl(\widetilde W(\xi)g\bigr)
          =\int d\xi\ h(\xi) W(\xi)v\bigl(g\bigr),
\end{equation*}
i.e., the GNS-representation $W[h]$ is related to the continuous representation $W$ by \eqref{Wred}. In particular
\begin{equation*}
  \widetilde\omega(h)=\int d\xi\ h(\xi)\braket\Omega{W(\xi)\Omega},
\end{equation*}
so \eqref{tilw} holds with $\chi(\xi)=\braket\Omega{W(\xi)\Omega}$, which is clearly a normalized twisted positive definite function, and  continuous because $W(\xi)$ is strongly continuous.

Conversely, given a state on the CCR-algebra, we can define its characteristic function $\chi(\xi)=\omega'(W(\xi))$. In general, that might fail to be even measurable, so the formula \eqref{tilw} might make no sense. For standard states, however, $\chi$ is continuous, and the integral is well defined.
\end{proof}

We now determine the algebras $\Cs\Xis{}$ concretely. It turns out that this is best done by splitting into a purely classical and a purely quantum part.

\begin{prop}\label{prop10tensor}
Let $\Xis{}$ be a hybrid phase space, split as $\Xi=\Xi_1\oplus\Xi_0$ with $\sigma=\sigma_1\oplus0$. Then
\begin{equation}\label{CsXi}
  \Cs\Xis{}=\Cs\Xis{_1}\otimes\Cs(\Xi_0,0) \cong \KK(\HH_1)\otimes\CC_0(\Xi_0),
\end{equation}
where $\KK(\HH_1)$ denotes the compact operators on the representation space $\HH_1$ of the irreducible quantum system $\Xis{_1}$, and $\CC_0(\Xi_0)$ denotes the continuous functions on $\Xi_0$ vanishing at infinity.
\end{prop}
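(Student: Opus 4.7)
The plan is to factor $\Cs\Xis{}$ along the decomposition $\Xi = \Xi_1 \oplus \Xi_0$ and then identify each of the two factors separately. Since $\sigma = \sigma_1 \oplus 0$, the twisting phase $e^{-(i/2)\xi\cdot\sigma\eta}$ appearing in \eqref{twistedconvol} only involves the $\Xi_1$-components of its arguments. Iterating the convolution integral over $\Xi_1$ and $\Xi_0$ separately then gives, on elementary tensors,
\begin{equation*}
  (f \otimes g) \ast_\sigma (f' \otimes g') = (f \ast_{\sigma_1} f') \otimes (g \ast g'),
\end{equation*}
with the second convolution untwisted. Via the canonical isometric identification $L^1(\Xi) \cong L^1(\Xi_1) \otimes L^1(\Xi_0)$ (projective Banach space tensor product), this exhibits $L^1(\Xi)$ as the $\ast$-algebra tensor product of the two twisted convolution algebras. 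Passing to enveloping C*-algebras and invoking nuclearity of the commutative factor, I would conclude $\Cs\Xis{} \cong \Cs\Xis{_1} \otimes \Cs(\Xi_0, 0)$.

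For the purely classical factor, setting $\sigma = 0$ reduces the twisted convolution to the ordinary convolution of the locally compact abelian group $\Xi_0 \cong \Rl^s$. Hence $\Cs(\Xi_0, 0)$ is the usual group C*-algebra of $\Xi_0$, which by Pontryagin/Fourier duality is isomorphic to $\CC_0$ of the dual group; after the standard self-duality of $\Rl^s$ this becomes $\CC_0(\Xi_0)$.

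The bulk of the work lies in showing $\Cs\Xis{_1} \cong \KK(\HH_1)$. Prop.~\ref{prop:twgroup} combined with the Hybrid Uniqueness Theorem implies that every state of $\Cs\Xis{_1}$ is given by a density operator in the \Schroed\ representation $\pi_1$, so $\pi_1$ is faithful. For the inclusion $\pi_1(\Cs\Xis{_1}) \subseteq \KK(\HH_1)$, a direct computation from \eqref{WeylSchroe} shows that for $h \in L^1(\Xi_1) \cap L^2(\Xi_1)$ the operator $\pi_1(h)$ is an integral operator on $L^2(\Rl^n)$ whose kernel is, up to phases and substitutions, a partial Fourier transform of $h$. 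By Plancherel this kernel lies in $L^2(\Rl^n \times \Rl^n)$, so $\pi_1(h)$ is Hilbert--Schmidt, hence compact; density of $L^1 \cap L^2$ in $L^1$ and norm-closedness of $\KK(\HH_1)$ then finish this direction. For the reverse, I would use the Gaussian $a$ from the proof of Thm.~\ref{thm:unique}: by construction $\pi_1(a) = \kettbra\Omega$ is a rank-one projection. Translating and modulating $a$ -- which by the twisted Weyl relations corresponds to flanking $\pi_1(a)$ by Weyl operators -- produces $\ketbra{W_1(\eta)\Omega}{W_1(\xi)\Omega} \in \pi_1(\Cs\Xis{_1})$ for all $\xi, \eta$. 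Totality of the coherent states $W_1(\xi)\Omega$ in $\HH_1$ (from irreducibility of the \Schroed\ representation) then yields every rank-one operator $\ketbra\psi\phi$ by norm approximation, so $\KK(\HH_1) \subseteq \pi_1(\Cs\Xis{_1})$.

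The main obstacle I expect is the tensor-product step: one has to check that the enveloping C*-seminorm on $L^1(\Xi)$ descends to one of the canonical C*-tensor norms on $\Cs\Xis{_1} \otimes \Cs(\Xi_0, 0)$. The abstract argument proceeds by matching nondegenerate representations of $L^1(\Xi)$ with pairs of commuting strongly continuous representations of the Weyl relations on $\Xi_1$ and $\Xi_0$, using the approximate units described after \eqref{twistedconvol} to disentangle the two actions; nuclearity of the abelian factor $\Cs(\Xi_0, 0)$ then identifies the minimal and maximal C*-tensor norms and removes any ambiguity.
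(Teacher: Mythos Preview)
Your proof is correct and follows the same architecture as the paper: tensor-factorize the twisted convolution algebra $L^1(\Xi)\cong L^1(\Xi_1)\otimes L^1(\Xi_0)$, pass to C*-envelopes using nuclearity of the abelian factor, then identify the two tensor factors separately. The paper treats the classical factor via Gelfand $+$ Riemann--Lebesgue $+$ Stone--Weierstra\ss, which is equivalent to your Pontryagin-duality invocation. For the quantum factor, the paper's preferred route is to show that $h\mapsto W[h]$ is an $L^2$-isometry onto the Hilbert--Schmidt class, which simultaneously gives compactness and density in $\KK(\HH_1)$; your approach splits this into a Hilbert--Schmidt kernel computation for the inclusion and an explicit construction of rank-one operators $\ketbra{W_1(\eta)\Omega}{W_1(\xi)\Omega}$ from the Gaussian $a$ for the reverse. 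The latter is a pleasant, self-contained variant that sidesteps the normalization constants needed for the $L^2$-unitarity, at the cost of invoking totality of coherent states; the paper's version is more compact but relies on knowing the Plancherel-type identity for the Weyl transform.
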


\begin{proof}
We first observe that for the underlying $L^1$-spaces, the direct sum naturally coincides with the projective product, which is predual to the tensor product of von Neumann algebras. That is
\begin{equation}\label{L1oplustimes}
  L^1(\Xi_1\oplus \Xi_0)=L^1(\Xi_1)\otimes L^1(\Xi_0).
\end{equation}
Indeed, the tensor products $f\otimes g$ on the right hand side can be identified with the product functions $fg(\xi_0\oplus\xi_1)=f(\xi_1)g(\xi_0)$, and this embedding is clearly isometric on step functions. Since the measurable structure of $\Xi_1\oplus\Xi_0$, which is $\Xi_1\times\Xi_0$ as a set, is defined as generated by rectangles, the product functions span a dense subspace of $L^1(\Xi_1\oplus\Xi_0)$. It is elementary to verify that the isomorphism \eqref{L1oplustimes} is also consistent with the definitions of adjoint operation and convolution product.

The completion in the construction of the enveloping C*-algebra also works out:
As one side of the tensor product is abelian, and the maximal and the minimal tensor product coincide, the tensor product is uniquely determined, as is the algebra in tensor product form.

In the second step, we need to show the claimed isomorphisms: Beginning with the classical case, $L^1(\Xi_0)$ is the convolution algebra of $\Xi_0\cong\Rl^s$. Its irreducible representations are described by the Gelfand isomorphism for abelian Banach algebras: In this case they are given by the point evaluations of the Fourier transform. Hence the C*-norm of the enveloping algebra,  $\norm h=\sup_\pi\norm{\pi[h]}$ is equal to the supremum norm of the Fourier transform of $h$. Now by the Riemann--Lebegue Lemma, the Fourier transforms of $L^1$-functions are continuous and go to zero at infinity. On the other hand, by the Stone--Weierstra\ss{} Theorem, these Fourier transforms separate points and are hence uniformly dense. Hence $\Cs(\Xi_0)=\CC_0(\Xi_0)$.

For quantum systems, note that every continuous representation of the Weyl relations is isomorphic to the \Schroed\  representation on $\HH_1$ by von Neumann's Uniqueness Theorem. Hence we only need to show that in that representation, the operators of the form $W[h]$ with $h\in L^1(\Xi_1)$ are compact, and these operators form a dense subalgebra of $\KK(\HH_1)$. This follows immediately by the correspondence theory \cite[Cor.~5.1.(4)]{QHA}.

An alternative approach using better known facts goes via first showing that operators $h\mapsto W[h]$ are not only continuous from $L^1$ to $\BB(\HH_1)$ but also an isometry for the $2$-norms, i.e. $\norm{W[h]}_2=\norm h_2$, for $h\in L^1(\Xi_1)\cap L^2(\Xi_1)\equiv\AA$, and the Schatten $2$-norm (Schmidt norm) on the operator side. Continuous extension via these $2$-norms is even unitary. Hence $W[\AA]$ consists of Hilbert--Schmidt operators, which are compact, and by taking limits in $2$-norm, we find that $W[\AA]$ is operator norm dense in the Hilbert--Schmidt class, hence in $\KK(\HH_1)$.
\end{proof}

We note that as a consequence of this characterization, we find that there are many extremal standard states since the state space of $\Cs\Xis{}$ is the weak*-closed convex hull of its extreme points, by the Banach--Alaoglu and Krein--Milman Theorems. Of course, these were already identified in Lem.~\ref{lem:purestates}.

\subsection{Restoring translation symmetry}
Translations were part of our basic setup from the outset since the phase space $\Xi$ is a vector space. The notion of standard representations (Def.~\ref{def:standard}) breaks the translation symmetry. However, it is restored in the twisted convolution construction. Indeed, combining \eqref{alpha} and \eqref{weylop} we get  $\alpha_\eta(W(\xi))=\exp(i\eta\cdt\xi)W(\xi)$. Although Weyl operators are not themselves in $\Cs\Xis{}$ we think of this algebra as generated by integrated Weyl operators \eqref{Wred}, and so we must define
\begin{equation}\label{redshift}
  (\alpha_\eta h)(\xi)=e^{i \eta\cdt\xi} h(\xi).
\end{equation}
Of course, this extends to the enveloping algebra $\Cs\Xis{}$. In the tensor product structure of Prop.~\ref{prop10tensor} we can apply this separately to the classical parts, so $\alpha_{\eta_1\oplus\eta_0}=\alpha_{\eta_1}\alpha_{\eta_0}$. On the classical part $\CC_0(\Xi_0)$ the action becomes the shift $(\alpha_{\eta_0}f)(\xi_0)=f(\xi_0+\eta_0)$. Similarly, we can compute the action on the quantum part, finding
\begin{equation}\label{shiftQ}
  \alpha_\eta(X)=W(\sigma\eta)^*XW(\sigma\eta).
\end{equation}
In this expression, we use $\sigma$ as a matrix acting on the vector $\eta\in\Xi$, which is possible because we choose a fixed basis in $\Xi\cong\Rl^{2n+s}$ (cp.\ Sect.~\ref{sec:setup}). Since $\sigma$ vanishes on the classical part,  the component $\eta_0$ of the translation argument automatically drops out, and only the quantum Weyl operators are used.

The tensor product $\KK(\HH_1)\otimes\CC_0(\Xi_0)$ can be considered as the algebra of norm continuous functions $F:\Xi_0\to\KK(\HH_1)$ vanishing at infinity, by identifying $K\otimes f$ with the function $F(\xi)=f(\xi)K$. In this ``function form'', which will later extend to certain subspaces of
$\Cs\Xis{}\bidu$ the action of translations becomes, for $\eta=\eta_1\oplus\eta_0$,
\begin{equation}\label{shiftQF}
  \bigl(\alpha_{\eta}(F)\bigr)(\xi_0)=W(\sigma\eta)^*F(\xi_0+\eta_0)W(\sigma\eta).
\end{equation}

\subsection{Continuity of state translations}\label{sec:transState}
We note that $\alpha_\eta^*$ is not strongly continuous on the Banach space of states, i.e., the function $\xi\mapsto\alpha_\eta^*\rho$ is not continuous in norm. Indeed, an arbitrarily small shift applied to a point measure moves it as far away as possible in the natural norm on states. Since translations are strongly continuous on $L^1(\Rl^n,dx)$, this is different for probability measures with absolutely continuous densities. We can use this to single out one particular standard representation, namely that using the Lebesgue measure for $\mu$.

\begin{prop}\label{prop:contransl}
Let $\omega\in\Cs\Xis{}^*$ be a state with characteristic function $\chi$, and let $\mu$ be its marginal probability measure on $\Xi_0$. Then the following are equivalent:
\begin{itemize}
\item[(1)] $\omega$ is norm continuous under phase space translations, i.e., $\lim_{\eta\to0}\norm{\omega-\alpha_\eta^*(\omega)}=0$.
\item[(2)] $\mu$ is absolutely continuous with respect to the Lebesgue measure.
\item[(3)] $\omega$ is the restriction of a standard state $\widehat\omega$ on a purely quantum system, in which the classical variables in $\Xi_0$ also have conjugate momenta.
\end{itemize}
In this case $\chi\in\C_0(\Xi)$. As a partial converse,  if $\chi\in L^p(\Xi,d\xi)$ for some $p\in[1,2]$ then the above conditions hold.
\end{prop}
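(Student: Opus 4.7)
The plan is to reduce the three-way equivalence to well-known statements about translation continuity and Fourier analysis by separating the quantum and classical contributions, and then handle the spectral regularity ($\C_0$ and $L^p$) claims with approximation arguments.

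\textbf{Step 1: (1) $\Leftrightarrow$ (2).} Using the explicit form \eqref{shiftQF}, $\alpha_\eta^*$ on $\omega$ factors as a conjugation by $W(\sigma\eta)$ on the quantum side (which depends only on $\eta_1$) and a shift $\mu \mapsto \mu(\cdot+\eta_0)$ on the classical marginal. The quantum conjugation is trace-norm continuous because $\xi_1 \mapsto W_1(\xi_1)$ is strongly continuous and strongly continuous unitary conjugation is trace-norm continuous on $\TT(\HH_1)$ (approximate by finite-rank sums $\sum p_n\kettbra{\psi_n}$ and use dominated convergence). Hence the whole question reduces to norm continuity of the classical shift in total variation; this is the classical fact that translations act strongly continuously on $L^1(\Xi_0,dx)$ but discontinuously on any singular component (every $\delta_x$ is isolated in total variation).

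\textbf{Step 2: (2) $\Leftrightarrow$ (3).} For (2) $\Rightarrow$ (3), write $d\mu = c\,dx$ and use the disintegration \eqref{disintegrate}. The enlarged phase space $\hat\Xi = \Xi_1 \oplus (\Xi_0\oplus\hat\Xi_0)$ pairs each classical variable with a conjugate momentum; its standard representation lives on $\hat\HH = \HH_1\otimes L^2(\Xi_0,dx)$. Choosing a measurable spectral decomposition $\rho_x = \sum_k p_k^{(x)}\kettbra{\phi_k^{(x)}}$, the unit vector
\begin{equation*}
  \psi(q,x,k) = \sqrt{c(x)\,p_k^{(x)}}\,\phi_k^{(x)}(q) \in \hat\HH\otimes\ell^2
\end{equation*}
yields a density operator $\hat\rho = \tr_{\ell^2}\kettbra{\psi}$ on $\hat\HH$ whose restriction to $\Cs\Xis{}$ recovers $\omega$, as a direct calculation of $\tr(\hat\rho(A\otimes M_f))$ confirms. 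Conversely, (3) $\Rightarrow$ (2) is immediate: in any purely quantum standard representation, the marginal distribution of the position operators in any density operator state is given by the position-space diagonal of the density operator, which always lies in $L^1(\Xi_0,dx)$.

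\textbf{Step 3: $\chi \in \C_0(\Xi)$ and the $L^p$ partial converse.} Granting (3), $\chi$ is the restriction to $\Xi \subset \hat\Xi$ of $\hat\chi(\hat\xi) = \tr(\hat\rho\,W(\hat\xi))$ for a trace class $\hat\rho$ on a purely quantum standard representation. Such $\hat\chi$ lies in $\C_0(\hat\Xi)$: it is continuous by strong continuity of $W$, and finite-rank approximation in trace norm (finite-rank operators have $\hat\chi$ in the Schwartz class by direct Gaussian computation) converges uniformly. Restriction to the subspace $\Xi$ preserves $\C_0$. For the $L^p$ converse, $|\chi|\leq1$ together with $\chi\in L^p$, $p\in[1,2]$, gives $\chi\in L^r$ for all $r\geq p$, in particular $\chi\in L^2$. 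Combining this with twisted positive definiteness \eqref{twistedPosDef} — used to bound the slice $\chi_\mu(\xi_0) = \chi(0,\xi_0)$ by an averaged integral over $\xi_1$ — yields $\chi_\mu \in L^r(\Xi_0)$ for some $r\in[1,2]$. Since $\chi_\mu$ is the Fourier--Stieltjes transform of $\mu$, Hausdorff--Young (or Fourier inversion, in case $r=1$) forces $\mu$ to admit an $L^{r'}$ density, hence to be absolutely continuous.

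\textbf{Main obstacle.} The most delicate part is the measurable purification in Step 2: selecting the spectral data $(p_k^{(x)},\phi_k^{(x)})$ as measurable functions of $x$ and checking trace-class properties of $\hat\rho$ relies on standard but non-trivial measurable selection results for self-adjoint trace class operators. The $L^p$ converse is the second subtle point, because extracting integrability of the slice $\chi(0,\cdot)$ from joint integrability of $\chi$ cannot be done by generic Fubini arguments and genuinely requires the positivity constraint \eqref{twistedPosDef} to transfer integrability from the full phase space to the classical subspace.
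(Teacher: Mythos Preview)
Your cycle (1)$\Leftrightarrow$(2)$\Leftrightarrow$(3) is essentially sound, though organized differently from the paper (which runs (1)$\Rightarrow$(2)$\Rightarrow$(3)$\Rightarrow$(1)) and in one place needlessly heavy. For (2)$\Rightarrow$(3) no measurable selection or purification is required: once $\mu\ll dx$, the standard representation space $\HH_1\otimes L^2(\Xi_0,\mu)$ embeds isometrically into $\HH_1\otimes L^2(\Xi_0,dx)$ via multiplication by $\sqrt{d\mu/dx}$, and $\omega$ is already a density operator there; adjoining the shift generators on $L^2(\Xi_0,dx)$ as conjugate momenta finishes the job. The ``main obstacle'' you flag is thus a red herring. (A minor slip in Step~3: finite-rank operators do not generally have Schwartz-class characteristic functions; the quantum Riemann--Lebesgue lemma gives only $\CC_0$, which is all you need.)

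The genuine gap is your $L^p$ converse. You claim twisted positive definiteness lets you bound the slice $\chi_\mu(\xi_0)=\chi(0,\xi_0)$ by an averaged integral over $\xi_1$, but you give no mechanism, and none is apparent: positive-definiteness inequalities such as \eqref{twistedPosDef} yield pointwise bounds like $|\chi|\leq1$ or relations among finitely many values, not $L^r$-control of a slice by the bulk. The paper's route is entirely different and avoids the slice altogether. Since $|\chi|\leq1$ and $\chi\in L^p$ with $p\leq2$, we have $\chi\in L^2(\Xi)$; the Fourier--Weyl transform of Sect.~\ref{sec:Lp} is a unitary $L^2\Xis{}\to L^2(\Xi,0)$, so $\chi=\FF F$ for some $F\in L^2\Xis{}$. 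Comparing with $\omega$ on a dense set and using that $\omega$ is a state forces $F\geq0$ with $\widehat\tr F=1$, hence $F\in L^1\Xis{}=\TT(\HH_1)\otimes L^1(\Xi_0,dx)$, and the classical marginal automatically has an $L^1$ density. Your slicing strategy would need to be replaced by this global transform argument.
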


\begin{proof}
(1)$\Rightarrow$(2): (2) only depends on the restriction of $\omega$ to the classical algebra. So this is a purely classical observation, which is valid for any locally compact group. Let $\alpha_x^*$,  $x\in\Rl^n$ denote the action of translations on measures over $\Rl^n$. Then (1) says $\lim_{x\to0}\norm{\mu-\alpha_x^*(\mu)}=0$, where the norm is the dual norm of the supremum norm on $\CC_0(\Rl^n)$, i.e., the total variation norm on measures. For any $\veps>0$, there is thus a neighbourhood $U_\veps\ni0$ such that $\norm{\mu-\alpha_x^*(\mu)}\leq\veps$ for $x\in U_\veps$. We pick a positive measurable function $h\in L^1(\Rl^n,dx)$ with integral $1$ and support in $U_\veps$, and set
\begin{equation}\label{alfha}
  \alpha^*_{h}(\mu)=\int dx\ h(x)\alpha^*_x(\mu),
\end{equation}
which is to be read as a weak* integral. By the triangle inequality $\norm{\mu-\alpha^*_h(\mu)}\leq\veps$. On the other hand, $\alpha^*_h(\mu)$ is absolutely continuous, because \eqref{alfha} is the convolution of the two measures $\mu$ and $h\,dx$. Explicitly, for an arbitrary $f\in\CC_0(\Rl^n)$, we get
\begin{equation}\label{alfhaf}
  \bigl(\alpha^*_{h}\mu\bigr)(f)=\int dx\ h(x)\int \mu(dy)\ f(y-x)
                   =\int \mu(dy)\int dx\ h(y-x) f(x),
\end{equation}
which is to say that $\alpha^*_{h}\mu$ has density $\widetilde h(x)=\int \mu(dy)\ h(y-x)$ with respect to Lebesgue measure. We also recall that the variation norm of absolutely continuous measures is just the $L^1$ norm of their densities. Now since $\alpha^*_{h}(\mu)$ converges in norm to $\mu$ as $\veps\to0$, the corresponding densities $\widetilde h$ form a Cauchy net in $L^1(\Rl^n,dx)$. Its limit $\widetilde h_0$ must be in $L^1$ by completeness of $L^1$. This is then a density for $\mu$, so $\mu$ is itself absolutely continuous.

(2)$\Rightarrow$(3):
Consider the standard representation for $\mu$ in the Hilbert space $\HH_1\otimes L^2(\Xi_0,\mu)$. We can consider this as a subspace of $\HH_1\otimes L^2(\Xi_0,dx)$ with the Lebesgue measure. By Def.~\ref{def:standard} $\omega$, as a state on the CCR-algebra, can be represented as a normal state on this space.
Apart from the canonical position operators in $L^2(\Xi_0,dx)$, which are already part of the hybrid setup, we can take the shift generators in this tensor factor as further canonical momentum operators. The full set of canonical operators is then clearly irreducible, so the extended system is purely quantum.

(3)$\Rightarrow$(1):
The Weyl translations in a standard representation are strongly continuous, which implies that the action $\alpha_\xi^*(\rho)=W(\sigma\xi)^*\rho W(\sigma\xi)$ is norm continuous for every $\rho\in\TT(\HH)$ so we can just restrict the continuity condition for the extended system to those translations $\eta$ which make sense in the original hybrid.

The final remark in the proposition is clear in one direction from the Riemann--Lebesgue Lemma and its quantum version \cite{QHA}. In the converse direction, it follows that the Fourier transform of $\chi$ is the density with respect to the Lebesgue measure, which is then even continuous and goes to zero.
\end{proof}

\subsection{\texorpdfstring{$L^p$}{Lp}-spaces}\label{sec:Lp}
If one is not interested in pure states, a good setting for hybrids is to restrict consideration to the norm continuous states characterized by Prop.~\ref{prop:contransl}. This leads to a purely von Neumann algebraic picture: Since all probability measures $\mu$ are then absolutely continuous with respect to Lebesgue measure $dx$, we can represent all states in $\HH=\HH_1\otimes L^2(\Xi_0,dx)$, which leads to the spaces
\begin{equation}\label{L1}
  L^1\Xis{}:=\TT^1(\HH_1)\otimes L^1(\Xi_0,dx)\qquad\mbox{and}\quad L^\infty\Xis{}:=\BB(\HH_1)\vNotimes L^\infty(\Xi_0,dx) .
\end{equation}
Here the tensor product on the right uses the $1$-norm completion, and $\TT^p$ denotes the Schatten classes for $1\leq p<\infty$, so $\TT^1$ is the trace class. $\TT^\infty$ would be ambiguous, meaning either all bounded operators $\BB(\HH)$ or just the compact operators $\KK(\HH_1)$, so we prefer to specify explicitly. The tensor product on the right is then the von Neumann algebra version, constructed as the completion of the product operators in the weak (or similar) operator topology.
This choice was adopted, for example, in \cite{barchielli_1996,olkiewicz_dynamical_1999,bardet}, see Sect.~\ref{sec:previous}.
Note that the von Neumann algebra $L^\infty\Xis{}$ is not a subalgebra of $\umeas\Xis{}$ but a quotient, because the $(L^1,L^\infty)$-duality is defined by selecting a subspace of states (cf. Fig.~\ref{fig:setting3}). In fact, it does not make sense to evaluate an element $F\in L^\infty\Xis{}$ on a pure state because functions differing at one point, e.g., the point $x\in\Xi_0$ on which the pure state lives, are identified.

The combined Schatten/Lebesgue spaces $L^p\Xis{}$ can be obtained by a purely von Neumann algebraic construction using the semifinite trace
$\widehat\tr(f\otimes A)=\int dx\ f(x)\tr A$ on $L^\infty\Xis{}$. Then $L^p\Xis{}$ comes out as the $p$-norm completion of the elements $F\in L^\infty$ so that $\norm F_p^p:=\widehat\tr \abs F^p<\infty$. These spaces are also connected by interpolation \cite{ReedSimon2}. The only case of interest to us, however, is $p=2$ because of a fact, which is well--known in both the classical and the quantum case, so its generalization to hybrids is unsurprising:
The Fourier-Weyl transform\footnote{Note that the definition given here differs from that in \cite{QHA} by a symplectic matrix in the argument, which would not make sense in the classical case.}
\begin{equation}\label{FouWey}
  (\FF F)(\xi)=\widehat\tr(F W(\xi))
\end{equation}
is a unitary isomorphism from $L^2\Xis{}$ onto $L^2(\Xi,0)$, i.e., for elements $F\in L^p\Xis{}\cap L^2\Xis{}$ and $G\in L^q\Xis{}\cap L^2\Xis{}$ with $p^{-1}+q^{-1}=1$, $1\leq p\leq\infty$ we have
\begin{equation}\label{L2scalarprod}
  \widehat\tr F^*G=\int dx\, \tr(F(x)^*G(x))=(2\pi)^{-(n+s)}\int d\xi\ \overline{\tr(F W(\xi))}\tr(G W(\xi)),
\end{equation}
where $s$ and $n$ are the numbers of degrees of classical and quantum freedom, respectively (cp.\ Sect.~\ref{sec:setup}).
This extends by continuity to all of $L^2\Xis{}$, and the map is clearly onto. Its inverse is an instance of Weyl quantization, in our case, a partial one acting only on the quantum part of the system.

A similar useful formula concerns an integral over translates: For $F,G\in L^1\Xis{}\cap L^\infty\Xis{}$ we have
\begin{equation}\label{intL1}
  \int d\xi\ \widehat\tr\bigl(F \alpha_\xi(G)\bigr)=(2\pi)^n\ \widehat\tr(F)\widehat\tr(G).
\end{equation}
The proof is immediate for the classical part, and the quantum part is essentially the square integrability of the quantum Weyl operators \cite[Lem.~3.1.]{QHA}.

\subsection{The squeezed limit}\label{sec:squeeze}
This subsection harks back to the classic paper \cite{EPR} by Einstein, Podolsky and Rosen. Think of a standard quantum system with phase space $\Xi$ and non-degenerate $\sigma$. We then consider an ``opposite'' system over $(\Xi,-\sigma)$, which can be realized in the same Hilbert space, and the Weyl system $\overline W$ with $\overline W(a,b)=W(-a,b)$. That is in the position representation \eqref{weylop} we can write $\overline W(\xi)=\Theta W(\xi)\Theta$ with $\Theta$ the complex conjugation in that representation. Then the operators $\widetilde W(\xi)=W(\xi)\otimes\overline W(\xi)$ commute, and so the characteristic function $\chi_2(\xi)=\omega_2(\widetilde W(\xi))$ is not subject to uncertainty constraints. That is, it may correspond to a probability distribution sharply concentrated at the origin. In the EPR paper the state is to have sharp distribution for the canonical operators $Q_1-Q_2$ and $P_1+P_2$. What they actually write down is an unnormalizable wave function, but the intention is clearly to have ``pretty sharp'' distributions for these operators, of course, at the expense of their canonical conjugates $P_1-P_2$ and $Q_1+Q_2$. The limit they suggest is really not necessary for their argument.  In fact, they are perfectly aware that a maximally entangled state can be written out equivalently in complementary bases, so their argument would go through literally on finite phase spaces with sharp distributions, just as in the version for spins that was later championed by Bohm \cite[Sect.22.17]{Bohm}.

Taking here the limit and to go to really sharp distributions is an interesting exercise in singular states \cite{KSW03}, which may serve here to highlight the importance of the continuity condition in Bochner's Theorem. First of all, by weak* compactness the limit of states with sharper and sharper distributions for $Q_1-Q_2$ and $P_1+P_2$ exists (at least along a subnet) as a state on $\BB(\HH\otimes\HH)$. Since we have not specified any details of the sequence there are very many states that might arise as a weak* cluster point of such a sequence, and since we are implicitly invoking the axiom of choice, there is no way to give a finite specification ensuring convergence on {\it all} observables. However, for some simple operators, like the Weyl operators themselves, the scant characterization is sufficient to determine the limit. Thus we get a ``characteristic function'', namely
\begin{equation}\label{oriEPR}
  \braket{\oEPR}{W(\xi)\otimes \overline W(\eta)}= \left\lbrace\begin{array}{rl}1 & \xi=\eta \\ 0 & \hbox{otherwise.}\end{array}\right.
\end{equation}
This is clearly discontinuous, and so corresponds to no density operator. It is also insufficient to specify the state on observables not expressed as linear combinations of Weyl operators. But some things can be read off easily. For example, the marginal for the first party ($\eta=0$ in the above) will have the ``characteristic function'' which vanishes everywhere except for $\xi=0$. This means that the probability for finding a value in some finite interval $[a,b]$ for the spectral resolution of any canonical operator is zero. Although the joint probability for $Q_1$ and $Q_2$ may be in some sense concentrated on $x_1=x_2$, the values themselves are infinite with probability one, and can only be specified as infinite points in some (non-constructive) compactification of position space. This is hardly what the authors intended, so it is much more useful to stick with ``pretty sharp'' distributions.

This is also a practical issue for quantum optics, where squeezed states play an important role. The following is the prototype:
\begin{example}The two-mode squeezed state\label{ex:sqz}\end{example}\extxt{
Consider a standard Gaussian product state of the doubled system, i.e. $\chi_0(\xi_1\oplus\xi_2)=\exp\frac{-1}{2}\bigl(p_1^2+p_2^2+q_1^2+q_2^2\bigr)$. Now apply a hyperbolic rotation to the momenta, i.e., $(p_1,p_2)\mapsto(cp_1+sp_2,sp_1+cp_2)$ with $c=\cosh\lambda$, $s=\sinh\lambda$, and the inverse to the positions, so that the symplectic form $\xi\cdot\sigma\xi'=p\cdot q'-q\cdot p'$ remains invariant. Then, taking into account the inversion of one of the arguments in $\overline W$, we get a state $\omega$ with
\begin{equation}\label{squee}
  \braket{\omega_\lambda}{W(\xi)\otimes\overline W(\eta)}=\exp\frac{-1}{4}\Bigl(e^{2\lambda}(\xi-\eta)^2 +e^{-2\lambda}(\xi+\eta)^2\Bigr),
\end{equation}
where $\xi^2=(p,q)^2=p^2+q^2$. When $\eta=\xi$, this converges pointwise to \eqref{oriEPR}, but since the limit is not continuous, L\'evy's convergence theorem (Prop.~\ref{prop:Levy}) does not apply.
}

Taking just the case $\xi=\eta$, we see that \eqref{squee} converges pointwise to $1$ as $\lambda\to\infty$. This property will be useful in the proof of Cor.~\ref{cor:statechannel}, as well as in the analysis of teleportation (Sect.~\ref{sec:teleport}).
We therefore generalize it to arbitrary hybrids:

\begin{lem}\label{lemon:squeeze} For any hybrid system $\Xis{}$ there is a family of states $\omega^\veps$ for the system\goodbreak $(\Xi\oplus\Xi,\sigma\oplus(-\sigma))$ such that
\begin{equation}\label{squeezstate}
  \lim_{\veps\to0}\braket{\omega^\veps}{W(\xi)\otimes\overline W(\xi)}\geq \exp\Bigl(\frac{-\veps}{2}\,\xi\cdt A\xi\Bigr) \qquad\text{for all}\ \xi\in\Xi,
\end{equation}
where $A$ is the covariance matrix of a quantum state on $(\Xi,\sigma)$, i.e., $A+i\sigma\geq0$. In particular, the left hand side goes to $1$ as $\veps\to0$.
\end{lem}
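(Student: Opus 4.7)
The strategy is to construct $\omega^\veps$ as an explicit Gaussian state on the doubled hybrid system $(\Xi\oplus\Xi,\sigma\oplus(-\sigma))$, directly extending the two-mode squeezed state of Example~\ref{ex:sqz} to the hybrid and general-covariance setting. For each $\veps>0$ I would define the characteristic function
\begin{equation*}
  \chi^\veps(\xi,\eta)=\exp\Bigl(-\tfrac{1}{2\veps}(\xi-\eta)\cdt G(\xi-\eta)-\tfrac{\veps}{8}(\xi+\eta)\cdt A(\xi+\eta)\Bigr),
\end{equation*}
where $G$ is a strictly positive real symmetric matrix on $\Xi$ that will be fixed below. Setting $\xi=\eta$ annihilates the first exponent and turns the second into $-\tfrac{\veps}{2}\xi\cdt A\xi$, so $\chi^\veps(\xi,\xi)=\exp\bigl(-\tfrac{\veps}{2}\xi\cdt A\xi\bigr)$. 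Since $W(\xi)\otimes\overline{W}(\xi)$ is the Weyl element at the diagonal point $(\xi,\xi)$ in the doubled Weyl system, this matches the claimed bound with equality, and tends to $1$ as $\veps\to 0$. Everything then reduces to verifying that $\chi^\veps$ really is the characteristic function of a state.

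Continuity and normalization of $\chi^\veps$ are immediate, so by the Hybrid Bochner Theorem~\ref{thm:Bochner} it remains to check the Gaussian covariance condition $B^\veps+i\tilde\sigma\geq 0$, where $B^\veps$ is the quadratic form in the exponent and $\tilde\sigma=\sigma\oplus(-\sigma)$. The key computational step is the symplectic change of coordinates $u=(\xi+\eta)/\sqrt{2}$, $v=(\xi-\eta)/\sqrt{2}$. A short block-matrix calculation shows that the cross terms in $B^\veps$ cancel, giving $B^\veps_{uv}=\bigl(\begin{smallmatrix}\veps A/2 & 0\\ 0 & 2G/\veps\end{smallmatrix}\bigr)$, while $\tilde\sigma_{uv}=\bigl(\begin{smallmatrix}0 & \sigma\\ \sigma & 0\end{smallmatrix}\bigr)$. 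The lower-right block $2G/\veps$ is strictly positive, so a Schur-complement argument reduces the desired positivity to the $\veps$-independent inequality $A\geq -\sigma G^{-1}\sigma=\sigma G^{-1}\sigma^{T}$.

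The main obstacle is then choosing $G$. With the ansatz $G=tI$ the requirement becomes $tA\geq \sigma\sigma^T$. Since $\sigma$ vanishes on $\Xi_0$, the matrix $\sigma\sigma^T$ has range in $\Xi_1$ only, and on $\Xi_0$ the inequality reduces to $A\geq 0$, which is part of the hypothesis $A+i\sigma\geq 0$. Splitting $A$ by its Schur complement with respect to $\Xi=\Xi_1\oplus\Xi_0$, the residual condition on $\Xi_1$ is that this Schur complement dominates $\sigma\sigma^T/t$. The hypothesis $A+i\sigma\geq 0$ together with the non-degeneracy of $\sigma|_{\Xi_1}$ forces that Schur complement to be \emph{strictly} positive on $\Xi_1$ (the standard Heisenberg argument), so in the finite-dimensional setting a sufficiently large $t$ suffices. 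This completes the verification that $\omega^\veps$ is a state, and the inequality and its $\veps\to 0$ limit then follow from the first paragraph.
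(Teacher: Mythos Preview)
Your proof is correct, but it follows a genuinely different route from the paper's. The paper argues by decomposing $\Xis{}$ into one-dimensional summands: for each quantum degree of freedom it invokes the two-mode squeezed state of Example~\ref{ex:sqz} with $\veps=e^{-2\lambda}$, and for each classical direction it takes a probability measure concentrated on the diagonal (which already gives the value $1$). The tensor product of these factor states then furnishes $\omega^\veps$, with the specific $A$ given by the direct sum of the one-mode covariances.

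Your approach instead writes down a single Gaussian state on the full doubled space and verifies the covariance condition $B^\veps+i\tilde\sigma\geq0$ by a block diagonalization in the sum/difference coordinates followed by a Schur complement. This is more computational but has two advantages: it yields the bound with \emph{equality}, and it works for any prescribed $A$ with $A+i\sigma\geq0$, not only the direct-sum one. The paper's version is more modular and closer to the physics intuition of squeezing each mode separately.

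One small caveat: when you ``split $A$ by its Schur complement with respect to $\Xi=\Xi_1\oplus\Xi_0$'' you are implicitly assuming that $A_{00}$ is invertible. If $A_{00}$ has a nontrivial kernel, then $A\geq0$ forces the corresponding rows/columns of $A$ to vanish as well, so one can first quotient out that kernel before taking the complement. Alternatively, since the lemma is existential in $A$, you may simply choose $A$ strictly positive from the outset (e.g.\ $A=2\idty$, which satisfies $A+i\sigma\geq0$); then $tA\geq\sigma\sigma^\top$ holds for any sufficiently large $t$, and the Schur step becomes unnecessary.
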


\begin{proof}
When $\Xis{}$ is a direct sum of other spaces, we can reduce the proof to each summand by taking tensor products. We need only two types of summands: single degree of freedom quantum systems for which Example \ref{ex:sqz} provides just the required states with $\veps=e^{-2\lambda}$, $\lambda\to\infty$, and one dimensional classical  systems. For these we can take any probability measure on the doubled system, which is concentrated on the diagonal. This will satisfy the condition even with $\veps=0$.
\end{proof}

\section{Observables as functions}\label{sec:funcobs}
\long\def\eat#1{}
\eat{The aim of this section is to complement the description of hybrid states by spaces of observables. Any standard representation in the sense of Def.~\ref{def:standard} gives us a natural observable algebra, the von Neumann algebra $\BB(\HH_1)\vNotimes L^\infty(\Xi_0,\mu)$, in which observables are $\BB(\HH_1)$-valued  functions on $\Xi_0$. The catch is that this depends on $\mu$, and any choice of $\mu$ excludes some states. For example, with the Lebesgue measure for $\mu$, we exclude all pure states, and while we can add countably many point measures to $\mu$, we would still miss uncountably many others. When it comes to quasifree channels, a $\mu$-dependent description brings in additional assumptions, so it becomes much more cumbersome to formulate results that hold for all quasifree channels. In contrast, our description of states on $\Cs\Xis{}$ is already free of such constraints, and we will now develop a matching description of observables and, later, of channels. The advantage of the $\mu$-free point of view is best illustrated by the (otherwise not rewarding) exercise of translating our Sect.~\ref{sec:channels} to a $\mu$-dependent setting.

So we have two complementary points of view:
\begin{labeledlist}{l}
\item[\textit{The setting with fixed $\mu$}] Here, we have a natural  Hilbert space $\HH_\mu=\HH_1\otimes L^2(\Xi_0,\mu)$, where $\HH_1$ is the Hilbert space of the \Schroed\  representation of the quantum part $\Xi_1$. It carries a standard representation (in the sense of Def.~\ref{def:standard}) of the Weyl operators and the basic C*-algebra $\Cs\Xis{}=\KK(\HH_1)\otimes\CC_0(\Xi_0)$.  The density operators on $\HH_\mu$ precisely give those states whose distribution for $\Xi_0$ is absolutely continuous with respect to $\mu$, i.e., states in $\TT^1(\HH_1)\otimes L^1(\Xi_0,\mu)$. Every state $\omega\in \Cs\Xis{}^*$ is represented in such a structure, but unless we want to go to non-separable Hilbert spaces, like the direct sum of {\it all} such $\HH_\mu$, every standard representation misses many states.

\item[\textit{The $\mu$-free setting}] This is the point of view based on the C*-algebra $\AA=\Cs\Xis{}$, allowing all states of $\AA$. The maximal space of observables, for which these states provide probability distributions, is, by definition, the bidual $\AA\bidu$. To get the connection with the $\mu$-dependent view, consider the map
     $i_\mu:\TT^1(\HH_1)\otimes L^1(\Xi_0,\mu)\to\Cs\Xis{}^*$, which identifies the states in the $\mu$-dependent view with the states in the $\mu$-free setting. The adjoint of the embedding is then the restriction map, the representation
      \begin{equation}\label{restrictmu}
        i_\mu^*:\Cs\Xis{}\bidu\to \BB(\HH_1)\vNotimes L^\infty(\Xi_0,\mu)\subset \BB\bigl(\HH_1\otimes L^2(\Xi_0,\mu)\bigr).
      \end{equation}
     This representation destroys all information in $A\in \Cs\Xis{}\bidu$, which is irrelevant for states absolutely continuous with respect to $\mu$, i.e., identifies functions coinciding $\mu$-almost everywhere.
\end{labeledlist}

\noindent
Neither of the two obvious choices for observable spaces in the $\mu$-free setting is feasible. The largest choice mentioned above is the second dual $\Cs\Xis{}\bidu$ which is in many ways ``too large'' so that individual elements often have no explicit description. The difficulty here arises mostly from the classical part: While the second dual of the compact operators is just the space of all bounded operators, the second dual $\CC_0(X)\bidu$ is a rather complex object \cite{kaplan}. Elements of this space are not functions on $X$, but on a related, much larger Stonean topological space $\widehat X$. Its points are the extreme points of the normalized positive elements in the triple dual $\Cs\Xis{}^{***}$, another highly non-constructive object. Hence, while the elements in $\Cs\Xis{}\bidu$ admit a function representation, it is impossible to explicitly describe even a single point of the classical variable space on which they are supposed to be ``functions''. At the other end, for small algebras, we have $\Cs\Xis{}$ itself. This is in many ways ``too small''. Indeed, $\Cs\Xis{}$ does not have an identity, which is needed for the physical interpretation as an observable algebra. Also, it does not allow quantum operators with a continuous spectrum, barring the Weyl operators themselves. Therefore, we will have to choose some intermediate algebra $\MM$ with $\Cs\Xis{}\subset \MM\subset\Cs\Xis{}\bidu$. The criteria for this choice are simple:
\begin{itemize}
\item $\MM$ should be constructed in a way that makes sense for every hybrid system.
\item Applying a quasifree channel $\semg$ to an observable in $\MM\raus$ for the output system should give an observable in $\MM\rin$.
\end{itemize}
Roughly speaking, $\MM$ will describe a degree of regularity for observables, which is preserved by all quasifree channels, leading to an automatic Heisenberg picture between the corresponding observable algebras. This could be expressed as regularity properties of operator-valued functions on the classical phase space, but such an approach introduces many case distinctions for proofs of the second of the above criteria: It depends on how inputs and outputs are split into classical and quantum parts, and how these splits are reshuffled by a quasifree channel. It turns out to be much more efficient to work with constructions that apply to classical, quantum, and hybrid systems alike. Here we follow the path of a functional analyst doing analysis, namely Gert Pedersen, whose slogan for his textbook ``Analysis NOW'' \cite{PedAnal} also stands for ``Analysis based on Norms, Operators, and Weak topologies''. The following section is based on his seminal early work and in spite of its relative abstractness, leads painlessly to just the Heisenberg picture characterizations we wanted to see.}
The aim of this section is to complement the description of hybrid states by spaces of observables. In contrast to the states there are different options to consider. The two basic possibilities are the following:
\begin{labeledlist}{l}
\item[\textit{The $\mu$-dependent setting}]
Any standard representation in the sense of Def.~\ref{def:standard} gives us a natural observable algebra, the von Neumann algebra $\BB(\HH_1)\vNotimes L^\infty(\Xi_0,\mu)$, in which observables are $\BB(\HH_1)$-valued  functions on $\Xi_0$. The catch is that this depends on $\mu$, and any choice of $\mu$ excludes some states. For example, with the Lebesgue measure for $\mu$, we exclude all pure states.
\item[\textit{The $\mu$-free setting}] This is the point of view based on the C*-algebra $\AA=\Cs\Xis{}$, allowing all states of $\AA$. The maximal space of observables, for which these states provide probability distributions, is, by definition, the bidual $\AA\bidu$. This, however, is a rather wild non-separable object, with many elements that owe their existence to the axiom of choice, and are impossible to describe explicitly. So instead we will be interested in intermediate algebras $\MM$ with
    $\AA\subset\MM\subset\AA\bidu$.
\end{labeledlist}

We approach this choice from a very pragmatic point of view. The channels we want to study in Sect.~\ref{sec:channels} have a very simple description not involving the measure $\mu$. Our aim will be to establish also a straightforward action on observables, i.e., the corresponding Heisenberg picture. This makes the $\mu$-dependent approach cumbersome, since it requires the specification of a measure both for the input and the output system, and conditions on the channel to ensure that these choices match up for the given channel. Since the measure is characteristic of only classical part of the hybrid, and classical and quantum information can be reshuffled arbitrarily by a quasifree channel, such conditions have to be stated carefully, and we would like to avoid this in the general theory.

In the $\mu$-free approach one has to choose appropriate algebras $\MM$, but we will be looking for choices with an {\it automatic Heisenberg picture} for all quasifree channels without the need to check any further conditions. For that we will describe the candidate algebras $\MM$ by intrinsic properties in terms of $\AA$, which make sense for arbitrary C*-algebras, hybrid or not. Roughly speaking, $\MM$ will describe a degree of regularity for observables, which is preserved by all quasifree channels. That is, the adjoint of a channel $\semg:\AA^*\rin\to\AA^*\raus$  defined in the \Schroed\ picture, which by definition maps $\semg^*:\AA\bidu\raus\to\AA\bidu\rin$, satisfies $\semg^*\MM\raus\subset\MM\rin$. This is equivalent \cite{Schaefer} to the continuity of $\semg$ in the weak topologies $\sigma(\AA^*,\MM)$ induced by these spaces. Thus rather than a weakness, the multitude of spaces $\MM$ is a strength: Without the need to check for any special properties of the given channel it establishes a variety of continuity properties (see Prop.~\ref{prop:Heisenalg}).

The appropriate spaces $\MM$ will be introduced in the following Sect.~\ref{sec:semicont}, without taking the hybrid structure of $\AA$ into account. This section is based on the seminal early work  of a functional analyst doing analysis, namely Gert Pedersen, whose slogan for his textbook ``Analysis NOW'' \cite{PedAnal} also stands for ``Analysis based on Norms, Operators, and Weak topologies''. In spite of its relative abstractness, this approach leads painlessly to just the Heisenberg picture characterizations we wanted to see. The following Sect.~\ref{sec:funcobsCXA} specializes these notions to the hybrid structure.


\subsection{Semicontinuity in C*-algebras}\label{sec:semicont}
The constructions in this section are inspired by the commutative case \cite{kaplan}, but have been generalized to arbitrary C*-algebras in \cite{peder,pederC,brown} (see also \cite[III, \S6]{takesaki1} for a textbook version). The commutation relations are not needed for this so that we will consider first a general C*-algebra $\AA$, typically without a unit, and only in the next section specialize to hybrids, i.e., $\AA=\CC_0(X,\KK(\HH))$. The standard states are then elements of the dual, and we are interested in well-behaved subalgebras of the bidual. A special role will be played by the pure states of $\AA$.

For the various dualities between spaces of states and spaces of observables, we use the following notation: $\braket\omega A$ will be the expectation value of the observable $A$ in the state $\omega$, where $\omega\in\AA^*$ and $A\in\AA$ or $A\in\AA\bidu$, and this is extended to the whole linear spaces. We thereby identify $\AA$ with a subspace of $\AA\bidu$. By definition, the state space of $\AA$ is the set of positive linear functionals of norm $1$.
Even wenn $\AA$ has no unit, this still can be written as $\braket\omega\idty=1$, but then $\idty\in\AA\bidu$. The weak*-topology on $\AA^*$ is the topology making all functionals $\omega\mapsto\braket\omega A$ with $A\in\AA$ continuous, whereas the weak-topology is similarly defined with $A\in\AA\bidu$.
By the Banach--Alaoglu Theorem, the ``quasi-state space'' $Q=\{\omega\in\AA^*|\omega\geq0,\norm\omega\leq1\}$ is weak*-compact, so by the Krein--Milman Theorem, $Q$ is the closed convex hull of its extreme points. Thus there are many extremal, that is ``pure'' states, and one additional extreme point $0\in Q$: When $\idty\notin\AA$ the normalization functional is not weak*-continuous, so the state space is not weak*-compact, and there will be sequences of pure states converging to $0$.

The second dual $\AA\bidu$ can be identified with its ``enveloping von Neumann algebra'', which is the von Neumann algebra generated by $\AA$ in its universal representation \cite[Ch.~III.6]{takesaki1}. This is simply the direct sum of all GNS-representations of $\AA$. Its center is the natural arena for the representation theory of $\AA$ in the following sense: For every representation $\pi:\AA\to\BB(\HH_\pi)$ there is a central projection $z_\pi\in\AA\bidu$ such that $\pi(\AA)''\cong z_\pi\AA\bidu$ as a von Neumann algebra. Representations $\pi_1,\pi_2$ which can be connected to each other by an isomorphism from $\pi_1(\AA)''$ to $\pi_1(\AA)''$  are called quasi-equivalent. So the central projections of $\AA\bidu$ classify representations up to quasi-equivalence \cite[Thm.~III.2.12]{takesaki1}.

Consider now a pure state $\omega\in\AA^*$. Its GNS-representation is irreducible, and hence the corresponding central projection $z_\pi$ is minimal, meaning that there is no projection $p$ in the center of $\AA\bidu$, other than $0$ and $z_\pi$, such that $0\leq p\leq z_\pi$. Minimal projections are also called atoms of the projection lattice. Then the minimal projections of $\AA\bidu$ correspond exactly to the pure states on $\AA$. We record this simple observation as a statement for arbitrary von Neumann algebras (cp.\ \cite{Pederatomic}).

\begin{lem}\label{lem:atomic}
Let $\MM$ be a von Neumann algebra. Then there is a one-to-one correspondence between minimal projections $p\in\MM$ and extremal normal states $\omega\in\MM_*$, given by
\begin{equation}\label{papMinimal}
  pxp=\omega(x)p, \quad\mbox{for all\ } x\in\MM.
\end{equation}
\end{lem}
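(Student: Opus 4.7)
The plan is to prove the two directions of the bijection separately, with the map in one direction sending a minimal projection $p$ to the unique normal state $\omega$ determined by collapsing the corner $p\MM p$ to $\Cx p$, and the inverse sending an extremal normal state $\omega$ to its support projection. The key algebraic fact driving both directions is that a projection $p\in\MM$ is minimal if and only if $p\MM p=\Cx p$.

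For the direction ``minimal projection $\Rightarrow$ extremal normal state'', I would first establish $p\MM p=\Cx p$: any self-adjoint $y\in p\MM p$ has its spectral projections lying in $p\MM p$ and dominated by $p$, so by minimality they are $0$ or $p$, forcing $y\in\Rl p$, and linearity handles the general case. This allows the definition $\omega(x)p:=pxp$, which is normal (as $x\mapsto pxp$ is ultraweakly continuous), positive, and satisfies $\omega(\idty)=1$. For extremality, if $\omega=\lambda\omega_1+(1-\lambda)\omega_2$ with normal states $\omega_i$ and $0<\lambda<1$, then $\omega(p)=1$ forces $\omega_i(p)=1$, which in turn forces the support of each $\omega_i$ to lie under $p$. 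Using $\omega_i(x)=\omega_i(pxp)=\omega_i(\omega(x)p)=\omega(x)$, uniqueness follows.

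For the converse, I would take $p$ to be the support projection of the extremal normal state $\omega$, so $\omega(p)=1$ and the standard Cauchy--Schwarz argument gives $\omega(x)=\omega(pxp)$. The restriction $\omega|_{p\MM p}$ is then a \emph{faithful} normal state on the von Neumann algebra $p\MM p$, and I would check it is still extremal there: any convex decomposition on $p\MM p$ lifts via $x\mapsto\widetilde\omega_i(pxp)$ to a decomposition of $\omega$ on $\MM$, so extremality is inherited. The remaining claim is that $p\MM p=\Cx p$, after which minimality of $p$ and the identity $pxp=\omega(x)p$ follow exactly as in the forward direction.

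The main obstacle is the last step, namely showing that a von Neumann algebra $\NN$ admitting a faithful normal extremal state must be $\Cx$. My plan is to use the noncommutative Radon--Nikodym theorem of Sakai: normal positive functionals dominated by $\omega$ correspond bijectively to positive contractions $T\in\pi_\omega(\NN)'$, so extremality of $\omega$ among normal states is equivalent to $\pi_\omega(\NN)'=\Cx\idty$, i.e., $\pi_\omega$ is irreducible. Faithfulness of $\omega$ implies $\pi_\omega$ is an isomorphism $\NN\cong\BB(\HH_\omega)$, and $\omega$ is the vector state of its cyclic GNS vector $\Omega$. Faithfulness then forces $\Omega$ to be separating as well as cyclic for $\BB(\HH_\omega)$, which is possible only when $\dim\HH_\omega=1$. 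Applied to $\NN=p\MM p$ this yields $p\MM p=\Cx p$, completing the proof.
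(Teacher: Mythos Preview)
Your proposal is correct and follows essentially the same route as the paper's proof: both directions hinge on the equivalence between minimality of $p$ and $p\MM p=\Cx p$, and the converse is handled by restricting $\omega$ to a faithful extremal normal state on $p\MM p$, passing to the GNS representation, and using that a cyclic and separating vector for $\BB(\HH_\omega)$ forces $\dim\HH_\omega=1$. The only noteworthy difference is that you explicitly invoke Sakai's Radon--Nikodym theorem to pass from extremality among \emph{normal} states to irreducibility of $\pi_\omega$, a step the paper treats as implicit; this is arguably a useful clarification rather than a genuine departure.
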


\begin{proof}
For any projection $p$ we consider the von Neumann subalgebra $\widetilde\MM=p\MM p$. The crucial issue is whether this algebra is one-dimensional. When that is the case, we must have $pxp=\omega(x)p$, for some functional $\omega$, which is necessarily a normal state. We will proceed by showing the implications ``$p$ minimal'' $\Leftrightarrow$``$\dim \widetilde\MM=1$''$\Rightarrow$ ``$\omega$ extremal''$\Rightarrow$ ``the support projection of $\omega$ satisfies $\dim \widetilde\MM=1$''.

Indeed, $p$ is minimal iff, for any projection $q$, $0\leq q\leq p$ implies $q=0$ or $q=p$. This is equivalent to $0$ and $p$ being the only projections in $\widetilde\MM$, i.e., to $\dim\widetilde\MM=1$.

In this case consider the state $\omega$. Then from
$\omega=\lambda\omega_1+(1-\lambda)\omega_2$ we conclude $\omega_1\leq\lambda^{-1}\omega$, hence
\begin{equation}\label{orderideal}
\abs{\braket{\omega_1}{x(1-p)}}{^2}\leq\braket{\omega_1}{(1-p)x^*x(1-p)}
     \leq \lambda^{-1}\braket{\omega}{(1-p)x^*x(1-p)}=0.
\end{equation}
But then $\braket{\omega_1}{x}=\braket{\omega_1}{pxp}=\braket\omega x\braket{\omega_1}p$ and $\braket{\omega_1}p=1$ by choosing $x=\idty$ in this equation. Hence $\omega_1=\omega$, so $\omega$ is extremal.

Now let $\omega$ be extremal and normal, and let $p$ be its support, i.e., the smallest projection such that $\braket{\omega}p=1$. Then $\omega$ restricted to $\widetilde\MM$ is also pure and, in addition, faithful, i.e., $x\in\widetilde\MM$ with $\braket\omega{x^*x}=0$ implies $x=0$. Indeed, the eigenprojection $q\in\widetilde\MM$ of $x^*x$ for the spectral set $\{0\}$ must then satisfy $\braket\omega q=1$. Hence on the one hand $q\leq p$, because $q\in p\MM p$, and $q\geq p$ by minimality of the support projection $p$. Therefore $x^*x=0$.

So consider the GNS-representation $\pi_\omega$ of a faithful normal pure state. Faithfulness implies that the representation is injective, and general representation theorems \cite[1.16.2]{sakai} imply that the image $\pi_\omega(\widetilde\MM)$ is a von Neumann algebra.
By purity, $\pi_\omega(\widetilde\MM)$ is irreducible, so $\pi_\omega(\widetilde\MM)=\BB(\HH_\omega)$. On the other hand, $\omega$ is given by a vector $\Omega\in\HH_\omega$. So, unless $\dim\HH_\omega=1$, there is a vector orthogonal to it and hence a non-zero element $x$ with $\pi_\omega(x)\Omega=0$, and hence $\braket\omega{x^*x}=0$, contradicting faithfulness. Hence $\dim\HH_\omega=1$, and $\dim\widetilde\MM=1$. 
\end{proof}

For many von Neumann algebras, this is a statement about the empty set, namely when $\MM$ is of type II or III, or $L^\infty(X,\mu)$ when there are no points with positive $\mu$-measure. However, for a second dual, there are many extreme points. Their central cover is the smallest central projection $z\atomic$, so that $p\leq z\atomic$ for all minimal projections, and $z\atomic\AA\bidu$ is isomorphic to the von Neumann algebra generated by the uncountable direct sum of all normal pure state representations, called the {\it atomic} representation of $\AA\bidu$. This is the part that will be useful for a function representation.

Indeed, consider for a moment the classical case $\AA=\CC_0(X)$. Then there is a simple way to associate with an element $A\in\AA\bidu$ a function $\check A$ on $X$, namely to evaluate $A$ in the pure state $\delta_x$, the point measure at $x\in X$, setting $\check A(x)=\delta_x(A)$. However, when $A$ has support in the complement of the atomic subspace (also called the diffuse subspace) we get $\delta_x(A)p=pAp=0$, where $p$ is the projection associated with $\delta_x$ via \eqref{papMinimal}. Hence $\check A=0$, so the function $\check A$ has nothing to say about $A$.

Nevertheless, for suitable subalgebras of $\AA\bidu$ the atomic representation, and hence the function representation $\check A$ contains full information. The idea of \cite{peder} is to use monotone limits to construct useful algebras with this property. Since these constructions work in the same way in arbitrary C*-algebras, they also serve to provide a Heisenberg picture for general dual channels.

In $\AA\bidu$ bounded, increasing nets are automatically weak*-convergent, and if this algebra is represented on a Hilbert space, the limits exist in the strong operator topology. This makes most sense in the hermitian part $\AA\bidu_h$ of $\AA\bidu$. For any subset $M\subset\AA\bidu_h$ we denote by $M\uplim$ the set of limit points of such nets from $M$. Similarly, $M\dnlim=-(-M)\uplim$ represents the limits of decreasing nets from~$M$.
\begin{defi}\label{def:multip}
Let $\AA$ be a C*-algebra. Then
\begin{itemize}
\item The {\bf multiplier algebra} of $\AA$, denoted by $M(\AA)$, is the set of elements $m\in\AA\bidu$ such that, for all $a\in\AA$, $ma\in\AA$ and $am\in\AA$.
\item $\AA_\uparrow:=(\AA_h+\Rl\idty)\uplim$ is called the {\bf lower semicontinuous} cone of $\AA_h\bidu$. The upper semicontinuous cone is $\AA_\downarrow=(\AA_h+\Rl\idty)\dnlim$
\item An element $a\in\AA\bidu_h$ is called {\bf universally measurable} if, for every state $\omega\in\AA^*$, and every $\veps>0$, there are $x\in\AA_\uparrow$, $y\in\AA_\downarrow$ such that
      $x\leq a\leq y$, and $\omega(y-x)<\veps$.  The real vector space of universally measurable elements is denoted by $\umeas(\AA)$.
\end{itemize}
\end{defi}

We remark that there are some subtle distinctions in defining the semicontinuous cone, depending on whether the unit is adjoined first (as above) and on whether a norm closure of the cone is taken. These are discussed carefully in \cite{pederC,brown}. The main observations for us are that $\AA_\uparrow\cup\AA_\downarrow=M(\AA)_h$ \cite[Thm.~III.6.24]{takesaki1}, and that the atomic representation is isometric on $\umeas(\AA)$ \cite[Thm.~III.6.37]{takesaki1}.

In the classical case, $\AA=\CC_0(X)$ with $X$ locally compact, the lower semicontinuous cone consists just of the bounded lower semicontinuous functions $f$ in the sense of point set topology (lower level sets $\{x|f(x)\leq a\}$ are closed). The multipliers are $M(\AA)=\CCb(X)$, all bounded continuous functions. For the universally measurable functions, note that, for a fixed measure $\mu\in\CC_0(X)^*$, by definition, all bounded Borel measurable functions can be integrated. However, one usually completes the Borel algebra by including all $\mu$-null sets. The completion can be understood by adding all sets which can be approximated from above and below by Borel measurable sets, whose $\mu$-volume differs by arbitrarily little. The completion construction depends on $\mu$, but some sets will be added for all $\mu$, and these are called universally measurable \cite{Cohn}. The functions that are measurable for the completed $\sigma$-algebra are called $\mu$-measurable, and their classes up to $\mu$-a.e.\ equality form $L^\infty(X,\mu)$. The approximation from above and below for defining $\mu$-measurable sets has its counterpart for functions in the definition given above, with fixed $\mu=\omega$. Hence the universally measurable functions are those that are $\mu$-measurable for all $\mu$.

For these subsets of observables, there is an automatic Heisenberg picture for channels defined on states:

\begin{lem}\label{lem:autoHeisen1}
Let $\AA$ and $\BB$ be C*-algebras, and $T:\AA^*\to\BB^*$ a linear map taking states to states. Let $T^*:\BB\bidu\to\AA\bidu$. Then the inclusions
$T^*M(\BB)\subset M(\AA)$, $T^*\BB_\uparrow \subset\AA_\uparrow$ and $T^*\umeas(\BB)\subset\umeas(\AA)$ hold.
\end{lem}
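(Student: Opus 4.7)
The plan is to extract three general properties of $T^*$—positivity, unitality, and weak*-continuity (normality)—and derive all three inclusions from them. Positivity of $T^*$ holds because for $b\geq 0$ in $\BB\bidu$ and any state $\omega$ of $\AA$, $\braket{T^*(b)}{\omega}=\braket{b}{T\omega}\geq 0$, since $T\omega$ is a state of $\BB$ and $b$ is positive as a functional on $\BB^*$; extending linearly to $\AA^*$ gives $T^*(b)\geq 0$. Unitality: $\braket{T^*(\idty_\BB)}{\omega}=\braket{\idty_\BB}{T\omega}=\|T\omega\|=1=\braket{\idty_\AA}{\omega}$ for every state $\omega$, so $T^*(\idty_\BB)=\idty_\AA$. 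Weak*-continuity is automatic, as every Banach-space adjoint is weak*-weak*-continuous between the respective dual topologies; equivalently, $T$ restricts to a weak*-continuous map between the quasi-state spaces $Q_\AA$ and $Q_\BB$.

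For Claim 2, $T^*\BB_\uparrow\subset\AA_\uparrow$, I would take $b\in\BB_\uparrow$, so $b=\lim_\lambda b_\lambda$ in weak* for an increasing net $(b_\lambda)\subset\BB_h+\Rl\idty$. Normality and positivity of $T^*$ then make $(T^*(b_\lambda))$ an increasing net in $\AA\bidu_h$ weak*-convergent to $T^*(b)$; unitality gives $T^*(c\,\idty_\BB)=c\,\idty_\AA$, and $T^*$ restricted to $\BB$ lands in $\AA$ because $T$ is a predual map. Thus each $T^*(b_\lambda)$ lies in $\AA_h+\Rl\idty$, giving $T^*(b)\in\AA_\uparrow$ by definition. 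The symmetric argument, applied to $-b$, yields $T^*\BB_\downarrow\subset\AA_\downarrow$.

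Claim 1 then follows at once: using $\BB_\uparrow\cup\BB_\downarrow=M(\BB)_h$ recorded after Def.~\ref{def:multip}, any self-adjoint multiplier of $\BB$ is sent by $T^*$ into $\AA_\uparrow\cup\AA_\downarrow\subset M(\AA)_h$, and general multipliers decompose as $m=m_1+im_2$ with $m_i$ self-adjoint, preserved in form by positivity of $T^*$. For Claim 3, take $a\in\umeas(\BB)$ and a state $\omega$ of $\AA$. Apply the universal-measurability criterion to the state $T\omega\in\BB^*$ with tolerance $\veps$ to obtain $x\in\BB_\uparrow$ and $y\in\BB_\downarrow$ with $x\leq a\leq y$ and $\braket{T\omega}{y-x}<\veps$. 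By positivity, $T^*(x)\leq T^*(a)\leq T^*(y)$; by Claim 2 and its dual, $T^*(x)\in\AA_\uparrow$ and $T^*(y)\in\AA_\downarrow$; and $\braket{\omega}{T^*(y)-T^*(x)}=\braket{T\omega}{y-x}<\veps$, which is exactly the required approximation for $T^*(a)\in\umeas(\AA)$.

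The main technical subtlety sits in Claim 2 at the step identifying $T^*(b_\lambda)$ as lying in $\AA_h+\Rl\idty$ rather than merely in $\AA\bidu_h$. This rests on the fact that $T^*|_\BB\subset\AA$, which is immediate when $T$ is the predual of a Heisenberg map, the natural setting for channels in this paper. An alternative route, which avoids this restriction, is to invoke the Akemann--Pedersen characterization of $\AA_\uparrow$ as the weak*-lower semicontinuous affine functions on $Q_\AA$; Claim 2 then becomes a direct statement that the pullback along the weak*-continuous map $T:Q_\AA\to Q_\BB$ preserves lower semicontinuity.
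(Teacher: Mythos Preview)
Your argument tracks the paper's closely: positivity, unitality, and normality of $T^*$; the increasing-net argument for the semicontinuous cone; multipliers via the characterization $M(\cdot)_h=(\cdot)_\uparrow\cap(\cdot)_\downarrow$ (your Claim~1 needs $\cap$, not $\cup$, for the inclusion to go the right way); and the direct sandwich for $\umeas$, which matches the paper verbatim. You have also correctly isolated the one genuine obstacle: the step $T^*(\BB_h+\Rl\idty)\subset\AA_h+\Rl\idty$ requires $T^*\BB\subset\AA$, which is not among the stated hypotheses.

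Your ``alternative route'' does not bypass this, because the ``equivalently'' in your first paragraph is wrong. That $T^*$ is $\sigma(\BB\bidu,\BB^*)$--$\sigma(\AA\bidu,\AA^*)$ continuous (automatic for any adjoint) is \emph{not} the same as $T$ being $\sigma(\AA^*,\AA)$--$\sigma(\BB^*,\BB)$ continuous on $Q_\AA$; the latter is equivalent precisely to $T^*\BB\subset\AA$, the very condition you wish to avoid. Without it the conclusion can actually fail: take $\AA=\BB=\CC_0(\Rl)$ and $T\mu=\mu(\{0\})\,\delta_0+\mu(\Rl\setminus\{0\})\,\delta_1$; this is linear and takes states to states, yet $T^*g=g(0)\,\mathbf{1}_{\{0\}}+g(1)\,\mathbf{1}_{\Rl\setminus\{0\}}$, which for $g\in M(\BB)=\CC_b(\Rl)$ with $g(0)\neq g(1)$ lies in $\umeas(\AA)$ but not in $M(\AA)$. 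So the predual-map hypothesis you flag is the correct fix, and it does hold for all the quasifree channels the paper actually uses.
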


\begin{proof}
Dual channels $T^*:\BB\bidu\to\AA\bidu$ preserve positivity and normalization. The latter condition can be written as $T^*\idty=\idty$. They map increasing nets to increasing nets and are continuous for the respective limits.
Hence $T^*\BB_\uparrow \subset\AA_\uparrow$. Then the characterization of multipliers as both upper and lower continuous shows $T^*M(\BB)\subset M(\AA)$. This is actually not so obvious just from the definition of multipliers.

For the universally measurable class we proceed directly: Fix $b\in\umeas(\BB)$  and $\veps>0$. Then by definition we can find $b_i\in\BB_i$ for $i=\uparrow,\downarrow$ such that $b_\uparrow\leq b\leq b_\downarrow$ and
$(T\omega)(b_\downarrow-b_\uparrow)\leq\veps$ dualizing $T$ in the last inequality and applying $T^*$ to the inequality for $b$ gives the required upper and lower bounds $T^*b_i$ for $T^*b$.
\end{proof}

We note that such inclusions are always equivalent to a continuity condition for $T$. If we chose subspaces $\widetilde\AA\subset\AA\bidu$ and $\widetilde\BB\subset\BB\bidu$, the inclusion $T^*(\widetilde\BB)\subset\widetilde\AA$ is equivalent \cite[IV.2.1]{Schaefer} to the continuity with respect to the weak topologies $\sigma(\AA^*,\widetilde\AA)$ and $\sigma(\BB^*,\widetilde\AA)$, which are defined to make just those linear functionals $\AA\to\Cx$ continuous, which are given by elements of $\widetilde\AA$ (and similarly for~$\BB$).

\subsection{Hybrid observables as functions}\label{sec:funcobsCXA}
Let us now apply the ideas of the previous section to hybrids. The symplectic form and the group theoretical structure of the phase space does not play a key role here, so we are looking at a slightly more general case, where the underlying C*-algebra is $\AA=\KK\otimes \CC_0(X)$, where $\KK$ denotes the compact operators on a separable Hilbert space $\HH$, and $X$ is a locally compact metrizable space generalizing $\Xi_0$. As a above we can identify $\AA$ with $\CC_0(X,\KK)$, the functions $X\mapsto\KK$ which are continuous in norm and vanishing in norm at infinity. Under this identification $A=f\otimes K$ becomes the function $A(x)=f(x)K$. In \cite[end of intro.]{brown}, this algebra is actually suggested as the intuition-building model case for the theory we outlined in the previous section.

Going to the first dual $\AA^*$, we find the states, and their disintegration as in \eqref{disintegrate}: We can write each state $\omega$ as
\begin{equation}\label{disintegrateMtens}
  \omega(f\otimes A)= :\braket{\omega}{f\otimes A} = \int\mu(dx)m(x)\ f(x)\, \tr(\rho_x A),
\end{equation}
where $\mu$ is a (not necessarily finite) measure on $X$, $m\in L^1(X,\mu)$ is a probability density, so that $\mu m$ is an arbitrary probability measure. This splitting of the classical marginal merely emphasized that we may realize states absolutely continuous with respect to the same $\mu$ in the same Hilbert space. In \eqref{disintegrateMtens} it is clear that the integrand at each point $x$ is a linear functional in $A\in\KK$, hence given by a density operator $\rho_x$. By taking linear combinations and norm limits we can write this in terms of the operator valued function $A(x)=\sum_jf_j(x)A_j$ as
\begin{equation}\label{disintegrateM}
  \omega(A)= :\braket{\omega}{A} = \int\mu(dx)m(x)\  \tr(\rho_x A(x)),
\end{equation}
or, in a useful shorthand notation,
\begin{equation}\label{dirintegrate}
  \omega= \int^\oplus\!\!\mu(dx)m(x)\ \rho_x.
\end{equation}
Equation \eqref{disintegrateM} is the relation we want to extend to a much larger class of operator valued functions $x\mapsto A(x)$.

It is worth noting that $\rho_x$ could be modified on $\mu$-null sets without change, and for the sake of this integral expression, $A(x)$ might be similarly modified. This is the hallmark of the $\mu$-dependent approach. The states obtained with fixed $\mu$ are in the tensor product $\TT(\HH)\otimes L^1(X,\mu)$, the norm completion of the span of the product states $\rho\otimes m$ for which $\rho_x\equiv \rho$ is constant. We can also express this by the map
\begin{equation}\label{imu}
  i_\mu:\TT(\HH)\otimes L^1(X,\mu)\to\AA^*: \qquad \braket{i_\mu(\rho\otimes m)}{A} = \int\mu(dx)m(x)\ \tr(\rho A(x)).
\end{equation}

This provides the first (``$\mu$-dependent'') way of associating operator valued functions to elements of $\AA\bidu$: For every $A\in\AA\bidu$ we can consider
$i_\mu^*(A)\in\BB(\HH)\vNotimes L^\infty(X,\mu)$, the von Neumann algebra tensor product \cite[Ch.~IV.5]{takesaki1}, which is the dual of $\TT(\HH)\otimes L^1(X,\mu)$, and also the von Neumann algebra generated by $\AA$ in its representation on $\HH\otimes L^2(X,\mu)$. Note that while $i_\mu^*$ is onto, it is not injective: all details of $A$ related to $\mu$-null sets are obliterated.

The $\mu$-free alternative is given by the formula
\begin{equation}\label{checkF}
  \braket{\rho\otimes\delta_x}A=\tr\rho \check A(x) \quad\text{for}\ A\in\AA\bidu.
\end{equation}
This is based on the observation, that for fixed $A$ the evaluation on the left hand side is a bounded linear functional with respect to $\rho$, and thus of the given form with a unique
$\check A(x)\in\BB(\HH)$. For fixed $x$ this extends the point evaluation at $x$, i.e., $(\id\otimes\delta_x):\KK\otimes\CC_0(X)\to\KK$, to functions in $\AA$. It is an extension by continuity for the weak topology induced by $\AA^*$ so $\check A(x)=(\id\otimes\delta_x)\bidu(A)$.

To contrast these approaches, take $\mu$ to be Lebesgue measure, or any other ``diffuse'' measure assigning $\mu(\{x\})=0$ to every singleton. Let $z_\mu$ be the common support projection of the states $i_\mu(\rho\otimes m)$, i.e., the smallest projection in the W*-algebra $\AA\bidu$ which gives probability $1$ to all such states. Then, since $\delta_x$ and $\mu$ are disjoint, their support projections in $\CC_0(X)\bidu$ are orthogonal, so $\check z_\mu=0$, although $i_\mu^*(z_\mu)=1$. Similarly, we can consider $z\atomic$, the smallest projection in $\AA\bidu$ giving $1$ on all pure states $\kettbra\psi\otimes\delta_x$. In this case $i_\mu^*(z\atomic)=0$, although $\check z\atomic(x)=1$ for all $x$. So the two ways of assigning a function can be diametrically opposite.  On the other hand, for $A\in\AA$, both approaches give the continuous function representation that we started from, although $i_\mu^*(A)$ is strictly speaking an equivalence class up to $\mu$-a.e.{} equality.
The question is then how far we can extend this agreement, if we avoid ``wild'' elements like $z_\mu$ and $z\atomic$. In the following proposition this is answered by the notion of universal measurability, and illustrated in Fig.~\ref{fig:setting3}.

\begin{prop}\label{prop:muornot}
Let $A\in\umeas(\AA)$, and let $\omega=\int^\oplus\!\!\mu(dx)m(x)\ \rho_x\in\AA^*$ be a state. Then the function $x\mapsto \tr\rho_x\check A(x)$ is $\mu$-measurable, and $\mu$-almost everywhere equal to $i_\mu^*(A)$. Moreover,
\begin{equation}\label{disintegrateUM}
\braket{\omega}{A} = \int\mu(dx)m(x)\ \tr(\rho_x \check A(x)).
\end{equation}
and $\norm A=\sup_x\norm{\check A(x)}$.
\end{prop}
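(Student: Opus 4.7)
My plan is to derive all four assertions in parallel from a sandwich argument in $\umeas(\AA)$. For fixed $x\in X$, point evaluation $\id\otimes\delta_x:\AA\to\KK(\HH_1)$ is a $*$-homomorphism, so its bidual $(\id\otimes\delta_x)\bidu$ is a normal $*$-homomorphism from $\AA\bidu$ to $\BB(\HH_1)$; by construction, this sends $A$ to $\check A(x)$. Composing with the normal state $\tr(\rho_x\,\cdot\,)$ on $\BB(\HH_1)$ yields a normal positive functional $\pi_x:\AA\bidu\to\Cx$, $\pi_x(A)=\tr(\rho_x\check A(x))$, which on $\AA$ reduces to $A\mapsto\tr(\rho_x A(x))$. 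In particular $\pi_x$ is order-preserving, and the integral representation \eqref{disintegrateM} reads $\omega(B)=\int\mu(dx)\,m(x)\,\pi_x(B)$ for $B\in\AA$.

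The crucial step is to extend both $\mu$-measurability and the integral formula to $\AA_\uparrow$, whence the $\umeas$ case will follow by sandwiching. Because $\HH_1$ is separable (Schr\"odinger representation of finitely many quantum degrees of freedom) and $\Xi_0\cong\Rl^s$ is second countable, $\AA=\KK(\HH_1)\otimes\CC_0(\Xi_0)$ is separable; on the unit ball of $\AA\bidu$ the weak* topology is therefore metrizable, and every bounded element of $\AA_\uparrow$ is the weak*-limit of an increasing \emph{sequence} $B_n\in\AA_h+\Rl\idty$. Writing $B_n=C_n+c_n\idty$ with $C_n\in\AA$, $\pi_x(B_n)=\tr(\rho_x C_n(x))+c_n$ is $\mu$-measurable in $x$ by the measurability of $\{\rho_x\}$ (cf.\ \cite[Sect.~IV.7]{takesaki1}) and norm-continuity of $C_n(\cdot)$. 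Normality of $\pi_x$ then gives $\pi_x(A_\uparrow)=\sup_n\pi_x(B_n)$, a countable supremum of $\mu$-measurable functions, hence $\mu$-measurable; monotone convergence combined with normality of $\omega$ on $\AA\bidu$ propagates the integral formula to $\AA_\uparrow$, and symmetrically to $\AA_\downarrow$.

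For $A\in\umeas(\AA)$ and $\veps>0$, pick $A_\uparrow\leq A\leq A_\downarrow$ with $A_\uparrow\in\AA_\uparrow$, $A_\downarrow\in\AA_\downarrow$, and $\omega(A_\downarrow-A_\uparrow)<\veps$. Positivity of $\pi_x$ sandwiches $\pi_x(A_\uparrow)\leq\pi_x(A)\leq\pi_x(A_\downarrow)$ with $\mu$-measurable endpoints whose integrated gap is at most $\veps$. Letting $\veps\to0$, the characterization of $\mu$-measurability via approximation by Borel measurable functions from above and below (as invoked just after Def.~\ref{def:multip}) shows that $x\mapsto\tr(\rho_x\check A(x))$ is $\mu$-measurable and that \eqref{disintegrateUM} holds. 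Specializing \eqref{disintegrateUM} to the states $i_\mu(\rho\otimes m)$ (i.e.\ $\rho_x\equiv\rho$ constant) and comparing with the definition of $i_\mu^*$ gives $\tr(\rho\check A(x))=\tr(\rho\cdot i_\mu^*(A)(x))$ for $\mu$-a.e.\ $x$; running $\rho$ over a countable trace-norm-dense subset of $\TT^1(\HH_1)$ and intersecting null sets identifies $\check A(\cdot)$ with a representative of $i_\mu^*(A)$. Finally, $\norm A=\sup_x\norm{\check A(x)}$ follows from the fact, cited just before the proposition, that the atomic representation is isometric on $\umeas(\AA)$: the pure states of $\AA$ are the $\kettbra\psi\otimes\delta_x$, whose irreducible GNS representations factor through $A\mapsto\check A(x)$ on $\HH_1$, so the atomic representation decomposes as the direct sum of these point evaluations and its norm is $\sup_x\norm{\check A(x)}$.

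The main obstacle is the measurability step: one must upgrade the net-theoretic definition of $\AA_\uparrow$ to a sequential approximation in order to conclude measurability of a supremum; this is exactly where separability of $\AA$, guaranteed by the finite-dimensionality of phase space, enters essentially. The remaining ingredients (positivity/normality of the bidual extensions, squeezing for $\umeas$, and isometry of the atomic representation) are then routine.
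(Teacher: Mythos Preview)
Your proof has a concrete error at the sequence-reduction step. You assert that separability of $\AA$ implies weak*-metrizability of the unit ball of $\AA\bidu$; this is false. Weak*-metrizability of the unit ball of a dual $Y^*$ requires separability of the predual $Y$, and here $Y=\AA^*$ is \emph{not} separable: already on the classical factor, $\CC_0(\Xi_0)^*$ contains the uncountable family $\{\delta_x:x\in\Xi_0\}$ at pairwise norm distance $2$. So the unit ball of $\AA\bidu$ is not weak*-metrizable, and you cannot extract an increasing \emph{sequence} from the net defining an element of $\AA_\uparrow$ by this route. Your measurability argument for $\pi_x(A_\uparrow)$ as a countable supremum of $\mu$-measurable functions then collapses.

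The paper avoids this by a different organization: it first treats product states $\omega=\rho\otimes m$ with \emph{constant} $\rho$. For $A\in\AA_\uparrow$ realized as $\sup_i A_i$ over an arbitrary net, normality of $\rho\otimes\delta_x$ gives $\tr(\rho\,\check A(x))=\sup_i\tr(\rho\,A_i(x))$, a pointwise supremum of continuous functions over \emph{any} index set, which is automatically lower semicontinuous and hence Borel---no countability control on the net is needed. The sandwiching then shows $a_\rho(x)=\tr(\rho\,\check A(x))$ is universally measurable for each fixed $\rho$, and the integral formula holds for finite sums $\sum_i\rho_i\otimes m_i$. The general state is reached by norm approximation of $x\mapsto m(x)\rho_x$ by such step functions in $\TT^1(\HH_1)\otimes L^1(X,\mu)$; this yields $\mu$-measurability (but, as the paper explicitly remarks, \emph{not} universal measurability) of $x\mapsto\tr(\rho_x\check A(x))$. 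Your direct approach with $x$-dependent $\rho_x$ could be made to work if sequential approximation in $\AA_\uparrow$ were justified by other means, but the detour through constant $\rho$ and lower semicontinuity is precisely what makes the net argument go through without it.
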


\begin{proof} Fix a state $\omega$, and consider an increasing net $A_i\in\AA_h+\Rl\idty$ with limit  $A\in\AA_\uparrow$.  Then from \eqref{checkF} we find $\check A(x)=\sup_iA_i(x)$. Also in \eqref{disintegrateM} the limit exists and by the monotone convergence theorem the integrand is indeed given by the pointwise supremum, i.e., $\tr(\rho_x \check A(x))$. This shows the claim for $A\in\AA_\uparrow$, and, of course, for $A\in\AA_\downarrow$.

Now suppose $A\in\umeas(\AA)$, and $\veps>0$. Then we can find $X\in\AA_\uparrow$, $Y\in\AA_\downarrow$ such that $X\leq A\leq Y$ and $\braket\omega{Y-X}\leq\veps$. Let us first consider the case $\omega=\rho\otimes m$ with fixed $\rho$. Then the function $a_\rho(x)=\tr(\rho \check A)$, and the similarly defined functions for $X,Y$ (which are lower/upper semicontinuous satisfy $x_\rho\leq a_\rho\leq y_\rho$ and their integrals with $m\mu$ differ by less than $\veps$. By Def.~\ref{def:multip} we conclude that $a_\rho$ is universally measurable. This establishes the $\mu$-measurability of the integrand in \eqref{disintegrateUM} for constant $\rho$, and the formula itself. Hence $i_\mu^*(A)(x)=\check A(x)$, $\mu$-almost everywhere.

Clearly, this extends to linear combinations $\sum_i\rho_i\otimes m_i$. These are norm dense in  $\TT(\HH)\otimes L^1(X,\mu)$. Thus approximating a general state \eqref{dirintegrate} by such step functions we find that the functions
$x\mapsto m(x)\,\tr(\rho_x \check A(x))$ converges in $L^1(X,\mu)$. So the limit is $\mu$-measurable and the formula holds in general. We note that the function $x\mapsto\tr(\rho_x \check A(x))$ is {\it not universally} measurable, since the $\rho_x$ depend on $\mu$.

For the norm equality we invoke the result \cite[Thm.~III.6.37]{takesaki1} that the atomic representation is isometric on $\umeas(\AA)$. Thus we only have to show that
$\sup_x\norm{\check A(x)}=\sup_{\omega\,\rm pure}\norm{\pi_\omega(A)}$, where the supremum is over all pure states, and $\pi_\omega$ denotes the associated GNS representation. Now the pure states are of the form $\omega=\kettbra\psi\otimes\delta_x$ with $\psi\in\HH$, so we only have to show that $\norm{\check A(x)}=\norm{\pi_\omega(A)}$ for any pure state of this form, i.e., the right hand side does not depend on $\psi$. 
Indeed, we will show that even $\pi_\omega(A)=\check A(x)$ up to the usual isomorphism around the GNS representation: By the discussion after the definition \eqref{checkF} of $\check A(x)$, $(\id\otimes\delta_x)\bidu$ is a normal representation of $\AA\bidu$ on $\BB(\HH)$, which is obviously irreducible, hence cyclic for any vector. Hence if we identify $\psi$ as the GNS vector, it exactly meets the description of $\pi_\omega$.
\end{proof}

\begin{figure}[ht]
  \begin{center}
    \begin{tikzpicture}[scale=1]
      \node at (0.5,2.35) {$\mu$-free setting};
      \node[left] at (0,1.8) {states};
      \node[right] at (1,1.8) {observables};
      \node[right] at (1,1.25) {$\Cs(X,\KK)^{**}$};
      \node[rotate=90] at (1.7,0.85) {$\subset$};
      \node[] at (1.8,0.45) {$\umeas(X,\KK)$};
      \node[rotate=90] at (1.7,0) {$\subset$};
      \node[] at (1.8,-0.45) {$M(X,\KK)$};
      \node[rotate=90] at (1.7,-0.85) {$\subset$};
      \node[left] at (0,0) {$\Cs(X,\KK)^{*}$};
      \node[right] at (1,-1.25) {$\Cs(X,\KK)$};
      \draw[red] (0,0) --(1,-1.25);
      \draw[blue] (0,0) --(1,1.25);
      \node at (6.5,2.35) {$\mu$-dependent setting};
      \node[right] at (7.5,1.8) {states};
      \node[right] at (4,1.8) {observables};
      \node[right] at (4,1.25) {$\BB\bigl(\HH_1\otimes L^2(X,\mu)\bigr)$};
      \node[rotate=90] at (5,0.85) {$\subset$};
      \node[right] at (4,0.45) {$\BB(\HH_1)\vNotimes L^\infty(X,\mu)$};
      \node[rotate=90] at (5,-.45) {$\subset$};
      \node[] at (1.8,-0.45) {$M(X,\KK)$};
      \node[rotate=90] at (1.7,-0.85) {$\subset$};
      \node[right] at (8,-.45) {$\TT^1(\HH_1)\otimes L^1(X,\mu)$};
      \node[right] at (4,-1.25) {$\KK(\HH)\otimes\C_0(X)$};
      \draw[red] (8,-.45) --(7,-1.25);
      \draw[blue] (8,-.45) --(7,0.45);
      \draw[->] (3,1.25)--(4,0.7);
      \node at (3.5,1.2) {$i_\mu^*$};
      \draw[->>] (3,.45)--(4,0.45);
      \draw[->] (3,-1.25)--(4,-1.25);
      \node at (3.5,-1) {$\cong$};
      \node at (3.5,.55) {$\check{}$};
     \end{tikzpicture}
  \captionsetup{width=0.8\textwidth}
  \caption{The two hybrid settings: The $\mu$-free setting is based only on the C*-algebra $\Cs(X,\KK)$. In the $\mu$-dependent setting this algebra is represented on the
  Hilbert space $\HH_\mu=\HH_1\otimes L^2(X,\mu)$. On the level of states the connection is given by the map $i_\mu$ taking the state space on the far right to the one on the left (not shown). Its adjoint $i_\mu^*$ maps all the $\mu$-free observable spaces are mapped to their $\mu$-dependent counterparts. $i_\mu^*$ is surjective from $\umeas(X,\KK)$, indicated by a double arrow tip. This map is also realized by the function $A\mapsto \check A$ from \eqref{checkF} (cf.~Prop.~\ref{prop:muornot}). It is injective on $M(X,\KK)$ if $\mu$ has full support.}
  \label{fig:setting3}
\end{center}
\end{figure}
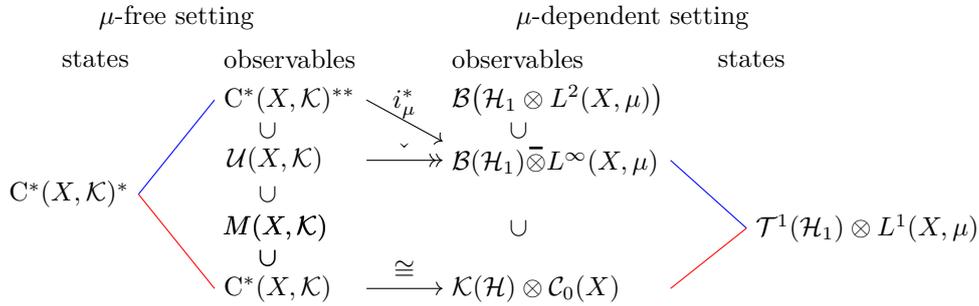

\begin{prop}\label{prop:multipliers}
Let $X$ be locally compact, $\KK$ the algebra of compact operators on a separable Hilbert space $\HH_1$, and set $\AA=\CC_0(X,\KK)$.  Then for  $A\in M(\AA)$ the function $\check A$ is strong*-continuous, i.e.,
$x\mapsto\check A(x)\psi$ and $x\mapsto\check A(x)^*\psi$ are both continuous for all $\psi\in\HH$. Conversely, every uniformly bounded function $\check A$ with this property defines a multiplier $A\in M(\AA)$.
\end{prop}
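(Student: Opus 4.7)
The plan is a direct computation for the forward direction using rank-one test elements in $\AA$, and for the converse a construction via the ``double centralizer'' picture of $M(\AA)$. The key analytic facts are that the operator norm of a rank-one operator $\ketbra\alpha\beta$ equals $\norm\alpha\norm\beta$ (used in the forward direction), and the standard result that if a uniformly bounded net $F_\lambda\to F$ strongly and $K\in\KK$, then $\norm{(F_\lambda-F)K}\to0$ because $K$ maps the unit ball of $\HH_1$ to a precompact set on which strong convergence is uniform (used in the converse).

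For the forward direction, fix $A\in M(\AA)$, $x_0\in X$, a unit vector $\psi\in\HH_1$, and $f\in\CC_0(X)$ with $f(x_0)=1$. Set $B=f\otimes\kettbra\psi\in\AA$. By the multiplier property $AB\in\AA=\CC_0(X,\KK)$, so $x\mapsto(AB)(x)=f(x)\,\check A(x)\,\kettbra\psi=f(x)\,\ketbra{\check A(x)\psi}\psi$ is norm continuous. Using $\norm{\ketbra\alpha\beta}=\norm\alpha\norm\beta$, this gives $\norm{f(x)\check A(x)\psi-\check A(x_0)\psi}\to0$, and combined with the uniform bound $\norm{\check A(x)}\leq\norm A$ and $f(x)\to 1$ yields $\check A(x)\psi\to\check A(x_0)\psi$ in norm. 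The same argument applied to $BA\in\AA$, with $(BA)(x)=f(x)\,\ketbra\psi{\check A(x)^*\psi}$, gives $\check A(x)^*\psi\to\check A(x_0)^*\psi$, completing strong*-continuity.

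For the converse, given a uniformly bounded strong*-continuous $F:X\to\BB(\HH_1)$, define $L_F(a)(x)=F(x)a(x)$ and $R_F(a)(x)=a(x)F(x)$ for $a\in\AA$. These take values in $\KK$ (since $\KK$ is an ideal) and decay at infinity via $\norm{L_F(a)(x)}\leq\norm F_\infty\norm{a(x)}$. Norm continuity follows by splitting $F(x)a(x)-F(x_0)a(x_0)=F(x)(a(x)-a(x_0))+(F(x)-F(x_0))a(x_0)$; the first summand vanishes by norm continuity of $a$ and uniform boundedness of $F$, and the second by the strong-times-compact fact above applied to $K=a(x_0)\in\KK$. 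The symmetric argument for $R_F$ uses strong continuity of the adjoint $F(x)^*$, so here the full strong* hypothesis is needed. The pair $(L_F,R_F)$ satisfies the double centralizer identity $aL_F(b)=R_F(a)b$ (both sides equal $x\mapsto a(x)F(x)b(x)$), so by the standard correspondence between double centralizers and multipliers it determines a unique $A_F\in M(\AA)\subset\AA\bidu$ with $A_Fa=L_F(a)$ and $aA_F=R_F(a)$. Evaluating at $a=f\otimes\kettbra\psi$ with $f(x_0)=1$ and using $\check{(A_F a)}(x)=\check{A_F}(x)\,a(x)$ then recovers $\check{A_F}=F$. The main obstacle—and the reason the proposition is specific to $\AA=\CC_0(X,\KK)$ rather than to algebras built from all of $\BB(\HH_1)$—is exactly this strong-times-compact step.
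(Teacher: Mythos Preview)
Your proof is correct and follows essentially the same approach as the paper: both directions hinge on testing against rank-one elements $f\otimes\kettbra\psi$ and reading off norm continuity of $x\mapsto\check A(x)\psi$ from the norm of the resulting rank-one operator. The only structural difference is that the paper first establishes via an approximate-unit argument that a multiplier acts pointwise (and then identifies $M(\KK)=\BB(\HH)$), whereas you shortcut this by using the multiplicativity of the point evaluation $(\id\otimes\delta_x)\bidu$ on $\AA\bidu$; in the converse you invoke the general strong-times-compact fact and the double-centralizer picture, while the paper checks continuity only on the dense spanning set of rank-one tensors and extends by boundedness.
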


\begin{proof}[Sketch of proof:]
This is found in \cite[Cor~3.4]{PederMulti}. We nevertheless sketch the basic ideas of an approach not using the full-fledged theory.

According to Def.~\ref{def:multip}, multipliers can be thought of in terms of the operators
\begin{equation}
    L_m,R_m:\AA \rightarrow \AA, \, L_m(A)=mA,\quad R_m(A)=Am .
\end{equation}
Their characteristic feature is $L_m(AB)=L_m(A)B$, and similarly for $R_m$. We note that the whole concept is redundant for C*-algebras with a unit, since then we just get multiplication with $L_m(\idty)=m\in\AA$. This suggests that the application of $L_m(U_\lambda)$ for a bounded approximate unit $U_\lambda$ may help to characterize the action of $L_m$.

The first observation in the hybrid context is that these operators act pointwise, i.e., $\bigl(L_m(A)\bigr)(x)$ depends only on $A(x)$, or, equivalently: $A(x) = 0 \Rightarrow (L_mA)(x)=0$.

To show this, assume $A(x)=0$, and take a bounded approximate unit $U_\lambda \in \AA$ with $\norm{U_\lambda A - A} \rightarrow 0 $ for all $A\in\AA$. Then, because $L_m$ and $U_\lambda$ are bounded, and the product in $\AA$ is defined pointwise:
\begin{align}
    \norm{(L_m A)(x)} &\leq \norm{(L_m(A-U_\lambda A))(x)} + \norm{L_m(U_\lambda)(x) {A(x)}} \\
    &\leq \norm{A-U_\lambda A} +\norm{(L_m (U_\lambda))(x)}\,\norm{A(x)}.
\end{align}
Then the second term is equal to zero, and the first goes to zero as $\lambda\to0$.

It follows that $L_m(A)(x)=L_m^x(A(x))$, where $L_m^x$ is a multiplier of $\KK$ in the sense that it satisfies the basic relation $L_m^x(AB)=L_m^x(A)B$ for $A,B\in\KK$. Now the multiplier algebra of $\KK$ is known to be $M(\KK)=\BB(\HH)$. Indeed, for $A=\ketbra\Phi\Psi$ with a unit vector $\Psi$, we get
\begin{equation}\label{k}
  L_m^x(A)=L_m^x(A\kettbra\Psi)=L_m^x(A)\kettbra\Psi=\ketbra{\widetilde\Phi}\Psi
\end{equation}
for a suitable vector $\widetilde\Phi$. Clearly, the map $\Phi\to\widetilde\Phi$ is linear, so we can set
$\widetilde\Phi=M(x)\Phi$ for an operator $M(x)$, which is easily checked to be bounded. Since the operators $\ketbra\Phi\Psi$ span a dense subspace of the compact operators, $L_m^x(A)=M(x)A$.

On the right side we get $R_m(A)(x)=\bigl(L_{m^*}(A^*)(x)\bigr)^*=\bigl(M(x)^*(A^*(x))\bigr)^*=A(x)M(x)$. It remains to check the continuity of $M$. To this end, we choose $\Phi,\Psi$ to be constant in a neighborhood of $x$. Then
$M(x)\ketbra\Phi\Psi$ has to be a norm continuous function, i.e., $M(x)\Phi$ is continuous in norm. Using the right multiplier instead, we find that $M(x)^*\Psi$ likewise has to be continuous.

This concludes the argument starting from the multiplier condition. For the converse assume that, $x\mapsto A(x)$ is strong*-continuous, and $K\in\CC_0(X,\KK)$. We have to show that $AK,KA\in\CC_0(X,\KK)$. For this we can assume that $K$ is in a set, whose linear hull is norm dense, namely the functions $K=\kettbra\psi\otimes f$ with $\psi\in\HH_1$ and $f\in\CC_0(X)$ with compact support. By assumption, $A\psi$ is then norm continuous in $\HH_1$, and so $AK$ is norm continuous. Since $f$ has compact support, so does $AK$. The argument for $KA\in\CC_0(X,\KK)$ is analogous.

We remark that in \cite[Cor~3.4]{PederMulti} the compacts are replaced by a more general algebra. The continuity is then in the natural topology for multipliers, the so-called strict topology, which is given by the seminorms $\norm m_A=\norm{L_mA}+\norm{R_mA}$. For the compact operators, this coincides with the s*-topology \cite[I.8.6.3]{Blackadar}.

\end{proof}

We do not have a similarly characterization of $\umeas(\AA)$. In that case it is clear from the proof of Prop.~\ref{prop:muornot}, that $\psi\mapsto\braket\psi{\check A(x)\psi}$ must be universally measurable on $X$, and bounded by $\norm A\norm\psi^2$. This is enough to get the function representation \eqref{disintegrateUM} of some element in $A\in\AA\bidu$, but it is unclear (to us) whether this suffices to conclude $A\in\umeas(\AA)$. However, in the direction of stronger continuity conditions we record the following for later use:

\begin{cor}\label{cor:CU}
Assume that $X=\Rl^n$, and denote by $\alpha_x:\AA\to\AA$ the translation of functions, i.e., $(\alpha_xA)(y)=A(x+y)$. Then \eqref{checkF} provides a bijective correspondence between
\begin{itemize}
\item[(1)] elements $A\in\umeas(\AA)$, which are strongly continuous for $\alpha\bidu$, i.e., $\lim_{x\to0}\norm{A-\alpha_x\bidu(A)}=0$, and
\item[(2)] functions $\check A:X\to\BB(\HH_1)$, which are uniformly continuous in the sense that\\ $\norm{\check A(x)-\check A(y)}\leq\veps$ if $\abs{x-y}\leq\delta$.
\end{itemize}
Such elements automatically are in $M(\AA)$.
\end{cor}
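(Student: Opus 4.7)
The plan is to reduce everything to a single key identity: the check function of $\alpha_x\bidu A$ is the translate $y\mapsto \check{A}(x+y)$. Once this is in place, the corollary becomes a direct translation between norm continuity of the $\alpha\bidu$-action on $\umeas(\AA)$ and uniform continuity of $\check{A}$. To establish the identity I would use \eqref{checkF}: starting from $(\alpha_x A)(y)=A(x+y)$ on $\AA$, the Banach adjoint acts on pure states of $\AA^*$ by $\alpha_x^*(\rho\otimes\delta_y)=\rho\otimes\delta_{y+x}$, and evaluating \eqref{checkF} for $\alpha_x\bidu A$ at $\rho\otimes\delta_y$ then yields $\check{A}(y+x)$. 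Since Lem.~\ref{lem:autoHeisen1} tells us $\alpha_x\bidu$ preserves $\umeas(\AA)$, the difference $\alpha_x\bidu A - A$ is again universally measurable, and the isometry $\|B\|=\sup_y\|\check{B}(y)\|$ from Prop.~\ref{prop:muornot} upgrades the identity to
\begin{equation*}
\|\alpha_x\bidu A - A\|\;=\;\sup_y\|\check{A}(y+x)-\check{A}(y)\|.
\end{equation*}

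From here the equivalence (1)$\Leftrightarrow$(2) is immediate: the left-hand side tends to $0$ as $x\to0$ exactly when the right-hand side does, which is the uniform continuity condition in (2). Boundedness of $\check{A}$ is already recorded in $\|A\|=\sup_y\|\check{A}(y)\|$, and the same norm formula makes $A\mapsto\check{A}$ injective on $\umeas(\AA)$, settling half of the bijection.

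For surjectivity and the concluding claim that any such $A$ is automatically a multiplier, I would feed a uniformly norm-continuous bounded $\check{A}\colon X\to\BB(\HH_1)$ into the converse direction of Prop.~\ref{prop:multipliers}: uniform norm continuity trivially implies strong*-continuity, producing an element $A'\in M(\AA)$ whose check function is the given $\check{A}$. Using the standard inclusion $M(\AA)\subset\umeas(\AA)$ (a consequence of hermitian multipliers lying in both $\AA\uplim$ and $\AA\dnlim$, which trivially satisfies both sides of the sandwich in Def.~\ref{def:multip}), $A'$ lies in $\umeas(\AA)$, and the (1)$\Rightarrow$(2) direction applied to $A'$ confirms it satisfies (1). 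Injectivity of $A\mapsto\check{A}$ on $\umeas(\AA)$ then forces any $A\in\umeas(\AA)$ satisfying (1) to coincide with this $A'$, which simultaneously proves surjectivity and the final sentence of the corollary. The one step where I expect genuine care to be needed is the inclusion $M(\AA)\subset\umeas(\AA)$, since the structure of $\umeas(\AA)$ is less explicit than that of $M(\AA)$; everything else is bookkeeping around the translate identity.
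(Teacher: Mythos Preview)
Your proof is correct and follows essentially the same route as the paper: establish the translate identity $\widecheck{(\alpha_x\bidu A)}(y)=\check A(x+y)$ via \eqref{checkF}, invoke the norm formula of Prop.~\ref{prop:muornot} to convert norm continuity of $\alpha\bidu$ into uniform continuity of $\check A$, and use the converse of Prop.~\ref{prop:multipliers} for the return direction. You are somewhat more explicit than the paper about the inclusion $M(\AA)\subset\umeas(\AA)$ and the injectivity/surjectivity bookkeeping, but the argument is the same; the inclusion you flag as delicate is exactly the one recorded after Def.~\ref{def:multip} (hermitian multipliers lie in both semicontinuous cones) and used in \eqref{Heisenbergs}.
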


\begin{proof}
Starting from (1), let $\check A$ be the function defined by \eqref{checkF}. Then the action of translations
on the functions in $\AA=\KK\otimes\CC_0(X)$ is lifted to the functions $\check A$. With the sign conventions analogous to \eqref{shiftQF}, we get
$\tr\rho\widecheck{(\alpha_x A)}(y)=\braket{\rho\otimes\delta_y}{\alpha_x(A)}=\braket{\rho\otimes\delta_{x+y}}A=\tr\rho \check A(x+y)$, and therefore $\widecheck{(\alpha_x\bidu A)}(y)=\check A(x+y)$.  By Prop.~\ref{prop:muornot},
\begin{equation}\label{strcont}
  \norm{A-\alpha_x\bidu(A)}=\sup_y\norm{\check A(y)-\check A(y+x)}.
\end{equation}
By assumption this goes to zero as $x\to0$, which is the stated uniform continuity with a coordinate change.

For the converse, uniform continuity of $\check A$ implies the strong*-continuity of Prop.~\ref{prop:multipliers}, so $\check A$ defines a multiplier $A\in M(\AA)$, and the strong continuity estimate is again \eqref{strcont}.
\end{proof}

\subsection{L\'evy continuity theorem}
This classical result relates the pointwise convergence of characteristic functions to the convergence of many more expectation values. One often finds it stated for single real random variables \cite{Lukacs}. Here is a sketch for many, possibly quantum variables. It is often used under the apparently weaker assumption that the pointwise limit of the characteristic functions $\chi_n$ happens to be continuous, but not necessarily itself a characteristic function. In that case, however, we may invoke Bochner's Theorem to find the state $\omega_\infty$, and apply the version given here.

\begin{prop}\label{prop:Levy}
Let $\omega_n$, $n\in\Nl\cup\{\infty\}$, be states in $\Cs\Xis{}^{*}$ with characteristic functions $\chi_n$. Then the following are equivalent:
\begin{itemize}
\item[(1)] $\lim_n\chi_n(\xi)=\chi_\infty(\xi)$ for all $\xi$,
\item[(2)] $\lim_n\braket{\omega_n}F=\braket{\omega_\infty}F$, for all $F\in\Cs\Xis{}$,
\item[(3)] $\lim_n\braket{\omega_n}F=\braket{\omega_\infty}F$, for all $F\in M(\Cs\Xis{})$.
\end{itemize}
\end{prop}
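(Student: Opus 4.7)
The plan is to close the cycle (1) $\Rightarrow$ (2) $\Rightarrow$ (3) $\Rightarrow$ (1). The implication (3) $\Rightarrow$ (1) is immediate once one notices that $W(\xi) \in M(\Cs\Xis{})$: in the tensor decomposition $\Cs\Xis{} \cong \KK(\HH_1)\otimes\CC_0(\Xi_0)$ from Prop.~\ref{prop10tensor}, $W(\xi)$ corresponds to the uniformly norm-continuous, bounded $\BB(\HH_1)$-valued function $x_0\mapsto e^{i\xi_0\cdot x_0}W_1(\xi_1)$, which is a multiplier by Prop.~\ref{prop:multipliers}. For (1) $\Rightarrow$ (2), I will work on the norm-dense subalgebra of Weyl integrals $W[h]\in\Cs\Xis{}$ for $h\in L^1(\Xi,d\xi)$, where \eqref{wwred} gives $\omega_n(W[h]) = \int d\xi\, h(\xi)\chi_n(\xi)$. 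Since $|\chi_n(\xi)|\leq 1$ uniformly, Lebesgue's dominated convergence theorem yields $\omega_n(W[h])\to\omega_\infty(W[h])$; the uniform norm bound $\|\omega_n\|\leq 1$ then propagates the convergence to all $F\in\Cs\Xis{}$ by a standard three-$\veps$ density argument.

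The substance is in (2) $\Rightarrow$ (3), and this is where I expect the main obstacle: a priori, weak* convergence on the non-unital algebra $\Cs\Xis{}$ provides no control over $\idty$ or any other element of $M(\Cs\Xis{})\setminus\Cs\Xis{}$, and mass may well escape to infinity (as it does for translates of a point measure on the classical part). I plan to remove this obstruction by a tightness estimate: for every $\veps>0$ I will produce $e\in\Cs\Xis{}$ with $0\leq e\leq\idty$ and $\omega_n(e)>1-\veps$ for \emph{all} $n$. Fix an increasing approximate unit $(e_\lambda)$ in $\Cs\Xis{}$ with $0\leq e_\lambda\leq\idty$; since $\omega_\infty$ is a state, $\omega_\infty(e_\lambda)\to 1$, so some $\lambda_\infty$ satisfies $\omega_\infty(e_{\lambda_\infty})>1-\veps/2$. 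By (2), there is $N$ such that $\omega_n(e_{\lambda_\infty})>1-\veps$ for $n\geq N$. For each of the finitely many $n<N$ pick $\lambda_n$ with $\omega_n(e_{\lambda_n})>1-\veps$, and choose $\lambda_*$ dominating $\lambda_\infty,\lambda_1,\ldots,\lambda_{N-1}$; monotonicity of the approximate unit then gives $\omega_n(e_{\lambda_*})>1-\veps$ for every $n$, and I set $e:=e_{\lambda_*}$.

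With tightness in hand, the rest is routine. For $F\in M(\Cs\Xis{})$ split $F = Fe + F(\idty-e)$; the first summand lies in $\Cs\Xis{}$ because $e\in\Cs\Xis{}$ and $F$ is a multiplier, so (2) yields $\omega_n(Fe)\to\omega_\infty(Fe)$. For the error, the Cauchy--Schwarz inequality for states gives
\begin{equation*}
  |\omega_n(F(\idty-e))|^2 \leq \omega_n(FF^*)\,\omega_n((\idty-e)^2) \leq \|F\|^2\,\omega_n(\idty-e) < \veps\,\|F\|^2,
\end{equation*}
where I used $(\idty-e)^2\leq \idty-e$ since $0\leq e\leq\idty$; the same bound applies to $\omega_\infty$. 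Therefore $\limsup_n|\omega_n(F)-\omega_\infty(F)|\leq 2\sqrt\veps\,\|F\|$, and letting $\veps\to 0$ closes the cycle.
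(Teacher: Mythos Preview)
Your proof is correct and follows essentially the same approach as the paper: the cycle (3)$\Rightarrow$(1)$\Rightarrow$(2)$\Rightarrow$(3), with (1)$\Rightarrow$(2) via dominated convergence on $W[h]$ and norm density, and (2)$\Rightarrow$(3) via a tightness element $e\in\Cs\Xis{}$ combined with Cauchy--Schwarz. The only difference is that you arrange $\omega_n(e)>1-\veps$ for \emph{all} $n$ by also handling the finitely many small indices, whereas the paper is content with sufficiently large $n$; this extra uniformity is harmless but unnecessary, since only the tail matters for the limit.
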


\begin{proof}\
(3)$\Rightarrow$(1) is trivial because Weyl elements $W(\xi)$ are in the multiplier algebra. \\
(1)$\Rightarrow$(2) For $h\in L^1(\Xi)$, the expectation values $\braket{\omega_n}{W[h]}=\int d\xi\ h(\xi) \chi_n(\xi)$ converge by dominated convergence. This extends to  $\Cs\Xis{}$ because elements of the form $W[h]$ are norm dense by the construction in Sect.~\ref{sec:CstarStates}. \\
(2)$\Rightarrow$(3) Let $u\in\Cs\Xis{}$ be an element such that $0\leq u\leq\idty$ and $\braket{\omega_\infty} u\geq1-\veps/2$.
Here $u$ serves as an ``approximate unit'' expressing the intuition that the state $\omega_\infty$ is essentially localized in a finite region of the phase space, resp. a finite-dimensional subspace of the Hilbert space.
By assumption, the expectations for $u\in\Cs\Xis{}$ converge, so we also have
$\braket{\omega_n}u\geq1-\veps$ for sufficiently large $n$. Now for $F$ in the multiplier algebra and  $n\leq\infty$,
$$\abs{\braket{\omega_n}{(1-u)F}}^2\leq \braket{\omega_n}{(1-u)FF^*(1-u)^*}\leq\norm F^2\braket{\omega_n}{(1-u)^2}\leq \norm F^2\braket{\omega_n}{(1-u)}\leq \veps\norm F^2.$$
Hence
$\abs{\braket{\omega_n-\omega_\infty}{F}}\leq \abs{\braket{\omega_n-\omega_\infty}{uF}} + 2\sqrt\veps\,\norm F$, which goes to zero, because $uF\in\Cs\Xis{}$ and such expectations converge by (2).
\end{proof}

It is crucial here that the pointwise limit $\chi_\infty(\xi)$ is  {\it assumed} to belong to a normalized state (and, in particular, is normalized and continuous at zero, which would be sufficient). This prevents the states from wandering off to infinity, like $\alpha_{n\xi}(\omega_0)$ for any $\xi\neq0$. Then the limits in (2) are all zero, and those in (3) may fail to exist, so these items are no longer equivalent.

\subsection{Tensor product or function space? }\label{sec:tensprod}
This section provides some background that is not strictly needed in our approach. In quantum information the composition of systems is generally thought of as a tensor product of {\it finite dimensional} observable algebras. This misses some subtleties, in particular for hybrids, when the parameter space of classical system is not finite and also not compact. These are related to the question of Sect.~\ref{sec:funcobsCXA}, of representing hybrid observables as functions on the classical parameter space.

Let us consider the tensor product approach to the combination of a  classical part, described by some algebra of functions on $X$ and a quantum part with observable algebra $\AA_1=\BB(\HH)$. It turns out that for hybrids, it is not so obvious how such a tensor product should be defined or whether the C*-tensor product is even a good way to describe this kind of composition. Sect.~\ref{sec:funcobsCXA} raises a doubt in this regard: While at the level of the algebras $\Cs\Xis{}$ the classical and quantum part are combined by a C*-tensor product (see Prop.~\ref{prop10tensor}), the same is not true for the enlarged observable algebras such as $M\Xis{}$. While all the enlarged observable algebras in Fig.~\ref{fig:setting3} consist of functions $X\to\BB(\HH_1)$ with varying degrees of regularity, they cannot be written as a tensor product of classical and quantum parts. In this subsection, we discuss some aspects of this failure of tensor products.

To begin with, the tensor product of C*-algebras is in general not uniquely defined \cite{takesaki1}: There is a minimal and a maximal choice of C*-norms on the linear algebra tensor product, which are in general different. However, all cross norms coincide if one of the algebras is abelian, so we will not have to worry about this ambiguity. This is the tensor product used in the basic hybrid algebra $\AA=\CC_0(X,\KK)\cong\KK\otimes\CC_0(X)$.

However, we also need to include observables that do not decay at infinity and non-compact operators on the quantum side. Therefore, the question arises, whether for some algebra $\AA_1$ (in our applications either $\KK$ or $\BB(\HH)$) this sort of isomorphism also holds for the bounded $\AA_1$-valued norm continuous functions $\CCb(X,\AA_1)$ with the C*-tensor product $\AA_1\otimes\CCb(X)$. The candidate for this isomorphism is
\begin{equation}\label{CXA}
  \iota:\AA_1\otimes\CCb(X)\to\CCb(X,\AA_1), \qquad \iota\bigl(A\otimes f\bigr)(x)=f(x)A.
\end{equation}
This clearly can be extended to a *-homomorphism. When $X$ is compact, this is even an isomorphism \cite[Prop.\,1.22.3]{sakai}, and the subscript b can be dropped because continuous functions are automatically bounded.

However, we are interested in a locally compact space $X=\Xi_0=\Rl^s$. In that case, the embedding $\iota$ is not surjective \cite{williams}. To see this, note that $\CCb(X)=\CC(\beta X)$, where $\beta X$ is the Stone--\v Cech compactification, so we have $\iota(\AA_1\otimes\CCb(X))=\CC(\beta X,\AA_1)$. But the continuity of a bounded function $F:X\to\AA_1$ does not necessarily imply the existence of a norm continuous extension to $\beta X$. In fact, if such an extension exists, the range $F(X)=\{F(x)|x\in X\}$ must have norm compact closure $F(\beta X)\subset\AA_1$. As shown by Williams \cite{williams} this is precisely what could go wrong, i.e., we can turn this into a criterion describing the range of $\iota$.

\begin{example} An element $F\in\CCb(X,\AA_1)$ with $F\notin\iota(\AA_1\otimes\CCb(X))$.\end{example}\extxt{
Take $F(x)=\sum_i P_i f_i(x)$, where $P_i$ is a family of orthogonal projections, e.g., in $\AA_1=\KK(\HH)$. The $f_i$ are chosen to have disjoint supports in the elements of some countable partition of $X$, are positive, take the value $1$ somewhere, and $\sum_if_i=\idty$. For example, with some fixed function $f_0\in\CC_0(\Rl)$, such that $f_0(0)=1$ and $f_0(x)=0$ for $\abs x>1/3$, we can set $f_i(x)=f(x-i)$.
Then $\{P_i\}\subset F(X)$ does not have norm compact closure. Note that the sum defining $F$ cannot be obtained as a supremum-norm limit of finite partial sums, as would be required for $F\in\AA_1\otimes\CCb(X)$. }\\

As a consequence, it is preferable to consider the larger algebra $\CCb(X,\AA_1)$, rather than the tensor product, as a basic hybrid algebra. Note, however, that we can also change the topology of $\AA_1$ for which we demand continuity. For example, consider $\AA_1=\BB(\HH)$, taken with the weak*-topology, and hence the space $\CC_w(X,\BB(\HH))$ of norm bounded, weak*- continuous functions $X\to\BB(\HH)$. Since the unit ball of $\BB(\HH)$ is now compact, the above argument of Williams no longer applies. Indeed if $F\in\CC_w(X,\BB(\HH))$, and $\rho\in\BB(\HH)_*$, the function $x\mapsto\tr\rho F(x)$ is bounded and continuous, and hence extends to $\beta X$. The value $F_\rho(\hat x)$ of this function at a point $\hat x\in\beta X$ is a bounded linear functional with respect to $\rho$, and there is an operator $F(\hat x)$ representing this functional. In other words
\begin{equation}\label{CCw}
\CC_w(X,\BB(\HH))=\CC_w(\beta X,\BB(\HH)).
\end{equation}
It is not obvious that this is even an algebra because the operator product is not continuous in the weak*-topology. However, it is still an order unit Banach space with well-defined positive cone and unit, so it is eligible as a space of observables (in the sense of positive operator valued measures). The same is true for $\umeas\Xis{}$.

As Prop.~\ref{prop:multipliers} shows, the multiplier algebra $M(\CC_0(X,\KK))$ can also be considered as a set of operator-valued functions, and is clearly contained in $\CC_w(X,\BB(\HH))$. In this case, the algebraic product is well defined, which is readily seen both abstractly, by the definition of multipliers, and concretely, as the strong* operator topology is compatible with products. It is interesting to note that $\CC_w(X,\BB(\HH))$ and $M(\CC_0(X,\KK))$ share the same ``classical part'', or center, namely the scalar functions $\CC(\beta X)\cong\CC_b(X)$. Here the center of an order unit space is spanned by the observables, which can be jointly measured with all others.

\subsection{Eigenvectors of translations}\label{sec:transEvecs}
We will later define quasifree channels by their covariance with respect to phase space translations. This hinges on the characterization of the joint eigenvectors of the translations, which is the topic of this section.

To this end we need to introduce a  notation for the translation maps themselves. In terms of canonical operators in a standard representation, a phase space shift acts as
\begin{equation}\label{eqshif}
  R'_j=R_j+\eta_j\idty,
\end{equation}
where $\idty$ stands for the identity operator or the constant $1$-function. The size of the shift depends on $j$, and together these parameters form the components of a vector $\eta\in\Xi$. In terms of Weyl operators, this means
\begin{equation}\label{Weylshifted}
  W'(\xi)=\exp(i\xi\cdt R')=\exp(i\xi\cdt\eta)W(\xi)=:\alpha_\eta\bigl(W(\xi)\bigr).
\end{equation}
Here we have introduced the automorphism $\alpha_\eta$, which expresses this symmetry as an automorphism on observables. If we think of $W(\xi)$ as an operator-valued function on $\Xi_0$, we have to define it on more general operator-valued functions as computed in \eqref{shiftQF}, i.e.
\begin{equation}\label{defalpha}
  \alpha_\eta(F)(x)=W(\sigma\eta)^*F(x+\eta_0)W(\sigma\eta),
\end{equation}
where $\eta=\eta_1\oplus\eta_0$. This formula makes sense for any of the observable algebras that are built from $\BB(\HH_1)$-valued bounded functions on $\Xi_0$, including the CCR-algebra in any standard representation, hence also $\Cs\Xis{}$. On standard states it is equivalent to
\begin{equation}\label{defalphastar}
  \alpha^*_\eta(\delta_x\otimes\rho)=\delta_{x-\eta_0}\otimes W(\sigma\eta)\rho W(\sigma\eta)^*.
\end{equation}

If we define $\alpha_\eta$ as an automorphism group on $\Cs\Xis{}$, in \eqref{Weylshifted} we should really write $\alpha_\eta\bidu$, but we will continue to use the same symbol also for this map. Then \eqref{Weylshifted} just says that the Weyl operators $W(\xi)\in\Cs\Xis{}\bidu$ are joint eigenvectors of all translations. It will be crucial later on to turn this around:

\begin{lem}\label{lem:WeylEW}
Suppose that for some $F\in\umeas\Xis{}$ we have $\alpha_\eta(F)=\lambda(\eta) F$, for all $\eta$ and a suitable function $\lambda:\Xi\to\Cx$.
Then there is $\xi\in\Xi$ and $c\in\Cx$, such that $F=cW(\xi)$, and $\lambda(\eta)=\exp(i\xi\cdt\eta)$.
\end{lem}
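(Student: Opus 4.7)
The plan is to pass to the pointwise function representation $\check F: \Xi_0 \to \BB(\HH_1)$ from \eqref{checkF} and analyze the classical and quantum components of the translation action separately. I may assume $F \neq 0$, the case $F=0$ being trivial with $c=0$.

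First, $\lambda: \Xi \to \Cx$ is automatically a character of $(\Xi,+)$: the relation $\alpha_\eta \alpha_{\eta'} = \alpha_{\eta+\eta'}$ forces $\lambda(\eta+\eta') = \lambda(\eta)\lambda(\eta')$, and since each $\alpha_\eta$ is isometric, $|\lambda(\eta)| = 1$. Applying the pointwise evaluation $\check{\cdot}$ and using \eqref{defalpha} converts the eigenvalue equation into the functional identity
\begin{equation*}
W(\sigma\eta_1)^* \check F(x + \eta_0) W(\sigma\eta_1) = \lambda(\eta)\, \check F(x), \qquad x \in \Xi_0,\ \eta = \eta_1 \oplus \eta_0 \in \Xi.
\end{equation*}

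Specializing to purely classical $\eta$ (i.e.\ $\eta_1 = 0$) and $x=0$ yields $\check F(\eta_0) = \lambda_0(\eta_0)\,\check F(0)$, where $\lambda_0(\eta_0) := \lambda(0 \oplus \eta_0)$. By Prop.~\ref{prop:muornot} the universal measurability of $F$ ensures that $\check F$ is universally measurable as an operator-valued function on $\Xi_0$, so $\lambda_0$ is a measurable character of $\Rl^s$. The classical fact that every measurable character of $\Rl^s$ is continuous then gives $\lambda_0(\eta_0) = e^{i \xi_0 \cdot \eta_0}$ for a unique $\xi_0 \in \Xi_0$. Specializing instead to purely quantum $\eta$ and $x=0$ yields $W(\sigma\eta_1)^* \check F(0) W(\sigma\eta_1) = \lambda_1(\eta_1)\,\check F(0)$, where $\lambda_1(\eta_1) := \lambda(\eta_1 \oplus 0)$. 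Taking a matrix element $\braket{\psi}{\cdot\phi}$ with $\braket{\psi}{\check F(0)\phi}\neq 0$ and invoking the strong continuity of $\eta_1 \mapsto W(\sigma\eta_1)$ on $\HH_1$ (Thm.~\ref{thm:unique}) shows that $\lambda_1$ is continuous, hence a genuine exponential.

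Using the Weyl relation \eqref{weylcom} and the non-degeneracy of $\sigma$ on $\Xi_1$, I can find $\xi_1 \in \Xi_1$ with $W(\sigma\eta_1)^* W(\xi_1) W(\sigma\eta_1) = \lambda_1(\eta_1)\, W(\xi_1)$ for every $\eta_1$. Then $\check F(0) W(\xi_1)^*$ commutes with every $W(\sigma\eta_1)$, and as $\eta_1$ ranges over $\Xi_1$ these form an irreducible set on $\HH_1$; Schur's lemma yields $\check F(0) = c\, W(\xi_1)$ for some $c \in \Cx$. Combining both specializations, $\check F(x) = c\,e^{i \xi_0 \cdot x} W(\xi_1)$, which is precisely the function representation of the Weyl element $c\,W(\xi_1 \oplus \xi_0)$. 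Setting $\xi := \xi_1 \oplus \xi_0 \in \Xi$ I conclude $F = cW(\xi)$, and the character property forces $\lambda(\eta) = e^{i\xi \cdot \eta}$.

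The main obstacle is the measurability step for $\lambda_0$: one needs to transfer the universal measurability of $F \in \umeas\Xis{}$ into measurability of a scalar function on $\Xi_0$, which relies on the function representation of Prop.~\ref{prop:muornot} together with the classical result that measurable homomorphisms $\Rl^s \to \Tw$ are continuous. The other steps — continuity of $\lambda_1$ from strong continuity of Weyl operators, and the Schur-type extraction of $\check F(0)$ as a multiple of a Weyl operator — are more routine.
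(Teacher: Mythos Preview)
Your proof is correct and follows essentially the same route as the paper's: both pass to the pointwise function representation, separate the classical and quantum translation directions, and invoke irreducibility of the \Schroed\ Weyl operators on $\HH_1$. The only noteworthy difference lies in how the exponential form of $\lambda$ is obtained. The paper establishes it for all of $\Xi$ at once by citing a general result (\cite[Prop.\,7.4.5]{PederBook}) on universal measurability of characters arising from such eigenvalue equations, and then multiplies by $W(\xi)^*$ to reduce to the translation-invariant case $\alpha_\eta(F')=F'$. You instead treat $\lambda_0$ and $\lambda_1$ separately, deducing measurability of $\lambda_0$ from Prop.~\ref{prop:muornot} and continuity of $\lambda_1$ from strong continuity of the Weyl operators, and then extract the Weyl factor directly via Schur. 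One small imprecision: Prop.~\ref{prop:muornot} does not literally say that $\check F$ is universally measurable as an operator-valued function; what it gives (see its proof) is the universal measurability of $x\mapsto\tr(\rho\,\check F(x))$ for each fixed density operator $\rho$, and this yields the measurability of $\lambda_0$ once you pick $\rho$ with $\tr\rho\,\check F(0)\neq0$.
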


\begin{proof}
Note first that we must have $\lambda(\eta+\eta')=\lambda(\eta)\lambda(\eta')$. Moreover, $\lambda$ must be a universally measurable function on $\Xi$ \cite[Prop.\,7.4.5]{PederBook}. Since the only measurable characters on $\Rl^{2s+n}$ are exponentials, we conclude that $\lambda(\eta)=\exp(i\xi\cdt\eta)$.

Consider $F'=FW(\xi)^*$. Then because $\alpha$ is a group of automorphisms, and $W(\xi)$ satisfies the required eigenvalue equation \eqref{Weylshifted}, we get that $\alpha_\eta(F')=F'$. It remains to prove that this implies that $F'=c\idty$, since then $F=F'W(\xi)=cW(\xi)$.

It suffices to prove this in every standard representation, where Eq.\ \eqref{defalpha} has a direct interpretation. Then,  for all $\eta_0,\eta_1$, $W(\sigma_1\eta_1)^*F'(x+\eta_0)W(\sigma_1\eta_1)=F'(x)$, where the operators $W(\cdot)$ are Weyl operators in the \Schroed\  representation. Setting first $\eta_0=0$ we thus conclude that $F(x)=f(x)\idty$ by irreducibility of the standard quantum Weyl operators. By setting $\eta_1=0$, we get that this $f$ must be constant. Hence $F'(x)=c\idty$.
\end{proof}

\begin{example}Measurability $F\in\umeas\Xis{}$ is required.\end{example}
\extxt{
For simplicity, we will construct an example in the classical case ($\sigma=0$).
Let $\xi\to\lambda(\xi)$ be an arbitrary homomorphism $\Rl^n\to\Cx$ into the unit circle, of which we do not require any continuity or measurability. It is well known that there are many discontinuous $\lambda$, which are then necessarily non-measurable (see \cite[Ex.~3.2.4]{daoxing} or the review \cite{Rosendal}). A simple construction uses a Hamel basis of $\Rl^n$ as a vector space over $\Rt$, i.e., a set of elements $e_j$, $j\in J$ such that every $\eta\in\Rl^n$ can be written uniquely as a finite linear combination $\eta=\sum_j\eta_je_j$. Then we just set $\lambda(\eta)=\exp i\sum_ja_i\eta_i$, for arbitrary constants $a_i$. It is easily arranged that such a function is not continuous.

Now consider the set
\begin{equation}\label{M4Markov}
  \MM=\Bigl\{F\in\Cs\Xis{}\bidu\Bigm| \norm F\leq1,\mbox{\ and, for all $\xi$:\ }
     \delta_\xi(F)=\overline{\lambda(\xi)}\Bigr\}.
\end{equation}
As a weak*-closed subset of the unit sphere, it is compact, and it is nonempty because we can define $F$ as a functional on the linear combinations of point measures by the condition in $\MM$ and then choose a Hahn--Banach extension. Now define the transformations $\beta_\eta=\overline{\lambda(\eta)}\alpha\bidu_\eta$. Because $\lambda$ is a character, these maps leave $\MM$ invariant. They are also continuous and commute. Hence, by the Markov--Kakutani Fixed Point Theorem, they have a common fixed point $F$. $F$ must be non-zero because it is in $\MM$, and as a fixed point of the $\beta_\eta$, it satisfies the equation $\alpha\bidu_\eta F=\lambda(\eta)F$. But since $\lambda$ is not continuous, it cannot be of the form given in the lemma.
}

\subsection{Correspondence for spaces of strongly continuous observables}\label{sec:corres}
We saw in Sect.~\ref{sec:funcobsCXA} that hybrid observables in $M(X,\KK)$ are given by strong* continuous operator valued functions on $X$. Here we will study a class with stronger continuity properties: On one hand we demand the continuity to be in operator norm, and on the other that it be uniform in $X$. This combination gives the continuity of $\xi\mapsto\alpha_\xi\bidu(A)$ in the norm of the hybrid bidual. In other words, $F$ is strongly continuous for the translations. Since this property can be stated without explicitly observing the classical-quantum split, it will be easy to establish an automatic Heisenberg picture for quasifree channels (Prop.~\ref{prop:Heisenalg}), even if quantum and classical degrees are strongly coupled.

For defining a suitable space of strongly continuous observables we will make sure that the observable $A$ has a good function representation in the first place, i.e., $A\in\umeas\Xis{}$. This excludes unwanted elements like $\idty -z\atomic$, which is even invariant under all $\alpha_\xi\bidu$ but has a vanishing function representation, as noted above. That is, we define
\begin{equation}\label{defucont1}
  \ucont\Xis{}=\{A\in \umeas\Xis{}\mid \lim_{\eta\to0}\norm{F-\alpha\bidu_\eta (F)}=0\}.
\end{equation}
In this definition we do not distinguish between quantum and classical translation directions. Restricting just to the classical part, Cor.~\ref{cor:CU} shows that $\ucont\Xis{}\subset M\Xis{}$. In addition the argument $\check A(\xi_0)\in\BB(\HH_1)$ has to be strongly continuous for the quantum translations, and uniformly so with respect to $\xi_0$.

A basic example is also given by the Weyl operators: Since $\alpha_\xi\bidu W(\eta)=\exp(i\xi\cdot\eta)W(\eta)$, the required continuity is immediate from the continuity of the phase factor. This shows that $\CCR\Xis{}\subset\ucont\Xis{}$.

The algebra $\ucont\Xis{}$ is still rather large, for example, not separable. In the context of Ludwig's axiomatic approach, it seemed natural to single out a norm separable subspace $\DD\subset\BB(\HH)$ as a space of physical observables. One role of the space $\DD$ would be to determine a more realistic assessment of the distinguishability of states compared to norm or weak topologies. It turned out \cite{QHA,uniformities} that in systems with canonical variables, the choices for $\DD$ on the quantum side are in one-to-one correspondence with choices on the classical side, which, in turn, can often be understood in terms of compactifications of phase space. For example, the CCR-algebra corresponds to the almost periodic functions and the Bohr compactification, whereas the compact operators correspond to $\CC_0(\Xi)$ and adjoining the identity to the one-point compactification of $X$. In this section we will show that the correspondence naturally also covers the hybrids between the fully quantum and the fully classical case. That is, the lattice of translation invariant closed subspaces of  $\ucont\Xis{}$  {\it does not depend on} $\sigma$.

This correspondence is best expressed in terms of the following notion of convolution. We denote by $\beta_-$ the automorphism of phase space inversion, satisfying $\beta_-(W(\xi))=W(-\xi)$, which is given by a coordinate change $\xi_0\mapsto-\xi_0$ on the classical part and is implemented by the parity operator on the quantum part. The sign freedom in the following definition is due to the fact that $\Cs(\Xi,\sigma)^*=\Cs(\Xi,-\sigma)^*$: The twisted positive definiteness conditions \eqref{twistedPosDef} for $\sigma$ and $-\sigma$ both imply hermiticity ($\chi(-\xi)=\overline{\chi(\xi)}$), and with $\xi_k\mapsto-\xi_k$ and complex conjugation they become equivalent.

\begin{defi}\label{def:stateconvolve}
Let $\Xi$ be a real vector space with antisymmetric forms $\sigma_1$ and $\sigma_2$ and fix some signs $s_i=\pm1$ for $i=1,2$.  Then, for states $\omega_i\in\Cs(\Xi,\sigma_i)^*$ with characteristic functions $\chi_i$,  we define their {\bf convolution}, denoted by $\omega_1\ast\omega_2\in\Cs(\Xi,s_1\sigma_1+s_2\sigma_2)^*$ by its characteristic function $\chi(\xi)=\chi_1(\xi)\chi_2(\xi)$.\\
For $\omega\in\Cs(\Xi,\sigma_1)^*$ and $F\in\Cs(\Xi,\sigma_2)\bidu$, we define $\omega\ast F=F\ast\omega\in\Cs(\Xi,s_1\sigma_1+s_2\sigma_2)\bidu$ by evaluating it on an arbitrary  $\omega'\in \Cs(\Xi,s_1\sigma_1+s_2\sigma_2)^*$:
\begin{equation}\label{Fconvolve}
  \braket{\omega'}{\omega\ast F}= \braket{\omega'\ast(\beta_-\omega)}F.
\end{equation}
\end{defi}

Convolution is a bilinear operation $\Cs(\Xi,\sigma_1)^*\times\Cs(\Xi,\sigma_2)^*\to\Cs(\Xi,s_1\sigma_1+s_2\sigma_2)^*$, which is obviously commutative, associative, and bi-positive. It is also translation invariant in the sense that
$\alpha_\xi(\omega\ast F)=(\alpha_\xi\omega)\ast F=\omega\ast(\alpha_\xi F)$, which also shows why $\beta_-$ is needed in \eqref{Fconvolve}. The freedom of the sign in the definition is used to get a classical state or observable function as the convolution of two quantum objects.

For pointwise estimates it is useful to have a direct formula for the convolution, which bypasses the Fourier transform. When one factor is $\omega\in\Cs(\Xi,0)^*$, e.g.\ a classical probability measure on $\Xi$, this is the usual average over translates of the other factor:
\begin{equation}
  \omega\ast F=\int\omega(d\xi)\ \alpha_\xi(F),
\end{equation}
where $F$ could be an observable or another state, and the symplectic form is the same for $F$ and $\omega\ast F$.
For a state $\omega\in\Cs(\Xi,\sigma)^*$ and an observable
$F\in\ucont\Xis{}$ on the same hybrid, we get a uniformly continuous function $\omega\ast F\in\ucont(\Xi,0)$:
\begin{equation}\label{convolvetofct}
  \bigl(\omega\ast F\bigr)(\xi)=\omega\bigl(\alpha_\xi\beta_-(F)\bigr).
\end{equation}

The hybrid generalization of correspondence theory \cite{QHA,Fulsche} is given in the following proposition. $\omega\in\Cs(\Xi,\sigma_1-\sigma_2)^*$ is called {\bf regular} if it is norm continuous under translations (cf. Prop.~\ref{prop:contransl}) and its characteristic function vanishes nowhere.

\begin{prop}\label{prop:qha}
Let $\Xi$ be a vector space. Then the lattice of $\alpha$-invariant closed subspaces of $\ucont\Xis{}$ does not depend on $\sigma$.
More precisely, let $\DD_i\subset\ucont(\Xi,\sigma_i)$ be $\alpha$-invariant closed subspaces, and $\omega_0\in\Cs(\Xi,\sigma_1-\sigma_2)^*$ regular.
Then the following are equivalent:
\begin{itemize}
\item[(1)] $\omega\ast\DD_1\subset\DD_2$ and $\omega\ast\DD_2\subset\DD_1$ for all $\omega\in\Cs(\Xi,\sigma_1-\sigma_2)^*$.
\item[(2)] The inclusions (1) hold for $\omega=\omega_0$.
\item[(3)] $\DD_2$ is the closure of $\omega_0\ast\DD_1$.
\item[(4)] $\DD_2=\{A\in\ucont(\Xi,\sigma_1)\mid \omega_0\ast A\in\DD_1\}$.
\end{itemize}
\end{prop}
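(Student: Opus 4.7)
The plan is to prove the cycle $(1)\Rightarrow(2)\Rightarrow(3)\Rightarrow(4)\Rightarrow(1)$, exploiting three structural features throughout: convolution is bilinear and commutative (with signs chosen so the output form matches); any $\alpha$-invariant closed subspace is automatically stable under convolution with arbitrary classical probability or finite signed measures $\mu$, since $\mu\ast A=\int\alpha_\xi(A)\,d\mu(\xi)$ is a weak* integral of $\alpha_\xi(A)$'s; and the set $\{\omega:\omega\ast\DD_1\subset\DD_2\}$ is itself weak*-closed, convex, and translation-invariant because $\alpha_\eta(\omega\ast A)=(\alpha_\eta\omega)\ast A$ and $\DD_2$ is closed and $\alpha$-invariant.

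The implication $(1)\Rightarrow(2)$ is by specialization. For $(2)\Rightarrow(3)$, put $\DD_2':=\overline{\omega_0\ast\DD_1}$, an $\alpha$-invariant closed subspace contained in $\DD_2$ by hypothesis. For the reverse inclusion, pick $A\in\DD_2$: the second half of $(2)$ yields $B:=\omega_0\ast A\in\DD_1$, hence $\omega_0\ast B=\omega_0\ast\omega_0\ast A\in\DD_2'$. Note that $\omega_0\ast\omega_0$ (with the sign choice giving symplectic form $0$) is a classical state with characteristic function $|\chi_0|^2$, which is continuous and nonvanishing by regularity of $\omega_0$. One then shows that $A$ can be recovered as a norm limit $\mu_k\ast(\omega_0\ast\omega_0\ast A)\to A$ with $\mu_k$ finite signed measures on $\Xi$ chosen so that $\widehat{\mu_k}\cdot|\chi_0|^2$ is a bounded approximate identity in the Fourier picture; non-vanishing of $|\chi_0|^2$ makes this possible on every compact set, while uniform continuity built into $A\in\ucont(\Xi,\sigma_2)$ controls the tails. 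Since each $\mu_k\ast(\omega_0\ast\omega_0\ast A)\in\DD_2'$ (convolution with signed measures preserving the subspace), we conclude $A\in\DD_2'$.

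For $(3)\Rightarrow(4)$, set $\widetilde\DD:=\{A\in\ucont(\Xi,\sigma_2):\omega_0\ast A\in\DD_1\}$ (with the natural sign convention making $\omega_0\ast A$ live in the correct algebra). From $(3)$, every $A\in\DD_2$ is a limit of $\omega_0\ast B_n$ with $B_n\in\DD_1$; convolving once more gives $\omega_0\ast A=\lim\omega_0\ast\omega_0\ast B_n$, and by closedness of $\DD_1$ together with the same inversion/mollifier argument as above (now running inside $\DD_1$), one obtains $\omega_0\ast A\in\DD_1$, hence $\DD_2\subset\widetilde\DD$. The reverse inclusion is symmetric: if $A\in\widetilde\DD$, then $\omega_0\ast A\in\DD_1$, so $\omega_0\ast\omega_0\ast A\in\omega_0\ast\DD_1\subset\DD_2$, and inversion recovers $A\in\DD_2$. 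Finally, $(4)\Rightarrow(1)$ follows by commutativity: for arbitrary $\omega$ and $A\in\DD_1$, $\omega_0\ast(\omega\ast A)=\omega\ast(\omega_0\ast A)\in\omega\ast\DD_2\subset\DD_2\subset\ucont(\Xi,\sigma_2)$, where $\omega_0\ast A\in\DD_2$ is exactly the statement already established, and the last convolution preserves $\DD_2$ because any state acts on an $\alpha$-invariant closed subspace by its translation average. Re-applying $(4)$ in the converse direction gives $\omega\ast A\in\DD_2$; the analogous argument with roles of $\DD_1$ and $\DD_2$ exchanged yields the second inclusion in $(1)$.

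The main obstacle is the inversion step, i.e., turning non-vanishing of $|\chi_0|^2$ into a norm-approximation by signed-measure convolutions that is uniform over elements of the subspace. This is the point where universal uniform continuity of $\ucont\Xis{}$ is essential: it ensures that the tail of the mollifier in phase space contributes uniformly negligibly, so that the classical harmonic-analysis argument (cutting off $(|\chi_0|^2)^{-1}$ on large balls and smoothing) goes through for all $A$ at once.
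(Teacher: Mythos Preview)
Your approach is essentially the paper's: both pivot on the observation that $\omega_0\ast\omega_0$ (and more generally $\omega_0\ast\omega$ with the opposite sign choice) is a \emph{classical} regular state, so that Wiener's approximation theorem yields the ``inversion step''. The paper isolates this as a lemma (classical convolution preserves $\alpha$-invariant closed subspaces, densely when the state is regular), but the mechanism is identical to your mollifier argument.

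There is, however, a concrete error in your $(4)\Rightarrow(1)$ step. You write $\omega\ast\DD_2\subset\DD_2$ and justify it by ``any state acts on an $\alpha$-invariant closed subspace by its translation average''. This is false: convolution by $\omega\in\Cs(\Xi,\sigma_1-\sigma_2)^*$ changes the symplectic form, sending $\ucont(\Xi,\sigma_2)$ into $\ucont(\Xi,\sigma_1)$, not back to itself; only states with form $0$ preserve a given $\ucont(\Xi,\sigma_i)$ and its invariant closed subspaces. The clean fix is what the paper does: for $A\in\DD_1$, regroup $\omega_0\ast(\omega\ast A)=(\omega_0\ast\omega)\ast A$ with signs chosen so that $\omega_0\ast\omega\in\Cs(\Xi,0)^*$; this is classical, hence $(\omega_0\ast\omega)\ast A\in\DD_1$, and then $(4)$ yields $\omega\ast A\in\DD_2$. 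For the reverse inclusion $\omega\ast\DD_2\subset\DD_1$, your appeal to ``roles exchanged'' tacitly uses the swapped version of $(4)$, which is not yet available; it does follow from $(4)$ plus one more application of the inversion step, but this needs to be said. The paper avoids this detour by proving $(3)\Leftrightarrow(4)$ and then $(3)\Rightarrow(1)$, where $(3)$ gives direct access to both inclusions via $\omega\ast\omega_0$ classical.
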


Note that because (1) does not depend on $\omega_0$, each of the following items holds for all regular $\omega_0$ if it holds for any one, and by the same token, is also equivalent to the same condition with $\DD_1$ and $\DD_2$ exchanged.

\begin{proof}
(See \cite[Thm.~4.1]{QHA} for more details.) The crucial fact here is Wiener's approximation theorem, which states that the translates of $\rho\in L^1(\Xi,d\xi)$ span a norm dense subspace iff the Fourier transform vanishes nowhere. These are precisely the regular elements of $\Cs(\Xi,0)^*$. The proof uses the following arguments:

\begin{lem} Let $\DD\subset\ucont\Xis{}$ be an $\alpha$-invariant closed subspace, and $\rho\in \Cs(\Xi,0)^*$. Then
\begin{itemize}
\item[(1)] $\rho\ast\DD\subset\DD$,
\item[(2)] when $\rho$ is regular, this inclusion is norm dense.
\end{itemize}
\end{lem}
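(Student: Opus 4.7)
For part (1), I would use the pointwise formula
\begin{equation*}
  \rho\ast F=\int\rho(d\eta)\,\alpha_\eta(F),
\end{equation*}
noted right after Def.~\ref{def:stateconvolve}. For $F\in\ucont\Xis{}$ the integrand $\eta\mapsto\alpha_\eta(F)$ is by definition norm-continuous and uniformly bounded by $\norm F$, hence Bochner integrable against the finite measure $\rho$. Approximating this integral in norm by Riemann-like sums $\sum_i\rho(A_i)\,\alpha_{\eta_i}(F)$ over measurable partitions of small enough diameter, each sum lies in $\DD$ by $\alpha$-invariance and linearity, and norm-closedness of $\DD$ passes to the limit. The same estimate records the bound $\norm{\rho\ast F}\leq\norm\rho\,\norm F$ (total variation against operator norm) that part (2) will need.

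For part (2), the plan is to express the identity map on $\DD$ as a norm-limit of convolutions with kernels supported in the translation-span of $\rho$. Pick a standard positive $L^1(\Xi,d\eta)$-approximate identity $\{e_\veps\}$ with support shrinking to the origin. Uniform continuity of $\eta\mapsto\alpha_\eta(F)$, built into the definition of $\ucont\Xis{}$, directly gives $\lim_{\veps\to0}\norm{e_\veps\ast F-F}=0$ for any $F\in\DD$. Regularity of $\rho$ is used twice: by Prop.~\ref{prop:contransl} it identifies $\rho$ with a density in $L^1(\Xi,d\eta)$, and this density has nowhere-vanishing Fourier transform, namely the characteristic function of $\rho$. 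Wiener's approximation theorem then asserts that the translates $\{\alpha_\eta^*\rho:\eta\in\Xi\}$ have dense linear span in $L^1(\Xi,d\eta)$, so for every $\delta>0$ one finds a finite linear combination $h=\sum_i c_i\,\alpha_{\eta_i}^*\rho$ with $\norm{h-e_\veps}_1<\delta$.

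The translation-covariance of convolution noted after Def.~\ref{def:stateconvolve} rewrites
\begin{equation*}
  h\ast F=\sum_i c_i\,(\alpha_{\eta_i}^*\rho)\ast F=\rho\ast\Bigl(\sum_i c_i\,\alpha_{\eta_i}(F)\Bigr)\in\rho\ast\DD,
\end{equation*}
since the inner sum lies in $\DD$ by $\alpha$-invariance. Combined with the estimate from part (1), $\norm{h\ast F-e_\veps\ast F}\leq\norm{h-e_\veps}_1\norm F<\delta\norm F$, so that $e_\veps\ast F$, and therefore $F=\lim_{\veps\to0}e_\veps\ast F$ itself, lies in the norm-closure of $\rho\ast\DD$. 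The only step that really requires attention is the interface between Wiener's scalar theorem and the operator-valued setting, and this is handled entirely by the trivial bound from part (1), which converts $L^1$-convergence of kernels into norm-convergence of the induced convolution actions on $\DD$.
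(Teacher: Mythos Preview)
Your proof is correct and follows essentially the same route as the paper's. For (1) you both approximate the convolution integral by Riemann-like sums of translates using the norm-continuity of $\eta\mapsto\alpha_\eta(F)$ built into $\ucont\Xis{}$; for (2) you both combine an $L^1$-approximate identity (your $e_\veps$, the paper's $\rho'$) with Wiener's approximation theorem applied to the translates of $\rho$, and conclude via the bound $\norm{h\ast F}\leq\norm h_1\norm F$. Your write-up is somewhat more explicit about the interface between the scalar Wiener theorem and the operator-valued convolution, which is a welcome clarification but not a different idea.
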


\begin{proof}
(1) Now, $\rho$ is a classical standard state, i.e., a probability measure on $\Xi$. The convolution integral $\rho\ast A=\int \rho(d\xi)\alpha_\xi\beta_-(A)$ can be approximated for strongly continuous $A$ by partitioning the integration domain into regions, over which either $\alpha_\xi(A)$ changes little, or which have small total weight with respect to $\rho$. We may then replace $\alpha_\xi(A)$ by a constant in each region, thus approximating the convolution uniformly by a linear combination of translates $\alpha_\xi(A)$.

(2) For $A\in\ucont\Xis{}$ we can find $\rho'\in L^1$ with sufficiently small support around the origin so that $\norm{\rho'\ast A-A}$ is small. Approximating $\rho'$ by a linear combination of translates $\alpha_\xi\rho$, we find that $A$ itself lies in the closure of the translation-invariant subspace generated by $\rho\ast A$.
\end{proof}

Coming back to the proof of the proposition, note that (1)$\Rightarrow$(2) is trivial. Given (2) we get
$\omega_0\ast\omega_0\ast\DD_1\subset\omega_0\ast\DD_2\subset\DD_1$. But since $\omega_0\ast\omega_0$ is regular this inclusion is dense, which proves (3).

Next, we verify (3)$\Leftrightarrow$(4) by showing that the spaces defined by these conditions, that we temporarily call $\DD_2^{(3)}$ and $\DD_2^{(4)}$, are equivalent.
Suppose that $A\in\DD_2^{(4)}$. Then because $A\in\ucont(\Xi,\sigma_2)$, $A$ lies in the closed translation invariant subspace generated by $\omega_0\ast\omega_0\ast A\in\omega_0\ast\DD_1$, which is $\DD_2^{(3)}$.
Conversely, if $A\in\DD_2^{(3)}$, it can be approximated by elements of the form $\omega_0\ast A_1$, so $\omega_0\ast A\approx \omega_0\ast\omega_0\ast A_1\in\DD_1$, which means that  $A\in\DD_2^{(4)}$.

It remains to show that (3)$\Rightarrow$(1). Indeed $\omega\ast\DD_2\subset\overline{\omega\ast\omega_0\ast\DD_1}\subset\DD_1$. On the other hand, since $\omega_0\ast\omega_0\ast\DD_1\subset\DD_1$ is dense, we find for arbitrary $\omega$: $\omega\ast\DD_1\subset\overline{\omega\ast\omega_0\ast\omega_0\ast\DD_1}\subset \overline{\omega\ast\omega_0\ast\DD_2}\subset\DD_2$.
\end{proof}


\section{Quasifree channels: Definition and constructions}\label{sec:channels}
The notion of quasifree states and operations arose in field theory and statistical mechanics \cite{demoen,fannes,loupias}. In statistical mechanics, a free time evolution is the non-interacting time evolution of a many-particle system. Indeed, in the absence of interaction, the time evolution on the one-particle Hilbert space should be automatically lifted to an evolution for the full system. Similarly, in field theory, one may get an evolution of the quantum field from a transformation of the test function space. In the setting of this paper, we are much less ambitious since our phase spaces, the analog of the one-particle spaces or the test function spaces, are finite-dimensional. We keep as the hallmark of quasifree evolutions that they can be characterized completely by linear operators at the phase space level. In contrast to the typical applications to unitary dynamics, we moreover include irreversible (completely positive) operations and general hybrids (see \cite{evansLewis,DaviesDiff} for some early extensions in the irreversible direction). We do not, however, consider quasifree maps on the CAR-algebra \cite{ArakiCAR,evansLewis} since the commutation of classical variables forms a much less happy combination with the anticommutation of fermionic degrees of freedom.

The linearity at the phase space level can be expressed as a {\it covariance condition} with respect to phase space translations. For general covariant channels, one has to fix representations of the symmetry group under consideration in the input system as well as in the output system, with the desired operations intertwining these two representations. In the case at hand, these will be two representations of the group of phase translations, and the difference between the representations is parametrized by a linear operator $S:\Xi\raus\to\Xi\rin$. Our first step will be to characterize all channels satisfying such a covariance condition plus a regularity condition, which ensures that standard states in the sense of the previous sections are mapped to standard states. The action of these channels on states, i.e., the \Schroed\  picture, will then be obvious. This was, in fact, the starting point of the present study. However, the corresponding Heisenberg pictures seemed initially rather unclear. Having clarified the necessary spaces in the previous section, we can now go on to apply these ideas and get Heisenberg picture channels for all quasifree channels without the need for any extra assumptions.

\subsection{Definition}\label{sec:defqf}
In the Sch\"odinger picture, a channel is a completely positive, normalization preserving, linear map
$\semg:\Cs\Xis\rin^*\to\Cs\Xis\raus^*$. It thus takes the input states of a device to the output states. Such channels include {\it measurements} when  $\Xis\raus$ is classical (i.e., $\sigma\raus=0$), {\it preparations} ($\Xi\rin=\{0\}$), and all kinds of combinations in which, in addition to an operation on the quantum subsystem, classical information is used as an input, or is read out in the process (cf.\ Sect.~\ref{sec:BasicOps}). The Heisenberg picture is always denoted by $\semg^*$, and $\semg^*(A)$ for an observable $A$ of the output system is interpreted as that observable on the input system, which is obtained by first operating with the quantum device and then measuring $A$. The two pictures are thus related as two ways of viewing the same experiment. Since all observables have expectations in the standard hybrid state, they can be considered as elements of the dual, i.e., $\Cs\Xis{}\bidu$, and from this interpretation, it is clear that $\semg^*$ must indeed be the Banach space adjoint of $\semg$. In the definition we use the notation $S^\top:\Xi\rin\to\Xi\raus$ for the linear algebra transpose (or adjoint) of a linear map $S:\Xi\raus\to\Xi\rin$.

\begin{defi}\label{def:channel}
Let $\Xis\rin$ and $\Xis\raus$ be hybrid phase spaces, and $S:\Xi\raus\to\Xi\rin$ a  linear map.
Then an {\bf $S$-covariant} channel  is a completely positive, normalisation preserving linear operator $\semg:\Cs\Xis\rin^*\to\Cs\Xis\raus^*$ such that, for all $\xi\in\Xi\rin$,
\begin{equation}\label{trcov}
  \semg\circ(\alpha\Rin_\xi)^*=(\alpha\Raus_{S^\top\xi})^*\circ\semg.
\end{equation}
A {\bf quasifree channel} is a channel, which is $S$-covariant for some $S$.
\end{defi}
There is an alternative characterization in terms of $\semg^*$, which also clarifies the data needed to specify an $S$-covariant channel.

\begin{prop}\label{prop:ChanOnW}
Let $\semg$ be an $S$-covariant channel. Then there is a unique continuous and normalized function $f:\Xi\raus\to\Cx$, which is twisted positive definite with respect to the antisymmetric form
\begin{equation}\label{delsig}
  \Delta\sigma=\sigma\raus-S^\top\sigma\rin S,
\end{equation}
such that, for all $\xi\in\Xi\raus$,
\begin{equation}\label{Scovf}
    \semg^*(W\raus(\xi))=f(\xi) W\rin(S\xi).
\end{equation}
Conversely, every function $f$ with this property defines an $S$-covariant channel.
\\
\noindent{\bf Terminology:\ } We will refer to $f$ as the {\bf noise function} of the channel $\semg$, and to the hybrid state on $(\Xi\raus,\Delta\sigma)$ with characteristic function $f$ as its
{\bf noise state}, and denote it typically by $\tau$.
\end{prop}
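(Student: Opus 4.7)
The plan is to read the covariance identity dually in order to extract $f$, and for the converse to reconstruct $\semg$ from $f$ via a Stinespring-type composition.

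First I would take the Banach-space adjoint of \eqref{trcov}, which gives $\alpha\Rin_\xi \circ \semg^* = \semg^* \circ \alpha\Raus_{S^\top\xi}$. Applying this to $W\raus(\eta)$ and using $\alpha\Raus_{S^\top\xi}(W\raus(\eta)) = e^{i\xi\cdot S\eta}W\raus(\eta)$, one sees that $F_\eta := \semg^*(W\raus(\eta))$ is a joint $\alpha\Rin$-eigenvector with character $\xi\mapsto e^{i\xi\cdot S\eta}$. By Lemma~\ref{lem:autoHeisen1} $\semg^*$ preserves universal measurability, so $F_\eta\in\umeas\Xis\rin$, and Lemma~\ref{lem:WeylEW} then yields a unique scalar $f(\eta)\in\Cx$ with $F_\eta = f(\eta)W\rin(S\eta)$, establishing \eqref{Scovf} and uniqueness. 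Normalisation $f(0)=1$ follows from $\semg^*(\idty)=\idty$; continuity of $f$ follows by pairing with a regular input state $\omega\rin$ whose characteristic function $\chi\rin$ is nowhere zero (e.g.\ a Gaussian vacuum), writing $f(\eta)\chi\rin(S\eta) = \braket{\semg(\omega\rin)}{W\raus(\eta)}$, which is continuous by Bochner's Theorem~\ref{thm:Bochner}.

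The decisive step is $\Delta\sigma$-twisted positive definiteness of $f$, which I would extract using the squeezed states of Lemma~\ref{lemon:squeeze}. Tensoring $\semg$ with the identity channel on $(\Xi\rin,-\sigma\rin)$ produces a channel from $(\Xi\rin\oplus\Xi\rin,\sigma\rin\oplus(-\sigma\rin))$ to $(\Xi\raus\oplus\Xi\rin,\sigma\raus\oplus(-\sigma\rin))$ whose output on the squeezed state $\omega^\veps$ has characteristic function $(\eta\raus,\eta\rin)\mapsto f(\eta\raus)\chi^\veps(S\eta\raus,\eta\rin)$. Evaluating at $(\zeta,S\zeta)$ gives $f(\zeta)\chi^\veps(S\zeta,S\zeta)$, which tends pointwise to $f(\zeta)$ as $\veps\to 0$. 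On such vectors the symplectic form of the output hybrid computes to
\begin{equation*}
\bigl(\sigma\raus\oplus(-\sigma\rin)\bigr)\bigl((\zeta_k,S\zeta_k),(\zeta_\ell,S\zeta_\ell)\bigr) = \sigma\raus(\zeta_k,\zeta_\ell) - \sigma\rin(S\zeta_k,S\zeta_\ell) = \Delta\sigma(\zeta_k,\zeta_\ell),
\end{equation*}
so the twisted positivity of the output characteristic function on $\{(\zeta_k,S\zeta_k)\}$ becomes, in the pointwise limit, exactly the $\Delta\sigma$-twisted positive definiteness of $f$ on $\{\zeta_k\}$ (PSD of finite matrices is preserved under pointwise limits).

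For the converse, given $f$ satisfying (1)--(3), Bochner's Theorem supplies a state $\tau$ on $\Cs(\Xi\raus,\Delta\sigma)$ with characteristic function $f$, and I would define $\semg$ as the composition of a noiseless quasifree map with preparation of $\tau$. The linear embedding $\widehat S\colon\Xi\raus\to\Xi\rin\oplus\Xi\raus$, $\widehat S(\xi)=(S\xi,\xi)$, satisfies $\widehat S^\top(\sigma\rin\oplus\Delta\sigma)\widehat S = \sigma\raus$ by direct computation, and hence (as developed in Sect.~\ref{sec:noiseless}) implements a $*$-homomorphism $\Gamma^*\colon W\raus(\xi)\mapsto W\rin(S\xi)\otimes W_\tau(\xi)$; setting $\semg^* := (\id\otimes\tau)\circ\Gamma^*$ then delivers a CP, normalisation-preserving, $S$-covariant channel satisfying \eqref{Scovf}. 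The main obstacle will be the passage from complete positivity of $\semg^*$ to the \emph{pure} $\Delta\sigma$-twisted positivity of $f$ alone: a direct CP inequality entangles $f$ multiplicatively with the characteristic function of any input state, producing a Schur product of finite matrices that cannot in general be inverted. The role of Lemma~\ref{lemon:squeeze} is exactly to drive the companion factor to $1$ along the image of the diagonal embedding, thereby isolating the $f$-matrix with its correct symplectic twist.
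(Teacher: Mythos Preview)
Your proof is correct but diverges from the paper's in two substantive places. For the $\Delta\sigma$-twisted positive definiteness of $f$, the paper does \emph{not} use squeezed states: it applies the operator-level CP inequality $\sum_{jk} a_j^*\semg^*(b_j^*b_k)a_k\geq0$ with the clever choice $a_j=c_jW\rin(S\xi_j)$, $b_j=W\raus(\xi_j)$, which collapses via the Weyl relations to a scalar multiple of $\idty$, directly yielding the twisted Bochner matrix for $f$. So your assessment that ``a direct CP inequality entangles $f$ multiplicatively with the characteristic function of any input state'' is off---the paper avoids ever pairing with a state, and the Schur-product obstruction you anticipate does not arise. Your squeezed-state route is nonetheless valid (and is exactly the mechanism later used in Cor.~\ref{cor:statechannel}). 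For the converse, the paper again works directly: it checks complete positivity of $\semg\otimes\id_n$ by writing the output $n\times n$ characteristic function matrix as a Hadamard product of the $f$-matrix (twisted by $\Delta\sigma$) and the input-state matrix (twisted by $S^\top\sigma\rin S$), both positive semidefinite. Your Stinespring-type construction via the embedding $\widehat S$ and preparation of $\tau$ is correct and self-contained as written, but be aware that your forward reference to Sect.~\ref{sec:noiseless} is logically delicate, since that section (and Thm.~\ref{thm:noisedec}) rely on the present proposition; you should make explicit that the homomorphism property of $\Gamma^*$ follows directly from $\widehat S^\top(\sigma\rin\oplus\Delta\sigma)\widehat S=\sigma\raus$ and the universal property of the CCR algebra, independently of the later development.
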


Before going into the proof, let us explain why the form \eqref{Scovf} determines a unique channel. Prima facie it defines the action of the channel only on the Weyl operators, hence by norm limits on the CCR-algebra, but no further. The point is that the formula really defines a transformation $\semg:\Cs\Xis\rin^*\to\Cs\Xis\raus^*$ on {\it states}: By taking expectations of \eqref{Scovf} with $\omega\rin$ we get the characteristic function of $\omega\raus=\semg\omega\rin$ as
\begin{equation}\label{Tchi}
  \chi\raus(\xi)=f(\xi) \chi\rin(S\xi).
\end{equation}
This shows that the channel is indeed specified completely by $S$ and $f$. Another way to put this is to note that since the expectations of Weyl operators specify the state, the linear hull of these operators is weak*- dense in the bidual $\Cs\Xis{}\bidu$, and correspondingly in all the observable spaces. Therefore, \eqref{Scovf} suffices to define the Heisenberg picture channel by first a linear extension and then an extension by weak*-continuity.

\begin{proof}[Proof of Prop.~\ref{prop:ChanOnW}]
Applying $\alpha\Rin_\xi$ to \eqref{Scovf} and using the eigenvalue equation \eqref{Weylshifted} for $W\raus(\xi)$ we find that
\begin{equation}
  \alpha\Rin_\xi\circ\semg^*\bigl(W(\eta)\bigr)
  = \semg^*\bigl( \alpha\Raus_{S^\top\xi}( W(\eta))\bigr)
  =  \semg^*\bigl(e^{i S^\top\xi \cdt \eta } W(\eta)\bigr)
  = e^{i \xi \cdt S\eta } \semg^*\bigl( W(\eta)\bigr).
\end{equation}
That is, $\semg^*\bigl(W(\eta)\bigr)$ is a joint eigenvector of the translations and hence, by Lem.~\ref{lem:WeylEW}, must be proportional to
$W(S\eta)$. We denote the proportionality factor by $f(\eta)$, see Sect.~\ref{sec:transEvecs}. This immediately implies \eqref{Scovf}. Now we can choose a state $\omega\rin$ such that $\chi\rin$ vanishes nowhere, for example, a Gaussian. Since $\chi\rin$ is continuous by Bochner's Theorem, and the channel maps standard states to standard states, so $\chi\raus$ is also continuous, we conclude that $f$ is continuous.

We now have to analyze the condition for complete positivity. Here one should remember that the channel $\semg$ is primarily defined on $\Cs(\Xi\rin,\sigma\rin)^*$ and complete positivity just means that $\semg\otimes\id_n$ preserves positivity (i.e., positive semidefiniteness) for all $n$, where $\id_n$ is the identity on the $n\times n$-matrices $\MM_n$ (viewed as density matrices). In order to give an equivalent formulation in the Heisenberg picture, one can check complete positivity on any subalgebra $\AA\subset\Cs\Xis\raus\bidu$ so that the positivity of any element of $\omega_n\in\Cs\Xis\raus^*\otimes \MM_n$ can be expressed as the positivity of expectation values of positive elements in $\AA\otimes \MM_n$, i.e., the positive cones are dual to each other. For this, any weak*-dense subalgebra $\AA$ will do, and we take here the linear span of the Weyl operators for $\AA$.

We now assume that $\semg^*$ is completely positive and aim at deriving the stated twisted definiteness condition for $f$. To this end, we use that, for a completely positive operator $\semg^*$, and any choice of finitely many $a_j,b_j$ we have $X=\sum_{jk}a_j\semg^*(b_j^*b_k)a_k^*\geq0$. Here we choose $a_j=c_jW(S\xi_j)$ and $b_j=W(\xi_j)$ for an arbitrary choice of $\xi_1,\ldots,\xi_n\in \Xi\raus$, and $c_j\in\Cx$. The idea is that then $X$ becomes a multiple of the identity, namely \def\grr{}%
\begin{align}\label{MijWeyl}
 0\leq X&=\sum_{jk}\overline{c_j}W(S\xi_j)\ \semg^*\bigl(W(\xi_j)^*W(\xi_k)\bigr) c_kW(S\xi_k)^* \nonumber\\
  &= \sum_{jk}\overline{c_j} c_k e^{i \xi_j\cdt\sigma\raus\xi_k/2}\
                W(S\xi_j)\semg^*(W(-\xi_j+\xi_k))W(S\xi_k)^*  \nonumber\\
  &=\sum_{jk}\overline{c_j} c_k e^{i \xi_j\cdt\sigma\raus\xi_k/2}f(-\xi_j+\xi_k)\
                 W(S\xi_j)W(-S\xi_j+S\xi_k)W(S\xi_k)^* \nonumber\\
  &= \sum_{jk}\overline{c_j} c_k
         e^{i \xi_j\cdt\sigma\raus\xi_k/2}f(-\xi_j+\xi_k)\,e^{{-}i(S\xi_j)\cdt\sigma\rin(S\xi_k){/2}}\ W(S\xi_j)W(S\xi_j)^*W(S\xi_k)W(S\xi_k)^*\nonumber\\
  &=\sum_{jk}\overline{c_j} f(-\xi_j+\xi_k)\,e^{i \xi_j\cdt\Delta\sigma\xi_k/2}c_k\ \idty. \nonumber
\end{align}
The positivity of this expression for arbitrary $c_j$ and $\xi_j$ is exactly the stated twisted definiteness condition.
Conversely, when $f$ satisfies the conditions, we can define $\semg$ acting on $\Cs\Xis\rin^*$ by Eq.\ \eqref{Tchi}, using Bochner's Theorem, and Prop.~\ref{prop:twgroup}. Continuity and normalization of the output characteristic function are then guaranteed by the corresponding properties of $f$. Positivity will be addressed together with complete positivity.

We have to extend Bochner's Theorem to a version involving an additional tensor factor $\MM_n$. So let $\omega\Rin\in\Cs\Xis\rin^*\otimes\MM_n$ be positive. The matrix elements $\omega\Rin_{\alpha\beta}$ then have characteristic functions $\chi\Rin_{\alpha\beta}(\eta)=\omega\Rin_{\alpha\beta}(W(\eta))$, and the positivity condition for $\omega\Rin$ is the positivity of the matrix
\begin{equation}\label{omPos}
   \chi\Rin_{\alpha\beta}(-\eta_j+\eta_k) e^{i \eta_j\cdt\sigma\rin\eta_k /2}  ,
\end{equation}
for arbitrary $\eta_1,\ldots,\eta_N$, where the indices of this matrix are considered to be the pairs $(\alpha,j)$ and $(\beta,k)$. Applying the channel $\semg\otimes\id_n$ to $\omega\Rin$ means the application of \eqref{Tchi} to each matrix element, resulting in a similar matrix for $\omega\Raus=(\semg\otimes\id_n)\omega\Rin$, namely
\begin{equation}\label{TchiMatricial}
  \chi\Raus_{\alpha\beta}(\xi_i-\xi_j)e^{\textstyle \frac i2 \xi_i\cdt\sigma\raus\xi_j}
    =\Bigl(f(\xi_i-\xi_j)e^{\textstyle \frac i2 \xi_i\cdt\Delta\sigma\xi_j}\Bigr)\
     \Bigl(\chi\Rin_{\alpha\beta}(S\xi_i-S\xi_j)e^{\textstyle \frac i2 S\xi_i\cdt\sigma\rin S\xi_j}\Bigr).
\end{equation}
Here we used the definition of $\Delta\sigma$. By assumption, the matrix in the first factor is positive definite.
The second factor is positive definite because the input state \eqref{omPos} is positive with the substitution $\eta_j=S\xi_j$. Hence the left-hand side is also positive definite as the Hadamard product of two positive definite matrices.
\end{proof}

\subsection{State-channel correspondence}
In this section, we will describe in more detail the geometry of the correspondence between an $S$-covariant channel $\semg$ and its noise state $\tau$, which was set up in Prop.~\ref{prop:ChanOnW}. The operator $S$ will be fixed, and this is necessary if we want to consider a correspondence of convex sets: The convex combination of quasifree channels with different $S$ is simply not quasifree. However, the design possibilities for channels by engineering $\tau$ are not exhausted by convex combinations. Since arbitrary states are allowed, superpositions work just as well (see, e.g., \cite{Volkoff}).
We begin with some general remarks on state-channel correspondence and cones in quantum theory.

State-channel correspondence has been a very useful tool in quantum information theory. It originated in Choi's thesis \cite{choi}, which is often cited together with \Jamiolkowski\ \cite{jamiol}\footnote{ However, \Jamiolkowski's work appeared before the importance of complete positivity was generally recognized and gets the right isomorphism only up to an additional partial transpose operation.}. If we restrict for the moment to finite-dimensional systems, we can summarize this by saying in quantum theory, there is only one isomorphism type of positive cone for the basic objects: For {\it observables} it is the elements of the form $A^*A$, for {\it states} it is the dual of the observable cone, and for {\it channels} it is the completely positive cone. The inclusion of direct sums of positive semidefinite cones extends this statement to quantum-classical hybrid systems. As an immediate consequence, we find that there is only one kind of order interval, which in an ordered vector space is a set of the kind $[x_1,x_2]=\{ x |x_1\leq x\leq x_2\}$, which is obviously determined by just the order relations. In particular, the possible decompositions $\rho=\rho_1+\rho_2$ of a fixed state $\rho$ into a sum of positive $\rho_i$ is isomorphic to the corresponding interval $[0,\idty]$ in which decompositions are just two-valued observables, and decompositions $\semg=\semg_1+\semg_2$ of a channel into completely positive terms, i.e., an instrument with overall state change $\semg$. This correspondence of order intervals is, in a sense, more robust than the correspondence of cones: It persists in infinite-dimensional systems while the isomorphism of cones breaks down. For example, $\BB(\HH)$ has an order unit (an element $u$ so that $a\leq\lambda u$ for all $a\geq0$ and suitable $\lambda$, here: $u=\idty$), whereas the trace class has none.

From the finite-dimensional case, it is clear that the difference between the spaces of states, observables, and channels lies in the respective normalization conditions. This is also reflected in the different structures of the convex sets of normalized elements: The extreme points are the projective Hilbert space for states, the projection lattice for observables, and something more complicated for channels.
Moreover, we get different natural norms: The trace norm, the operator norm,  and the ``norm of complete boundedness'' \cite{paulsen}, denoted by $\cbnorm\cdot$, which is often also (sometimes only in the \Schroed picture) referred to as the diamond norm \cite{kitaev}. The cb-norm has a reputation of being not easy to compute \cite{paulsen,watrouscb}.

One surprising fact about the isomorphism $\semg\leftrightarrow\tau$ is that it connects the normalized subsets of different categories: states on the one hand and quasifree channels on the other. In the light of the above explanations this is readily traced to the normalization conditions: For a channel the normalization condition is $\semg^*\idty=\idty$, and for general completely positive maps $\cbnorm\semg=\norm{\semg^*\idty}$. This is not a linear function of $\semg$. However, for a general bounded covariant map $\semg$, we have shown (see \eqref{Scovf} with $\xi=0$) that $\semg^*(\idty)=f(0)\idty$, so $\cbnorm\semg=f(0)=\norm\tau$. So for positive elements, both norms depend only on one number, and this dependence is linear, i.e., the norm is additive on the positive cone. This is exactly what makes a full channel-state correspondence possible here. This feature is shared by other classes of covariant channels, i.e., channels that intertwine automorphic actions of a group $G$, i.e., $(\alpha_g\Raus)^*\circ\semg=\semg\circ(\alpha_g\Rin)^*$ for $g\in G$. When the $\alpha_g\Rin$ are implemented by an irreducible set of unitaries, a projective representation of $G$, then, once again, $\semg^*\idty$ is a multiple of the identity, and the class of covariant channels is affinely isomorphic to a state space of a quantum system that can be computed from the representations involved \cite{Reeb}. We see here that the irreducibility of the implementing unitaries is not the key condition since, on the classical subsystem, no such unitaries exist. Instead, the decisive condition is that the representation on the input side has only the multiples of $\idty$ as invariant elements, in the hybrid case a special case of Lem.~\ref{lem:WeylEW}.

The following corollary summarizes the above discussion and lists some transfers-of-properties for the correspondence.

\begin{cor}\label{cor:statechannel}
Fix hybrid systems with phase spaces $\Xis\rin$ and $\Xis\raus$, and a linear map $S:\Xi\raus\to\Xi\rin$. Then there is a bijective correspondence between $S$-covariant channels $\semg$ in the sense of Def.~\ref{def:channel}, and noise states $\tau$ on the hybrid system  $(\Xi\raus,\Delta\sigma)$ as stated in Prop.~\ref{prop:ChanOnW}. Then if $\semg,\semg_1,\semg_2$ correspond to $\tau,\tau_1,\tau_2$, respectively, and $\lambda\in\Rl$, then
\begin{itemize}
\item[(1)] $\semg=\lambda\semg_1+(1-\lambda)\semg_2$ iff $\tau=\lambda\tau_1+(1-\lambda)\tau_2$,
\item[(2)] $\lambda\semg_2-\semg_1$ is completely positive iff $\lambda\tau_2-\tau_1\geq0$,
\item[(3)] $\cbnorm{\semg_1-\semg_2}=\norm{\tau_1-\tau_2}$,
\item[(4)] for $\xi\in\Xi\raus$, $\semg_1=\alpha_\xi^*\circ\semg$ iff  $\tau_1=\alpha_\xi^*(\tau)$,
\item[(5)] $\tau$ is extremal (= pure) iff $\semg$ is noiseless in the sense of Sect.~\ref{sec:noiseless},
\item[(6)] $\tau$ is norm continuous under translations, iff $\semg$ is smoothing in the sense of Sect.~\ref{sec:smooth}.
\end{itemize}
\end{cor}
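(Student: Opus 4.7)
The correspondence $\semg\leftrightarrow f\leftrightarrow\tau$ from Prop.~\ref{prop:ChanOnW} is linear in each direction: the noise function $f$ is linearly extracted from $\semg$ via \eqref{Scovf}, and it equals the characteristic function of $\tau$. Items (1), (2), and (4) follow directly. For (1), a convex combination $\semg=\lambda\semg_1+(1-\lambda)\semg_2$ has noise function $\lambda f_1+(1-\lambda)f_2$, which by Thm.~\ref{thm:Bochner} and Prop.~\ref{prop:twgroup} is the characteristic function of the unique state $\lambda\tau_1+(1-\lambda)\tau_2$. For (2), the map $\lambda\semg_2-\semg_1$ is $S$-covariant with noise function $\lambda f_2-f_1$; the proof of Prop.~\ref{prop:ChanOnW} shows complete positivity is equivalent to $\Delta\sigma$-twisted positive definiteness of the noise function (normalization is used there only to determine the total weight), and by Thm.~\ref{thm:Bochner} this is equivalent to $\lambda\tau_2-\tau_1\ge 0$. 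For (4), the eigenvalue equation \eqref{Weylshifted} gives
\begin{equation*}
  (\alpha_\xi^{*}\circ\semg)^{*}(W\raus(\eta))=\semg^{*}(\alpha_\xi(W\raus(\eta)))=e^{i\xi\cdot\eta}\,f(\eta)\,W\rin(S\eta),
\end{equation*}
so the noise function of $\alpha_\xi^{*}\circ\semg$ is $\eta\mapsto e^{i\xi\cdot\eta}f(\eta)$, which is precisely the characteristic function of $\alpha_\xi^{*}(\tau)$.

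Item (3) is the one requiring real work. For the upper bound, note that any $S$-covariant completely positive $\Psi$ has $\cbnorm{\Psi}=\norm{\Psi^{*}\idty}=f_\Psi(0)=\norm{\tau_\Psi}$ since $\tau_\Psi\ge 0$. Taking the Jordan decomposition $\tau_1-\tau_2=\tau_+-\tau_-$ with $\tau_\pm\ge 0$ and $\norm{\tau_1-\tau_2}=\norm{\tau_+}+\norm{\tau_-}$, and using (1) and (2) to lift to completely positive $\semg_\pm$ with $\semg_1-\semg_2=\semg_+-\semg_-$, the triangle inequality yields $\cbnorm{\semg_1-\semg_2}\le\norm{\tau_+}+\norm{\tau_-}=\norm{\tau_1-\tau_2}$. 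For the reverse inequality, use as ancilla the opposite system $(\Xi\rin,-\sigma\rin)$ and apply $(\semg_1-\semg_2)\otimes\id$ to a squeezed state $\omega^\veps$ from Lem.~\ref{lemon:squeeze} on $(\Xi\rin\oplus\Xi\rin,\sigma\rin\oplus(-\sigma\rin))$. The resulting functional on $(\Xi\raus\oplus\Xi\rin,\sigma\raus\oplus(-\sigma\rin))$ has characteristic function $(f_1-f_2)(\xi\raus)\,\chi^\veps(S\xi\raus,\xi\rin)$. The linear embedding $\iota:\Xi\raus\to\Xi\raus\oplus\Xi\rin$, $\xi\mapsto(\xi,S\xi)$, pulls back the doubled form $\sigma\raus\oplus(-\sigma\rin)$ to exactly $\sigma\raus-S^\top\sigma\rin S=\Delta\sigma$, so it induces a $\ast$-homomorphism of CCR-algebras (a noiseless embedding, cf.\ Sect.~\ref{sec:noiseless}) whose dual restricts functionals. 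The characteristic function of the restricted output is $(f_1-f_2)(\xi)\,\chi^\veps(S\xi,S\xi)$, which by Lem.~\ref{lemon:squeeze} converges pointwise to $\chi_{\tau_1-\tau_2}(\xi)$. Prop.~\ref{prop:Levy} upgrades this to weak* convergence to $\tau_1-\tau_2$ in $\Cs(\Xi\raus,\Delta\sigma)^{*}$, and lower semicontinuity of the norm under weak* limits, combined with the norm-nonincreasing character of restriction and $\norm{\omega^\veps}=1$, gives $\cbnorm{\semg_1-\semg_2}\ge\norm{\tau_1-\tau_2}$.

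Items (5) and (6) are essentially tautological once the relevant terms are defined in Sect.~\ref{sec:noiseless} and Sect.~\ref{sec:smooth}: the defining conditions will be stated directly as purity of $\tau$ for (5) and norm continuity of $\tau$ under translations for (6) (cf.~Prop.~\ref{prop:contransl} for the latter characterization of states), so the bijection $\semg\leftrightarrow\tau$ provides the claimed equivalences without further argument. The main obstacle throughout is the lower bound in (3): the key geometric observation is that the diagonal embedding $\iota:\xi\mapsto(\xi,S\xi)$ has precisely the $\Delta\sigma$-symplectic behavior to extract the noise state as a weak* limit of the pullback of the Choi-like squeezed output, after which L\'evy's theorem and lower semicontinuity of the norm close the argument.
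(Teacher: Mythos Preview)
Your treatment of (1), (2), (4), and both directions of (3) is essentially the paper's argument. For the lower bound in (3) you use the same ingredients (the squeezed states of Lem.~\ref{lemon:squeeze} and the diagonal Weyl system $\xi\mapsto W\raus(\xi)\otimes\overline W\rin(S\xi)$), only phrased in the Schr\"odinger picture: you push $\omega^\veps$ forward through $(\semg_1-\semg_2)\otimes\id$ and restrict, whereas the paper pulls $\widetilde W[h]$ back via $\semg^*\otimes\id$, pairs with $\omega^\veps$, and takes the limit by dominated convergence. One caution: Prop.~\ref{prop:Levy} is stated for \emph{states}, and you apply it to the signed functional arising from $\semg_1-\semg_2$. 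The direction you actually need (pointwise convergence of characteristic functions implies weak* convergence on $\Cs\Xis{}$) does extend to uniformly bounded signed functionals by dominated convergence, but this should be said explicitly rather than cited as Prop.~\ref{prop:Levy}.

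The genuine gap is in (5) and (6). You assume that ``noiseless'' and ``smoothing'' will be \emph{defined} as purity, resp.\ translation-norm-continuity, of $\tau$. They are not. In Sect.~\ref{sec:noiseless}, noiseless means $|f(\xi)|=1$ for all $\xi$ (equivalently $\semg^*$ is a homomorphism, Prop.~\ref{prop:noiseless}); in Sect.~\ref{sec:smooth}, smoothing means $\cbnorm{\alpha_\xi^*\circ\semg-\semg}\to0$. For (6) the fix is immediate from what you have already proved: combine (3) and (4) to get $\cbnorm{\alpha_\xi^*\circ\semg-\semg}=\norm{\alpha_\xi^*\tau-\tau}$, which is exactly how the paper argues. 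For (5) the paper explicitly calls the claim non-trivial and defers it; the direction $|f|\equiv1\Rightarrow\tau$ pure needs Lem.~\ref{lem:constchi}, which forces $\Delta\sigma=0$ and $f$ to be a character, so that $\tau$ is a point measure on a purely classical system. Calling (5) tautological skips this content entirely.
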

\begin{proof}
The bijective correspondence is directly from Prop.~\ref{prop:ChanOnW}.
(1) and (2) are obvious, and (4) follows by noting that under the translations by $\xi$ stated in that item, the noise function $f(\eta)=\braket \tau{W(\eta)}$ changes by a factor $\exp(i\xi\cdt\eta)$. (5) is non-trivial, and will be shown in the section mentioned. (6) is trivial from the combination of (4) and (3), noting that smoothing means that $\cbnorm{\alpha_\xi^*\circ\semg-\semg}\to0$ for $\xi\to0$. This proves all items except (3).

(3) Both norms are additive on the positive cone and coincide there. There is then a largest norm on the real linear span of the positive elements with this property, called the base norm \cite{Nagel}. The norm on states is of this type, which implies the inequality. A bit more explicitly, the base norm has the smallest unit ball of all the norms with the given restriction, just the convex hull of the positive and the negative elements of norm one.
\begin{eqnarray}\label{basenorm}
  \cbnorm{\semg_1-\semg_2}&\leq& \inf\bigl\lbrace p_++p_-\bigm| \semg_\pm\mbox{\ channels},\
                   p_\pm\geq0,\ (\semg_1-\semg_2)= p_+\semg_+-p_-\semg_-\bigr\rbrace \nonumber \\
                   &=& \inf\bigl\lbrace p_++p_-\bigm| \tau_\pm\mbox{\ states},\ p_\pm\geq0,\  (\tau_1-\tau_2)= p_+\tau_+-p_-\tau_-\bigr\rbrace \nonumber \\
                   &=&\norm{\tau_1-\tau_2}.
\end{eqnarray}
This proves the inequality ``$\leq$'' in (3).

For the reverse inequality, consider the Weyl operators
\begin{equation}\label{weyldift}
  \widetilde W(\xi)=W\raus(\xi)\otimes \overline W\rin(S\xi) \qquad (\xi\in\Xi\raus)
\end{equation}
for an extended system $\Xis\raus\oplus\Xis\rin$, where, as in Sect.~\ref{sec:squeeze}, the overline is a complex conjugation inverting the symplectic form. Then $\widetilde W$ is a strongly continuous representation of the relations for $(\xi\raus,\Delta\sigma)$. Thus, using the notation \eqref{Wred}, for any $h\in L^1(\Xi\raus)$, we have
$\norm{\widetilde W[h]}\leq \norm{\widetilde W_\Delta[h]}$, because the right hand side is the supremum over all such representations. In the sequel $h\in L^1(\Xi\raus)$ will be chosen with the only constraint that this norm is $\leq1$.
Then
\begin{eqnarray}
(\semg^*\otimes\id)(\widetilde W[h])
   &=&\int d\xi\ h(\xi)\ \semg^*\bigl(W\raus(\xi)\bigl)\otimes \overline W\rin(S\xi) \nonumber\\
   &=&\int d\xi\ h(\xi)f(\xi)  \bigl(W\rin(S\xi)\bigl)\otimes \overline W\rin(S\xi)\nonumber
\end{eqnarray}
We now apply the squeezed state $\omega^\veps$ from Lemma~\ref{lemon:squeeze} for $\Xis\rin$:
\begin{equation*}
  \left\langle \omega^\veps,(\semg^*\otimes\id)(\widetilde W[h])\right\rangle = \int d\xi\ h(\xi)f(\xi)\  \Bigl\langle\omega^\veps, \bigl(W\rin(S\xi)\bigl)\otimes \overline W\rin(S\xi)\Bigr\rangle.
\end{equation*}
Then, as $\veps\to0$, the expectation under the integral goes pointwise to $1$, so by dominated convergence
\begin{equation*}
  \lim_{\veps\to0}\left\langle \omega^\veps,(\semg^*\otimes\id)(\widetilde W[h])\right\rangle = \int d\xi\ h(\xi)f(\xi) =\braket\tau{W_\Delta[h]}.
\end{equation*}
Now the left hand side of this equation is linear in $\semg^*$ and the right hand side is linear in $\tau$. Plugging in a difference, and observing $\norm{\widetilde W[h]}\leq1$ and $\norm{\omega^\veps}\leq1$, we get
\begin{equation*}
  \left\vert\bigl\langle{\tau_1-\tau_2}, W_\Delta[h]\bigr\rangle\right\vert   \leq \cbnorm{\semg_1-\semg_2}.
\end{equation*}
The result then follows, because $\norm{\tau_1-\tau_2}$ is the supremum over all $h$ with the required norm bound. 
\end{proof}

\subsection{Heisenberg pictures for \texorpdfstring{$S$}{S}-covariant channels}\label{Heisenqf}
The Heisenberg picture $\semg^*$ of a quasifree channel is initially defined on the bidual $\Cs\Xis\raus\bidu$. However, it also maps better-behaved algebras into each other, so one can settle for one of these algebras as the basic observables in some context. Since the definitions given in Sect.~\ref{sec:funcobs} work for arbitrary C*-algebras, commutative, quantum, or hybrid, the analytic properties defining these more special algebras are automatically preserved for all quasifree channels. As remarked already after Lem.~\ref{lem:autoHeisen1} each inclusion $\semg^*\AA\subset\BB$ can also be read as a continuity condition for $\semg$.

\begin{prop}\label{prop:Heisenalg}
For $\Xis{}=\Xis\rin$ or $\Xis{}=\Xis\raus$  consider the algebras
\begin{equation}\label{Heisenbergs}
  \CCR\Xis{}\subset\ucont\Xis{}\subset M\Xis{}\subset \umeas\Xis{}\subset\Cs\Xis{}\bidu.
\end{equation}
Let $\semg^*$ be the Banach space adjoint of a quasifree channel $\semg:\Cs\Xis\rin^*\to\Cs\Xis\raus^*$. Then $\semg^*$ maps the ``out'' version of an algebra in this inclusion chain to the corresponding ``in'' version.
\end{prop}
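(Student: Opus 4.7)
The plan is to walk the inclusion chain $\CCR\Xis{}\subset\ucont\Xis{}\subset M\Xis{}\subset \umeas\Xis{}\subset\Cs\Xis{}\bidu$ from right to left, invoking the previously established machinery at each level. The rightmost inclusion $\semg^*(\Cs\Xis\raus\bidu)\subset\Cs\Xis\rin\bidu$ is automatic: $\semg^*$ is by definition the Banach-space adjoint of $\semg:\Cs\Xis\rin^*\to\Cs\Xis\raus^*$. For the next two layers, I simply apply Lem.~\ref{lem:autoHeisen1}: since $\semg$ takes states to states, its adjoint $\semg^*$ preserves both $M$ and $\umeas$ in full generality, with no use of the hybrid or quasifree structure. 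This disposes of $\semg^*M\Xis\raus\subset M\Xis\rin$ and $\semg^*\umeas\Xis\raus\subset\umeas\Xis\rin$ at once.

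For the innermost layer $\CCR\Xis{}$, I would use Prop.~\ref{prop:ChanOnW}, which identifies the action of $\semg^*$ on Weyl operators as $\semg^*(W\raus(\xi))=f(\xi)\,W\rin(S\xi)$ with $\abs{f(\xi)}\leq f(0)=1$. Viewed inside the biduals, $\CCR\Xis{}$ is the norm closure of the linear hull of the Weyl operators, and since $\semg^*$ is a contractive linear map, this closed linear hull on the output side is carried into its counterpart on the input side, giving $\semg^*\CCR\Xis\raus\subset\CCR\Xis\rin$.

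The only step requiring a genuine argument is $\ucont$, and the key is to transport the defining uniform translation-continuity across the channel. Taking the Banach-space adjoint of the covariance identity in Def.~\ref{def:channel} yields the Heisenberg-picture intertwining
\begin{equation*}
  \alpha\Rin_\xi\circ\semg^* = \semg^*\circ\alpha\Raus_{S^\top\xi}, \qquad \xi\in\Xi\rin,
\end{equation*}
where we tacitly extend each $\alpha_\eta$ to the bidual. Given $F\in\ucont\Xis\raus$, I first note $\semg^*F\in\umeas\Xis\rin$ by the previous step, and then estimate
\begin{equation*}
  \norm{\semg^*F-\alpha\Rin_\xi(\semg^*F)}
  = \norm{\semg^*\bigl(F-\alpha\Raus_{S^\top\xi}F\bigr)}
  \leq \norm{F-\alpha\Raus_{S^\top\xi}F}
\end{equation*}
using that $\semg^*$ is a contraction. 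Continuity of the linear map $S^\top:\Xi\rin\to\Xi\raus$ on finite-dimensional spaces gives $S^\top\xi\to0$ as $\xi\to0$, and the right-hand side vanishes by the defining property of $\ucont\Xis\raus$. Combined with universal measurability, this places $\semg^*F$ in $\ucont\Xis\rin$, completing the chain. The main (mild) obstacle is simply the verification that the intertwining propagates uniformly in $\xi$; the finite-dimensionality of the phase spaces makes this automatic, but it is the one place where the quasifree covariance is used rather than generic C*-algebraic softness.
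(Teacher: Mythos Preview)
Your proof is correct and follows essentially the same route as the paper: the bidual case is trivial, $M$ and $\umeas$ come from Lem.~\ref{lem:autoHeisen1}, $\CCR$ from Prop.~\ref{prop:ChanOnW}, and $\ucont$ from the Heisenberg intertwining $\alpha\Rin_\xi\circ\semg^*=\semg^*\circ\alpha\Raus_{S^\top\xi}$ combined with continuity of $S^\top$. Your write-up adds a bit more detail (explicit contractivity of $\semg^*$, the norm estimate), but the argument is the same.
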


\begin{proof}
$\semg^*$ is initially defined on the bidual, i.e., the largest element in the chain, so for this one, there is nothing to prove. For $\umeas\Xis{}$ and $M\Xis{}$ we have shown the claim in Lem.~\ref{lem:autoHeisen1}. The other cases use the quasifree structure. For $\CCR\Xis{}$ it is obvious from Prop.~\ref{prop:ChanOnW}. For $\ucont\Xis{}$, as defined in  \eqref{defucont1}, it follows from the observation that if $\xi\mapsto \alpha\Raus_\xi(F)$ is norm continuous for some $F\in\umeas\Xis\raus$, then
\begin{equation}\label{ucontmapsto}
  \xi\mapsto S^\top\xi\mapsto \alpha\Raus_{S^\top\xi}(F)\ \mapsto\ \semg^*\circ\ \alpha\Raus_{S^\top\xi}(F)
      =\alpha\Rin_\xi\circ \semg^*(F)
\end{equation}
is also continuous.
\end{proof}
The algebra $\Cs\Xis{}$ is conspicuously absent from the proposition's list of algebras with an automatic Heisenberg picture. Indeed, it does not belong on that list. A simple counterexample is a depolarizing channel, for which $S=0$, and $f=\chi_0$ is the characteristic function of some output state $\omega_0$. Then, after \eqref{Tchi}, $\chi\raus=\chi_0$ for all input states. This translates to the Heisenberg picture as $\semg (A)=\omega_0(A)\idty$. So even if $A\in\Cs\Xis\raus$ its image under the Heisenberg picture channel map is a multiple of the identity $\notin\Cs\Xis\rin$.  Nevertheless, there is an easily checkable condition that will ensure the Heisenberg picture also in this case:

\begin{lem}\label{Sker}Let $\semg$ be an $S$-covariant channel. Then either
\begin{itemize}
\item[(1)] $S\Xi\raus=\Xi\rin$ and $\semg^*\,\Cs\Xis\raus\subset\Cs\Xis\rin$, or
\item[(2)] $S\Xi\raus\neq\Xi\rin$ and $\bigl(\semg^*\,\Cs\Xis\raus\bigr)\cap\Cs\Xis\rin=\{0\}$.
\end{itemize}
\end{lem}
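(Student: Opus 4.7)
The strategy is to analyze $\semg^*$ on the dense subset of integrated Weyl operators $W\raus[h]$, $h \in L^1(\Xi\raus)$, using the identity
\begin{equation*}
  \semg^*(W\raus[h]) = \int d\xi\ h(\xi)\,f(\xi)\, W\rin(S\xi),
\end{equation*}
which follows from Prop.~\ref{prop:ChanOnW}. Here the noise function $f$ is bounded by $1$ (as the characteristic function of a state), and such $W\raus[h]$ are norm dense in $\Cs\Xis\raus$ by construction (Sect.~\ref{sec:CstarStates}); since $\semg^*$ is bounded, it suffices to study the image of these particular elements.

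For case~(1), when $S$ is surjective, I would pick a linear complement $V$ of $\ker S$ so that $S|_V \colon V \to \Xi\rin$ is a bijection. A change of variables $\eta = S\xi_V$ combined with Fubini integration over $\xi_k \in \ker S$ rewrites the integral above as $W\rin[g]$, where
\begin{equation*}
  g(\eta) = \left|\det(S|_V)\right|^{-1}\int_{\ker S} d\xi_k\ h\bigl((S|_V)^{-1}\eta + \xi_k\bigr)\, f\bigl((S|_V)^{-1}\eta + \xi_k\bigr).
\end{equation*}
The estimate $\|g\|_1 \leq \|f\|_\infty \|h\|_1 \leq \|h\|_1$ gives $g \in L^1(\Xi\rin)$, so $\semg^*(W\raus[h]) \in \Cs\Xis\rin$, and the general inclusion $\semg^*\Cs\Xis\raus \subset \Cs\Xis\rin$ follows by density and boundedness.

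For case~(2), when $S\Xi\raus$ is a proper subspace, its annihilator $\ker S^\top \subset \Xi\rin$ is nontrivial, and for each $\eta \in \ker S^\top$ the identity $\alpha_\eta\Rin(W\rin(S\xi)) = e^{i(S^\top\eta)\cdot\xi}W\rin(S\xi) = W\rin(S\xi)$ inserted into the integral above shows that $\semg^*(A)$ is $\alpha_\eta\Rin$-invariant for every $A \in \Cs\Xis\raus$. The heart of the proof is then the claim that any $B\in\Cs\Xis\rin$ fixed by $\alpha_\eta\Rin$ for some nonzero $\eta = \eta_1 + \eta_0 \in \Xi\rin$ must vanish. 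Writing $B$ as an operator-valued function $x\mapsto B(x)\in\KK(\HH_1)$ on the classical part of $\Xi\rin$ (Sect.~\ref{sec:funcobsCXA}) and expanding the action of $\alpha_\eta\Rin$ via \eqref{shiftQF}, the fixed-point condition reads $B(x) = W_1(\sigma_1\eta_1)^*\, B(x+\eta_0)\, W_1(\sigma_1\eta_1)$. If $\eta_0\neq 0$, iterating with $t\eta$ for $t\in\Rl$ gives $\|B(x+t\eta_0)\| = \|B(x)\|$ for all $x$, which combined with the vanishing of $\|B(\cdot)\|$ at infinity forces $B\equiv 0$. If instead $\eta_0 = 0$ then $\eta_1\neq 0$, and $B(x)$ must be a compact operator commuting with the unitary group generated by $\sigma_1\eta_1\cdot R$; the main technical point is that this generator has purely continuous spectrum on $\HH_1$ (any non-zero linear combination of canonical operators does), so any nonzero finite-dimensional eigenspace of $B(x)$ would be invariant under the group and thereby furnish a point eigenvector, a contradiction, whence $B(x)=0$. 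The delicate step is this last alternative, essentially the observation that the $\KK\otimes\CC_0$ structure of $\Cs\Xis\rin$ admits no invariants under nontrivial translations in any phase-space direction.
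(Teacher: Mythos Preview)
Your approach matches the paper's essentially line by line: both parts work on the dense set of integrated Weyl operators $W[h]$, the surjective case is handled by the same change-of-variables/Fubini argument, and the non-surjective case proceeds by picking $\eta\in\ker S^\top\setminus\{0\}$, deriving $\alpha_\eta\Rin$-invariance of the image, and then showing that $\Cs\Xis\rin\cong\KK(\HH_1)\otimes\CC_0(\Xi\rino)$ admits no nonzero $\alpha_\eta$-invariant element.

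There is one small gap in your handling of the case $\eta_0=0$, $\eta_1\neq0$. You argue that any nonzero eigenspace of the compact operator $B(x)$ would be finite-dimensional and invariant under the one-parameter group $t\mapsto W(t\sigma_1\eta_1)$, contradicting the continuous spectrum of the generator. This correctly shows that $B(x)$ has no nonzero eigenvalues, but for a general (non-normal) compact operator that does \emph{not} force $B(x)=0$: a compact operator can have empty point spectrum (the Volterra operator is the standard example). The paper closes this by applying your eigenspace argument to the \emph{self-adjoint} compact operators $B(x)+B(x)^*$ and $i\bigl(B(x)-B(x)^*\bigr)$; each of these commutes with the Weyl group (since $B(x)^*$ does whenever $B(x)$ does), and for self-adjoint compact operators ``no nonzero eigenvalues'' really does imply vanishing by the spectral theorem. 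With that one-line fix your argument is complete and coincides with the paper's.
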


\begin{proof}
(1) Take an element $W[h]=\int d\xi\ h(\xi)W(\xi)\in\Cs\Xis\raus$. By definition of the algebra $\Cs\Xis\raus$ as the C*-envelope of the twisted convolution algebra, such elements are dense. It, therefore, suffices to show that the image under the channel is again given by such an integral. Applying the channel gives
\begin{equation}\label{actOnComp}
  \semg^* W[h]=\int d\xi\ h(\xi)f(\xi) W(S\xi).
\end{equation}
We can split the integration variables into $\xi=(\xi_\perp,\xi_\Vert)$ with $S\xi_\perp=0$, and a variable $\xi_\Vert$ in a suitable linear complement of the kernel. Then $\xi_\Vert$ uniquely specifies a point $S\xi_\Vert=\eta\in\Xi\rin$.  Carrying out the integral over $\xi_\perp$ leaves $\semg^* W[h]=W[h']$ with a function $h'(\eta)=\int d\xi_\perp \ h(\xi_\perp,\xi_\Vert)f(\xi_\perp,\xi_\Vert)$ which clearly lies in $L^1(\Xi\rin,d\eta)$.
(2) When $S$ is not surjective, there is a non-zero vector $\eta$ orthogonal to $S\Xi\raus$. Then we have
$\alpha_\eta (W(S\xi))=\exp(i\eta\cdt S\xi)W(S\xi)=W(S\xi)$ for all $\xi$. Integrating with an arbitrary  $h\in L^1(\Xi\raus)$, it follows that
\begin{equation}\label{f}
   \alpha_\eta\circ\semg^*  W[h]= \int d\xi\ h(\xi)f(\xi) \alpha_\eta (W(S\xi))=\semg^* W[h].
\end{equation}
This transfers to $\Cs\Xis\raus$ by continuity. The image therefore consists of $F\in\ucont\Xis\rin$ satisfying $\alpha_\eta F=F$. We will show that together with $F\in\Cs\Xis\rin$ this implies $F=0$. With \eqref{defalpha}, the action of translations on functions $F\in\ucont\Xis\rin$ is given by
\begin{equation}\label{Fshifted}
   \bigl(\alpha_\eta F\bigr)(\xi_0)= W(\sigma\eta)^*F(\xi_0+\eta_0)W(\sigma\eta) =F(\xi_0),
\end{equation}
where the last equality expresses our first conclusion. We take the norm on both sides so that the non-zero vector $\eta$ enters only through its classical part $\eta_0$. We claim that this classical part must vanish. Indeed, the sequence $n\mapsto \xi_0+n\eta_0$ goes to infinity for all $\xi_0$, and since $F\in\Cs\Xis\rin\cong\KK(\HH_1)\otimes\CC_0(\Xi_0)$ by Prop.~\ref{prop10tensor}, we have
$\lim_n\norm{F(\xi_0+n\eta_0)}=0$. But then $F(\xi_0)=0$ for all $\xi_0$, and $F=0$.  Hence $\eta_0=0$.

Now, for any fixed $\xi_0$ \eqref{Fshifted} says that the (supposedly) compact operator $F(\xi_0)$ commutes with a one-parameter subgroup of Weyl operators. In particular, the finite-dimensional eigenspaces of $F(\xi_0)+F(\xi_0)^*$ would have to be invariant under such a group. But since the generators in the \Schroed\  representation have a continuous spectrum, this is impossible. So the eigenspaces for non-zero eigenvalues have to be empty, which implies $F(\xi_0)+F(\xi_0)^*=0$. Repeating this argument for $i(F(\xi_0)-F(\xi_0)^*)$ we get $F(\xi_0)=0$ for all $\xi_0$, hence $F=0$.
\end{proof}

It may also be advantageous to single out a translation-invariant subspace of $\ucont\Xis{}$. The selection may even be done uniformly for all $\Xi$ so that another instance of an ``automatic Heisenberg picture'' results. Rather than expanding this theory, let us give an example.

\begin{example}\label{Ex:resAlg} The resolvent algebra.\end{example}\extxt{
Let us consider the resolvent algebras defined in \cite{buchholz1,buchholz2} from the point of view of the present paper, particularly the correspondence theory of Prop.~\ref{prop:qha}. In contrast to the cited works we thus restrict to finite-dimensional $\Xi$. By definition, the resolvent algebra $\RR\Xis{}$ is the C*-algebra generated by the resolvent elements $(u\idty-\sum_i\xi_i R_i)^{-1}$, where the $R_i$ are the field operators from \eqref{fieldOp}, and $u\in\Cx$ with $\im\,u\neq0$. Let us first consider the C*-algebra generated by the resolvents with one fixed $\xi$. Now the functions $t\mapsto1/(u+t)$ generate the C*-algebra $\CC_0(\Rl)$ by the Stone-Weierstra\ss\ Theorem, but, by the resolvent equation, it actually suffices to take the linear span of these functions. Moreover, this space is $\alpha$-translation invariant, since
\begin{equation}\label{resolventshift}
  \alpha_\eta\Bigl(\bigl(u\idty-\sum_i\xi R_i\bigr))^{-1}\Bigr)=\bigl((u-\xi\cdt\eta)\idty-\sum_i\xi R_i\bigr)^{-1}.
\end{equation}
Hence the resolvents are constant in the direction of the subspace $M=\xi^\perp:=\{\eta|\xi\cdt\eta=0\}$ and go to zero transversally to this subspace. 

More generally, we define, for any linear subspace $M\subset\Xi$:
\begin{align}\label{CC0XiM}
  \CC_0(\Xi/M,\sigma)=\{A\in\ucont\Xis{}\mid\ & \alpha_\xi(A)=A\ \mbox{for}\ \xi\in M, \nonumber \\
                         &\text{and}\ w^*{-}\!\lim_\xi\alpha_\xi(A)=0\ \text{for}\ \xi+M\to\infty\}.
\end{align}
Here the limit condition just means that for fixed $\omega\in\Cs\Xis{}^*$, $|\braket\omega{\alpha_x(A)}|$ becomes arbitrarily small as soon as $\xi$ is outside a cylinder, which is infinite in the $M$-directions and compact transverse to it. One easily checks that these spaces are corresponding in the sense of Prop.~\ref{prop:qha}. Moreover, it is clear that with decreasing $M$ these algebras interpolate between $\CC_0(\Xi/\Xi,\sigma)=\Cx\idty$ and $\CC_0\Xis{}$, and that products are evaluated according to $\CC_0(\Xi/M_1,\sigma)\CC_0(\Xi/M_2,\sigma)=\CC_0(\Xi/(M_1\cap M_2),\sigma)$. Hence
\begin{equation}\label{resalg}
  \RR\Xis{}=\sum_M \CC_0(\Xi/M,\sigma).
\end{equation}
In the specific sense of Prop.~\ref{prop:qha}, $\RR\Xis{}$ is independent of $\sigma$.  It hence suffices to check in the classical case ($\sigma=0$) that it is closed under noiseless channels and tensor product expansions (see Sect.~\ref{sec:noisefac}), and hence under arbitrary quasifree channels.

It would be interesting to get the exact relationship between the above analysis, which is manifestly independent of $\sigma$,  and the one in \cite{buchholz1}, which focuses particularly on the non-degenerate subspaces $M\subset\Xi$, i.e., the subspaces on which $\sigma$ is symplectic.
}

\subsection{Smoothing channels}\label{sec:smooth}
The $\mu$-dependent setting, with the special choice $\mu=dx$ as the Lebesgue measure, has been singled out by the norm continuity of states under translations in  Sect.~\ref{sec:transState}. The resulting structure also supports other $L^p$ spaces and corresponding Schatten classes (Sect.~\ref{sec:Lp}). A natural question is then whether a given quasifree channel preserves the continuity of states and therefore can be seen as a normal map between the corresponding hybrid von Neumann algebras $L^\infty\Xis{}$ as defined in \eqref{L1}. The identity channel obviously has this property, but, for example, a depolarizing channel with a pure output state does not. The following lemma gives a positive answer for general non-singular $S$ and arbitrary noise function. Because all $S_t$ in a matrix semigroup are non-singular, we conclude that the von Neumann algebra $L^\infty\Xis{}$, as used in \cite{barchielli_1996}, is a sufficient arena for quasifree semigroups.

\begin{lem}\label{lem:ChanL1}
Let $\semg$ be the quasifree channel given by $S:\Xi\raus\to\Xi\rin$ and $f:\Xi\raus\to\Cx$. Suppose that $S$ is injective.
Then $\semg$ maps norm continuous states to norm continuous states.
\end{lem}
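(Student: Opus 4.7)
The plan is to exploit the covariance intertwining \eqref{trcov} together with the fact that $\semg$ is norm contractive on the state space (as the predual of a unital positive map). The key algebraic observation is that injectivity of $S:\Xi\raus\to\Xi\rin$ on the finite-dimensional phase spaces is equivalent to surjectivity of the transpose $S^\top:\Xi\rin\to\Xi\raus$. Since $\Xi\rin$ and $\Xi\raus$ are finite-dimensional, $S^\top$ then admits a continuous linear right inverse $R:\Xi\raus\to\Xi\rin$, i.e. $S^\top R=\id_{\Xi\raus}$, with $\norm{R\eta}\leq C\norm\eta$ for some constant $C$.

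Now fix a norm continuous input state $\omega\in\Cs\Xis\rin^*$ and set $\omega\raus=\semg\omega$. For any $\eta\in\Xi\raus$, write $\eta=S^\top(R\eta)$ and apply the covariance relation \eqref{trcov} with $\xi=R\eta\in\Xi\rin$:
\begin{equation*}
(\alpha\Raus_\eta)^*\omega\raus
  =(\alpha\Raus_{S^\top R\eta})^*\semg\omega
  =\semg\bigl((\alpha\Rin_{R\eta})^*\omega\bigr).
\end{equation*}
Subtracting $\omega\raus=\semg\omega$ and using that $\semg$ is a contraction for the base norm on the self-adjoint dual (a consequence of $\semg^*\idty=\idty$ and positivity; equivalently, it is an immediate corollary of the bound \eqref{basenorm}), we obtain
\begin{equation*}
  \norm{\omega\raus-(\alpha\Raus_\eta)^*\omega\raus}
  =\norm{\semg\bigl(\omega-(\alpha\Rin_{R\eta})^*\omega\bigr)}
  \leq \norm{\omega-(\alpha\Rin_{R\eta})^*\omega}.
\end{equation*}
As $\eta\to 0$ in $\Xi\raus$, continuity of $R$ forces $R\eta\to 0$ in $\Xi\rin$, so the right-hand side tends to zero by the assumed norm continuity of $\omega$ under $\Xi\rin$-translations. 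Hence $\omega\raus$ is norm continuous under $\Xi\raus$-translations.

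The argument is short because all the real work was done upstream: the covariance of Def.~\ref{def:channel} transports translation continuity between input and output, provided the translations actually reached by $S^\top$ cover $\Xi\raus$, and the norm-contractivity of channels converts this into a quantitative estimate. The only step that could conceivably fail would be the existence of a bounded right inverse of $S^\top$, but this is automatic in finite dimensions once $S$ is injective. The same reasoning shows that surjectivity of $S^\top$ is actually necessary: if $S^\top$ missed some direction $\eta\in\Xi\raus$, then noise states with non-absolutely-continuous classical marginal in that direction (e.g. a point measure) could be convolved into $\omega\raus$ via $f$, destroying norm continuity there.
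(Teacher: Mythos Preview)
Your proof is correct and follows essentially the same route as the paper: both exploit that injectivity of $S$ makes $S^\top$ surjective, then push the input translation continuity through the covariance relation \eqref{trcov} and the norm-boundedness of $\semg$. Your version is in fact slightly more streamlined, since by choosing a linear right inverse $R$ of $S^\top$ you obtain joint continuity in $\eta\in\Xi\raus$ directly, whereas the paper first establishes norm continuity along each one-parameter subgroup $t\mapsto t\xi$ and then invokes commutativity of the translations to upgrade this to joint continuity.
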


\begin{proof}
When $S$ is injective, $S^\top\Xi\rin\to\Xi\raus$ is surjective. So let $\xi\in\Xi\raus$, which we can consequently write as $\xi=S^\top\eta$. Suppose that $\rho\rin$ is norm continuous under translations, and consider $\rho\raus=\semg\rho\rin$. Then by \eqref{trcov} the function $t\mapsto\alpha^*_{t\xi}(\rho\raus)=\alpha^*_{tS^\top\eta}\circ\semg(\rho)\rin=\semg\bigl(\alpha^*_{t\eta}(\rho\rin)\bigr)$ is continuous in norm. Since this holds for all $\xi\in\Xi\raus$ and the translations commute, $\xi\mapsto \alpha_\xi^*(\rho\raus)$ is also norm continuous.
\end{proof}

For other channels {\it all} output states, not just those from continuous input states, are continuous. For that we need sufficient noise, and the following proposition collects some basic observations.

\begin{prop}\label{prop:smoothing}
Let $\semg$ be a quasifree channel with noise state $\tau$ and noise function $f$. Consider the following statements:
\begin{itemize}
\item[(1)] $f\in L^p(\Xi,d\xi)$ for some $p\in[1,2]$.
\item[(2)] $\tau$ is norm continuous under translations.
\item[(3)] $\lim_{\xi\to0}\cbnorm{\alpha_\xi^*\circ\semg-\semg}=0$.
\item[(4)] For all $\omega\in\Cs\Xis\rin^*$, $\semg\omega$ is norm continuous under translations.
\item[(5)] For all $A\in \umeas\Xis\raus$, $\semg^*A\in\ucont\Xis\rin$.
\end{itemize}
Then (1)$\Rightarrow$(2)$\Rightarrow$(3)$\Rightarrow$(4) and (5).
A channel with the property (3) will be called {\bf smoothing}.
\end{prop}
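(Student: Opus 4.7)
For (1)$\Rightarrow$(2), note that by Prop.~\ref{prop:ChanOnW} the noise function $f$ is the characteristic function of the noise state $\tau$ on $(\Xi\raus,\Delta\sigma)$. Hence the hypothesis $f\in L^p(\Xi\raus,d\xi)$ is precisely the hypothesis of the final sentence of Prop.~\ref{prop:contransl}, applied to $\tau$; its conclusion is exactly (2).

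For (2)$\Rightarrow$(3) the plan is to read off both sides from the state-channel correspondence. By Cor.~\ref{cor:statechannel}(4), for $\xi\in\Xi\raus$ the channel $\alpha_\xi^*\circ\semg$ is again $S$-covariant, with noise state $\alpha_\xi^*(\tau)$; Cor.~\ref{cor:statechannel}(3) then yields the isometric identification
\begin{equation*}
  \cbnorm{\alpha_\xi^*\circ\semg-\semg}=\norm{\alpha_\xi^*(\tau)-\tau}.
\end{equation*}
Assumption (2) is that the right-hand side vanishes as $\xi\to 0$, which is (3). In fact this argument gives the equivalence (2)$\Leftrightarrow$(3), though only the stated direction is needed.

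For (3)$\Rightarrow$(4), the cb-norm dominates the operator norm, so for any state $\omega$,
\begin{equation*}
  \norm{\alpha_\xi^*(\semg\omega)-\semg\omega}\leq\cbnorm{\alpha_\xi^*\circ\semg-\semg}\,\norm\omega\longrightarrow 0.
\end{equation*}
For (3)$\Rightarrow$(5), Prop.~\ref{prop:Heisenalg} first puts $\semg^*A\in\umeas\Xis\rin$; what remains is norm continuity of $\semg^*A$ under input-system translations. Taking Banach adjoints of the covariance relation \eqref{trcov} produces the Heisenberg-picture intertwining
\begin{equation*}
  \alpha\Rin_\xi\circ\semg^*=\semg^*\circ\alpha\Raus_{S^\top\xi}
\end{equation*}
on $\Cs\Xis\raus\bidu$, so for $A\in\umeas\Xis\raus$ and $\xi\in\Xi\rin$,
\begin{equation*}
  \norm{\alpha\Rin_\xi(\semg^*A)-\semg^*A}=\norm{(\semg^*\circ\alpha\Raus_{S^\top\xi}-\semg^*)A}\leq\cbnorm{(\alpha\Raus_{S^\top\xi})^{*}\circ\semg-\semg}\,\norm A,
\end{equation*}
using that Banach adjoints preserve operator norms and that the cb-norm dominates the operator norm. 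Since $\xi\to 0$ in $\Xi\rin$ forces $S^\top\xi\to 0$ in $\Xi\raus$, the right-hand side vanishes by (3). No genuine analytic obstacle is expected; the main piece of bookkeeping is simply to keep straight which side of $S^\top$ each translation argument lives on, and that all four conclusions flow once the key input --- the cb-norm/noise-state identity of Cor.~\ref{cor:statechannel} --- is in hand.
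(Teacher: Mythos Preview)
Your proof is correct and follows essentially the same route as the paper: Prop.~\ref{prop:contransl} for (1)$\Rightarrow$(2), items (3) and (4) of Cor.~\ref{cor:statechannel} for (2)$\Rightarrow$(3), and then pushing the cb-norm estimate through $\semg$ resp.\ $\semg^*$ (via the covariance relation) for the remaining implications. The paper additionally points to Cor.~\ref{cor:CU} for (5), but this merely supplies the function-picture interpretation; your direct verification of the defining norm-continuity condition of $\ucont\Xis\rin$ is equivalent and arguably cleaner.
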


\begin{proof}
(1)$\Rightarrow$(2) is a part of  Prop.~\ref{prop:contransl} applied to the noise state. ({2})$\Rightarrow$({3}) is an immediate consequence of items (3) and (4) of  Cor.~\ref{cor:statechannel}, which also proves item (6) of that corollary. The remaining items follow from (3) by applying $\semg$ or $\semg^*$ to the respective arguments. For (3)$\Rightarrow$(5) note Prop.~\ref{prop:Heisenalg} and Cor.~\ref{cor:CU}.

For a converse at this point one would need a uniformity condition on the modulus of continuity: For example, demanding the existence of an $\omega$-independent function $\veps(\xi)$, which goes to zero as $\xi\to0$, and satisfies $\norm{\omega-(\alpha^*_\xi\circ\semg)\otimes\id_n\omega}\leq\veps(\xi)$ for all states on $\Cs\Xis\rin\otimes\MM_n$ is equivalent to (3).
\end{proof}

The term smoothing also suggests higher orders, and indeed the idea can be applied immediately to differentiability conditions. If $f$ is a Schwartz function, or even a Gaussian, as in most applications,
$\alpha_\xi^*(\tau)$ has Taylor approximations of all orders, where the coefficients are in $\Cs(\Xi\raus,\Delta\sigma)$, and the error terms are likewise bounded in norm. This then transfers directly to $\alpha_\xi^*\circ\semg$ as well as all channel outputs.

\subsection{Squeezing and gauge invariance}\label{sec:squeezegauge}
Squeezing is an important operation in quantum optics. It did not show up so far because we treated systems with the same number of quantum and classical degrees of freedom as equivalent. In quantum optics, however, there is an additional structure, the free time evolution/Hamiltonian. Closely related is the vacuum state, the unique ground state of the Hamiltonian.
A compact way to encode it is to specify a {\bf complex structure} on $\Xi$, i.e., a symplectic real-linear operator $I:\Xi\to\Xi$ so that $I^2=-\idty$, and
$\braket\xi\eta=\bigl(\xi\cdt\sigma I\eta+ i\xi\cdt\sigma\eta\bigr)/2$ turns $\Xi$ into an $n$-dimensional complex Hilbert space, in which multiplication by the scalar $i$ is interpreted as the operator $I$. The characteristic function of the vacuum is then $\chi(\xi)=\exp\bigl(-\braket\xi\xi/2\bigr)$.
Very often this structure is assumed from the outset, i.e., the phase space $\Xi$ is taken to be a Hilbert space, and the symplectic form is the imaginary part of the scalar product.

Let us denote by $\gamma_t$ the quasifree channel with $S_t=\exp(It)=(\cos t) \idty+(\sin t)I$ and $f(\xi)\equiv1$, which is also called the group of {\it gauge transformations}. In the hybrid case, we extend it by $S_t=\idty$ on the classical variables. In the \Schroed\ representation of the quantum part, this is a unitarily implemented group generated by the number operator. This gives the \Schroed\  representation a Fock space structure with the Hilbert space $\Xi$ as the one particle space and a decomposition of the canonical operators $R_k$ into a creation and an annihilation part.

Our main reason for reiterating these well-known facts here is to avoid potential confusion about the term ``linear''. In the language of (quantum) optics, linear elements are beam splitters, phase shifters, and other passive dielectric components of an optical setup. They are indeed quasifree operations, i.e., described in terms of symplectic linear operations on phase space, but in addition, they commute with the free time evolution, i.e., they intertwine the respective complex structures $I\rin$ and $I\raus$: They are {\it complex} linear. Usually, this is applied to unitary channels only, but it is natural to demand that the noise function be gauge invariant as well. So we define a {\it gauge invariant channel} between phase spaces with complex structure as one with $\gamma_t\Raus\semg=\semg\gamma\Rin_t$. This applies to states, as well, whereby gauge-invariant states are just those commuting with the number operator. For a channel it is somewhat weaker than particle number conservation: A damping channel ($S=e^{-\lambda t}\idty$, $f(\xi)=\exp(-(1-e^{-\lambda t})\norm\xi^2/2)$) is gauge invariant, but reduces particle number. Some of the interesting hybrid channels, e.g., position observables, are not gauge invariant.

Squeezing components are just those that are not gauge invariant, such as the preparation of a (perhaps Gaussian) state with different variances for $P$ and $Q$, or in quantum optics for the field quadratures. In order to achieve a certain task, for example, the preparation of an entangled state of two modes from a laser, it may be necessary to use squeezing elements. The framework for quantifying just how much squeezing is required (not counting any gauge invariant intermediate steps) is called the resource theory of squeezing \cite{Braunstein,Wolfquetsch}. It would carry us too far to include a systematic presentation here. A key element is the singular value decomposition, here also called the Bloch--Messiah decomposition \cite{Braunstein}, $S=S_1S_2S_3$ with $S_1,S_3$ orthogonal symplectic (non-squeezing) and $S_2$ purely squeezing, i.e., multiplication with a positive factor in some directions and multiplication with the inverse in the conjugate ones.

\subsection{Composition, concatenation, convolution}\label{sec:combine}
In this section, we briefly consider three ways of combining channels or states.
They correspond roughly to the parallel and serial execution of operations and to the addition of phase space variables. For none of these, a case distinction for different configurations of classical and quantum variables is needed. \\

\subsubsection*{Composition of subsystems}
Composition is usually rendered as the tensor product of Hilbert spaces or observable algebras. In our context, this indeed corresponds to C*-tensor products at the level of the non-unital algebras $\Cs\Xis{}$.
Given hybrid phase spaces $(\Xi_{j},\sigma_{j})$ with indices $j=1,2$, their composition is the hybrid phase space $(\Xi_{12},\sigma_{12})=(\Xi_{1}\oplus\Xi_{2},\sigma_{1}\oplus\sigma_{2})$. The C*-tensor product $\Cs(\Xi_{12},\sigma_{12})=\Cs(\Xi_{1},\sigma_{1})\otimes\Cs(\Xi_{2},\sigma_{2})$ is uniquely defined, because the algebras involved are nuclear, so maximal and minimal C*-norm \cite{takesaki1} on the algebraic tensor product coincide. This entails that states in $\Cs(\Xi_{12},\sigma_{12})^*$ can be weakly approximated by product elements, but the resulting ``tensor product of state spaces'' requires more than norm limits of product elements. For observable algebras a simple approach using norm limits also fails (cf. Sect.~\ref{sec:tensprod}). It is clear that the compactification of a product is usually not the product of the compactifications. Even for the one-point compactification, corresponding to the observable algebras $\AA_i=\Cs(\Xi_i,\sigma_i)\oplus\Cx\idty$, we get additional components, like $\idty\otimes\Cs\Xis{_2}\not\subset\Cs(\Xi_{12},\sigma_{12})\oplus\Cx\idty$.

In spite of these subtleties, quasifree channels allow a straightforward composition operation $\semg_1\otimes\semg_2$. When $S_i:\Xi_{i,\mathrm{out}}\to\Xi_{i,\mathrm{in}}$ and $f_i:\Xi_{i,\mathrm{out}}\to\Cx$ are the data defining $\semg_i$, the tensor product has
\begin{equation}\label{tensorcompose}
  \begin{aligned}
   S(\xi_1\oplus\xi_2)&=(S_1\xi_1)\oplus(S_2\xi_2), \\     f(\xi_1\oplus\xi_2)&=f_1(\xi_1)f_2(\xi_2).\end{aligned}
\end{equation}

Of course, with the composition of quantum systems comes {\bf entanglement}. It is the observation that, while general states on a composite system can be approximated by product elements, these product elements cannot be taken to be positive. Indeed, non-entangled or ``separable'' states are nowadays {\it defined} by the existence of a positive product approximation \cite{Popescu,HolevoInfdimEnt}.
Entanglement in Gaussian states is well understood \cite{MCF,eisert_gaussian_channels_2005,wolf_extremality_gaussian_states_2006,weedbrook_gaussian_2012,WolfWernerBoundEnt}, but the hybrid scenario creates no new interesting possibilities: The classical part of a composite hybrid is just the product of the classical parts. The pure classical states are point measures on a cartesian product, and hence product states. This is just saying that classical systems cannot be entangled. In the integral decomposition \eqref{disintegrate} of an arbitrary hybrid state, all entanglement is therefore in the states $\rho_x$, where $x=(x_1,x_2)$ is a point in the cartesian product.

This is true in spite of the recent proposal \cite{hall_two_2018} to use entanglement generation via a classical intermediary in a classical theory of gravity, see Sect.~\ref{sec:previous}.\\

\subsubsection*{Concatenation}
Executing one operation after the other is called, depending on community or context, concatenation, composition or multiplication. We use the first term, which derives from Latin for chaining because the second is too unspecific  (see previous paragraph), and the third is too overloaded.
Clearly, when $\semg_1,\semg_2$ are quasifree channels, so is concatenation $\semg=\semg_1\semg_2$. When we take $S_1,S_2,S$ and $f_1,f_2,f$ as the defining parameters of these channels then
\begin{equation}\label{compose}
  \begin{aligned}
   S&= S_2S_1, \\     f(\xi)&= f_1(\xi)f_2(S_1\xi).\end{aligned}
\end{equation}
We thus get a {\bf category} whose objects are the hybrid systems and whose morphisms are the quasifree channels. Objects in a category are  ``the same'' if they are connected by a morphism and its inverse morphism. Isomorphism classes in our setting are labelled by the pairs $(n,s)\in\Nl\times\Nl$, where $n$ is the number of quantum degrees of freedom, so that $\Xi_1=\Rl^{2n}$ as a vector space on which $\sigma$ is non-degenerate, and $s$ is the number of classical dimensions, i.e.,  $\Xi_0=\ker\sigma=\Rl^s$. Note that, in particular, our theory depends only on $\Xis{}$ and not on a particular splitting $\Xi=\Xi_1\oplus\Xi_0$. We have used such splittings above, although only $\Xi_0$, the null space of $\sigma$ is intrinsically defined by the structure $\Xis{}$, and different complements $\Xi_1$ could be chosen. Changing this splitting is an isomorphism leaving $\Xi_0$ fixed. It acts by an $\xi_0$-dependent phase space translation of the quantum part, which is clearly quasifree and invertible as such.

Other categorical features (monomorphism, epimorphisms, etc.) can be worked out.
An important result of this kind is a characterization of channels with a one-sided inverse, see Prop.~\ref{prop:retracts}.

A trivial but frequently used concatenation is the formation of {\bf marginals} of a channel, i.e., considering only one of the outputs and discarding the other (see Sect.~\ref{sec:instru} below). The discarding operation is itself a channel, the noiseless one with $S:\Xi_1\to\Xi_1\oplus\Xi_2$, $S\xi_1=\xi_1\oplus0$. Equivalently, it is the tensor product of the identity on $\Xi_1$ with the {\bf destructive channel} defined by $\Xi\raus=\{0\}$, and consequently  $S:0\mapsto0\in\Xi_2$.

\subsubsection*{Convolution}
We have already met the convolution of states in  Def.~\ref{def:stateconvolve}. As in all group representation theory, one should think of convolutions as a contravariant encoding of the group multiplication. So it is here: Let us consider two systems with the same set $\Xi=\Xi_1=\Xi_2$, so the addition of phase space elements makes sense. For the moment, we do not care whether they are classical or quantum. Can we {\bf add} signals of these types? The model for this is the addition of random variables. It corresponds to setting the Fourier arguments dual to the random variables $x_1$ and $x_2$ equal: The characteristic function for a sum is the expectation of $\exp(ik(x_1+x_2))$, which we obtain from that of the joint distribution. So convolution in general corresponds to the linear map $S\xi=\xi\oplus\xi\in\Xi_1\oplus\Xi_2$. So this would suggest a channel acting as $\semg(\rho_1\otimes\rho_2)=\rho_1\ast\rho_2$. This works as a noiseless channel when one of the factors is classical. In the quantum case, however, although the convolution of arbitrary states is well defined, the map $\semg$ with this property would not extend as a channel to entangled states. Thus one could either add noise or modify the definition by inverting the symplectic form in one factor, i.e., setting $\semg(\rho_1\otimes\rho_2^\top)=\rho_1\ast\rho_2$, where $\rho^\top$ denotes transposition (= inversion of momenta) or the application of any other antiunitary quasifree symmetry.
This idea will be used in our analysis of teleportation (Sect.~\ref{sec:teleport}).

\subsection{Noiseless operations}\label{sec:noiseless}
Every quasifree channel can be modified by multiplying $f$ with an arbitrary (untwisted) positive definite function $g$. This corresponds to adding classical noise or averaging the output over translations  $\alpha_\xi$ with a noise probability measure whose characteristic function is $g$. Since $\abs{g(\xi)}\leq1$ this always decreases $\abs f$. In fact, unless the noise measure is concentrated on a single point, and we thus have a simple translation, we have $\abs{g(\xi)}<1$ for some $\xi$ and the decrease of $\abs{f(\xi)}$ is strict, at least for some $\xi$.

A channel can thus be called a {\it minimal noise channel} if it cannot be constructed in this way with $\abs g\neq1$. Those channels will be characterized below as the extremal $S$-covariant channels. In the same spirit, we call a channel {\it noiseless}, if $\abs{f(\xi)}=1$, for all $\xi$, i.e., it is as large as consistent with any kind of twisted positive definiteness. These are characterized by the following proposition.

\begin{prop}\label{prop:noiseless} For a quasifree channel $\semg$, specified by $S:\Xi\raus\to\Xi\rin$ and $f:\Xi\raus\to\Cx$,  the following conditions are equivalent:
\begin{itemize}
\item[(1)] $\abs{f(\xi)}=1$ for all $\xi$, i.e., $\semg$ is noiseless.
\item[(2)] $\Delta\sigma=0$, and there is some $\eta$ such that $f(\xi)=\exp{(i \xi\cdt\eta)}$ for all $\xi$.
\item[(3)] $\semg^*:\CCR\Xis\raus\to\CCR\Xis\rin$ is a homomorphism.
\item[(4)] $\semg^*:\Cs\Xis\raus\bidu\to\Cs\Xis\rin\bidu$ is a homomorphism.
\end{itemize}
\end{prop}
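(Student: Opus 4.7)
The plan is to close the circle $(2) \Rightarrow (4) \Rightarrow (3) \Rightarrow (1) \Rightarrow (2)$. Several arrows are immediate: $(2) \Rightarrow (1)$ because $|e^{i\xi\cdot\eta}|=1$; $(3) \Rightarrow (1)$ and $(4) \Rightarrow (1)$ because a $*$-homomorphism must send the unitary $W\raus(\xi)$ to a unitary, while $\semg^*(W\raus(\xi)) = f(\xi)\,W\rin(S\xi)$ is unitary iff $|f(\xi)| = 1$; and $(4) \Rightarrow (3)$ follows by restriction using Prop.~\ref{prop:Heisenalg}, which already guarantees $\semg^*\CCR\Xis\raus\subset\CCR\Xis\rin$.

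The substantive direction is $(1) \Rightarrow (2)$. Here I would apply the twisted positive-definiteness of $f$ with respect to $\Delta\sigma$ (Prop.~\ref{prop:ChanOnW}) with $N=3$ test points $\xi_1=0$, $\xi_2=\xi$, $\xi_3=\eta$. The hypothesis $|f|\equiv 1$ makes every principal $2\times 2$ minor touching the first row have vanishing determinant, which forces the $3\times 3$ matrix $M$ to be of rank one with all three columns proportional to the first. Reading off $M_{32}=M_{12}M_{31}$ and conjugating yields
\begin{equation*}
   f(\eta - \xi)\, e^{-(i/2)\Delta\sigma(\xi,\eta)} \;=\; \overline{f(\xi)}\, f(\eta).
\end{equation*}
Setting $\eta=0$ gives $f(-\xi)=\overline{f(\xi)}$, which rewrites the identity as $f(\xi+\eta) = f(\xi)f(\eta)\,e^{-(i/2)\Delta\sigma(\xi,\eta)}$. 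Symmetry of the left-hand side under $\xi\leftrightarrow\eta$, combined with antisymmetry of $\Delta\sigma$, forces $e^{i\Delta\sigma(\xi,\eta)} = 1$ for all $\xi,\eta$; rescaling $\xi\mapsto t\xi$ and using bilinearity then yields $\Delta\sigma\equiv 0$. What remains is the functional equation $f(\xi+\eta)=f(\xi)f(\eta)$ with $f$ continuous and bounded, and the only such solutions on $\Xi\raus\cong\Rl^{2n+s}$ are the characters $f(\xi)=e^{i\xi\cdot\eta_0}$.

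For $(2) \Rightarrow (4)$, the first half is a direct verification on Weyl operators: when $\Delta\sigma=0$ and $f(\xi)=e^{i\xi\cdot\eta_0}$, a brief calculation using $\sigma\raus = S^\top\sigma\rin S$ makes the Weyl phases cancel so that $\semg^*(W\raus(\xi)W\raus(\eta)) = \semg^*(W\raus(\xi))\,\semg^*(W\raus(\eta))$, and $\semg^*(W\raus(\xi)^*) = \semg^*(W\raus(\xi))^*$ is immediate since $\overline{f(\xi)} = f(-\xi)$. The step I expect to require the most care is propagating this multiplicativity from the linear span of the Weyl operators to the whole bidual $\Cs\Xis\raus\bidu$, because joint weak*-continuity of multiplication fails. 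The right tool is the multiplicative domain of the unital weak*-continuous CP map $\semg^*$: each $W\raus(\xi)$ lies in it (its image is unitary), so for any such $a$ the map $b\mapsto\semg^*(ab)-\semg^*(a)\semg^*(b)$ is separately weak*-continuous in $b$ (multiplication in a W*-algebra is separately weak*-continuous, and $\semg^*$ is weak*-continuous as a Banach space adjoint) and vanishes on the weak*-dense Weyl span, hence on all $b$. A symmetric argument with $b$ fixed extends multiplicativity to all $a\in\Cs\Xis\raus\bidu$; the same density argument handles the involution. This yields (4), and (3) then follows by restriction via Prop.~\ref{prop:Heisenalg}, closing the circle.
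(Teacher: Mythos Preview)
Your proof is correct and follows essentially the same route as the paper: the $3\times3$ twisted positive-definiteness argument you give for $(1)\Rightarrow(2)$ is the paper's Lem.~\ref{lem:constchi} (phrased via rank one rather than via the determinant), and your extension of multiplicativity from Weyl operators to the bidual by separate weak*-continuity is exactly the paper's $(3)\Rightarrow(4)$. The only difference is organizational---the paper cycles through $(2)\Rightarrow(3)\Rightarrow(1)\Rightarrow(2)$ and treats $(3)\Leftrightarrow(4)$ separately, while you fold $(2)\Rightarrow(3)$ and $(3)\Rightarrow(4)$ into a single step $(2)\Rightarrow(4)$.
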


\begin{proof}
Let us begin by establishing an equivalent condition for (3) in terms of $S$ and $f$. Clearly, using the norm continuity of $\semg^*$ it suffices to establish that $\semg^*\bigl(W(\xi)W(\eta)\bigr)=\semg^*\bigl(W(\xi)\bigr)\semg^*\bigl(W(\eta)\bigr)$. Writing this out using
\eqref{Scovf} we get
\begin{equation}\label{homf}
  f(\xi+\eta)=e^{i\xi\cdt(\Delta\sigma)\eta/2}\ f(\xi)f(\eta).
\end{equation}
Clearly, this is satisfied when (2) holds, proving (2)$\Rightarrow$(3). Moreover, \eqref{homf} implies that $\xi\mapsto \abs{f(\xi)}$ is a homomorphism. Since $f$ is twisted positive definite, we must have $\abs{f(\xi)}\leq1$, and by the homomorphism property $1=\abs{f(\xi)}\abs{f(-\xi)}$ so $\abs{f(\xi)}\geq1$, i.e.,
$\abs{f(\xi)}=1$. This shows that (3)$\Rightarrow$(1). The direction (1)$\Rightarrow$(2) follows immediately from Lem.~\ref{lem:constchi} below (see also the remark following the proof).

It remains to verify the equivalence (3)$\Leftrightarrow$(4). Here the direction (4)$\Rightarrow$(3) is trivial because
$\CCR\Xis{}\subset\Cs\Xis{}\bidu$, and Weyl operators go to Weyl operators (see also the discussion in Sect.~\ref{Heisenqf}). For the converse direction, note that in a von Neumann algebra $x\to xy$ is weak*-continuous. So the relation $\semg^*(AB)=\semg^*(A)\semg^*(B)$, which is assumed to hold for $A,B\in\CCR$ transfers to arbitrary $B\in\Cs\Xis\raus\bidu$ by weak*-continuity, and because $\CCR$ is weak*-dense. Repeating this argument for the first factor extends the relation to all $A,B$.
\end{proof}

The following lemma is needed in the proof, and we separated it because it is of independent interest.

\begin{lem}\label{lem:constchi}
Let $\Xis{}$ be a vector space with antisymmetric form, and suppose that $\chi$ is a normalized $\sigma$-twisted positive definite function on $\Xi$. Suppose that $\abs{\chi(\eta)}=1$ for some $\eta\neq0$.
Then $\sigma\eta=0$ and, for all $\xi\in\Xi$,
\begin{equation}\label{chiis1}
\chi(\xi+\eta)=\chi(\xi)\chi(\eta).
\end{equation}
\end{lem}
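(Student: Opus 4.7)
The plan is to invoke the GNS construction attached to $\chi$. By the twisted positive definiteness, $\chi$ corresponds to a positive linear functional $\omega$ on $\mathrm{CCR}(\Xi,\sigma)$, and thus to a GNS triple $(\mathcal H_\omega,\pi_\omega,\Omega)$ with $\Omega$ a cyclic unit vector and $\chi(\xi)=\langle\Omega,\pi_\omega(W(\xi))\Omega\rangle$. I suppress $\pi_\omega$ in what follows.

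The first step is to exploit Cauchy--Schwarz: since $W(\eta)$ is unitary and $\|\Omega\|=1$,
\begin{equation*}
|\chi(\eta)|=|\langle\Omega,W(\eta)\Omega\rangle|\leq\|\Omega\|\,\|W(\eta)\Omega\|=1,
\end{equation*}
and equality $|\chi(\eta)|=1$ forces $W(\eta)\Omega$ to be a scalar multiple of $\Omega$. Evaluating the scalar then gives $W(\eta)\Omega=\chi(\eta)\Omega$, and in particular $W(\eta)^*\Omega=W(-\eta)\Omega=\overline{\chi(\eta)}\Omega$.

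The second step is to evaluate $\chi(\xi+\eta)=\langle\Omega,W(\xi+\eta)\Omega\rangle$ in two ways, using the Weyl relation \eqref{weylrel} to split $W(\xi+\eta)$ as a product in both orders. Writing $W(\xi+\eta)=e^{(i/2)\sigma(\xi,\eta)}W(\xi)W(\eta)$ and letting $W(\eta)$ act on the right yields
\begin{equation*}
\chi(\xi+\eta)=e^{(i/2)\sigma(\xi,\eta)}\,\chi(\eta)\chi(\xi),
\end{equation*}
whereas using $W(\xi+\eta)=e^{(i/2)\sigma(\eta,\xi)}W(\eta)W(\xi)$ and letting $W(\eta)^*$ act on the left bra gives
\begin{equation*}
\chi(\xi+\eta)=e^{-(i/2)\sigma(\xi,\eta)}\,\chi(\eta)\chi(\xi).
\end{equation*}

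Comparing the two expressions gives $e^{i\sigma(\xi,\eta)}=1$ for every $\xi\in\Xi$. Since $\xi$ may be rescaled continuously, this forces $\sigma(\xi,\eta)=0$ for all $\xi$, i.e.\ $\sigma\eta=0$. The phase then drops out of both formulas and yields the product rule $\chi(\xi+\eta)=\chi(\xi)\chi(\eta)$. The only delicate point is bookkeeping of the $\sigma$-signs in \eqref{weylrel}, which I expect to be routine; no approximation or limit is needed beyond the GNS setup already invoked in the proof of Bochner's theorem.
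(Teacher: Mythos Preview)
Your argument via GNS and the equality case of Cauchy--Schwarz is a genuinely different route from the paper's, which instead evaluates the $3\times 3$ twisted-positive-definiteness determinant at the triple $(-\eta,0,\xi)$ to obtain $\chi(\xi+\eta)=\chi(\xi)\chi(\eta)e^{i\sigma(\xi,\eta)/2}$ directly, and then produces the opposite sign via the substitution $(\xi,\eta)\mapsto(-\xi,-\eta)$ together with hermiticity of $\chi$. Your version is more conceptual (the identity is simply the statement that $\Omega$ is an eigenvector of the unitary $W(\eta)$), while the paper's stays entirely at the matrix level and needs no representation theory. Both arrive at the same pair of identities.

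There is, however, a gap in your final step---one that the paper's own argument also glosses over. Equating the two expressions only yields $(e^{i\sigma(\xi,\eta)}-1)\chi(\xi)=0$, not $e^{i\sigma(\xi,\eta)}=1$ for every $\xi$; your rescaling then needs $\chi(\lambda\xi)\neq0$ for small $\lambda$, i.e.\ continuity of $\chi$ at the origin, which the lemma does not assume. Without continuity both conclusions can in fact fail: take $\Xi=\Rl^2$ with the standard symplectic form and the nonregular Weyl representation on $\ell^2(\Rl)$ with counting measure, given by $(W(a,b)\psi)(x)=e^{i(ax+ab/2)}\psi(x+b)$. The vector state of $\Omega=(\delta_0+\delta_1)/\sqrt2$ is then a normalized $\sigma$-twisted positive definite $\chi$ with $\chi(2\pi,0)=1$, yet $(2\pi,0)\notin\ker\sigma=\{0\}$ and $\chi(2\pi,-1)=-\tfrac12\neq\tfrac12=\chi(0,-1)\chi(2\pi,0)$. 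In the paper's actual uses of the lemma (Props.~\ref{prop:noiseless} and~\ref{prop:retracts}) the function is a noise function and hence continuous by Prop.~\ref{prop:ChanOnW}; with that extra hypothesis your rescaling step is valid and the proof is complete.
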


\begin{proof}
Consider a $3\times3$-matrix $M$ of the form \eqref{twistedPosDef}. Abbreviating the matrix entries as
$M_{k\ell}=\chi(\xi_k-\xi_\ell)\,\exp({\textstyle \frac i2 \sigma(\xi_k,\xi_\ell)})$, it is of the form
\begin{equation}\label{mm3}
  M=\begin{pmatrix}1& M_{12}&\overline{M_{31}}\\ \overline{M_{12}}&1&M_{23}\\ M_{31}&\overline{M_{13}}&1
  \end{pmatrix}.
\end{equation}
Its determinant is \begin{equation}
  0\leq \det M=1+M_{12}M_{23}M_{31}+ \overline{M_{12}M_{23}M_{31}}- \abs{M_{12}}^2- \abs{M_{23}}^2- \abs{M_{31}}^2.
\end{equation}
Now take the triple of vectors as $(-\eta,0,\xi)$, where $\xi\in\Xi$ is arbitrary. Then
$M_{12}=\chi(-\eta)=\overline{\chi(\eta)}$, $M_{23}=\overline{\chi(\xi)}$, $M_{31}=\chi(\xi+\eta)\exp(-\frac{i}2\xi\cdt\sigma\eta)$.
In particular, $\abs{M_{12}}=\abs{\chi(\eta)}=1$, so this expression simplifies to
\begin{equation}
  0\leq\det M= -\bigl\vert M_{31}-\overline{M_{12}}\,\overline{M_{23}}\bigr\vert^2.
\end{equation}
This can only be positive if the absolute value vanishes, which means that
\begin{equation}
  \chi(\xi+\eta)=\chi(\xi)\chi(\eta)e^{\frac{i}2\xi\cdt\sigma\eta}.
\end{equation}
Changing $\xi\mapsto-\xi$, and  $\eta\mapsto-\eta$, which also satisfies the assumption of the lemma, every characteristic function in the last expressions changes to its complex conjugate, while the exponent does not. Hence the exponential factor has to be $1$. Since $\xi\mapsto\lambda\xi$ is also allowed, the exponent has to be zero.
\end{proof}

This shows that the maximal absolute value of $\chi$ can only be reached on the classical subsystem. We have not assumed that $\abs{\chi(\lambda\eta)}=1$ also holds for all scalar multiples $\lambda\eta$ as well. If that is the case, and $\chi$ is continuous, then $\chi(\eta)=\exp(i\mu\cdt\eta)$ for some $\mu\in\Xi$. We remark that this assumption may fail, and so, even in $1$ dimension, we cannot conclude from the assumptions of the lemma that $\chi$ is the characteristic function of a point measure. For example, the classical characteristic function of a measure supported by the integers is $2\pi$-periodic, so $\chi(2\pi)=1$, but except for a point measure we have $\abs{\chi(\eta)}<1$ for $0<\eta<2\pi$.

Let us recapitulate which of the basic operations are noiseless.
\begin{itemize}
\item The {\it states} with the homomorphism property, i.e. ($\omega(AB)=\omega(A)\omega(B)$), are only the pure states of classical systems, corresponding to point measures on $\Xi=\Xi_0$. Noiseless quantum states do not exist, which also excludes such states on hybrids with non-vanishing $\sigma$.
\item Noiseless {\it observables} are the projection valued ones: The homomorphism property implies $F(M)^2=\semg^*(\chi_M)^2=\semg^*(\chi_M^2)=\semg^*(\chi_M)$. When the observable is considered as acting on a function algebra $\CCb(\Xi\raus)$ this is the property of having a von Neumann-style functional calculus,
    $\semg^*(\Phi(A))=\Phi(\semg^*(A))$ for $\Phi:\Rl\to\Rl$. That is, postprocessing of outcomes with a function $\Phi$ is the same as applying this function to the operator in the functional calculus.
\item{}
Noiseless channels from an irreducible quantum system to itself act by unitary transformation, where the unitary operator belongs to the {\it metaplectic representation} \cite{Gosson} of the affine symplectic group.
\end{itemize}

\noindent The following proposition characterizes a further class of noiseless channels, namely those with a right inverse in the \Schroed\  picture.

\begin{prop}\label{prop:retracts}
Let $\semg_1:\Cs\Xis{_1}^*\to\Cs\Xis{_2}^*$ and $\semg_2:\Cs\Xis{_2}^*\to\Cs\Xis{_1}^*$ be quasifree channels such that $\semg_1\semg_2=\id$. Then
\begin{itemize}
\item[(1)] $\semg_1$ is noiseless, and $S_1:\Xi_2\to\Xi_1$ is injective,
\item[(2)] $\semg_2$ is an expansion, i.e., there is a system $(\Xi\env,\sigma\env)$ such that there is an isomorphism
$$\Xis{_1}\cong(\Xi_2\oplus\Xi\env,\sigma_2\oplus\sigma\env),$$
and $\semg_2=\id_2\otimes {\mathcal P\env}$, where ${\mathcal P\env}$ is a preparation of a $(\Xi\env,\sigma\env)$-system,
\item[(3)] under the isomorphism from (2), $S_1:\Xi_2\to\Xi_2\oplus\Xi\env$ is the embedding into the first summand.
\end{itemize}
Moreover, for $i=1,2$ if a channel $\semg_i$ satisfies the condition (i), then there is a channel $\semg_{3-i}$, so that $\semg_1$ and $\semg_2$ satisfy all the above conditions.
\end{prop}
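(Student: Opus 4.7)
The plan is to pass to the Heisenberg picture and test the relation $\semg_1\semg_2=\id$ on Weyl operators. Using \eqref{Scovf}, for any $\xi\in\Xi_2$ one computes
\begin{equation}
 \semg_2^*\semg_1^*\bigl(W_2(\xi)\bigr)=f_1(\xi)\,f_2(S_1\xi)\,W_2(S_2S_1\xi),
\end{equation}
which has to equal $W_2(\xi)$. Since the span of the Weyl operators is weak*-dense and they are linearly independent up to scalar, this forces $S_2S_1=\idty_{\Xi_2}$ and $f_1(\xi)f_2(S_1\xi)=1$ for every $\xi\in\Xi_2$. In particular $S_1$ is injective (and $S_2$ surjective). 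Because the $f_i$ are characteristic functions we have $\abs{f_i}\leq1$, so their product being $1$ forces $\abs{f_1(\xi)}=1$ on all of $\Xi_2$ and $\abs{f_2(\eta)}=1$ on $S_1(\Xi_2)$. By Prop.~\ref{prop:noiseless}, $\semg_1$ is then noiseless, $S_1^\top\sigma_1S_1=\sigma_2$ (so $S_1$ is a symplectic embedding), and $f_1(\xi)=e^{i\xi\cdt\eta_0}$ for some $\eta_0\in\Xi_2$. This already gives~(1).

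Next I construct the direct-sum decomposition. Set $\Xi\env:=\ker S_2\subset\Xi_1$. From $S_2S_1=\idty$ we get $S_1(\Xi_2)\cap\Xi\env=\{0\}$ and a vector-space splitting $\Xi_1=S_1(\Xi_2)\oplus\Xi\env$. To make this a symplectic splitting I apply Lem.~\ref{lem:constchi} to $f_2$, whose relevant form is $\Delta\sigma_2=\sigma_1-S_2^\top\sigma_2S_2$: since $\abs{f_2}=1$ on the whole linear subspace $S_1(\Xi_2)$, the lemma yields $\Delta\sigma_2(S_1\xi)=0$ for all $\xi\in\Xi_2$. Using $S_2S_1=\idty$ this rearranges to $S_1^\top\sigma_1=\sigma_2S_2$; evaluated on $\eta\in\ker S_2$ the right-hand side vanishes, so $S_1(\Xi_2)$ is $\sigma_1$-orthogonal to $\Xi\env$. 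Setting $\sigma\env:=\sigma_1|_{\Xi\env}$, we obtain the symplectic isomorphism $(\Xi_1,\sigma_1)\cong(\Xi_2\oplus\Xi\env,\sigma_2\oplus\sigma\env)$ under which $S_1$ embeds into the first summand and $S_2$ is the projection onto it. This proves~(3).

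Lem.~\ref{lem:constchi} also supplies the multiplicativity $f_2(\zeta+\nu)=f_2(\zeta)f_2(\nu)$ whenever $\abs{f_2(\zeta)}=1$. Applied to $\zeta=\xi\oplus0$ and $\nu=0\oplus\eta$ this factorizes $f_2(\xi\oplus\eta)=f_2(\xi\oplus0)\,f_2(0\oplus\eta)$. The first factor equals $\overline{f_1(\xi)}=e^{-i\xi\cdt\eta_0}$, a character on $\Xi_2$, while the second is, by inspection of the twisted positive-definiteness condition restricted to $0\oplus\Xi\env$, the characteristic function of a state $\tau_e$ on $(\Xi\env,\sigma\env)$. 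A noise function that is a character on all of $\Xi_2$ corresponds to a phase-space translation, which is itself a noiseless invertible quasifree channel, i.e.\ an isomorphism in the category of Sect.~\ref{sec:combine}. Absorbing this translation into the identification $\Xis{_1}\cong(\Xi_2\oplus\Xi\env,\sigma_2\oplus\sigma\env)$ produces precisely $\semg_2=\id_2\otimes{\mathcal P\env}$ with ${\mathcal P\env}$ preparing $\tau_e$, giving~(2).

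For the converse, assume (i) $\semg_1$ is noiseless with $S_1$ injective. Noiselessness means $S_1$ is a symplectic embedding and $f_1$ a character, so by the elementary structure theory of antisymmetric forms (first split off $\ker\sigma_1$, then apply the standard decomposition on the symplectic quotient) one can find a complement $\Xi\env$ of $S_1(\Xi_2)$ in $\Xi_1$ that is $\sigma_1$-orthogonal to $S_1(\Xi_2)$. Setting up $\semg_2=\id_2\otimes{\mathcal P\env}$ with a preparation chosen to compensate the translation encoded by $f_1$ yields $\semg_1\semg_2=\id$ by the characteristic-function calculation of the first paragraph. For (ii) $\semg_2=\id_2\otimes{\mathcal P\env}$, let $\semg_1$ be the partial trace onto $\Xi_2$ (the noiseless channel with $S_1\xi=\xi\oplus0$, $f_1\equiv1$); then $\semg_1\semg_2=\id$ is immediate. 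The main obstacle will be the third paragraph: carefully extracting the factorization of $f_2$ across the two summands of the newly found symplectic decomposition and identifying the character factor as an admissible change of isomorphism rather than genuine noise.
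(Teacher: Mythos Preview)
Your argument for items (1)--(3) is correct and follows the same route as the paper: the concatenation formula \eqref{compose} gives $S_2S_1=\idty$ and $f_1(\xi)f_2(S_1\xi)=1$, hence $|f_1|\equiv1$; Lem.~\ref{lem:constchi} applied to $f_2$ then yields $(\Delta\sigma_2)S_1=0$, from which the $\sigma_1$-orthogonal splitting $\Xi_1=S_1(\Xi_2)\oplus\ker S_2$ follows. You are in fact more explicit than the paper in handling the character $f_2|_{S_1(\Xi_2)}=\overline{f_1}$ by absorbing it as a phase-space translation into the isomorphism.

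There is, however, a genuine gap in your treatment of the ``Moreover'' clause for $i=1$ (which the paper's proof does not address either). Noiselessness only gives $S_1^\top\sigma_1 S_1=\sigma_2$; it does \emph{not} guarantee that $S_1(\Xi_2)$ has a $\sigma_1$-orthogonal complement in $\Xi_1$. Take $\Xi_2=(\Rl,0)$, $\Xi_1=(\Rl^2,\sigma_1)$ with $\sigma_1$ the standard symplectic form, and $S_1 t=(t,0)$: then $S_1(\Xi_2)$ is Lagrangian, so $S_1(\Xi_2)^{\perp_{\sigma_1}}=S_1(\Xi_2)$, and no $\sigma_1$-orthogonal complement exists. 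Worse, no right inverse $\semg_2$ exists at all in this example: $\semg_1$ sends a quantum state to (a translate of) its position marginal, and a point measure $\delta_0\in\Cs(\Xi_2,0)^*$ cannot arise as the position distribution of any quantum state, so $\semg_1\semg_2=\id$ is impossible. Thus the ``Moreover'' for $i=1$ requires an additional hypothesis---equivalently, that every $\sigma_2$-null direction of $\Xi_2$ be mapped by $S_1$ into $\ker\sigma_1$---and your appeal to ``elementary structure theory of antisymmetric forms'' glosses over exactly this obstruction.
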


\begin{proof}
The composition relation \eqref{compose} gives that $\idty=S_2S_1$ and $f_1(\xi)f_2(S_1\xi)=1$. Since $\abs{f_i(\xi)}\leq1$ by positive definiteness, we must have $\abs{f_1(\xi)}=\abs{f_2(S_1\xi)}=1$ for all $\xi$.
In particular, $\semg_1$ must be noiseless, and since $S_1$ has a left inverse, it is injective. This shows (1).

Now consider (2). $f_2$ is twisted positive definite for $\Delta\sigma=\sigma_1-S_2^\top\sigma_2S_2$. Moreover, on the subspace $S_1\Xi_1$ this function has the maximal modulus, so by Lem.~\ref{lem:constchi} the range of $S_1$ is in the null space of $\Delta\sigma$. This is equivalent to the matrix equation $(\Delta\sigma)S_1=0$. Using $S_2S_1=\idty$ gives
\begin{equation}\label{fkha}
  \sigma_1S_1=S_2^\top\sigma_2.
\end{equation}
Since $S_1S_2$ is an idempotent operator, every $\xi\in\Xi_1$ is naturally split as $\xi=S_1S_2\xi+(\idty-S_1S_2)\xi$, where the first summand is obviously in the range $S_1\Xi_2$ and the second satisfies $S_2(\idty-S_1S_2)\xi=(S_2-S_2)\xi=0$. Therefore, by \eqref{fkha}$^\top$, these parts are $\sigma_1$-orthogonal:
\begin{equation}\label{hyposympdirsum}
  S_1\xi\cdot\sigma_1(\idty-S_1S_2)\eta=\xi\cdot\sigma_2S_2(\idty-S_1S_2)\eta=0.
\end{equation}
Moreover, $S_1:\Xi_2\to\Xi_1$ is an isomorphism onto its range, changing $\sigma_2$ to the restriction of $\sigma_1$. This proves the decomposition with $\Xi\env=(\idty-S_1S_2)\Xi_1$ and $\sigma\env$ the restriction of $\sigma_1$ to this subspace. The action of $S_2$ is very simple in these terms: It acts separately on the two summands, which makes the corresponding channel a tensor product. On the first summand, $S_1\Xi_2$, it just inverts the isomorphism~$S_1$. Hence, after identifying the $2$ subsystem of $\Xi_1$ with $\Xi_2$, it acts like the identity channel on this part. The second summand $\Xi\env$ is annihilated by $S_2$, which is the hallmark of a preparation (see above). The state prepared lives on $(\Xi\env,\sigma\env)$ and has characteristic function $\chi\env(\xi\env)=f_2(\xi\env)$.
\end{proof}

\subsection{Noise factorization and dilations}\label{sec:noisefac}
The Stinespring dilation is one of the most powerful tools in quantum information theory. In the standard setting, it is a structure theorem for completely positive maps $\semg^*:\AA\raus\to\BB(\HH\rin)$, where we have added the star on $\semg$ and the labels ``in'' and ``out'' to be consistent with the above notation. It provides an additional Hilbert space $\KK$, an isometry $V:\HH\rin\to\KK$, and a representation $\pi:\AA\raus\to\BB(\KK)$ such that
\begin{equation}\label{stinespring}
  \semg^*(A)=V^*\pi(A)V.
\end{equation}
For a quasifree channel, $\semg^*$ will map some subalgebra $\AA\raus\subset \Cs\Xis\raus\bidu$ into a representation of a subalgebra $\AA\rin\subset \Cs\Xis\rin\bidu$, so that there are many choices to be made, and consequently many variations on the dilation theme. All these variations have the structure of factorizations through an intermediate system, $\BB(\KK)$. The first step (in the direction from input to output) is the embedding of the input states in $\HH\rin$ into this ``larger system'', an expansion. The second step, done here by the representation $\pi$, is a noiseless operation in the sense of the previous paragraph. These features can be phrased entirely in the category of quasifree maps. What is more, the factorization can be done for arbitrary quasifree channels. This is the content of Thm.~\ref{thm:noisedec}. We will discuss later how it relates to Stinespring-like results.

Note that the channels are written here in the \Schroed\  picture, so in the factorization $\semg=\semg_N\semg_E$, the expansion $\semg_E$ is applied to the physical system first, and $\semg_N$ acts on the expanded system. If we write the expansion channel as tensoring with a fixed state $\omega_E$, the factorization is written as
\begin{equation}\label{factorChan}
  \semg\omega=\semg_N(\omega\otimes\omega_E).
\end{equation}

\begin{thm}\label{thm:noisedec}
Every quasifree channel can be decomposed into $\semg=\semg_N\semg_E$, where $\semg_E$ is an expansion and $\semg_N$ is a noiseless channel. The phase space of the extension system
is $\Xi_\Delta=\Xi\raus$ as a vector space, but with antisymmetric form $\Delta\sigma=\sigma\raus-S^\top\sigma\rin S$. The salient linear maps and noise functions are
\begin{equation}\label{gaussdilate}
  \begin{array}{lll}
    S_N:\Xi\raus\to\Xi\rin\oplus\Xi_\Delta  \qquad &S_N\xi=S\xi\oplus\xi    & f_N(\xi)=1\\
    S_E:\Xi\rin\oplus\Xi_\Delta\to\Xi\rin    &S_E(\xi_1\oplus\xi_2)=\xi_1   \qquad&f_E(\xi_1\oplus\xi_2)=f(\xi_2).
  \end{array}
\end{equation}
\end{thm}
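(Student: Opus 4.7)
The plan is to proceed by direct verification: the candidate data $(S_N,f_N)$ and $(S_E,f_E)$ have been written down explicitly, so the task reduces to (i) checking that each defines a bona fide quasifree channel of the claimed type, and (ii) checking that their concatenation reproduces the original data $(S,f)$ via the composition rule \eqref{compose}. This approach bypasses any abstract Stinespring-style construction and makes the role of the auxiliary system $(\Xi_\Delta,\Delta\sigma)$ transparent: it is literally the noise state space packaged as a system.

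First I would check the composition. Concatenating $\semg_N$ after $\semg_E$ gives, by \eqref{compose}, total linear map $S_E\circ S_N$ and total noise function $\xi\mapsto f_N(\xi)\,f_E(S_N\xi)$. Evaluating: $S_E(S_N\xi)=S_E(S\xi\oplus\xi)=S\xi$, and $f_N(\xi)\,f_E(S\xi\oplus\xi)=1\cdot f(\xi)=f(\xi)$. So whatever the pieces individually are, they compose to the given channel.

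Next I would check $\semg_N$. By Prop.~\ref{prop:noiseless} it suffices to verify that the constant function $f_N\equiv 1$ is $\Delta\sigma_N$-twisted positive definite, where $\Delta\sigma_N=\sigma\raus-S_N^\top(\sigma\rin\oplus\Delta\sigma)S_N$. A one-line computation, using $S_N^\top(\eta_1\oplus\eta_2)=S^\top\eta_1+\eta_2$ and the definition $\Delta\sigma=\sigma\raus-S^\top\sigma\rin S$, yields $S_N^\top(\sigma\rin\oplus\Delta\sigma)S_N=S^\top\sigma\rin S+\Delta\sigma=\sigma\raus$, hence $\Delta\sigma_N=0$. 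Constant $1$ is trivially $0$-twisted positive definite, so $\semg_N$ is a well-defined quasifree channel; since $\abs{f_N}\equiv1$, it is noiseless by Prop.~\ref{prop:noiseless}.

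For $\semg_E$ the corresponding antisymmetric form is $\Delta\sigma_E=(\sigma\rin\oplus\Delta\sigma)-S_E^\top\sigma\rin S_E$. With $S_E^\top\eta=\eta\oplus 0$, a direct calculation gives $\Delta\sigma_E=0\oplus\Delta\sigma$. The required twisted positive definiteness of $f_E(\xi_1\oplus\xi_2)=f(\xi_2)$ with respect to $0\oplus\Delta\sigma$ is then exactly the twisted positive definiteness of $f$ with respect to $\Delta\sigma$ (the $\xi_1$-components drop out of both the matrix argument and the phase), which holds by Prop.~\ref{prop:ChanOnW} applied to the original $\semg$. Continuity and normalization are inherited from $f$. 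Finally, to see that $\semg_E$ is an expansion, note that $f_E$ factorizes as $1(\xi_1)\cdot f(\xi_2)$ and $S_E$ acts as $\id\raus\oplus 0$, so by \eqref{tensorcompose} $\semg_E=\id_{\Xi\rin}\otimes\mathcal{P}\env$, where $\mathcal{P}\env$ is the preparation channel (with $\Xi\rin_{\mathcal P}=\{0\}$) of the noise state $\tau\in\Cs(\Xi_\Delta,\Delta\sigma)^*$ whose characteristic function is $f$. The only genuinely delicate step is the identity $S_N^\top(\sigma\rin\oplus\Delta\sigma)S_N=\sigma\raus$; once that is in hand, Prop.~\ref{prop:noiseless} and Prop.~\ref{prop:ChanOnW} finish the argument mechanically.
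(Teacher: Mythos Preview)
Your proof is correct and follows essentially the same approach as the paper: both compute the difference forms $(\Delta\sigma)_N=0$ and $(\Delta\sigma)_E=0\oplus\Delta\sigma$ directly, verify that the given noise functions satisfy the corresponding twisted positive definiteness conditions, and check the concatenation via \eqref{compose}. The only differences are cosmetic --- you do the composition check first and add a line explicitly identifying $\semg_E$ as $\id\otimes\mathcal{P}\env$ via \eqref{tensorcompose}, which the paper leaves to the surrounding discussion.
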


\begin{proof}
Let us first verify that the given data for $\semg_N$ and $\semg_E$ satisfy the positivity condition for channels. The respective difference forms are
\begin{eqnarray}
   \xi\cdt(\Delta\sigma)_N\eta&=& \xi\cdt\sigma\raus\eta- (S_N\xi)\cdt(\sigma\rin\oplus\sigma_\Delta)S_N\eta\nonumber\\
                     &=&          \xi\cdt\sigma\raus\eta-  (S\xi)\cdt\sigma\rin S\eta-\xi\cdt\sigma_\Delta\eta=0     \label{delNdilate}  \\[7pt]
   (\xi_1\oplus\xi_2)\cdt(\Delta\sigma)_E(\eta_1\oplus\eta_2)&=&(\xi_1\oplus\xi_2)\cdt(\sigma\rin\oplus\sigma_\Delta)(\eta_1\oplus\eta_2)
                                                                  - S_E(\xi_1\oplus\xi_2)\cdt\sigma\rin S_E(\eta_1\oplus\eta_2)  \nonumber\\
                     &=& \xi_1\cdt\sigma\rin\eta_1+ \xi_2\cdt\sigma_\Delta\eta_2- \xi_1\cdt\sigma\rin\eta_1=\xi_2\cdt\sigma_\Delta\eta_2. \label{delEdilate}
\end{eqnarray}
Hence $(\Delta\sigma)_N=0$, as required of a noiseless channel, which makes $f=1$ a legitimate choice. Moreover, $(\Delta\sigma)_E=(0\oplus\sigma_\Delta)$, which is exactly the noise function for which $f_E$ has to be twisted positive definite. Hence $\semg_N$ and $\semg_E$ are well defined.

It remains to verify the concatenation relation. Of course, the product of two channels in our class is again in the class, and there is a simple general formula for the data $(S',f')$ of the product. By \eqref{compose}, this gives
\begin{equation}\label{prodchan}
  S'=S_ES_N=S  \quand  f'(\xi)=f_E(S_N\xi)f_N(\xi)=f(\xi).
\end{equation}
Hence we have $\semg_N\semg_E=\semg$, as claimed.
\end{proof}

When only one fixed channel $\semg$ is under consideration, the above representation may be very wasteful. For example, when $\semg$ is itself noiseless, one can clearly choose $\semg_E$ to be the identity, and there is no need to adjoin an additional system $\Xi_\Delta$, i.e., \eqref{gaussdilate} is not a ``minimal'' factorization. In order to move towards minimality, for a general quasifree channel, consider the noise function $f:\Xi\raus\to\Cx$. Let $N\subset\Xi_\Delta$ denote the largest subspace on which $\abs{f(\xi)}=1$. Then by Lem.~\ref{lem:constchi}, $f$ is a character on $N$, and hence of the form $f_1(\xi)=\exp(i\lambda\cdt\xi)$ for some $\lambda\in\Xi\raus$. Here $\lambda$ is not uniquely determined, because only the scalar products $\lambda\cdt\xi$ with $\xi\in N$ appear, but any choice allows us to proceed. 
The remainder $f'(\xi)=f(\xi)/f_1(\xi)$ is then a legitimate noise function with $f'(\eta+\xi)=f'(\eta)$ for $\xi\in N$. We may therefore consider $f'$ as a function $f\drin$ on the quotient $\Xi\drin=\Xi\raus/N$. Denoting the quotient map by $S\drin:\Xi\raus\to\Xi\drin$, this amounts to $f(\xi)=f_1(\xi)f\drin(S\drin\xi)$. By Lem.~\ref{lem:constchi}, $N$ is also contained in the null space of $\Delta\sigma$, so this form also passes to the quotient as $\sigma\drin$.
This gives an alternative noise factorization $\semg=\semg_N\semg_E$, closely related to \eqref{gaussdilate}, but with the intermediate system $(\Xi_\Delta,\Delta\sigma)$ replaced by $(\Xi\drin,\sigma\drin)$,
\begin{equation}\label{gaussdilmin}
  \begin{array}{lll}
    S_N:\Xi\raus\to\Xi\rin\oplus\Xi\drin  \qquad &S_N\xi=S\xi\oplus S\drin\xi    & f_1(\xi)=\exp(i\lambda\cdt\xi)\\
    S_E:\Xi\rin\oplus\Xi\drin\to\Xi\rin    &S_E(\xi_1\oplus\xi_2)=\xi_1   \qquad&f_E(\xi_1\oplus\xi_2)=f\drin(\xi_2).
  \end{array}
\end{equation}
The map $S_N$ in this construction is connected to the previous one by another noiseless channel based on the quotient map $S\drin$. That is, the modification just described moves in the direction of including as much of the channel into the noiseless part as possible. This follows a categorical approach described in \cite{Westerbaan} as the {\it Paschke dilation}. This generalizes the Stinespring construction to the category of W*-algebras with normal completely positive maps, when the range in the Heisenberg picture (corresponding to the input in physical terms) is no longer of the form $\BB(\HH)$. Fig.~\ref{fig:Paschke} summarizes the defining minimality condition.
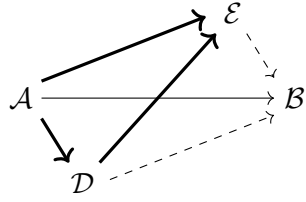
\begin{figure}[ht]
  \begin{center}
  \begin{tikzpicture}[scale=0.9]
    \node[scale=1.1] at (0,0) {$\AA$};
    \node[scale=1.1] at (0.9,-1.25) {$\DD$};
    \node[scale=1.1] at (3.1,1.25) {$\EE$};
    \node[scale=1.1] at (4,0) {$\BB$};
    \draw[very thick,->] (0.3,0.25)--(2.7,1.20);
    \draw[very thick,->] (0.3,-0.3)--(0.7,-0.95);
    \draw[->] (0.3,0)--(3.7,0);
    \draw[very thick,->] (1.15,-0.95)--(2.85,0.95);
    \draw[dashed,->] (1.3,-1.2)--(3.7,-0.25);
    \draw[dashed,->] (3.3,0.95)--(3.7,0.3);
  \end{tikzpicture}
  \captionsetup{width=0.8\textwidth}
  \caption{Comparison of factorizations of a morphism $\AA\to\BB$ in the category of von Neumann algebras with normal completely positive unital maps.
  The thick arrows represent *-homomorphisms, corresponding to noisefree channels. The upper factorization has a ``larger'' noisefree factor.
  The Paschke dilation is defined as the one with the largest noisefree factor. }
  \label{fig:Paschke}
\end{center}
\end{figure}

A Paschke dilation, or a factorization $\semg=\semg_N\semg_E$ can be turned into a Stinespring dilation in the usual sense by taking the input algebra as faithfully represented on a Hilbert space $\HH$, so in the Heisenberg picture the channel maps into $\BB(\HH)$, and hence we realize the standard setting for the Stinespring construction. This step hardly depends on the hybrid structure: We will only use that $\semg_E$ is an expansion, and $\semg_N$ is a *-homomorphism. As seen from the $\mu$-free setting, this requires the choice of some measure $\mu$ on the classical part of the input phase space. Then $\HH=\HH_\mu=\HH_1\otimes L^2(\Xi\rino,\mu)$ is the space of a standard representation (see Sect.~\ref{sec:stdrep}), and the input von Neumann algebra is $\AA\rin=\BB(\HH_1)\vNotimes L^\infty(\Xi\rino,\mu)\subset\BB(\HH_\mu)$. On the output side, we define the von Neumann algebra $\AA\raus$ as the weak closure of the GNS representation of an output state $\omega\raus=\semg\omega\rin$, when $\omega\rin$ is a suitable faithful input state. The expansion state is some normal state $\omega_E$  on a von Neumann algebra $\AA_E$. Thus $\semg_E\rho=\rho\otimes\omega_E$, a state on $\AA\rin\otimes\AA_E$, and the factorization uses some normal *-homomorphism $\semg_N^*:\AA\raus\to\AA\rin\vNotimes\AA_E$.
Now, let $(\HH_E,\pi_E,\Omega_E)$ denote the GNS-representation of the expansion state. Then we define
\begin{equation}\label{Vexpand}
\begin{array}{lll}
    V:\HH\to\HH\otimes\HH_E, &\quad\text{with}\ V\phi=\phi\otimes\Omega_E, \\
    \pi:\AA\raus\to\BB(\HH\otimes\HH_E), &\quad\text{with}\ \pi=(\id\otimes\pi_E)\semg^*.
  \end{array}
\end{equation}
Then $V^*(\id\otimes\pi_E)(A\otimes B)V=\omega_E(B) A=\semg_E(A\otimes B)$, and composing that with $\semg_N^*$  we get $\semg^*(X)=\semg^*_E\semg^*_N(X)=V^*(\id\otimes\pi_E)(\semg^*_N(X))V=V^*\pi(X)V$.
Hence, \eqref{Vexpand} defines a Stinespring dilation, which is, however, usually not the unique minimal one. Of course, the minimal representation is contained in this by choosing an appropriate subspace of $\HH_\mu\otimes\HH_E$ and restricting $\pi$ accordingly. Since that depends on further details of the channels involved, we will not pursue this here.

\section{Basic physical operations}\label{sec:BasicOps}
In the unified picture given here, every operation, including preparations and measurements, is given by a quasifree channel. The purpose of this section is to advertise this unification by showing how basic quantum operations fit into the framework. We will assume only the basic definitions (Sect.~\ref{sec:defqf}) and the parametrization of channels by a linear map $S:\Xi\raus\to\Xi\rin$ and the noise function $f$, respectively the noise state $\tau$. Typically, $S$ specifies the kind of operation one is considering, the number of classical/quantum inputs/outputs, and how they are basically related. It will typically be fixed at the beginning of each of the subsections below. This fixes a hybrid system $(\Xi\raus,\Delta\sigma)$, and hence the possible noise states $\tau$, respectively noise functions $f$. While the knowledge of the definitions suffices to verify how the respective examples fit in the general framework, we do sometimes draw on the general results above or illustrate them in the particular case.

\subsection{States}
States are the mathematical description of a system preparation. The input system is, therefore, a trivial one, $\Xi\rin=\{0\}$. Hence $S=0$, and $\Delta\sigma=\sigma\raus$. The positivity condition for $f$ thus demands that $f$ is a characteristic function of a standard state for the out-system. There is no further condition, i.e., {\it all states} are quasifree channels in this sense. We caution the reader that this is in contrast to another well-established use of the term, by which only Gaussian states are called ``quasifree'' \cite{qfstates,fannes}.

    In the theory of channel capacity, e.g., for the Holevo bound \cite{Holevo73}, \cite[12.3]{Wilde}, one needs {\bf state ensembles} (or ``assemblages''), usually written as a collection of states with probability weights. When hybrids are considered as systems in their own right, this is just the same as a state on a hybrid. This view of state ensembles naturally extends also to continuous ensembles, in which the convex weights are replaced by a non-discrete measure.

\subsection{Disturbance}\label{sec:disturb}
The word disturbance always refers to a situation deviating from an ideal, in our case, a deviation from the identity channel. That is, we look at how much the output states differ from the inputs. This requires
input and output to be of the same type, i.e., $\Xis\rin=\Xis\raus=\Xis{}$, and since the ideal channel, or `no disturbance', should be a special case, we choose $S=\idty$.

Then $\Delta\sigma=0$, so the condition on the noise function $f$ is the classical Bochner condition. Hence $f$ is the Fourier transform of a probability measure $\tau$, appropriately called the noise measure. The channel acts as
    \begin{equation}\label{noisechan}
      \semg\omega=\int\tau(d\xi)\ \alpha_\xi(\omega)=\omega\ast\tau,
    \end{equation}
where the convolution is taken in the sense of Def.~\ref{def:stateconvolve}.

The size of the noise can be ascertained in different ways. A norm bound (cf.~Cor.~\ref{cor:statechannel}) is $\cbnorm{\id-\semg}=\sup_\rho\norm{\rho-\semg\rho}_1=\norm{\delta_0-\nu}_1$, where $\delta_0$ is the point measure at $0$, and the last norm is the norm for classical states, also known as the total variation. If we decompose $\nu=(1-\lambda)\delta_0+\lambda\nu'$ for some probability measure with $\nu'(\{0\})=0$ we get
$\norm{\delta_0-\nu}_1=2\lambda$. That is, this norm measure of noise is only small if we have a large convex component of $\semg$ which is equal to $\id$. In particular, a channel that introduces a small shift ($\nu=\delta_\xi$, $\xi\approx0$) is always at a maximal distance. Better measures of the noise for many purposes are variances or, more generally, transport distances \cite{Villani}. In many cases, it is not necessary to condense the size of the noise into a single number, and the most accurate description is the noise measure itself.

\subsection{Observables}\label{sec:obs}
An {observable} is  a channel with classical output, i.e., $\sigma\raus=0$, and $\Xi\raus$ is the space of measurement outputs. In the quasifree setting, the observable automatically gets a covariance property with respect to shifts of the outputs.
The theory laid out in Sect.~\ref{sec:funcobs} shows that the two ways of looking at an observable, namely as a positive operator-valued measure (POVM) on the one hand and as an operator on continuous functions on the other, are equivalent Heisenberg pictures of such a channel. For the POVM view, we have to identify, for every measurable set $M\subset\Xi\raus$, an effect operator $F(M)$ on the input system. Thus we need a Heisenberg picture map $\semg^*$ which is  well defined on the indicator function $1_M\in\umeas(\Xi\raus,0)$. The appropriate Heisenberg picture is thus $\semg^*:\umeas(\Xi\raus,0)\to\umeas\Xis\rin$, and the positive operator-valued measure describing the observable is $F(M)=\semg^*(1_M)$. Equivalently, we can consider the observable as a map on bounded continuous functions $\phi$ on $\Xi\raus$, such that  $\semg^*\phi=\int F(d\xi) \phi(\xi)$.

The further characterization of the class of covariant observables so described depends on the range of $S:\Xi\raus\to\Xi\rin$, and especially on the restriction of $\sigma\rin$ to the range $S\Xi\raus$.
This is basically the question of whether the quantities measured are subject to a quantum uncertainty constraint or not. We will consider the two extreme cases, a position observable and a phase space or position and momentum observable, separately below. In either case, we take $S$ to be injective because otherwise, we would have directions in the output space that have distributions not depending on the quantum input.

By virtue of \eqref{trcov},  quasifree observables fit into the framework of observables covariant with respect to a projective unitary representation of a group $G$. In this traditional subject, \cite{Davies,holevo_probabilistic_book}, the basic construction of all covariant observables uses a covariant version \cite{scutaru} of the Stinespring dilation (called Naimark's dilation for the case of classical output) to reduce the construction to the noiseless, i.e., projection valued case, which is then solved by Mackey's theory of induced representations (see \cite{cattaneo}, and \cite{screen} for a worked example). What has apparently not been considered in detail was the nature of the noise. In our framework, there is a clear distinction of the position vs.\ the phase space case, requiring classical vs.\ quantum noise. We, therefore, treat these cases separately below.

A traditional subject in the general theory is the existence of a direct formula for the output probability density at a point in the outcome set. If such a formula exists, it will be given by the expectation value of a positive possibly unbounded operator, which is called the operator-valued \RadNy\ density of the observable (see e.g., \cite[Sect.~IV.2.]{holevo_probabilistic_book}, \cite[Sect.~I.5.G]{Schroeck}, and \cite[Thm.~4.5.2]{Davies} for the compact group case). That is, we are looking for a family of positive, possibly unbounded operators $\dot F(x)$ such that the observable is expressed as
\begin{equation}\label{RNposition}
      \semg^*(g)=\int dx\ \dot F(x)\ g(x).
\end{equation}
Recall that, following \eqref{alpha}, our convention for the action of translations is $(\alpha_xg)(y)=g(x+y)$. So the covariance condition  \eqref{trcov} translates to $\alpha_{\xi}\bigl(\dot F(x)\bigr)=\dot F(x-S^\top\xi)$. Now since $S$ is injective, $S^\top$ is onto, so this equation determines the function $\dot F(x)$ from one of the values, say  $\dot F(0)=:\dot F$:
\begin{equation}\label{FdotCov}
      \dot F(x)=\alpha_{-\xi}(\dot F),\quad\text{for any }\xi\in\Xi\rin\ \text{ such that}\ S^\top\xi=x.
\end{equation}
Since $S^\top$ might have a kernel, this also implies the invariance of $\dot F$ under $\alpha_\xi$ with $S^\top\xi=0$.

Of course, there is also an expression for $\dot F$ in terms of the noise function $f$, since both quantities determine the observable. For that we put $g(x)=\exp(ik\cdt x)$, i.e., $g=W\raus(k)$, in the above equation, and solve for $\dot F$ by inverse Fourier transform. With $n={\dim\Xi\raus}$ we get:
\begin{equation}\label{FdotInt}
      \dot F= \frac1{(2\pi)^n}\int dk f(k) W\rin(Sk).
\end{equation}
In general, e.g., for the canonical position observable, neither $\dot F$ nor this integral makes sense. However, with sufficient noise, seen by the decay of $f$ at infinity, both do.

\subsubsection*{Position observables}
The canonical {position observable} of a purely quantum system belongs to the selfadjoint operators $Q_j$ from Sect.~\ref{sec:setup}. The characteristic function of the output probability distribution is hence the expectation of $\exp(ik\cdot Q)=W(k,0)$. So this is quasifree with $\Xi\raus=\{k\}=\Rl^n$ and
\begin{equation}\label{eq:posobs}
  Sk=(k,0),
\end{equation}
when the variables are arranged as described in Sect.~\ref{sec:setup}. Of course, one could also include some classical hybrid variables. Since the noise function vanishes, the observable is projection valued, which can be said in two equivalent ways, namely that $F(M)$ is always a projection or that $\semg^*$ is a homomorphism (also compare Prop.~\ref{prop:noiseless}). For any input density operator $\rho$, we write $\semg\rho=\rho^Q$, and call it the position distribution of $\rho$. Similarly, we define $\rho^P$ as the momentum distribution.

The beauty of the quasifree formalism is here that it automatically includes noisy versions. These are characterized by choosing the same $S$, but allowing $f$ to be more general. This defines the class of generalized position observables, which share the covariance condition with the canonical one. The structure theory is then immediate: Since $\Delta\sigma=0$ the noise is necessarily classical, so the most general position observable has the output distribution $\nu\ast\rho^Q$, where $\nu$ is some fixed noise measure on position space which is independent of $\rho$, and $\rho^Q$ is the output distribution of the standard position observable. Thus we can always think of such a measurement as executing the standard one and then adding, from a statistically independent source, noise with distribution~$\nu$.

When the noise distribution has a \RadNy\ density $\dot\nu$ with respect to the Lebesgue measure, we have $\dot F=\dot\nu(Q)$ in the functional calculus of the commuting selfadjoint operators $Q_k$. In contrast, for the canonical observable itself, the expectation of $\dot F$ in the state vector $\Psi$ should be $\abs{\Psi(x)}^2$, which might be given a meaning as a sesquilinear form on Schwartz space. But there is no closable operator $\dot F$ corresponding to this. This is also seen in the difficulty of making sense of \eqref{FdotInt}.

\subsubsection*{Phase space observables}
Here we demand a joint measurement of all positions and momenta. So we have $\Xi\raus=\Xi\rin$ and $S=\idty$, but the symplectic forms are different, namely the standard quantum one on $\Xi\rin$ and $0$ on $\Xi\raus$. Hence, $\Delta\sigma=-\sigma\rin$, and the admissible noise functions are exactly the characteristic functions of quantum states $\tau$.
Hence the relation \eqref{Tchi} is exactly that for a {\bf convolution} of quantum states in the sense of \cite{QHA} and Def.~\ref{def:stateconvolve}. When $\tau$ is the quantum state defining the observable, and $\rho$ is the input state, the output distribution is thus $\tau\ast\rho$. Comparing the expression \eqref{convolvetofct} with \eqref{RNposition} we find the \RadNy\ density of the POVM to be
\begin{equation}\label{densityPS}
      \dot F=\beta_-(\tau).
\end{equation}
This is a density operator in two different meanings of the word: A \RadNy\ density, and also a positive operator with trace $1$, provided the correct normalization of phase space Lebesgue measure (cf.~\cite{QHA}) is used.
This characterization of covariant phase space observables is well-known \cite{Davies,QHA,holevo_probabilistic_book}. The Gaussian special case is known in quantum optics as the Husimi distribution or Q-function of $\rho$. But as the quasifree formalism clearly indicates, any $\tau$, pure or mixed, will work analogously.

Of course, such a joint position/momentum measurement necessarily includes errors, which is the subject of {\bf measurement uncertainty} relations \cite{BLW}. By this, we mean any relation expressing that one can either get a fairly good position measurement with large errors for momenta or conversely. For uncertainty relations, the covariance condition is an unwanted restriction, but the proof of the general case \cite{BLW} works via showing that among the optimal solutions, there is always a covariant one. This makes the tradeoffs extremely easy to describe. Indeed, the position marginal of the output distribution is $(\tau\ast\rho)^Q=\tau^Q\ast\rho^Q$, a relation which is shown by setting one set of variables equal to zero in the product of characteristic functions of $\tau$ and $\rho$. In other words, the position marginal of phase space observable is a noisy position observable. That statement is obvious from the covariance conditions, but here we also learn that the noise measure is itself the position distribution $\tau^Q$ of a quantum state $\tau$. The same holds for momentum, and, crucially, it is {\it the same} quantum state $\tau$ that enters. In other words, the tradeoff between the noises in the marginals of a phase space observable is the same as the tradeoff between the concentration of the position distribution $\tau^Q$ and the momentum distribution $\tau^P$ of a quantum state. This tradeoff is known as {\bf preparation uncertainty}. The equality of measurement uncertainty and preparation uncertainty is false for most other observable pairs but persists \cite{PSuncert} for more general observable pairs, which are related by the Fourier transformation of some locally compact abelian group. This includes angle and number, or qubit strings looked at in different Pauli bases.

\subsection{Dynamics}\label{sec:dynamics} For time evolutions, the input and output systems are the same. Let us first consider {\bf reversible} evolutions, for which the time parameter $t$ in $\semg_t$ is allowed to be positive or negative, i.e., the $\semg_t$ form a one-parameter group rather than just a semigroup. Then $\Delta\sigma$ has to vanish, and each $\semg_t$ must be a noiseless operation (cf.\ Sect.~\ref{sec:noiseless}), and $\semg_t^*$ must be a homomorphism. Actually, this conclusion is valid even without the quasifree form, just using that equality in the Schwarz inequality for completely positive maps ($\semg(x^*x)\geq\semg(x)^*\semg(x)$) implies the homomorphism property. Hence for a reversible evolution, the center of the algebra, i.e., the classical part, must be invariant as a set, and there is a well-defined restriction of $\semg_t$ to the classical subsystem. That is, by observing the classical subsystem, we can never find out anything about the initial state of the quantum subsystem. This {\bf no-interaction theorem} blocks any understanding of the quantum measurement process by reversible, e.g., Hamiltonian couplings. It is quite expected on general grounds: Any information gained about a quantum system requires a disturbance, and this is not compatible with reversibility.
This No-Go theorem is lifted as soon as we allow irreversible evolutions. Indeed, one can develop a joint generalization of the theory of diffusions on the classical side and Lindblad Master equations on the quantum side, in which the salient information-disturbance tradeoffs have a natural and rigorous formulation.

A traditional subject in classical probability are processes with {\bf independent increments}. Since the increments are supposed to have the same distribution for any current state, this implies translation invariance, and since successive increments are assumed independent, we get a convolution semigroup ($S_t\equiv\idty$). The classic result is the L\'evy--Khintchine Theorem (see, e.g., \cite{AppleLevy}), characterizing the generators as a combination of a Gaussian part and a jump part. If we likewise stick to the choice of trivial $S_t$, this result applies verbatim to arbitrary hybrids. Even without quasifreeness assumption, it is treated in \cite{barchielli_1996}.

For the general case of an arbitrary semigroup $S_t$, the precise and general characterization of generators is lacking so far. It is easy to see that the L\'evy--Khintchine formula is still valid, but there are uncertainty-type constraints needed to ensure complete positivity. These are readily solved in the purely Gaussian case: The logarithmic derivative of the noise function at $t=0$ has to be an admissible quantum covariance matrix for the ``symplectic form'' computed as the derivative of $\Delta\sigma$. It turns out \cite{BarchWern} that this is already all. In the general case, the L\'evy--Khintchine formula decomposes the generator into a Gaussian part and a jump part. The noise required for complete positivity depends only on the Gaussian part. The jump part, which belongs to a classical L\'evy process, adds no further requirements, nor can it be used to ease the noise requirements for the Gaussian part. This is the situation for finite dimensional phase spaces, but the quasifree analogs for infinite dimension offer interesting challenges including generators not of Lindblad form  \cite{Inken,ArvesonBook}).

Many applications use the quasifree structure. Especially when time-dependent generators are involved, as in the case of {\bf feedback and control}, it is vastly easier to put the process together in phase space than to multiply cp maps on the infinite-dimensional observable algebra. {\bf Continual observation} is likewise a hybrid scenario, in which the classical part can be observed completely and at all times without incurring disturbance costs. Doing justice to this field would require a book of its own, and we do not even try to review the literature. The hybrid aspects are typically neglected, as are the demands of building usable observable algebras.

\subsection{Classical limit}
The classical limit, $\hbar\to0$, characterizes the behavior of states and observables which do not change appreciably over phase space regions whose size is measured by $\hbar$. We have suppressed this parameter, which implicitly means that we used units for quantum position and quantum momentum, which make $\hbar=1$. For the discussion of the classical limit, it is better to make this parameter explicit as a factor to the commutation form \eqref{CCR}, just as physics textbooks have it. The identity map $S$ between universes with different $\hbar$ is then not symplectic, but one can build a (necessarily noisy) quasifree channel between such universes, allowing the comparison of observables. Equivalently, one can scale all phase space variables by $\sqrt\hbar$. The connection maps are then used to formulate a notion of {\it convergent sequences} by a Cauchy-like condition. This approach to the classical limit \cite{hbar20} is as close to a limit of the entire theory (not just isolated aspects such as WKB wave functions or partition functions) as one can get. The limit is a classical canonical system, with quantum Hamiltonian dynamics going to its classical counterpart. For our context, it should be noted that it can be taken for parts of the system (like the heavy particles in a Born--Oppenheimer approximation) and, due to the complete positivity of the connection maps, composes well with further degrees of freedom, i.e., can be applied to hybrids.

\subsection{Cloning}
Cloning, also known as copying or broadcasting, is a process that generates copies of a quantum system \cite{LindbladClone}. Of course, the well-known No-cloning Theorem says that this cannot be done without error. Quasifree maps are ideally suited as a simple testbed for this basic operation and the unavoidable errors. Let us consider a fixed system type $\Xis{}$, which also serves as the input. At the output, we have $N$ such systems in parallel, so $\Xi\raus=\bigoplus_j^N\Xi_j$ where $\Xi_j$ is just an isomorphic copy of the underlying $\Xi=\Xi\rin$. The marginals of interest forget all but one output and are thus described by a disturbance channel with $S=\idty$ (see above). This fixes $S$ on each of the subspaces in $\Xi\raus$, and hence by linearity, the overall map $S$:
\begin{equation}\label{cloneS}
  S\Bigl(\bigoplus_j\xi_j\Bigr)=\sum_j\xi_j.
\end{equation}
In other words, this map is exactly what one would write down for an ideal copier if one had never heard of the No-cloning Theorem. The quasifree formalism then generates all possible error tradeoffs consistent with this overall behavior.

The optimal solution of this problem depends on how the quality of the clones is assessed, and in particular, whether one uses the average fidelity of the clones or the closeness of the overall output to a product state, i.e., whether one also demands the output systems to be nearly uncorrelated. The optimization problem should be stated without assuming quasifreeness, but one can {\it prove}  that the optimal cloners will be quasifree with the above $S$. It turns out that for the criterion of overall product state fidelity, the optimal cloner is Gaussian, whereas for the average single state fidelity criterion, it is not, although the best Gaussian cloner performs only a few percent below optimum~\cite{Gcloners}. One can also look at asymmetric scenarios, in which the various copies satisfy different quality requirements, i.e., the output state is not permutation symmetric.

\subsection{Instruments}\label{sec:instru}
An instrument, according to a now-standard terminology by Davies and Lewis \cite{DavLew,Davies} is a channel with both a classical and a quantum output, i.e., a hybrid output. This is the setting in which one can discuss the tradeoff between information gain on the classical part of the output and disturbance on the quantum output (see Fig.~\ref{fig:example_covinst}).
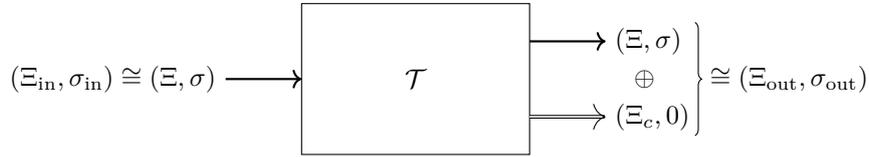
\begin{figure}[ht]
  \begin{center}
  \begin{tikzpicture}[scale=1]
    \coordinate (A) at (-1.5,1);
    \coordinate (B) at (1.5,1);
    \coordinate (C) at (1.5,-1);
    \coordinate (D) at (-1.5,-1);
    \draw[] (A) --(B) --(C) --(D) --(A);
    \draw[thick,->] (-2.5,0) -- (-1.5,0);
    \draw[thick,->] (1.5,0.5) -- (2.5,0.5);
    \draw[double,->] (1.5,-0.5) -- (2.5,-0.5);
    \node [left] at (-2.5,0) {$ \Xis\rin\cong\Xis{}$};
    \node [right] at (2.5,0.5) {$\Xis{}$};
    \node [right] at (2.75,0) {$\oplus$};
    \node [right] at (2.5,-0.5) {$(\Xi_c,0)$};
    \node [right] at (3.75,0) {$\cong \Xis\raus$};
    \draw[decoration={brace,raise=5pt},decorate] (3.5,0.75) --(3.5,-0.75);
    \node [] at (0,0) {$\semg$};
  \end{tikzpicture}
  \captionsetup{width=0.8\textwidth}
  \caption{A covariant instrument: A quantum system with the phase space $\Xis{}$ is measured by the instrument $\semg$. The output is a hybrid system with a quantum part on the same space $\Xis{}$ joined by a classical system, the measurement result, with some classical system $(\Xi_c,0)$. }
  \label{fig:example_covinst}
\end{center}
\end{figure}

Concretely, let $\Xi\raus=\Xi\rin\oplus\Xi_c$, where $\Xi_c$ is the classical output. As in the case of a cloner, linearity of $S$ implies that we just have to fix our demands for the marginals, i.e., the actions on the summands $\Xi\rin$ and $\Xi_c$, to get the overall map $S$. On the first summands, we just take the identity, in keeping with our intention to discuss the disturbance inflicted by the instrument. The case of ``no disturbance'' should be included, so we should take $S=\idty$ on the summand $\Xi\rin$. For the second summand, $\Xi_c$, we just have to say which variable or combination of variables we wish to measure, i.e., $S$ is chosen exactly as the corresponding map $S$ in \eqref{eq:posobs} from the above description of observables. To distinguish it from the overall $S$, we denote this by $S_c$.
    Putting these parts together, we get
 \begin{equation}\label{Sinstrument}
      S(\xi\oplus\eta)=\xi+S_c\eta
 \end{equation}
 or, equivalently, $S^\top\xi=\xi\oplus S_c^\top\xi$. The noise functions consistent with this choice then parametrize the class of covariant phase space instruments.
 Their analysis is a nice illustration of our theory. The main interest is again in the marginals, which reflect the tradeoffs between disturbance and information gain. We treat them in analogy to the corresponding observables.

Just as for observables, the theory of quasifree instruments fits into the theory of covariant instruments for more general groups \cite{DaviesCovInst,HolevoRadNy,Carmeli,Erkka}. We begin by outlining a heuristic argument suggesting a form for general covariant instruments. We will verify later how this form comes out of our approach. As in the case of observables, we assume an operator density for the outputs as a function of the measured parameter: Its interpretation is the quantum channel conditioned on the classical output $x$. This captures a typical use of instruments, where the quantum state is updated based on the classical result. We are thus looking  for a family of cp maps $\semg_x$ such that the  following analog of \eqref{RNposition} holds:
\begin{equation}\label{instCond1}
      \semg^*(A\otimes g)=\int dx\, \semg^*_x(A) g(x).
\end{equation}
Putting $A=\idty$, it is clear that $\semg^*_x$ is not a channel, as it is not normalized to the identity. Instead $\semg_x^*(\idty)=\dot F(x)$ is the \RadNy\ density of the classical marginal observable. Thus, if the classical marginal has no density, then $\semg^*_x$ cannot be defined either. On the other hand, if $\dot F(x)$ exists, we can look for a bona fide channel $\widetilde\semg^*_x$ such that, with the abbreviation $D(x)=\dot F(x)^{1/2}$, we have
$D(x)\widetilde\semg_x^*(A)D(x)=\semg_x^*(A)$. With the Kraus decomposition $\widetilde\semg_x^*(A)=\sum_jK_j(x)^*AK_j(x)$ we get
\begin{equation}\label{tildeInstru}
  \semg^*_x(A)=\sum_j(K_j(x)D(x))^*A\ K_j(x)D(x).
\end{equation}
It is clear from this formula that $K_j(x)$ can be thought of as a map from the closed range of $D$ to $\HH$, and should be normalized as $\sum_jK_j(x)^*K_j(x)=\supp(D(x))$, where the right hand side denotes the support projection of $D(x)$.

A feature shared with the observable case and the general group case is that $\semg_x^*(A)$ needs only be known at one point because this can be transferred to all $x$ by covariance. Indeed, the covariance of the instrument is equivalent to $\semg^*_{x+S_c^\top\xi}=\alpha_{-\xi}\semg^*_{x} \alpha_{\xi}$. Thus $\alpha_{-\xi}K_j(x)=K_j(x+S_c^\top\xi)$, extending the covariance condition \eqref{FdotCov} for the observable $F$, written for $D$ as $\alpha_{-\xi}D(x)=D(x+S_c^\top\xi)$. Since $S_c^\top$ is surjective, we only need all values at the origin, and abbreviate $D(0)=:D$ and $K_j(0)=K_j$. This gives the form
\begin{equation}\label{instCondi}
      \semg^*_x(A)= \sum_j \alpha_{-\xi}\bigl(K_jD\bigr)^*A\ \alpha_{-\xi}\bigl(K_jD\bigr),
      \quad \text{where}\ S_c^\top\xi=x.
\end{equation}
In this general form the Kraus operators are only constrained by the  normalization $\sum_jK_j^*K_j=\supp(D)$ and the invariance condition arising from the possibility that $S_c^\top\xi=0$ might have non-zero solutions $\xi$. In that case, we must demand that the $K_j$ and the $\alpha_\xi(K_j)$ describe the same channel. In particular, for extremal instruments, when there is only one Kraus operator, it has to be invariant up to a phase.

\subsubsection*{Position instruments}
We will illustrate our formalism by executing the task of finding all position instruments twice: Once directly via the characteristic functions and Prop.~\ref{prop:ChanOnW}, and once in the way inspired by general covariance theory, i.e., via \eqref{instCondi}. For simplicity, we look only at the pure case, i.e., we are happy to find the simplest solutions from which all others arise by mixture.

Beginning with our approach, we use the notational conventions for phase space and dual vectors outlined at the end of Sect.~\ref{sec:setup}. $S_c$ comes from the position observable \eqref{eq:posobs}, i.e., $S(\hat p,\hat q,k)=(\hat p+k,\hat q)$. All these quantities can be vectors $\hat p,\hat q,k\in\Rl^n$. Then
\begin{equation}\label{posIDelta}
  (\hat p,\hat q,k)\cdot\Delta\sigma(\hat p',\hat q',k')=\hat p\cdt \hat q'-\hat q\cdt \hat p' -(\hat p+k)\cdt \hat q'+\hat q\cdt(\hat p'+k')=\hat q\cdt k'-k\cdt \hat q'.
\end{equation}
Now \eqref{posIDelta} is the commutation form of a hybrid phase space with quantum coordinates $(\hat q,k)$ and a classical direction $\hat p$. A pure state on this hybrid fixes the classical part (cf. Lem.~\ref{lem:purestates}) to a point $a$, say, and is given on the quantum part by a vector $\psi$ on the Hilbert space of $n$ degrees of freedom, defining the noise state $\tau$. This gives the noise function
\begin{equation}\label{fPosInst}
  f(\hat p,\hat q,k)=e^{i a\cdt \hat p}\,\chi_\tau(k,\hat q)\ = e^{i a\cdt \hat p}\, \braket\psi{W(-k,\hat q)\psi}.
\end{equation}
Here we chose the sign of $k$ by a convention for $\psi$, for literal agreement with the second approach.
Together with $S$, \eqref{fPosInst} is a complete description of the instrument.

For the approach via \eqref{instCondi}, with a single Kraus operator $K$ we have to satisfy the normalization condition $K^*K=\supp D$, and the invariance condition $\alpha_\xi(K)=u(q)K$ for $\xi=(0,\hat q)\in\ker S_c^\top$. Inserting a sum for $\xi$ it is clear that $u(\hat q)=\exp(-i a\cdt \hat q)$ is a character. The eigenvalue equation for $K$ is satisfied by the Weyl operator $W(0,-a)$, but in contrast to Lemma~\ref{lem:WeylEW} the Weyl operators $W(\ker S_c^\top)$ do not act irreducibly, and so $K$ is only determined up to an operator invariant under all $\alpha_\xi(0,\hat q)$. Such operators commute with all $W(\sigma(0,\hat q))=W(\hat q,0)$, i.e. are multiplication operators in the position representation. Thus $K=W(0,-a)\tilde\psi(Q)$.
Similarly, $D=\psi_D(Q)$ is a positive multiplication operator, whose square is the noise density $\dot\nu(Q)$ discussed above for the position observable, so $\psi_D\in L^2(\Rl^n)$. The normalization condition $K^*K=\supp(D)$ means that $\abs{\tilde\psi(x)}=1$ for $x\in\supp\psi_D$. Setting $\psi(x)=\tilde\psi(x)\psi_D(x)$, we get
$KD=W(0,-a)\psi(Q)$, i.e.,
\begin{equation}\label{KrausPos}
  \bigl(KD\phi\bigr)(x)=\psi(x-a)\phi(x-a).
\end{equation}
Computing the characteristic function of the overall channel gives exactly \eqref{fPosInst} with the same $\psi$, $a$. So the two approaches give the same result, only with less analytical pain in our quasifree theory.

We are interested in the tradeoffs for the marginals, namely the quantum output, which is necessarily of the type discussed above under ``disturbance'', and the measurement output, which is of the type discussed under position observables. Both can be read off directly from the $\chi\raus(\hat p,\hat q,k)=f(\hat p,\hat q,k)\chi\rin(\hat p+k,\hat q)$, by setting suitable  variables to zero:
\begin{equation}\label{margePosInsT}
  \begin{array}{rll}
   \mbox{classical marginal:}&\hat p=\hat q=0,& \qquad\mbox{noise measure}= \tau^Q,  \\
   \mbox{quantum marginal:}&k=0,& \qquad\mbox{noise measure}=\delta_a\times \tau^P.
  \end{array}
\end{equation}
This is a very concise formulation of a well-known intuition: $\tau^Q$ is the distribution of the noise added to the measurement outcomes, i.e., the ``error'' of the measurement. $\tau^P$, on the other hand, is the disturbance of the momentum variable. So these are reciprocal in exactly the way known for quantum states. We remark that noise could also occur in the quantum position direction, here given by a deterministic shift $a$. Non-pure instruments will have the distribution for that as well, and $\tau$ in the above description generally depends on $a$, allowing all the complex correlations in a hybrid noise state.

\subsubsection*{Phase space instruments}
In this case, $S(\xi\oplus\eta)=\xi+\eta$, and $\Delta\sigma$ is non-degenerate, so the noise state is a quantum state of twice the number of degrees of freedom. In the pure case, it is given by a vector $\psi\in L^2(\Rl^n\times\Rl^n,dx_1,dx_2)$. Such a vector can be identified with a Hilbert--Schmidt operator over the system Hilbert space $\HH=L^2(\Rl^n,dx)$, and we will see that this is precisely the required form of the local Kraus operator $KD$. This general form for phase space instruments was also obtained independently in \cite{Erkka}. In the following proposition, which is a straightforward application of our formalism, we also describe the resulting tradeoff between disturbance (noise in the quantum marginal) and precision (noise in the classical marginal). They are precisely related by Fourier transformation almost exactly as in the case of joint measurements of position and momentum. Only the Fourier transform is not between position and momentum but between the operator side and the function side of quantum harmonic analysis.

\begin{prop}\label{prop:phspInstr}
  \begin{itemize}
  \item[(1)] Every extremal quasifree phase space instrument is characterized by a Hilbert--Schmidt operator $\hat\Psi$ with $\tr(\hat\Psi^*\hat\Psi)=1$ such that
  \begin{equation}\label{phspInstr}
     \semg^*(A\otimes g)=\int d\xi\ \alpha_{-\xi}(\hat\Psi)^*\, A\, \alpha_{-\xi}(\hat\Psi)\ g(\xi).
  \end{equation}
  \item[(2)] Conversely, any such operator $\hat\Psi$ determines an instrument and is determined by it up to a phase.
  \item[(3)] The classical marginal is a covariant phase space observable with density $\dot F=\hat\Psi^*\hat\Psi$.
  \item[(4)] The quantum marginal is addition of translation noise: $\rho\mapsto\int d\xi\ m(\xi)\alpha_\xi(\rho)$ with $m\in L^1(\Xi)$
  \begin{equation}\label{phspIMarge}
     m(\xi)=\abs{(\Fourier\hat\Psi)(-\sigma\xi)}^2.
  \end{equation}
  \end{itemize}
  \end{prop}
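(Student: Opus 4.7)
The plan is to leverage the state-channel correspondence already established. By Cor.~\ref{cor:statechannel}(5), extremal $S$-covariant channels correspond to pure noise states $\tau$ on $(\Xi\raus,\Delta\sigma)$. For the phase space instrument $S(\xi\oplus\eta)=\xi+\eta$, I first compute
\begin{equation*}
  \Delta\sigma\bigl((\xi_1,\eta_1),(\xi_2,\eta_2)\bigr) = -\sigma(\xi_1,\eta_2)-\sigma(\eta_1,\xi_2)-\sigma(\eta_1,\eta_2),
\end{equation*}
which is non-degenerate. So the noise state lives on a purely quantum system of $2n$ quantum degrees of freedom. The change of coordinates $\eta'=\xi+\eta$ converts $\Delta\sigma$ into $\sigma\oplus(-\sigma)$, precisely the doubled symplectic form used in Sect.~\ref{sec:squeeze}. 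This change of variables is itself a noiseless quasifree isomorphism, so it does not affect the analysis.

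Next I would parametrize pure states of the doubled system. The Schrödinger representation of $(\Xi\oplus\Xi,\sigma\oplus(-\sigma))$ acts on $L^2(\Rl^n)\otimes L^2(\Rl^n)$, and its unit vectors are in canonical bijection, up to a phase, with Hilbert-Schmidt operators $\hat\Psi$ on $L^2(\Rl^n)$ satisfying $\tr(\hat\Psi^*\hat\Psi)=1$: a vector $\Psi\in L^2(\Rl^n\times\Rl^n)$ is identified with the operator whose integral kernel is $\Psi(x,y)$ (with one argument complex-conjugated to match the inverted symplectic form). This gives the parametrization in items (1)--(2).

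To verify the formula \eqref{phspInstr}, I would compute the characteristic function of the noise state in terms of $\hat\Psi$ and match it with the defining identity $\semg^*(W\raus(\xi\oplus\eta))=f(\xi\oplus\eta)W\rin(\xi+\eta)$ from Prop.~\ref{prop:ChanOnW}. Since every $A\otimes g$ is a weak*-limit of integrals of such Weyl elements, it suffices to check the case where $A=W(\eta')$ and $g(x)=e^{ik\cdot x}$. The right-hand side of \eqref{phspInstr} applied to this element produces $\int d\xi\,e^{ik\cdot\xi}\alpha_{-\xi}(\hat\Psi)^*W(\eta')\alpha_{-\xi}(\hat\Psi)$, and after sliding Weyl operators past each other using \eqref{weylrel} and carrying out the $\xi$-integration (where \eqref{intL1} gives the normalization $\tr(\hat\Psi^*\hat\Psi)=1$ for $\eta'=k=0$), one should recover $f(\eta'\oplus k)W(\eta'+k)$ with $f$ the characteristic function of the $\hat\Psi$-noise state. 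The partial transposition between $\xi$ and $\eta'$ variables in the signs of $\sigma$ comes from the fact that $W\rin$ satisfies the relation for $-\sigma$ when conjugated, matching the structure $\sigma\oplus(-\sigma)$.

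Items (3) and (4) would then follow by reading off the marginals of $\chi\raus(\xi\oplus\eta)=f(\xi\oplus\eta)\chi\rin(\xi+\eta)$. Setting $\xi=0$ gives the classical marginal $\chi_c(\eta)=f(0\oplus\eta)\chi\rin(\eta)$; the resulting observable is covariant with density \eqref{densityPS} whose noise state is precisely the reduction of the pure state $\tau$ to the $\Xi_c$ subsystem, which one computes to be $\hat\Psi^*\hat\Psi$ under the Hilbert-Schmidt identification (mirroring the fact that tracing out one factor of a Hilbert-Schmidt ``wave function'' yields $\hat\Psi^*\hat\Psi$ as a density operator). Setting $\eta=0$ gives $\chi_q(\xi)=f(\xi\oplus 0)\chi\rin(\xi)$, where $f(\xi\oplus 0)$ is the restriction of the noise characteristic function to the $\Xi\rin$ factor; expressing this via the Fourier--Weyl transform \eqref{FouWey} and tracking the sign flip from the $-\sigma$ component identifies it with $\int d\xi\,m(\xi)\,e^{i\xi\cdot\eta}$ where $m(\xi)=|(\FF\hat\Psi)(-\sigma\xi)|^2$. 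The main obstacle is book-keeping: keeping the Weyl cocycles, the sign conventions $\sigma$ vs $-\sigma$, and the Fourier-Weyl normalization \eqref{L2scalarprod} aligned so that the specific expression $(\FF\hat\Psi)(-\sigma\xi)$ emerges with the correct argument.
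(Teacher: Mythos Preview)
Your proposal is correct and follows essentially the same route as the paper: compute $\Delta\sigma$, change variables to $(\xi,\xi+\eta)$ to exhibit the $\sigma\oplus(-\sigma)$ structure, identify pure states on the doubled system with Hilbert--Schmidt operators, verify \eqref{phspInstr} on Weyl elements via the integral identity $\int d\zeta\,\alpha_\zeta(A)=\tr(A)\idty$ (which is \eqref{intL1}), and read off the marginals by setting the appropriate Weyl arguments to zero. The paper carries out exactly the computations you outline, including the Fourier--Weyl step for the quantum marginal.
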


  Note that since $\Fourier$ is unitary from the Hilbert--Schmidt class onto $L^2(\Xi)$, not only all operator densities $\dot F$ but also all $L^1$-densities $m$ can occur. The prototype of this tradeoff is the case of a single degree of freedom with additional covariance under harmonic oscillator rotations. In particular, we can look at the Gaussians $\hat\Psi=c\,\exp(-\beta H)$ with $H=(P^2+Q^2)/2$. Then the Fourier transform is also Gaussian, and proportional to $\exp(-\coth(\beta/2))\xi^2/4$, where $\xi^2=(p^2+q^2)/2$. Now for $\beta\to0$, $\hat\Psi$ is a small multiple of the identity, so it can approximately be interchanged with $A$ in \eqref{phspInstr}. This even works in trace norm for the action on a trace class operator for the dual channel. This means that the disturbance goes to zero, and this is borne out by the computation of $m$, which for small $\beta$ is Gaussian with variance $\propto1/\beta$. On the other hand, the phase space density of the classical marginal becomes very broad, and the measurement outputs reveal very little about the state. In the other direction, $\beta\to\infty$, $\hat\Psi$ becomes a coherent state projection, and the output distribution becomes the Husimi function. The quantum noise $m$ is still Gaussian, with a variance on the order of standard quantum uncertainties.

  \begin{proof}[Proof of Prop.~\ref{prop:phspInstr}]
  The difference symplectic form is now
  \begin{equation}\label{phspDelta}
  (\xi,\eta)\cdot\Delta\sigma(\xi',\eta')=\xi\cdot\sigma\xi'-(\xi+\eta)\cdot\sigma(\xi'+\eta').
  \end{equation}
  Rather than expanding this, we just choose a twisted definite function, evaluated for the independent variables $\xi$ and $\xi+\eta$. That is, for the extremal case, we choose a pure state on a doubled system, given by a vector $\Psi\in\HH\otimes\HH$ such that
  \begin{equation}\label{phspDeltaPsi}
    f(\xi\oplus\eta)=\brAAket\Psi{W(\xi)\otimes \overline{W(\xi+\eta)}}\Psi.
  \end{equation}
  Here the bar indicates complex conjugation $ \overline{W(\xi)}=\theta^* W(\xi)\theta$ with respect to an arbitrary antilinear involution $\theta$, which has the effect of reversing the symplectic form and hence takes care of the minus sign in \eqref{phspDelta}. This completes the parametrization of the family of instruments. What is left is rewriting this in the stated form and computing the marginals.

  To this end, we introduce the isomorphism $\Psi\mapsto\hat\Psi$ form $\HH\otimes\HH$ to Hilbert--Schmidt operators on $\HH$ given by $\psi_1\otimes\psi_2\mapsto\ketbra{\psi_1}{\theta\psi_2}$. Note that the involution $\theta$ is needed here so that both sides of the identification are linear in $\psi_2$. We next express the action of the Weyl operators in \eqref{phspDeltaPsi} in terms of the Hilbert--Schmidt operators. For $\Psi=\psi_1\otimes\psi_2$, we get
  \begin{align}
    {W(\xi)\otimes \overline{W(\xi+\eta)}}\Psi
      &=(W(\xi)\psi_1) \otimes (\theta^*W(\xi+\eta)\theta\psi_2)\nonumber\\
      &\mapsto \ketbra{W(\xi)\psi_1}{W(\xi+\eta)\theta\psi_2}
       = W(\xi)\,\hat\Psi\ W(\xi+\eta)^*.
  \end{align}
  Inserting this into \eqref{phspDeltaPsi} gives the equivalent expression
  \begin{equation}\label{phspDeltaPsihat}
    f(\xi\oplus\eta)=\tr\bigl(\hat\Psi^* W(\xi)\hat\Psi W(\xi+\eta)^*\bigr).
  \end{equation}
  Denoting the Weyl elements on the classical output by $W_0$, and using the identity $\int d\zeta\ \alpha_\zeta(A)=\tr(A)\idty$, we find
  \begin{align}\label{phspDeltaPs}
    \TT^*\bigl(W(\xi)\otimes W_0(\eta)\bigr)
       &=\tr\bigl(\hat\Psi^* W(\xi)\hat\Psi W(\xi+\eta)^*\bigr)\ W(\xi+\eta) \nonumber\\
       &=\int d\zeta\ \alpha_\zeta\bigl(\hat\Psi^* W(\xi)\hat\Psi W(\xi+\eta)^*\bigr)\ W(\xi+\eta)\nonumber\\
       &=\int d\zeta\ \alpha_\zeta\bigl(\hat\Psi\bigr)^*\ e^{i\zeta\cdot\xi}W(\xi)\  \alpha_\zeta\bigl(\hat\Psi\bigr)
             \ e^{-i\zeta\cdot(\xi+\eta)}\ W(\xi+\eta)^*\ W(\xi+\eta)\nonumber\\
       &=\int d\zeta\ \alpha_\zeta\bigl(\hat\Psi\bigr)^*\ W(\xi)\ \alpha_\zeta\bigl(\hat\Psi\bigr)
             \ e^{-i\zeta\cdot\eta}\nonumber\\
       &=\int d\zeta\ \alpha_{-\zeta}\bigl(\hat\Psi\bigr)^*\ W(\xi)\ \alpha_{-\zeta}\bigl(\hat\Psi\bigr)
             \ W_0(\eta)(\zeta).
  \end{align}
  This coincides with \eqref{instCond1} and \eqref{instCondi} with $g=W_0(\eta)$, $A=W(\xi)$ and $KD=\hat\Psi$.
  The form of the classical marginal is obvious from \eqref{phspInstr} by putting $A=\idty$ (resp.~$\xi=0$ in \eqref{phspDeltaPs}). For the quantum marginal, putting $g=1$ leads to a form from which it is not even clear that it is just convolution with noise. For that, it is better to go back to the characteristic functions. Indeed, the function $m$ in \eqref{phspIMarge} is just the inverse Fourier transform of $f(\xi\oplus0)$, i.e.,
  \begin{align}\label{phspDeltaPs2}
    m(\eta)&=(2\pi)^{-2n}\int d\xi\ e^{i\eta\cdot\xi}\ f(\xi\oplus0)=(2\pi)^{-2n}\int d\xi\ e^{i\eta\cdot\xi}\ \tr\bigl(\hat\Psi^* W(\xi)\hat\Psi W(\xi)^*\bigr) \nonumber\\
           &=(2\pi)^{-2n}\int d\xi\ e^{i\eta\cdot\xi}\ \tr\bigl(\hat\Psi^* \alpha_{\sigma\xi}(\hat\Psi)\bigr)
            =(2\pi)^{-2n}\int d\xi\ \tr\bigl(\hat\Psi^* \alpha_{\sigma\xi}(\hat\Psi W(-\sigma\eta))W(\sigma\eta)\bigr)\nonumber\\
           &= \tr\bigl(\hat\Psi W(-\sigma\eta)\bigr)\tr\bigl(\hat\Psi^*W(\sigma\eta)\bigr) =\bigl|(\Fourier\hat\Psi)(-\sigma\eta)\bigr|^2.
  \end{align}
  In the second line, we used the eigenvalue equation \eqref{Weylshifted} to absorb the exponential factor and \eqref{intL1} in the last line to evaluate the integral.
  \end{proof}

\subsection{Teleportation and dense coding}\label{sec:teleport}
The quasifree setting also provides a special angle on the well-known protocols of dense coding, and teleportation \cite{Bennetele,Bennedense,Wertele}.
This is traditionally treated in finite-dimensional settings. Our setting can largely be generalized to cover finite dimensions and, in fact, arbitrary phase spaces built as the Cartesian product of a locally compact abelian group for position and its dual group for momentum. With a finite group the Hilbert spaces become finite-dimensional, and in the simplest case, this is the one-bit (=two-element) group with the Pauli matrices and identity as the Weyl operators. We will now take the qubit case as a guide and obtain a painless quasifree approach to ``continuous variable teleportation'', generalizing the usual Gaussian schemes \cite{CVteleport} to arbitrary non-Gaussian entangled resource states.

\begin{figure}[ht]
  \begin{center}
  \begin{tikzpicture}[scale=.8]
  {\def\bx(#1,#2);{\draw[] (#1,#2) -- +(1.5,0) -- +(1.5,1) -- +(0,1) -- +(0,0);}
  \node at (2.6,-1) {Teleportation};  \node at (12,-1) {Dense coding };
  \bx(0,0); \bx(0,2.5); \bx(4,2.5);
  \draw[thick,->] (.75,1) -- (.75,2.5); \draw[thick,->] (1.5,.5) -- (4.75,.5) -- (4.75,2.5);
  \draw[thick,->] (1.5,3.05) -- (4,3.05); \draw[thick,->] (1.5,2.95) -- (4,2.95); \draw[thick,->] (-1,3)--(0,3); \draw[thick,->] (5.5,3)--(6.5,3);
  \node at (0,1.75) {$(\Xi,-\sigma)$}; \node at (2.6,0.8) {$(\Xi,\sigma)$}; \node at (-.8,3.4) {$(\Xi,\sigma)$}; \node at (2.6,3.4) {$(\Xi,0)$}; \node at (6.2,3.4) {$(\Xi,\sigma)$};
  \bx(9,0); \bx(9,2.5); \bx(13,2.5);
  \draw[thick,->] (9.75,1) -- (9.75,2.5); \draw[thick,->] (10.5,.5) -- (13.75,.5) -- (13.75,2.5);
  \draw[thick,->] (10.5,3.05) -- (13,3.05); \draw[thick,->] (8,2.95) -- (9,2.95); \draw[thick,->] (8,3.05)--(9,3.05);
  \draw[thick,->] (14.5,3.05)--(15.5,3.05);\draw[thick,->] (14.5,2.95)--(15.5,2.95);
  \node at (9.1,1.75) {$(\Xi,\sigma)$}; \node at (11.6,0.8) {$(\Xi,-\sigma)$}; \node at (8.2,3.4) {$(\Xi,0)$}; \node at (11.6,3.4) {$(\Xi,\sigma)$}; \node at (15.2,3.4) {$(\Xi,0)$};
  }
  \end{tikzpicture}
  \captionsetup{width=0.8\textwidth}
  \caption{The protocols for teleportation and dense coding. Classical information is indicated by a double arrow.
  All operations in the top row are noiseless with the map $S\xi=\xi\oplus\xi$. The two protocols are related by swapping the equipment for sending and receiving sides. The noise arises from the entangled resource state and can be chosen to be zero in the finite cases.  }
  \label{fig:teleport}
\end{center}
\end{figure}

This will give some quasifree teleportation schemes, but not all have this property (cf. \cite{Wertele}). In any case, the quasifree angle suggests a natural interpretation of why the classical signals require 2 bit in the 1 qubit version: This is just the {\it phase space} associated to the qubit system. So we will take all systems involved as systems with the same phase space $\Xi$ but different symplectic forms. Sender (Alice) and receiver (Bob) have quasifree devices with the same $S$, namely $S\xi=\xi\oplus\xi$. Only the symplectic forms need to be chosen so that the devices can be chosen to be noiseless (see Fig.~\ref{fig:teleport}). The verification of the protocol is then trivial and identical for teleportation and dense coding:
The combination of the actions of Alice and Bob leads to a combined map $S$ taking $\xi\mapsto\xi\oplus\xi\mapsto\xi\oplus\xi\oplus\xi$. That is, in Fig.~\ref{fig:teleport} the boxes in the top row together have one output arrow and three input arrows. Evaluating this with the entangled state provided, say with characteristic function $\chi$ gives the overall $S=\idty$ with the noise function $\chi(\xi\oplus\xi)$. The Fourier transform of this function would be the probability density for the shifts that constitute the errors of the process (cp.~Sect.\ref{sec:disturb}).
The task for constructing a good protocol is therefore to bring $\chi(\xi\oplus\xi)$ as close to $1$ as possible. 
This is discussed in detail in Lem.~\ref{lemon:squeeze}.

\newpage


\section{Summary and Outlook}
We have developed a framework for canonical hybrid systems in which quasifree channels can be discussed with remarkable ease and full generality. In several ways, this theory is simpler than more specialized versions. This is an instance of the inventor's paradox (``The more general problem may have the simpler solution''). For example, if one is not interested in measurement, and classical inputs and outputs,  one could have expected a simpler theory by dropping all the classical variables and restricting to purely quantum systems. However, channels in that context would still satisfy a positivity condition belonging to a hybrid state, and the noise factorization (Thm.~\ref{thm:noisedec}) would provide an analysis of the noise in the channel as partly classical and partly quantum. This would suggest allowing hybrids from the outset, and indeed we saw that this does not make the theory any harder.

A second case of the inventor's paradox in this paper is the lack of a Gaussian assumption. Gaussian quasifree channels are those for which $f$ has a Gaussian form and is hence given by a covariance matrix. We actually started out by looking at Heisenberg picture questions for this class, e.g., ``Is phase space continuity (as in $\ucont\Xis{}$) automatically preserved by Gaussian channels?''. It turned out that the Gaussian simplification did not help at all for this, and more and more such issues were resolved in the general quasifree setting of the current paper.

Another simplification lies in the $\mu$-free approach, whose distinction from a $\mu$-dependent one is sketched at the beginning of Sect.~\ref{sec:funcobs}. The gain is to include pure states and, taken together with the previous paragraph, extremal channels. Here we had to go to considerable functional analytic lengths, but the result is simple and easy to apply: A variety of choices for hybrid observable algebras that can be used systematically with automatic Heisenberg picture description for the full class of quasifree channels.

Several directions for further work present themselves. Some have already been mentioned above:
\begin{itemize}
\item Specialize to the {\it Gaussian} case, i.e., the case where all noise functions have Gaussian form. This class is well known \cite{Holevo11,MCF}, and practically important \cite{Lammers}, and allows a complete reduction to the finite-dimensional analysis of covariance matrices together with the $S$-operators between phase spaces. One gets a simple toolbox in which the tradeoffs of information gain and disturbance can all be described in finite-dimensional matrix terms.
\item Generalize to hybrids with general, i.e., {\it not quasifree} channels. The key element in Sect.~\ref{sec:funcobs} is the local compactness of the classical parameter space, but to get good channels, we also used the continuity of characteristic functions, and hence the group structure of phase space. Can one get a good class of channels without that?
\item Replace the phase space by an arbitrary locally compact abelian group, and the Weyl operators by a {\it projective representation}. A lot of the theory described here will carry over, but it is a matter of careful screening to identify the limits of this generalization.
\item Consider the {\it Fermionic} and mixed CAR/CCR case.
\item Allow {\it infinite dimensional} $\Xi$. The aim would be applications in quantum field theory. So far, mostly the case of symplectic maps has been considered under the heading of Bogolyubov transformations. However, in order to bring some operational elements to the theory, noisy operations like counting processes and other interventions are very interesting, and the quasifree category is an ideal testing ground. These aspects are sorely underdeveloped in all schools of QFT, but a better understanding seems to be emerging \cite{VerchFewster,Jubb}.
\item Analyze {\it dynamical semigroups}.
  This was described in more detail in Sect.~\ref{sec:dynamics}. 
\item The intersection of the previous two items gives quasifree hybrid semigroups on infinite dimensional spaces \cite{DaviesDiff,hellmich,Blanchard}. Thorough work exists in the case of classical noise, e.g., when a unitary group is controlled by a driving Markov process \cite{Blanchard}. One interesting issue is the possibility and structure of quantum dynamical semigroup generators, which are {\it not} of {\it Lindblad} (or Arveson type I \cite{ArvesonBook}) form \cite{Inken}.
\item One of the beautiful results in Gaussian Quantum Information is the growing evidence \cite{VittHolGauss,newHolGauss} that the variational problems in the capacity theory of Gaussian channels have {\it Gaussian maximizers}. This involves the discussion of relative entropies and Gibbs states for quadratic Hamiltonians, which surely have hybrid versions, possibly even with some relevance to the Gaussian maximizer conjecture.
\item Explore the {\it resolvent algebra}, and potential applications of Example~\ref{Ex:resAlg} to quantum field theory. In particular, analyze the sum decomposition \eqref{resalg}, and the ideals of $\RR\Xis{}$ in the light of correspondence theory, and study the continuity of dynamical evolutions.
\item Further explore the understanding of the {\it Paschke dilation} \cite{Westerbaan,Paschke} as the basic dilation statement in the category of W*-algebras with completely positive normal maps. This reduces to the Stinespring dilation when the input system is a quantum system with observable algebra $\BB(\HH)$. A good start has been made in \cite{Westerbaan}, but many issues that have been treated traditionally by the Stinespring construction should allow a treatment in this more general setting.
\end{itemize}

\section*{Acknowledgements}
We thank Alexander Stottmeister and Lauritz van Luijk for critical reading and prompting several clarifications and improvements.
Support by the Graduiertenkolleg 1991 of the DFG, the CRC DQ-mat, and the Network QlinkX of the BMBF is gratefully acknowledged.
The publication of this article was funded by the Open Access Fund of the Leibniz Universit\"{a}t Hannover.
We thank both referees for an extraordinarily careful reading and numerous improvements.

\newpage
\bibliographystyle{quantum}
\bibliography{library}

\end{document}